\title{The $Z$-Dirac and massive Laplacian operators in the $Z$-invariant Ising model}
\author{B\'eatrice de Tili\`ere
\thanks{{\small Universit\'{e} Paris-Est, Laboratoire d'Analyse et de Math\'{e}matiques Appliqu\'{e}es (UMR 8050) 
UPEM, UPEC, CNRS, F-94010, Créteil, France.}
{\small\texttt{beatrice.taupinart-de-tiliere@u-pec.fr.}}
{\small Supported by l'Institut Universitaire de France.}
}}
\begin{document}

\maketitle

\begin{abstract}
\hspace{0.2cm} Consider an elliptic parameter $k$; we introduce a family of $Z^u$-Dirac operators 
$(\mathsf{K}(u))_{u\in\Re(\mathbb{T}(k))}$,
relate them to the $Z$-massive Laplacian of~\cite{BdTR1}, and 
extend to the full $Z$-invariant case
the results of Kenyon~\cite{Kenyon3} on discrete holomorphic and harmonic functions, which correspond to the case $k=0$.
We prove, in a direct statistical mechanics way, how and why the $Z^u$-Dirac and $Z$-massive Laplacian operators appear in the 
$Z$-invariant Ising model, considering the case of infinite and finite isoradial graphs.
More precisely, consider the dimer model on the Fisher graph $\GF$ arising from a $Z$-invariant Ising model.
We express coefficients of the inverse Fisher Kasteleyn operator as a function of the inverse $Z^u$-Dirac operator and 
also as a function of the $Z$-massive Green function; in particular this proves a (massive) random walk representation of important observables of the 
Ising model. We prove that the squared partition function of the Ising model is equal, up to a constant, 
to the determinant of the $Z$-massive Laplacian operator with specific boundary conditions, the latter being
the partition function of rooted spanning forests. To show these results,
we relate the inverse Fisher Kasteleyn operator and that of the dimer model on the bipartite graph $\GQ$ arising from 
the XOR-Ising model, and we prove matrix identities between the Kasteleyn matrix of $\GQ$ and the $Z^u$-Dirac operator, 
that allow to reach inverse matrices as well as determinants.
\end{abstract}

\section{Introduction}

This paper is inspired by three sets of results suggesting connections between the Ising model on a planar graph $\Gs$
and (massive) random walks on $\Gs$ and its dual $\Gs^*$. 

$\bullet$ Messikh~\cite{Messikh} observes that large deviation estimates of 
a massive random walk occur when computing the correlation length of the super-critical Ising model on $\ZZ^2$; a result later
proved by Beffara and Duminil-Copin~\cite{BeffaraDuminil} using the FK-Ising observable of~\cite{Smirnov2} away from the critical point.

$\bullet$
In Smirnov and Chelkak-Smirnov's proof of conformal invariance of the critical $Z$-invariant Ising 
model~\cite{Smirnov3,Smirnov2,ChelkakSmirnov:ising},
the key discrete tools are observables - spin or FK (see also~\cite{KadanoffCeva}) - that are holomorphic.
Discrete holomorphic functions in turn are naturally related to harmonic functions~\cite{Duffin,Mercat:ising,Kenyon3,
ChelkakSmirnov:toolbox}; 
the paper~\cite{MakarovSmirnov} extends some of the above to the massive case. 

$\bullet$ We prove, through combinatorial constructions,
that the squared partition function of the critical $Z$-invariant Ising model is equal, up to a multiplicative constant, to the partition function of spanning 
trees~\cite{deTiliere:mapping,deTiliere:partition}. An abstract proof of this identity is given in the toroidal $Z$-invariant case 
in~\cite{BoutillierdeTiliere:iso_gen,BdtR2}. 

The main contribution of this paper is to provide a unified framework for all of the above, which 
holds in the \emph{full $Z$-invariant} case, in the infinite \emph{and} finite cases. Our main results are obtained as a combination
of intermediate steps that are interesting in their own respect. We nevertheless feel that, before listing statements leading to the 
principal Ising results, we should convey the main ideas.

Let us first be more precise about operators underlying our ``inspiration'' papers. Large deviation estimates of 
massive random walks are related to the massive Green function, the latter being the inverse of the massive Laplacian operator. By definition 
discrete holomorphic functions are in the kernel of the Dirac operator, which is a Kasteleyn matrix/operator of the double graph $\GD$~\cite{Kenyon3}; 
harmonic functions are in the kernel of the Laplacian operator. The spanning tree partition function is equal to the determinant
of the Laplacian operator~\cite{Kirchhoff}. Summarizing, a central role is played by the Dirac operator (at criticality)
and the (massive) Laplacian in the (super) critical Ising model. 

Our first contribution is to introduce one of the missing pieces of the puzzle, namely the full $Z$-invariant version 
of the (critical) Dirac operator of~\cite{Kenyon3}, 
referred to as the \emph{$Z^u$-Dirac operator}, $u$ being a natural free parameter disappearing
at criticality. This is the subject of Section~\ref{sec:Z_Dirac_Z_Lap}, as well as its connections to the $Z$-invariant massive Laplacian 
of~\cite{BdTR1} and the study of the corresponding dimer model on the double graph $\GD$.

To study the Ising model, we use Fisher's correspondence~\cite{Fisher} relating the low (or high) temperature expansion of the 
model~\cite{KramersWannier1,KramersWannier2} to the dimer model on the Fisher graph $\GF$ with associated Kasteleyn matrix/operator 
$\KF$. The partition function of the dimer model is the Pfaffian of $\KF$, and the Boltzmann/Gibbs measures are explicitly expressed
using coefficients of $\KF$ and its inverse $(\KF)^{-1}$~\cite{TF,Kenyon0,CKP,KOS,BoutillierdeTiliere:iso_perio}. This means that knowing the determinant of 
$\KF$ and its inverse amounts to fully understanding the partition function of the Ising model and probabilities of its low (or 
high) temperature expansion. Notably, coefficients of the inverse Kasteleyn operator $(\KF)^{-1}$ are also related to other 
important observables of the Ising model as the spin-Ising observable of~\cite{ChelkakSmirnov:ising}, see~\cite{CCK}, and
the fermionic spinor observable of~\cite{KadanoffCeva}, see~\cite{Dubedat}.

Consider the dimer model on $\GF$ arising from a $Z$-invariant Ising model. Our main contribution is to 
prove matrix identities relating the Kasteleyn operator $\KF$ and
the $Z^u$-Dirac operator and also the $Z$-massive Laplacian of~\cite{BdTR1}. The strength of these identities is that they allow
to reach inverse operators and also, after some extra work, determinants. As a consequence, in the finite and infinite cases,
we express coefficients of $(\KF)^{-1}$ using the inverse $Z^u$-Dirac operator and also using the $Z$-massive Green function; this is the subject of
Section~\ref{sec:KFKQ}, see also the corresponding part of the introduction. In essence, this proves that the contour Ising Boltzmann/Gibbs measures 
can 
be computed from (massive) random walks (with specific boundary conditions in the finite case). In the finite case we also prove
that the squared Ising partition function is equal, up to an explicit constant, to the determinant of the massive Laplacian, that is
to the partition function of rooted spanning forests, see Corollary~\ref{cor:partition_function} and also Theorem~\ref{cor:partition_function_intro} of the 
introduction. Comments on how
these results connect to our ``inspiration'' and other papers are given at the end of this section.

Section~\ref{sec:prelim} contains preliminaries. 
Section~\ref{sec:GQ_Z_Dirac} contains the main intermediate step: we consider the dimer model on the bipartite graph $\GQ$
arising from the XOR-Ising model~\cite{WilsonXOR} constructed from two independent $Z$-invariant Ising models~\cite{Dubedat,BoutillierdeTiliere:XORloops}. 
We prove matrix 
identities relating its Kasteleyn matrix/operator $\KQ$ and the $Z^u$-Dirac operator.  In Section~\ref{sec:KFKQ}, building on the work of Dub\'edat~\cite{Dubedat}, 
we express coefficients of the inverse operator $(\KF)^{-1}$
using coefficients of the inverse operator $(\KQ)^{-1}$; this result holds for the dimer model on the Fisher graph $\GF$ arising from 
any 2d-Ising model, not necessarily $Z$-invariant; combining this with the results of Section~\ref{sec:GQ_Z_Dirac} then allows us to deduce the 
Ising results. In Section~\ref{sec:Examples}, we specify some of our results in two important cases: the $Z$-invariant \emph{critical} case, and the 
full $Z$-invariant case when the underlying graph is $\ZZ^2$.

To give detailed statements, let us be more precise about $Z$-invariant models~\cite{Onsager,Kennelly}, fully developed by
Baxter~\cite{Baxter:8V,Baxter:Zinv,Baxter:exactly}, see also~\cite{Perk:YB,Perk3,Perk4}. 
A \emph{$Z$-invariant model} is naturally defined on an 
\emph{isoradial} graph $\Gs=(\Vs,\Es)$; parameters are chosen so that
the partition function only changes by a constant when performing a star-triangle transformation of the underlying graph, 
\emph{i.e.}, they are required to satisfy the \emph{Yang-Baxter equations}. The solution to this set of equations for the Ising model
has, given the embedding of the graph, a free \emph{elliptic parameter} $k$, such that $(k')^2:=1-k^2\in(0,\infty)$, and 
the \emph{coupling constants} $\Js$ are~\cite{Baxter:exactly}:
\begin{equation*}
\forall\,e\in\Es,\quad \Js_e=\frac{1}{2}\ln\Bigl(\frac{1+\sn(\theta_e|k)}{\cn(\theta_e|k)}\Bigr),  
\end{equation*}
where $\sn,\cn$ are two of the \emph{Jacobi elliptic trigonometric functions}, and $\bar{\theta}_e=\theta_e \frac{\pi}{2K}$ is an angle
associated to the edge $e$ in the isoradial embedding. When $k=0$, \emph{i.e.} $k'=1$, the elliptic functions $\sn,\cn$, are the trigonometric 
functions $\sin,\cos$, and the Ising model is \emph{critical}~\cite{Li:critical,CimasoniDuminil,Lis}. 
As $(k')^2$ varies from $0$ to $\infty$, the coupling constants range from $\infty$ to $0$~\cite{BdtR2} thus covering the whole range of 
inverse temperatures. In the paper~\cite{BdTR1}, we introduce \emph{$Z$-invariant rooted spanning forests} with associated operator 
the massive Laplacian $\Delta^{m}$; when $k=0$, we recover the critical Laplacian of~\cite{Kenyon3}. We also prove an explicit~\emph{local}
expression for its inverse, the $Z$-massive Green function $G^m$, using the \emph{discrete massive exponential function}~\cite{BdTR1}.
We are now ready to give a detailed overview of this paper.

\paragraph{Section~\ref{sec:Z_Dirac_Z_Lap}: $Z^u$-Dirac and $Z$-massive Laplacian operators.} Fix an elliptic parameter $k$. We
introduce a family of \emph{$Z^u$-Dirac operators} $(\KD(u))_{u\in\Re(\TT(k))}$ 
on the double graph 
$\GD=(W\cup B,\ED)$ associated to pairs of dual directed spanning trees, extending to  
the full $Z$-invariant case the Dirac operator $\bar{\partial}$ of~\cite{Kenyon3}, corresponding to $k=0$.
In the finite case, we introduce a family of operators $(\KD^\spartial(u))_{u\in\Re(\TT(k))'}$, with boundary conditions 
tuned for the Ising model. Although these operators play a key role in the $Z$-invariant Ising model, they are interesting in their own 
respect. In the specific case $k=0$, results we obtain can be found in~\cite{Kenyon3,Temperley,Kennelly}. 
We prove, see also Theorem~\ref{prop:KDtKD}:
\begin{thm}\label{prop:KDtKD_intro}$\,$
\begin{itemize}
 \item[$\bullet$] \emph{Infinite case.} Let $u\in\Re(\TT(k))$, then the $Z^u$-Dirac operator $\KD(u)$, the $Z$-massive Laplacian 
 $\Delta^m$ and the dual 
 $\Delta^{m,*}$ of~\cite{BdTR1} satisfy the following identity:
\begin{equation*} 
\overline{\KD(u)}^{\,t\,}\KD(u)=k'
\begin{pmatrix}
\Delta^m&0\\
0&\Delta^{m,*}
\end{pmatrix}.
\end{equation*}\item[$\bullet$] \emph{Finite case.} Let $u\in\Re(\TT(k))'$, then the $Z^u$-Dirac operators $\KD(u)$, $\KD^\spartial(u)$, 
the $Z$-massive Laplacian
$\Delta^{m,\spartial}(u)$ and the dual $\Delta^{m,*}$ satisfy the following identity:
\begin{equation*} 
\overline{\KD^\spartial(u)}^{\,t\,}\KD(u)=k'
\begin{pmatrix}
\Delta^{m,\spartial}(u)& Q(u)\\
0&\Delta^{m,*}
\end{pmatrix}.
\end{equation*}
\end{itemize} 
\end{thm}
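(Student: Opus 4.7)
The plan is to verify the identity by direct expansion of the matrix product $\overline{\KD(u)}^{t}\KD(u)$, block by block. The white vertices $W$ of the double graph $\GD$ split into primal vertices of $\Gs$ and dual vertices of $\Gs^{*}$, producing four natural blocks (primal-primal, primal-dual, dual-primal, dual-dual). Since each black vertex $b\in B$ lives inside a single rhombus of the isoradial diamond graph and is adjacent in $\GD$ to at most two primal and two dual white vertices, each entry of the product is a sum of at most a handful of explicit terms involving the elliptic weights of $\KD(u)$.

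First I would handle the primal diagonal block. At $(v,v)$ with $v\in\Vs$, the entry equals $\sum_{b\sim v}|\KD(u)_{b,v}|^{2}$, one term per edge of $\Gs$ incident to $v$. Since each $\KD(u)_{b,v}$ is, up to a Kasteleyn phase, an explicit elliptic function of $u-\bar{\theta}_{e}$, summing the squared moduli and applying a Jacobi addition formula should reproduce the diagonal coefficient of $k'\Delta^{m}$ in the form derived in~\cite{BdTR1}, combining the mass term with the sum of conductances around $v$. For an off-diagonal primal entry $(v,v')$ with $v\sim v'$, exactly two black vertices on the rhombus of the edge $vv'$ contribute, and their combined contribution gives $-k'$ times the $Z$-invariant conductance of that edge. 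Primal pairs that share no common black neighbor give zero. The dual-dual block is obtained by performing the same computation on the dual isoradial graph.

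The real content is the vanishing of the mixed blocks. For a primal vertex $v$ and an incident dual face $f$, exactly two black vertices of $\GD$ are common neighbors, corresponding to the two edges of $\Gs$ that bound $f$ at $v$. The Jacobi identity $\sn^{2}+\cn^{2}=1$, combined with the Kasteleyn sign convention along the four sides of the rhombus, should force the two contributions to cancel; all other primal-dual pairs have no common black neighbor. In the finite case, $\KD^{\spartial}(u)$ modifies only the rows of $\KD(u)$ indexed by primal boundary vertices, so the dual-dual block and the lower-left primal-dual block are untouched, the primal-primal block becomes the boundary-modified $\Delta^{m,\spartial}(u)$, and the upper-right block $Q(u)$ collects exactly the residual boundary terms that previously cancelled.

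The main obstacle will be the trigonometric bookkeeping in the mixed block: one must correctly line up the Kasteleyn phases on the four sides of each rhombus with the branch choice of the $Z^{u}$-parameter on $\Re(\TT(k))$ so that the two contributions actually cancel rather than add. Once this is done, the primal-primal and dual-dual blocks reduce to the same Pythagoras-type identity that yields the mass formula in~\cite{BdTR1}, and the finite-case statement follows by simply recording the residual boundary terms on which the cancellation fails, bundled into $Q(u)$.
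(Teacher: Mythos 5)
Your proposal follows essentially the same route as the paper's proof: a direct entry-by-entry expansion of $\overline{\KD(u)}^{\,t}\KD(u)$ over pairs of black vertices, with the diagonal blocks reducing to the conductance and mass identities of~\cite{BdTR1}, the mixed block vanishing by a two-term cancellation, and the finite case handled by tracking where the boundary modification of $\KD^\spartial(u)$ breaks that cancellation (producing $Q(u)$) and where the computation of the diagonal term changes at the boundary (producing the $u$-dependent diagonal of $\Delta^{m,\spartial}(u)$).

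A few bookkeeping points to fix. You have the bipartite colouring of $\GD$ reversed: the black vertices are $\Vs\cup\Vs^*$ and the white vertices sit at the edge-crossings, so the product matrix is indexed by $B\times B$ and each entry is a sum over common \emph{white} neighbours. For an off-diagonal primal entry $(v,v')$ with $v\sim v'$ there is exactly \emph{one} common white neighbour (the white vertex on the edge $vv'$), not two; the single product $\overline{\KD}_{w,v}\KD(u)_{w,v'}$ already equals $-k'\sc(\theta)$. The mixed-block cancellation does not come from $\sn^2+\cn^2=1$: each of the two products simplifies, via $[\sc(\theta)\cs(\theta)]^{1/2}=1$ and $\dn(u+K)=k'\nd(u)$, to $\mp i\dn$ of the \emph{same} argument (because the $\beta$-vector of one quarter-rhombus coincides with the $\alpha$-vector of the adjacent one), so the two terms cancel outright. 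Finally, in the finite case you should also record why the non-boundary diagonal entries are $u$-independent while the boundary primal ones are not: the identity of Proposition~11 of~\cite{BdTR1} requires the rhombus vectors around a vertex to close up ($\bar{\alpha}_{d+1}=\bar{\alpha}_1+2\pi$), which holds at inner vertices and, thanks to the added half-rhombi, at boundary vertices of the restricted dual, but fails at boundary primal vertices --- this, and not the modification of $\KD^\spartial(u)$, is what forces the $u$-dependent diagonal of $\Delta^{m,\spartial}(u)$.
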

A function $F\in\CC^{B}$ is said to be \emph{$Z^u$-holomorphic} if $\KD(u)F=0$. As a consequence of Theorem~\ref{prop:KDtKD_intro}, if 
$F$ is $Z^u$-holomorphic, then $F_{| \Vs}$ is $Z$-massive harmonic on $\Gs$ and $F_{| \Vs^*}$ is $Z$-massive harmonic on $\Gs^*$, thus 
explaining the part ``Dirac'' in ``$Z^u$-Dirac operator''.

In the infinite case, Theorem~\ref{prop:KDtKD_intro} yields the following relations for inverse operators, see also Corollary~\ref{cor:KD_G};
the statement in the finite case is given in Corollary~\ref{cor:KD_G_finite}.
\begin{cor}[Infinite case]\label{cor:KD_G_intro}
For every $u\in\Re(\TT(k))$, consider the operator $\KD(u)^{-1}$
mapping $\CC^{W}$ to $\CC^{B}$ whose coefficients are defined by, for every $\ubar{v},\ubar{f},w$ as in Figure~\ref{fig:corKDG},
\begin{align*}
\KD(u)^{-1}_{\ubar{v},w}&=e^{-i\frac{\alphafb+\betafb}{2}}(k')^{-1}\sc(\thetaf)^{\frac{1}{2}}
\Bigl([\dn(u_{\alphaf})\dn(u_{\betaf})]^{\frac{1}{2}} G^m_{\ubar{v},v_2}-[\dn(u_{\alphaf+2K})\dn(u_{\betaf+2K})]^{\frac{1}{2}}G^m_{\ubar{v},v_1}\Bigr)\\
\KD(u)^{-1}_{\ubar{f},w}&=-ie^{-i\frac{\alphafb+\betafb}{2}}(k')^{-1}\sc(\thetaf^*)^{\frac{1}{2}}
\Bigl([\dn((u_{\betaf})^*)\dn((u_{\alphaf+2K})^*)]^{\frac{1}{2}}G^{m,*}_{\ubar{f},f_2}+\\
&\hspace{7.5cm} -[\dn((u_{\betaf-2K})^*)\dn((u_{\alphaf})^*)]^{\frac{1}{2}}G^{m,*}_{\ubar{f},f_1} \Bigr),
\end{align*}
where $G^m$ and $G^{m,*}$ are the $Z$-massive and dual $Z$-massive Green functions of~\cite{BdTR1}.
Then $\KD(u)^{-1}$ is an inverse of the $Z^u$-Dirac operator $\KD(u)$. When the graph $\GD$ is moreover $\ZZ^2$-periodic, it is 
the unique inverse decreasing to zero at infinity.
\end{cor}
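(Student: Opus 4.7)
The plan is to derive the formulas as a direct consequence of Theorem~\ref{prop:KDtKD_intro}. The key identity
\[
\overline{\KD(u)}^{\,t\,}\KD(u)=k'\begin{pmatrix}\Delta^m&0\\0&\Delta^{m,*}\end{pmatrix}
\]
rewrites, upon left-multiplication by $(k')^{-1}\begin{pmatrix}G^{m}&0\\0&G^{m,*}\end{pmatrix}$, as $M\,\KD(u)=\mathrm{Id}$, where
\[
M:=(k')^{-1}\begin{pmatrix}G^{m}&0\\0&G^{m,*}\end{pmatrix}\overline{\KD(u)}^{\,t\,}.
\]
So $M$ is a left inverse of $\KD(u)$; since $G^{m},G^{m,*}$ are two-sided inverses of the respective Laplacians on, say, compactly supported functions, the same computation yields $\KD(u)\,M=\mathrm{Id}$ as well. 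The corollary reduces to identifying the entries of $M$ with those given in the statement.

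To check this, I would expand the matrix product entry-by-entry. A white vertex $w$ of $\GD$ corresponds to an edge of $\Gs$ with primal endpoints $v_{1},v_{2}$ and dual endpoints $f_{1},f_{2}$. The operator $\KD(u)$ is local, with only four nonzero entries in the column indexed by $w$, so
\[
M_{\ubar{v},w}=(k')^{-1}\bigl(G^{m}_{\ubar{v},v_{1}}\,\overline{\KD(u)_{w,v_{1}}}+G^{m}_{\ubar{v},v_{2}}\,\overline{\KD(u)_{w,v_{2}}}\bigr),
\]
and analogously for $M_{\ubar{f},w}$ with $f_{1},f_{2}$ and $G^{m,*}$. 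Substituting the explicit values of $\KD(u)_{w,v_{i}}$ and $\KD(u)_{w,f_{j}}$ from Section~\ref{sec:Z_Dirac_Z_Lap} and factoring out the common prefactors $e^{-i(\alphafb+\betafb)/2}\sc(\thetaf)^{1/2}$ (respectively $-ie^{-i(\alphafb+\betafb)/2}\sc(\thetaf^{*})^{1/2}$) reproduces exactly the expressions in the statement. The opposite signs attached to the two endpoints reflect the conventional Kasteleyn orientation of $\GD$, and the half-period shifts $\alpha+2K,\beta+2K$ encode the identity between $\KD(u)_{w,v_{1}}$ and $\KD(u)_{w,v_{2}}$ obtained from the defining formula up to interchange of the two endpoints.

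For uniqueness in the $\ZZ^{2}$-periodic case, suppose $M'$ is another inverse of $\KD(u)$ tending to zero at infinity. Each column of $M'-M$ is a $Z^{u}$-holomorphic function on $\GD$ vanishing at infinity, so by Theorem~\ref{prop:KDtKD_intro} its restrictions to $\Vs$ and $\Vs^{*}$ are $Z$-massive harmonic on $\Gs, \Gs^{*}$ and tend to zero. Under the $\ZZ^{2}$-periodicity, Fourier analysis (or equivalently the uniqueness of the decaying fundamental solution for $\Delta^{m},\Delta^{m,*}$ established in~\cite{BdTR1}) forces these restrictions to vanish, so $M=M'$. The main obstacle is the bookkeeping in the second step: matching the precise prefactors, the half-sum phase $e^{-i(\alphafb+\betafb)/2}$, and the square roots $[\dn(u_{\alphaf})\dn(u_{\betaf})]^{1/2}$ against the definition of $\KD(u)$, while carefully tracking the duality $\alpha\mapsto\alpha^{*}$ that mediates between the primal and dual expressions.
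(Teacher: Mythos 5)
Your overall strategy is the same as the paper's: define $\KD(u)^{-1}$ in matrix form as $(k')^{-1}\,\mathrm{diag}(G^m,G^{m,*})\,\overline{\KD(u)}^{\,t}$, use Theorem~\ref{prop:KDtKD_intro} to get the identity $\KD(u)^{-1}\KD(u)=\mathrm{Id}$, and then unpack the four nonzero entries of the column of $\KD(u)$ indexed by $w$ to recover the stated coefficients. Two points, however, do not hold up as written. First, in this paper an ``inverse'' Kasteleyn operator is by definition required both to satisfy the algebraic identity \emph{and} to have coefficients tending to $0$ as the distance between indices tends to infinity (Section~\ref{sec:dimer_infinite}); you never verify the second condition for $M$ itself, although your uniqueness argument tacitly assumes it when you assert that the columns of $M'-M$ vanish at infinity. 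This step needs the decay estimates on the Green functions: exponential decay of $G^m,G^{m,*}$ for $k\neq 0$ (from~\cite{BdTR1}, with $\dn$ uniformly bounded), and for $k=0$ the fact that, although $G^0$ grows logarithmically, the \emph{differences} appearing in the formulas decay polynomially (from~\cite{Kenyon3}). Second, your claim that ``the same computation yields $\KD(u)\,M=\mathrm{Id}$'' is not justified: Theorem~\ref{prop:KDtKD_intro} controls the product $\overline{\KD(u)}^{\,t}\KD(u)$, not $\KD(u)\,\overline{\KD(u)}^{\,t}$, so the factorization in the opposite order does not follow by the same manipulation. The paper avoids this by only requiring a one-sided inverse (plus decay), and in the $\ZZ^2$-periodic case deduces two-sidedness from uniqueness together with associativity of the infinite matrix products --- an associativity that also underlies the legitimacy of left-multiplying the factorization identity by the infinite matrix $\mathrm{diag}(G^m,G^{m,*})$, and which deserves at least a remark.

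Your uniqueness argument is also affected by this left/right issue: for a column of $M'-M$ to be $Z^u$-holomorphic you need $\KD(u)(M'-M)=0$, i.e.\ that both operators are \emph{right} inverses, whereas the construction directly produces a \emph{left} inverse. Once that is repaired (or once one simply invokes the uniqueness results of the cited references, as the paper does), the rest of your reduction to massive harmonic functions on $\Gs$ and $\Gs^*$ vanishing at infinity is a legitimate, slightly more self-contained route to uniqueness than the paper's citation.
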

This gives, in Theorem~\ref{thm:Gibbs_KD}, an explicit \emph{local} expression for a Gibbs measure of the dimer model on the double graph
$\GD$, where the locality property is inherited from that of the $Z$-massive Green functions of~\cite{BdTR1}. Using the KPW-Temperley
bijection~\cite{Temperley,KPW}, probabilities of pairs of dual directed spanning trees are computed using the Green function of 
a massive, non-directed random walk. Apart from the locality property which is specific,
a similar result is obtained by Chhita~\cite{Chhita} in the case of $\ZZ^2$ with a specific choice of weights.

In Theorem~\ref{thm:det} and Corollary~\ref{cor:det}, we restrict to the finite case and prove relations on determinants; we show,
\begin{thm}\label{thm:det_intro}
Let $\Ms_0$ be a dimer configuration of $\GDro$. 
Then, for every $u\in\Re(\TT(k))'$, we have
\begin{align*}
|\det\KD(u)|&=
(k')^{\frac{|\Vs^*|}{2}}\Bigl(\prod_{w\in W}\sc(\theta_w)^\frac{1}{2}\Bigr)
\Bigl(\prod_{e=wx\in\Ms_0} [\dn(u_{\alpha_e})\dn(u_{\beta_e})]^{\frac{1}{2}}  \Bigr)
\det \Delta^{m,*},\\
|\det\KD^\spartial(u)|&=
(k')^{\frac{|\Vs|-1}{2}}\Bigl(\prod_{w\in W}\cs(\theta_w)^\frac{1}{2}\Bigr)  
\Bigl(\prod_{e=wx\in\Ms_0} [k'\nd(u_{\alpha_e})\nd(u_{\beta_e})]^{\frac{1}{2}}  \Bigr)
\det\Delta^{m,\spartial}(u).
\end{align*}
\end{thm}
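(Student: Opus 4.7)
The plan is to combine the matrix identity of Theorem~\ref{prop:KDtKD_intro} (finite case) with Kasteleyn theory for the dimer model on the double graph and a Temperley-style tree/dimer bijection.

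First, I would take $|\det|$ of both sides of
\[\overline{\KD^\spartial(u)}^{\,t\,}\KD(u) = k'\begin{pmatrix}\Delta^{m,\spartial}(u) & Q(u)\\ 0 & \Delta^{m,*}\end{pmatrix}.\]
Block upper-triangularity of the right-hand side immediately yields
\[|\det\KD^\spartial(u)|\cdot|\det\KD(u)| = (k')^{N}\,\det\Delta^{m,\spartial}(u)\,\det\Delta^{m,*},\]
where $N = |\Vs|+|\Vs^*|-1$ is the size of the block (after the rooting built into $\GDro$). This is exactly the \emph{product} of the two identities to be proved, so the real task is to split this product into the two individual equalities.

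For the splitting, I would exploit that $\KD(u)$ and $\KD^\spartial(u)$ are Kasteleyn operators for the dimer model on $\GDro$, with two different boundary conventions; by Kasteleyn's theorem their absolute determinants are weighted dimer partition functions on $\GDro$. Via the Kenyon--Propp--Wilson / Temperley bijection, dimers on the double graph correspond to pairs of dual directed spanning trees (or forests, once a root/boundary is fixed) on $\Gs$ and $\Gs^*$, with edge weights matching the $Z$-invariant conductances and masses defining $\Delta^{m,*}$ and $\Delta^{m,\spartial}(u)$ in~\cite{BdTR1}. The key step is to verify that the boundary conventions built into $\KD(u)$ and $\KD^\spartial(u)$ are exactly the ones that, under Temperley, produce a spanning tree on $\Gs^*$ on the one hand, and a spanning forest on $\Gs$ rooted at $\partial$ on the other; the massive matrix-tree theorem then identifies the resulting partition functions with $\det\Delta^{m,*}$ and $\det\Delta^{m,\spartial}(u)$ respectively.

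The multiplicative constants are extracted by fixing the reference dimer configuration $\Ms_0$ and expanding each determinant relative to it: the entries of $\KD(u)$ on reference edges $e=wx\in\Ms_0$, read off from the definition of the $Z^u$-Dirac operator in Section~\ref{sec:Z_Dirac_Z_Lap} (compare Corollary~\ref{cor:KD_G_intro}), contribute the product $\prod_{e\in\Ms_0}[\dn(u_{\alpha_e})\dn(u_{\beta_e})]^{1/2}$, and similarly $[k'\nd(u_{\alpha_e})\nd(u_{\beta_e})]^{1/2}$ for $\KD^\spartial(u)$. The white-vertex prefactors in the definition of the Dirac operators give the global products $\prod_{w\in W}\sc(\theta_w)^{1/2}$ (resp.\ $\cs(\theta_w)^{1/2}$). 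The remaining powers of $k'$ come from the conductance/mass normalisation used to identify Temperley weights with the $Z$-massive weights; reassuringly, the exponents $\tfrac{|\Vs^*|}{2}$ and $\tfrac{|\Vs|-1}{2}$, combined with the hidden factor $(k')^{|\Ms_0|/2}$ inside the $[k'\nd\nd]^{1/2}$ product of the second formula, sum to $N$, in agreement with the product identity above and providing a useful cross-check. The main obstacle I anticipate is the careful bookkeeping of boundary conditions: one must confirm that the Temperley bijection pairs $|\det\KD(u)|$ with $\det\Delta^{m,*}$ (and not with $\det\Delta^{m,\spartial}(u)$), and that the weight correspondence reproduces exactly the $Z$-invariant conductances and masses of~\cite{BdTR1}; sign conventions (Pfaffian vs.\ determinant, choice of Kasteleyn orientation) account for the absolute values in the statement.
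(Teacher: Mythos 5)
Your overall architecture matches the paper's: you take determinants in the block-triangular identity of Theorem~\ref{prop:KDtKD_intro} to get the product
$|\det\KD^\spartial(u)|\,|\det\KD(u)|=(k')^{N}\det\Delta^{m,\spartial}(u)\det\Delta^{m,*}$ with $N=|\Vs|-1+|\Vs^*|$, and you correctly observe that it then suffices to prove one of the two identities and divide; your $k'$-exponent cross-check is also right. The paper does exactly this for the second identity (Corollary~\ref{cor:det}).

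However, there is a genuine gap in how you propose to establish the first identity. You treat the passage from ``Kasteleyn determinant of $\GDro$ $=$ Temperley-weighted pairs of directed spanning trees'' to ``$\det\Delta^{m,*}$'' as boundary-condition bookkeeping plus the massive matrix-tree theorem, asserting that the dimer weights ``match the $Z$-invariant conductances and masses''. They do not: the weights $c(u)_{wx}=[\sc(u_{\alpha}-u_{\beta})\dn(u_{\alpha})\dn(u_{\beta})]^{1/2}$ are $u$-dependent and carry no masses, whereas $\Delta^{m,*}$ has $u$-independent conductances $\sc(\theta^*)$ and masses built from the function $A$. The paper explicitly warns that the Temperley route only closes at $k=0$. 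Two nontrivial steps are missing from your plan. First, Temperley gives a sum over \emph{pairs} of weighted trees, not a single matrix-tree determinant; one must first perform a bipartite gauge transformation (Lemma~\ref{lem:K_Kg}) multiplying edge weights by $[\cs(\theta_w)\nd(u_{\alpha_e})\nd(u_{\beta_e})]^{1/2}$ so that all primal edges get weight $1$, collapsing the pair partition function to a weighted count of dual directed trees; this is also what produces the prefactors $\prod_w\sc(\theta_w)^{1/2}\prod_{\Ms_0}[\dn\dn]^{1/2}$ rigorously (via Corollary~\ref{cor:gauge_bip_det}), rather than by ``expanding relative to $\Ms_0$''. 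Second, the resulting directed-tree Laplacian $\Delta^{*,\outer}(u)$ is still $u$-dependent, directed, and massless; identifying $\det\Delta^{*,\outer}(u)=(k')^{|\Vs^*|/2}\det\Delta^{m,*}$ requires a second gauge transformation on the weighted adjacency matrix of the digraph, built from the massive exponential function of~\cite{BdTR1} evaluated at $u-2K-2iK'$, together with the elliptic identity $(k')^{-1}\sum_j\sc(\theta_j^*)\dn(u_{\alpha_j})\dn(u_{\beta_j})=\sum_j A(\theta_j^*)$ which is what makes the masses appear on the diagonal. Without these two steps the splitting of the product identity is not achieved.
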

As a consequence, the partition function of pairs of dual directed spanning trees is equal, up to a constant, to the partition function of rooted spanning 
forests. In the critical case, $k=0$, this is an easy consequence of Temperley's bijection~\cite{Temperley}, but the correspondence does not extend 
when $k\neq 0$. The main tools of the proof are gauge equivalences on bipartite adjacency matrices and on~\emph{adjacency matrices of digraphs},
see also Appendix~\ref{app:gauge}. 

The $Z^u$-Dirac operator is equivalent to a model of directed spanning trees. In
Proposition~\ref{prop:Zinv}, we prove that the latter is $Z$-invariant, thus explaining the part ``$Z^u$'' of the 
terminology ``$Z^u$-Dirac operator''.

\paragraph{Section~\ref{sec:GQ_Z_Dirac}: Kasteleyn operator of the graph $\GQ$ and $Z^u$-Dirac operator.}
We consider the dimer model on the graph $\GQ$ arising from the $Z$-invariant XOR-Ising model, with Kasteleyn matrix $\KQ$. The main result
of this section is Theorem~\ref{thm:main} proving the following relations between the matrix $\KQ$ and the $Z^u$-Dirac operator. The matrices
$S(u)$ and $T(u)$ are defined in Section~\ref{sec:KQ_KD_relation} and the statement is as follows.
\begin{thm}\label{thm:main_intro}$\,$
\begin{itemize}
\item[$\bullet$] \emph{Infinite case.} Let $u\in\Re(\TT(k))$, then the Kasteleyn matrix $\KQ$, the $Z^u$-Dirac operator
$\KD(u)$ and the matrices $S(u)$, $T(u)$ are related by the following identity:
\begin{equation*}
\KQ\,T(u) = S(u)\, \KD(u). 
\end{equation*}
\item[$\bullet$] \emph{Finite case.} Let $u\in\Re(\TT(k))'$, then the Kasteleyn matrix $\KQu$, the $Z^u$-Dirac operator $\KD^\spartial(u)$
and the matrices $S(u)$, $T(u)$ are related by the following identity:
\begin{equation*}
\KQu\,T(u) = S(u)\, \KD^\spartial(u).
\end{equation*}
\end{itemize} 
\end{thm}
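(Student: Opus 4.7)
The plan is to verify each of the two matrix identities entry-by-entry. Both $\KQ$ and $\KD(u)$ are sparse operators whose nonzero entries depend only on a bounded neighborhood in the isoradial embedding, and the matrices $S(u)$, $T(u)$ defined in Section~\ref{sec:KQ_KD_relation} are local in the same sense. Consequently the identity $\KQ\,T(u)=S(u)\,\KD(u)$ decomposes into a finite family of \emph{local} scalar equations, one for each pair (row, column) in the support of the two products; once each such equation is verified, the full matrix identity follows.

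The first step is to record explicitly, using the definitions of Section~\ref{sec:prelim} and Section~\ref{sec:Z_Dirac_Z_Lap}, the nonzero coefficients of all four matrices as Jacobi elliptic expressions in the rhombus half-angles $\bar{\theta}_e$ and the parameter $u$. Fixing a row indexed by a vertex $y$ of $\GQ$, the entries of $\KQ\,T(u)$ on that row involve only the (few) neighbors of $y$ in $\GQ$, while the entries of $S(u)\,\KD(u)$ involve only vertices of $\GD$ lying in the same rhombus neighborhood of $y$. After this alignment of supports, each row-reduction becomes a short identity between products and sums of terms of the form $\sn(\theta)$, $\cn(\theta)$, $\dn(u_\alpha)$, weighted by unimodular phases $e^{i(\alpha+\beta)/2}$.

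The core of the argument is then to verify these scalar identities. I expect them to follow from two classical ingredients: the Jacobi addition formulas for $\sn,\cn,\dn$, and the star-triangle/Yang-Baxter identities underlying the $Z$-invariance of the Ising coupling constants $\Js_e$. Combinations of the type $\dn(u_\alpha)\dn(u_\beta)\pm\dn(u_{\alpha+2K})\dn(u_{\beta+2K})$, already visible in the inverse formulas of Corollary~\ref{cor:KD_G_intro}, should be exactly those appearing when one expands the two sides. The main difficulty I anticipate is not conceptual but is instead a careful bookkeeping of signs and phases: one has to track the Kasteleyn orientation on $\GQ$ together with the half-angle phases in $\KD(u)$, and check that all cancellations predicted by the elliptic identities occur with the expected sign.

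For the finite case, the derivation above applies verbatim at any row indexed by an interior vertex, since each local identity used depends only on the rhombus neighborhood of that vertex. It therefore suffices to verify the identity on boundary rows, where $\KQu$ replaces $\KQ$ and $\KD^\spartial(u)$ replaces $\KD(u)$. By construction, the boundary modifications are tuned precisely so that the same scheme of local identities continues to hold, with boundary contributions on both sides matching once the restrictions of $S(u)$ and $T(u)$ to the boundary are taken into account. Checking this compatibility completes the finite identity.
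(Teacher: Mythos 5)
Your strategy is exactly the paper's: the identity is proved by fixing a black vertex $\bs$ of $\GQ$ and checking the four (or three, on the boundary) nonzero entries $[\KQ T(u)]_{\bs,x}=[S(u)\KD(u)]_{\bs,x}$ for $x\in\{v_1,v_2,f_1,f_2\}$, so in that sense the approach matches.

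However, as written the proposal stops exactly where the content of the theorem begins: the scalar identities are never actually verified, only asserted to follow from ``addition formulas and Yang--Baxter''. Two corrections are worth making. First, no Yang--Baxter input is needed; after substituting the definitions, the cases $x=v_2$ and $x=f_2$ are immediate, and the cases $x=v_1$ and $x=f_1$ both reduce to the single Jacobi addition formula $\cn u\,\cn v-\cn(u+v)=\sn u\,\sn v\,\dn(u+v)$ (for $x=f_1$ one multiplies it by $\nd(u-v)$ and uses $\nd(u-v)=(k')^{-1}\dn(K-(u-v))$). The phase bookkeeping you flag as the main difficulty is handled by rewriting the single factor $\cn(u_\beta)[\nd(u_\alpha)\nd(u_\beta)]^{1/2}$ of $s(u)_{\bs,w}$ in four equivalent forms adapted to the four neighbors, using shifts by $K$ and $2K$. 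Second, the finite case is not merely a matter of the boundary terms ``matching by construction'': the boundary rows require the genuinely modified entries of $\KQu$ (the extra $\sn(\theta^\spartial)$ on the edges $\bs^\ell\ws^\ell$, $\bs^r\ws^r$), the redefined boundary coefficients of $S(u)$ and $T(u)$, and the train-track relation $\bar\beta^\ell=\bar\alpha^r+2\pi$; it is precisely this computation that forces the boundary correction $\cd(u_{\beta^r})/\cd(u_{\alpha^\ell})$ in $\KD^\spartial(u)$, so it cannot be dismissed as automatic. Without carrying out at least the $x=v_1$, $x=f_1$ and boundary $x=v^c$ computations, the proposal is a plan rather than a proof.
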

In the infinite case, Theorem~\ref{thm:main_intro} yields the following relations on inverse matrices, see also Corollary~\ref{cor:KD_KQ}; the 
statement in the finite 
case is the subject of Corollary~\ref{cor:KD_KQ_finite}.
\begin{cor}[Infinite case]\label{cor:KD_KQ_intro}
For every $u\in\Re(\TT(k))$, for every $\ubar{\ws},\ubar{v},\ubar{f}$, and every $w,\bs,\bs'$ as in Figure~\ref{fig:cor},
\begin{equation*}
(\KQ)^{-1}_{\ubar{\ws},\bs}\cn(u_{\betaf})-i(\KQ)^{-1}_{\ubar{\ws},\bs'}\sn(u_{\betaf})\dn(u_{\alphaf})
=\frac{e^{i\frac{\betafb-\betaib}{2}}}{\Lambda(u_{\alphaf},u_{\betaf})}
\left[\cn(u_{\betai})\KD(u)^{-1}_{\ubar{v},w}-i\sn({u_{\betai}})\KD(u)^{-1}_{\ubar{f},w}\right],
\end{equation*}
where $\Lambda(u_\alpha,u_\beta)=[\sn\theta\cn\theta\nd(u_{\alpha})\nd(u_{\beta})]^{\frac{1}{2}}$, and 
$\theta=u_\alpha-u_\beta$.
\end{cor}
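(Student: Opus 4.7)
The plan is to deduce the coefficient identity directly from the matrix factorization of Theorem~\ref{thm:main_intro}, using the explicit inverse of Corollary~\ref{cor:KD_G_intro}. Starting from $\KQ\,T(u) = S(u)\,\KD(u)$, I would right-multiply by $\KD(u)^{-1}$ and left-multiply by $(\KQ)^{-1}$ to obtain the matrix identity
\begin{equation*}
T(u)\,\KD(u)^{-1} = (\KQ)^{-1}\,S(u).
\end{equation*}
The corollary is then nothing more than this identity read at a single pair of indices, so the content of the proof is essentially algebraic bookkeeping.

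The next step is to unpack the local block structure of $S(u)$ and $T(u)$ from Section~\ref{sec:KQ_KD_relation}. From the shape of the claimed formula, $T(u)$ should be block-diagonal with $2\times 2$ blocks sending each pair of black vertices $(\bs,\bs')$ of $\GQ$ living around a common rhombus to the corresponding primal--dual pair $(\ubar{v},\ubar{f})\in \Vs\cup\Vs^*$, and $S(u)$ should behave symmetrically on the white side. Applying this block-diagonality to $T(u)\KD(u)^{-1} = (\KQ)^{-1}S(u)$ at a fixed column $w\in W$ and a fixed row $\ubar{\ws}$, the right-hand side collapses to the bracketed linear combination $\cn(u_{\betai})\KD(u)^{-1}_{\ubar{v},w}-i\sn(u_{\betai})\KD(u)^{-1}_{\ubar{f},w}$ up to the scalar entry of the relevant row of $T(u)$, and the left-hand side collapses to $(\KQ)^{-1}_{\ubar{\ws},\bs}\cn(u_{\betaf})-i(\KQ)^{-1}_{\ubar{\ws},\bs'}\sn(u_{\betaf})\dn(u_{\alphaf})$ up to the analogous scalar entry of the relevant column of $S(u)$. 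The prefactor $e^{i(\betafb-\betaib)/2}/\Lambda(u_{\alphaf},u_{\betaf})$ in the statement will simply record the ratio of these two scalar normalizations.

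Finally, I would justify the formal matrix manipulations in the infinite setting: since $T(u)$ and $S(u)$ have locally finite rows and columns, and since $\KD(u)^{-1}$ is given by the explicit, decaying formula of Corollary~\ref{cor:KD_G_intro}, the product $T(u)\KD(u)^{-1}$ is well-defined column-by-column, and applying $(\KQ)^{-1}$ to $\KQ\,T(u)\,\KD(u)^{-1}=S(u)$ then poses no convergence issue. In the $\ZZ^2$-periodic case, uniqueness of a decaying inverse of $\KQ$ pins down $(\KQ)^{-1}$ unambiguously and makes the two sides genuinely equal rather than merely formally so.

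The main obstacle, I expect, will not be the algebra itself but rather the bookkeeping required to match the precise normalizations of the $2\times 2$ blocks of $S(u)$ and $T(u)$ with the scalar factor $\Lambda(u_{\alphaf},u_{\betaf})$ and the phase $e^{i(\betafb-\betaib)/2}$ in the statement; this amounts to a direct but somewhat tedious comparison with the explicit definitions given in Section~\ref{sec:KQ_KD_relation}, and to a careful accounting of which rhombus angles $\alphaf,\betaf$ versus $\alphai,\betai$ correspond to which side of the identity in Figure~\ref{fig:cor}.
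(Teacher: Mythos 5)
Your proposal is correct and follows essentially the same route as the paper: left- and right-multiply the identity $\KQ\,T(u)=S(u)\,\KD(u)$ by the two inverses (justifying associativity of the infinite products via decay of the inverses and local finiteness of $S(u),T(u)$, plus uniqueness of the decaying inverse), then read the resulting identity $(\KQ)^{-1}S(u)=T(u)\KD(u)^{-1}$ at a single row/column and substitute the explicit entries of $S(u)$ and $T(u)$ to produce the factor $e^{i(\betafb-\betaib)/2}/\Lambda(u_{\alphaf},u_{\betaf})$. Your description of $S(u)$ and $T(u)$ as having $2\times 2$ blocks is slightly off (each has one, respectively two, non-zero entries per row, with the two-term combinations arising per column of $S(u)$ and per row of $T(u)$), but this does not affect the argument.
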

Specifying the value of the parameter $u$ allows to express coefficients of the inverse Kasteleyn operator $(\KQ)^{-1}$
using the inverse $Z^u$-Dirac operator; combining this with Theorem~\ref{cor:KD_G_intro} yields an expression using the $Z$-massive 
Green function of~\cite{BdTR1}. We obtain, see also Corollary~\ref{cor:gibbsGQ_GD} and Corollary~\ref{cor:BoltzmannGQ_GD} for the finite case,
\begin{cor}[Infinite case]\label{cor:gibbsGQ_GD_intro}
For every $\ubar{\ws},\bs$ of $\GQ$ as in Figure~\ref{fig:cor_dimerKQ},
\begin{align*}
(\KQ)^{-1}_{\ubar{\ws},\bs}&=\textstyle\frac{
e^{i\frac{\betafb-\betaib}{2}}}
{[\cn(\thetaf)\sn(\thetaf)\nd(\thetaf)]^{\frac{1}{2}}}
\left(\cn\bigl(\frac{\betaf-\betai}{2}\bigr)\KD(\betaf)^{-1}_{\ubar{v},w}
-i\sn\bigl(\frac{\betaf-\betai}{2}\bigr)\KD(\betaf)^{-1}_{\ubar{f},w}
\right)\\
(\KQ)^{-1}_{\ubar{\ws},\bs}&=\textstyle e^{-i\frac{\betaib+\alphafb}{2}}(k')^{-1}
\Bigl(\frac{\cn\bigl(\frac{\betaf-\betai}{2}\bigr)}{\cn(\thetaf)}
\bigl[\dn(\thetaf) G^m_{\ubar{v},v_2}-k'G^m_{\ubar{v},v_1}\bigr]
-\frac{\sn\bigl(\frac{\betaf-\betai}{2}\bigr)}{\sn(\thetaf)}
\bigl[\dn(\thetaf)G^{m,*}_{\ubar{f},f_2}-G^{m,*}_{\ubar{f},f_1}\bigr]\Bigr).
\end{align*}
\end{cor}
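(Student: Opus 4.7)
The first identity is obtained by specializing the relation of Corollary~\ref{cor:KD_KQ_intro} at $u=\betaf$. At this value $u_{\betaf}=0$, so $\sn(u_{\betaf})=0$ and $\cn(u_{\betaf})=1$: the term in $(\KQ)^{-1}_{\ubar{\ws},\bs'}$ disappears and the left-hand side reduces to $(\KQ)^{-1}_{\ubar{\ws},\bs}$. On the right-hand side, $u_{\betai}$ becomes $\frac{\betaf-\betai}{2}$, and $\Lambda(u_{\alphaf},u_{\betaf})$ simplifies, using $\nd(0)=1$ together with $\theta=u_{\alphaf}-u_{\betaf}=\thetaf$, to $[\sn\thetaf\,\cn\thetaf\,\nd\thetaf]^{1/2}$. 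This yields the first formula of the statement.

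For the second formula I substitute into the first the expressions for $\KD(\betaf)^{-1}_{\ubar{v},w}$ and $\KD(\betaf)^{-1}_{\ubar{f},w}$ provided by Corollary~\ref{cor:KD_G_intro}. Under $u=\betaf$ the relevant Jacobi values collapse: $\dn(u_{\betaf})=1$, $\dn(u_{\alphaf})=\dn(\thetaf)$; while $\dn(u_{\alphaf+2K})$ and $\dn(u_{\betaf+2K})$ are computed via the half-period identity $\dn(v+K)=k'/\dn(v)$ and the special value $\dn(K)=k'$, producing $[\dn(u_{\alphaf+2K})\dn(u_{\betaf+2K})]^{1/2}=k'/\dn(\thetaf)^{1/2}$. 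The dual quantities appearing in the $\ubar{f}$-block are handled identically on $\thetaf^*$, and converted back to $\thetaf$ via the standard Jacobi duality relations between $\sn,\cn,\dn$ on conjugate arguments.

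Assembling the coefficients is then a direct algebraic simplification. The scalar prefactor from Corollary~\ref{cor:KD_G_intro} combined with $1/[\sn\thetaf\,\cn\thetaf\,\nd\thetaf]^{1/2}$ from Corollary~\ref{cor:KD_KQ_intro} reduces the $G^m_{\ubar{v},v_2}$ coefficient to $(k')^{-1}\dn\thetaf/\cn\thetaf$ and the $G^m_{\ubar{v},v_1}$ coefficient to $-1/\cn\thetaf$, matching $\frac{\cn(\frac{\betaf-\betai}{2})}{\cn\thetaf}[\dn\thetaf\,G^m_{\ubar{v},v_2}-k'G^m_{\ubar{v},v_1}]$ once $(k')^{-1}$ is factored out; the dual block yields analogously $\dn\thetaf/\sn\thetaf$ and $-1/\sn\thetaf$, while the two factors of $-i$ (one from $\KD(\betaf)^{-1}_{\ubar{f},w}$ in Corollary~\ref{cor:KD_G_intro}, one from Corollary~\ref{cor:KD_KQ_intro}) combine to account for the overall minus sign in front of the $G^{m,*}$-block. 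Finally, the phases combine as $e^{i(\betafb-\betaib)/2}\cdot e^{-i(\alphafb+\betafb)/2}=e^{-i(\betaib+\alphafb)/2}$, which matches the prefactor in the statement.

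The main obstacle lies in the elliptic-function bookkeeping: carefully verifying the $\pm K$, $\pm 2K$ shift behaviour of $\dn$ under the convention adopted in Section~\ref{sec:Z_Dirac_Z_Lap} for the arguments $u_{\alpha}$, $u_{\alpha+2K}$ and their duals, and tracking the branches of the various square roots $\sc(\thetaf)^{1/2}$, $\nd(\thetaf)^{1/2}$, $[\dn\cdots]^{1/2}$ so that the identifications between both stated forms are internally consistent.
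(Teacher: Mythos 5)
Your proposal is correct and follows exactly the route of the paper's proof: the first identity is obtained by setting $u=\betaf$ in Corollary~\ref{cor:KD_KQ} (so that $\sn(u_{\betaf})=0$, $\cn(u_{\betaf})=1$ and $\Lambda$ collapses to $[\sn\thetaf\cn\thetaf\nd\thetaf]^{\frac{1}{2}}$), and the second by substituting the coefficients of Corollary~\ref{cor:KD_G} evaluated at $u=\betaf$ with the same Jacobi identities $\dn(K\pm u)=k'\nd(u)$, $\dn(K)=k'$, $\sc(\thetaf^*)=(k')^{-1}\cs(\thetaf)$. The resulting simplifications you describe (including the value $[\dn(u_{\alphaf+2K})\dn(u_{\betaf+2K})]^{1/2}=k'\nd(\thetaf)^{1/2}$ and the phase combination) all check out against the paper's computation.
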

This proves, in an alternative way, an explicit \emph{local} expression for a Gibbs measure of the dimer model on the graph $\GQ$~\cite{BdtR2},
where the locality property is seen as directly inherited from that of the $Z$-massive Green function.

Using Theorem~\ref{thm:main_intro} and additional combinatorial arguments, we prove in Theorem~\ref{thm:part_function} that the determinants of $\KQ$ and of 
the $Z^u$-Dirac operator are equal, up to an explicit constant. By~\cite{Dubedat}, the determinant of $\KQ$ is equal up to a constant, 
to the squared partition function of the Ising model. Combining this with Theorem~\ref{thm:det_intro} gives, see 
also Corollary~\ref{cor:partition_function} for the explicit value of $C(u)$,
\begin{thm}\label{cor:partition_function_intro}
For every $u\in\Re(\TT(k))''$,
\begin{align*}
[\Zising^+(\Gs,\Js)]^2&=C(u) |\det \Delta^{m,\spartial}(u)|.
\end{align*}
\end{thm}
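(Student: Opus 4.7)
The plan is to chain three determinant identities that reduce the squared Ising partition function first to $|\det\KQu|$, then to $|\det\KD^\spartial(u)|$, and finally to $|\det\Delta^{m,\spartial}(u)|$. By Dub\'edat's result, $[\Zising^+(\Gs,\Js)]^2$ equals $|\det\KQu|$ up to an explicit combinatorial prefactor coming from the construction of $\GQ$ out of the Ising model (contributions of vertex weights, the number of edges of $\Gs$, and the Pfaffian/determinant normalization). This is the input from Section~\ref{sec:KFKQ} and does not depend on $u$; I would record the corresponding constant first and treat it as a black box.

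Next, I would invoke Theorem~\ref{thm:part_function} to pass from $|\det\KQu|$ to $|\det\KD^\spartial(u)|$. The natural route is to start from the matrix identity in Theorem~\ref{thm:main_intro} (finite case),
\begin{equation*}
\KQu\,T(u) = S(u)\,\KD^\spartial(u),
\end{equation*}
take determinants, and read off $|\det\KQu| = |\det S(u)|\,|\det T(u)|^{-1}\,|\det\KD^\spartial(u)|$, provided $S(u)$ and $T(u)$ are square and invertible for $u\in\Re(\TT(k))''$. In practice, $S(u)$ and $T(u)$ are block-diagonal (or close to it) with elementary building blocks indexed by the edges/rhombi of the isoradial graph, so their determinants factor into a product over edges of expressions in the Jacobi elliptic functions evaluated at $u_{\alpha_e},u_{\beta_e}$. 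I would simply compute these two products explicitly; this is the bookkeeping step for Theorem~\ref{thm:part_function}, possibly together with a sign/gauge argument (as announced in Section~\ref{sec:Z_Dirac_Z_Lap} and the Appendix) to ensure that $\det T(u)\neq 0$ on $\Re(\TT(k))''$, which is precisely why $\Re(\TT(k))''$ (rather than the full $\Re(\TT(k))'$) is the right range for $u$ here.

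Finally, I would apply the finite-case half of Theorem~\ref{thm:det_intro},
\begin{equation*}
|\det\KD^\spartial(u)| = (k')^{\frac{|\Vs|-1}{2}}\Bigl(\prod_{w\in W}\cs(\theta_w)^{\frac{1}{2}}\Bigr)\Bigl(\prod_{e=wx\in\Ms_0}[k'\nd(u_{\alpha_e})\nd(u_{\beta_e})]^{\frac{1}{2}}\Bigr)\det\Delta^{m,\spartial}(u),
\end{equation*}
and multiply the three proportionality constants to obtain the single constant $C(u)$ asserted in Corollary~\ref{cor:partition_function}. The final $C(u)$ is the product of: the Dub\'edat constant relating $\Zising^+$ to $\KQu$, the ratio $|\det S(u)|/|\det T(u)|$ from Theorem~\ref{thm:part_function}, and the explicit elliptic product in the display above.

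The main obstacle is the middle step, namely turning the non-square-looking identity $\KQu\,T(u)=S(u)\,\KD^\spartial(u)$ into a clean determinant identity with a tractable constant. One has to check that $S(u)$ and $T(u)$ can be taken square (or that the excess rows/columns correspond to the boundary modification that turns $\KD(u)$ into $\KD^\spartial(u)$ and $\KQ$ into $\KQu$) and that the sign ambiguities coming from the Kasteleyn orientation and from the square roots of Jacobi functions on $\Re(\TT(k))''$ are globally consistent. Once this is done, the rest of the proof is a direct assembly of the three determinant identities and a routine, if tedious, simplification of the elliptic prefactors into the closed form $C(u)$.
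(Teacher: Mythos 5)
Your overall architecture --- chaining (i) the Dub\'edat identity relating $[\Zising^+(\Gs,\Js)]^2$ to $|\det\KQ|$, (ii) a determinant identity between $\KQ$ and $\KD^\spartial(u)$, and (iii) Corollary~\ref{cor:det} relating $|\det\KD^\spartial(u)|$ to $|\det\Delta^{m,\spartial}(u)|$ --- is exactly the paper's, and steps (i) and (iii) are fine as you describe them. The gap is in step (ii). You propose to take determinants of $\KQu\,T(u)=S(u)\,\KD^\spartial(u)$ directly, and you suggest that $S(u)$ and $T(u)$ ``can be taken square'' up to boundary adjustments. They cannot: as stated in Section~\ref{sec:KQ_KD_relation}, both matrices are rectangular with roughly \emph{twice} as many rows as columns ($S(u)$ maps $\CC^{W\ro}$ into $\CC^{\Bs}$ and each inner quadrangle of $\GQ$ contributes two black vertices for a single white vertex of $\GD$; similarly for $T(u)$). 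This is a bulk $2{:}1$ mismatch, not a boundary effect, so no determinant can be read off from the identity as it stands, and your proposed resolution would fail.

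The paper's actual proof of Theorem~\ref{thm:part_function} supplies the missing idea. One partitions $\Bs=\Bs_1\cup\Bs_2$ and $\Ws=\Ws_1\cup\Ws_2$ (one vertex of each colour per quadrangle), writes the identity in block form, and performs a Schur-complement elimination (Lemma~\ref{lem:KQ_KD_det}) to obtain
\begin{equation*}
\det\KQu\cdot\det T_1(u)=\det S_1^\spartial(u)\cdot\det S_1^\scirc(u)\cdot\det S_2(u)\cdot\det R(u)\cdot\det\KD^\spartial(u),
\end{equation*}
with $R(u)=S_2(u)^{-1}\KQu_{22}-S_1^\scirc(u)^{-1}\KQu_{1^\scirc 2}$. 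The crucial and non-routine point is then that the partition is chosen from a specific perfect matching $\Ms_1$ of $\GDro$ (built from a spanning tree of $\Gs^*$) so that the bipartite graphs supporting $T_1(u)$ and $R(u)$ are trees, resp.\ spanning unions of trees; hence each has at most one perfect matching and its determinant collapses to a single explicit product of elliptic factors. Without this combinatorial input, $\det T_1(u)$ and $\det R(u)$ are not computable and the constant $C(u)$ cannot be extracted. Your plan correctly locates the difficulty but has no mechanism to overcome it, so as written the middle step does not go through.
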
 
When $k=0$, we essentially recover the result of~\cite{deTiliere:partition}; see Section~\ref{sec:ex_Z_inv_critical}.

\paragraph{Section~\ref{sec:KFKQ}: Dimer model on the Fisher graph $\GF$ and the Kasteleyn matrix $\KQ$.}
Consider the dimer model on $\GF$ with Kasteleyn matrix $\KF$ arising from an Ising model with coupling constants $\Js$, not necessarily 
$Z$-invariant, and the corresponding dimer model on the bipartite graph $\GQ$, with (real) Kasteleyn matrix $\KQt$. Building 
on the work of Dub\'edat~\cite{Dubedat} and proving additional matrix relations, we express coefficients of the inverse operator 
$(\KF)^{-1}$ using coefficients of the inverse operator $(\KQt)^{-1}$. Partitioning 
vertices of $\GF$ as $A\cup B$ as in~\cite{Dubedat}, we obtain, see also Theorem~\ref{thm:KFKQinv},
\begin{thm}[Finite and infinite cases]\label{thm:KFKQinv_intro}
Using the notation of~Figure~\ref{fig:thmKFKQinv}, there are four cases to consider:

\emph{1.} For every $\ubar{a}\in A$ and every $b\in B$ such that, when the graph $\GF$ is moreover finite, $b$ is not a boundary vertex:
\begin{align*}\label{form:KABm1_intro}
(\KF)^{-1}_{\ubar{a},b}&=\frac{1}{1+e^{-4\Jf_e}}
\bigl[(\KQt)^{-1}_{\ubar{\ws},\bs} + (\KQt)^{-1}_{\ubar{\ws},\bs'}\eps_{b',b}e^{-2\Jf_e}\bigr].
\end{align*}
\emph{2.} When the graph $\GF$ is finite, for every $\ubar{a}\in A$ and every boundary vertex $b$ of $B$, we have
\begin{equation*}\label{form:KABm1_finite_intro}
(\KF)^{-1}_{\ubar{a},b}=(\KQt)^{-1}_{\ubar{\ws},\bs}.
\end{equation*}
\emph{3.} For every $\ubar{a},\,a\in A$,
\begin{equation*}\label{form:KAAm1_intro}
(\KF)^{-1}_{\ubar{a},a}=-\frac{1}{2}(\KQt)^{-1}_{\ubar{\ws},\bs}\eps_{b,a}+\kappa_{\ubar{a},a},
\end{equation*}
where $\kappa_{\ubar{a},a}=0$ if $\ubar{a}$ and $a$ do not belong to the same decoration, and to $\pm \frac{1}{4}$ if they do.

\emph{4.} For every $\ubar{b},\,b\in B$,
\begin{align*}\label{form:KBBm1_intro}
(\KF)^{-1}_{\ubar{b},b}&=-\eps_{\ubar{b},\ubar{a}_1}(\KF)^{-1}_{\ubar{a}_1,b}+\eps_{\ubar{b},\ubar{a}_2}(\KF)^{-1}_{\ubar{a}_2,b},
\end{align*}
where $(\KF)^{-1}_{\ubar{a}_1,b},(\KF)^{-1}_{\ubar{a}_2,b}$ are given by~Case 1.
\end{thm}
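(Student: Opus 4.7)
The plan is to build directly on Dub\'edat's matrix identity \cite{Dubedat} relating the Fisher Kasteleyn operator $\KF$ to the bipartite Kasteleyn operator $\KQt$. First I would recall this identity, which writes $\KF$ as a product involving $\KQt$, the coupling constants, and explicit ``decoration'' matrices that encode how a Fisher triangle is glued to the local structure of $\GQ$. Restricting this relation to the $(A,B)$ block and inverting gives Case~1 almost immediately: the factor $\frac{1}{1+e^{-4\Jf_e}}$ arises from normalizing the bipartite-to-Fisher edge-weight conversion, and the sign $\eps_{b',b}$ records the Kasteleyn orientation around the edge $e$ linking $b$ and $b'$ in the triangle of the decoration containing $b$. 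Case~2, the finite-graph statement for boundary vertices of $B$, follows the same scheme, except that at the boundary the Fisher decoration is ``incomplete'' (one of the two vertices of the triangle needed to form the pair $b,b'$ has been removed when passing to $\GQ$), so only one of the two terms survives; I would verify this using the finite version of Dub\'edat's identity set up earlier in the paper.

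For Cases~3 and~4, the strategy is to exploit the defining identity $\KF\,(\KF)^{-1}=I$ row by row. A row of $\KF$ indexed by $\ubar{a}\in A$ (resp.\ $\ubar{b}\in B$) is supported on the few neighbors of $\ubar{a}$ within its Fisher decoration, because inter-decoration edges only connect one specific pair of vertices. Writing out this local linear relation, the unknown coefficient $(\KF)^{-1}_{\ubar{a},a}$ (resp.\ $(\KF)^{-1}_{\ubar{b},b}$) is expressed in terms of coefficients $(\KF)^{-1}_{\ubar{a},b'}$ (resp.\ $(\KF)^{-1}_{\ubar{a}_i,b}$) already computed in Cases~1 and~2. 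Solving yields the stated formulas; the constants $\kappa_{\ubar{a},a}\in\{0,\pm\frac14\}$ in Case~3 come from the Kronecker delta on the right-hand side of $\KF (\KF)^{-1}=I$ whenever $\ubar{a}$ and $a$ sit in the same decoration, combined with the local inversion of the $3\times 3$ Fisher-triangle matrix, while the two-term structure in Case~4 reflects that $\ubar{b}$ has exactly two ``external'' neighbors in $A$, namely $\ubar{a}_1$ and $\ubar{a}_2$.

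The main obstacle will be the sign bookkeeping. Three sources of signs must be tracked consistently through every step: the global Kasteleyn orientation of $\GF$ encoded in $\eps$, the internal orientation of Fisher triangles inherited from the decoration, and the real Kasteleyn gauge used for $\KQt$ fixed in Section~\ref{sec:prelim}. Matching Dub\'edat's conventions with ours so that the identities of Cases~1--4 come out with precisely the signs $\eps_{b',b}$, $\eps_{b,a}$, $\eps_{\ubar{b},\ubar{a}_1}$, $\eps_{\ubar{b},\ubar{a}_2}$ stated, and checking the $\pm\frac14$ in $\kappa_{\ubar{a},a}$ by direct verification on a single Fisher triangle, is where most of the work lies; the algebraic content, once the identity of \cite{Dubedat} is in hand, is quite short.
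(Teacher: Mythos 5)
Your starting point (Dub\'edat's matrix identity, Proposition~\ref{prop:Dub}) is the right one, and Cases~1 and~2 do come out essentially as you describe: the block form gives $(\KF)^{-1}_{\sA,\sB}=(X\KQt I_{\sWQ,\sA})^{-1}=I_{\sWQ,\sA}^{-1}(\KQt)^{-1}X^{-1}$, and inverting the $2\times 2$ (resp.\ the size-one boundary) blocks of the edge-conversion matrix $X$ produces the prefactor $(1+e^{-4\Jf_e})^{-1}$ and the two-term (resp.\ one-term) structure. But your mechanism for Case~4 does not work as stated: in the row of $\KF(\KF)^{-1}=I$ indexed by $\ubar{b}$, the unknown $(\KF)^{-1}_{\ubar{b},b}$ does not appear at all, since $\KF_{\ubar{b},\ubar{b}}=0$; that relation reads $\KF_{\ubar{b},\ubar{a}_1}(\KF)^{-1}_{\ubar{a}_1,b}+\KF_{\ubar{b},\ubar{a}_2}(\KF)^{-1}_{\ubar{a}_2,b}+\KF_{\ubar{b},\ubar{b}'}(\KF)^{-1}_{\ubar{b}',b}=\delta_{\ubar{b},b}$ and couples a \emph{different} unknown of the $(B,B)$ block to Case-1 quantities. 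Case~4 is really the identity $(\KF)^{-1}_{\sB,\sB}=M\,(\KF)^{-1}_{\sA,\sB}$, which requires either the block inversion of the full Dub\'edat identity or, equivalently, proving $\KF_{\sA,\sB}M+\KF_{\sA,\sA}=0$ (identity~\eqref{equ:proofKF4}) and using invertibility of $\KF_{\sA,\sB}$ on each decoration; it is not a single local row relation.

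For Case~3 the gap is larger. The local relations $[(\KF)^{-1}\KF]_{\ubar{a},b}=0$ each couple \emph{two} consecutive unknown $(A,A)$ entries, so one must invert a cyclic system over all the $A$-vertices of the decoration of $a$ — not a ``$3\times 3$ Fisher-triangle matrix'' — whose inverse is dense with entries $\pm\frac{1}{2}$ and signs of the form $(-1)^{n(a,a')}$ (this is where $\kappa$ really originates). The raw output is therefore a signed sum, over \emph{all} triangles of the decoration, of Case-1 coefficients $(\KQt)^{-1}_{\ubar{\ws},\bs_j}$, and the collapse of that sum to the single term $-\frac{1}{2}(\KQt)^{-1}_{\ubar{\ws},\bs}\eps_{b,a}$ plus $\kappa_{\ubar{a},a}$ is a nontrivial identity, not sign bookkeeping. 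The paper proves it by establishing two extra matrix identities absent from \cite{Dubedat}, namely $\kappa_{\sA,\sA}\KF_{\sA,\sB}=-D_{\sA,\sB}$ and $2X^{-1}\KF_{\sB,\sB}=D_{\sBQ,\sA}\KF_{\sA,\sB}+2\,\KQt I_{\sWQ,\sA}D_{\sA,\sB}$ (equation~\eqref{proof_ident_6}), checked entry by entry including the boundary blocks; the presence of $\KQt$ itself on the right-hand side is what makes the sum telescope against $(\KQt)^{-1}$. Without these identities, or an equivalent argument using the local structure of $\KQt$ around the faces of $\GQ$ inside a decoration, the closed form of Case~3 does not follow from what you have written.
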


As a consequence, the Boltzmann/Gibbs measures of the dimer model on the \emph{non-bipartite} graph $\GF$ can be computed using the inverse
Kasteleyn operator of the \emph{bipartite} graph $\GQ$. Note that in the finite case, we do not need positivity of the coupling constants $\Js$.
When $\Js<0$, the dimer model on the Fisher graph $\GF$ has positive weights $1$ and $e^{-2\Js_e}$ on edges, and 
is related to a bipartite dimer model with some negative weights, see also Remark~\ref{rem:positivity_dimer}. 
As mentioned in~\cite{Dubedat}, bozonisation identities somehow prove the existence of such linear relations, but working them out 
requires more work, which is the subject of the above theorem. 

Next we restrict to the $Z$-invariant case. The coefficient $(\KF)^{-1}_{\ubar{a},a}$ is equal, up to an additive constant, to the 
coefficient $(\KQ)^{-1}_{\ubar{\ws},\bs}$, and is thus expressed using the inverse $Z^u$-Dirac operator using 
Corollary~\ref{cor:gibbsGQ_GD_intro} in the infinite case, and Corollary~\ref{cor:BoltzmannGQ_GD} in the finite case. The same holds for
$(\KF)^{-1}_{\ubar{a},b}$ when $b$ is a boundary vertex. The coefficient $(\KF)^{-1}_{\ubar{b},b}$ is a simple linear combination of two
coefficients $(\KF)^{-1}_{\ubar{a}_1,b}$, $(\KF)^{-1}_{\ubar{a}_2,b}$, so we are left with expressing the coefficient 
$(\KF)^{-1}_{\ubar{a},b}$. Choosing a specific value of $u$ in Corollary~\ref{cor:KD_KQ_intro}, and using 
Corollary~\ref{cor:KD_G_intro} gives, see also Corollary~\ref{thm:KFKQinv_Zinv} for the finite case,
\begin{cor}[Infinite case]\label{thm:KFKQinv_Zinv_intro}
Let $\ubf=\frac{\alphaf+\betaf}{2}+K$. Then,
\begin{align*}
\textstyle
(\KF)^{-1}_{\ubar{a},b}
=&q_{\bs,\ubar{\ws}}\textstyle e^{i\frac{\betafb-\betaib}{2}}(k')^{\frac{1}{2}}  \frac{\cn\bigl(\frac{K-\thetaf}{2}\bigr)(1+(k')^{-1}\dn(\thetaf))}{
2[\cn(\thetaf)\sn(\thetaf)]^{\frac{1}{2}}}
\left(\cn(\ubf_{\betai})\KD(\ubf)^{-1}_{\ubar{v},w}-i\sn({\ubf_{\betai}})\KD(\ubf)^{-1}_{\ubar{f},w}\right)\\
=&q_{\bs,\ubar{\ws}}\textstyle e^{-i\frac{\alphafb+\betaib}{2}} \frac{\cn\bigl(\frac{K-\thetaf}{2}\bigr)(1+(k')^{-1}\dn(\thetaf))}{
2}\times\\
&\textstyle\times \left(\frac{\cn(\ubf_{\betai})}{\cn(\thetaf)}(G^m_{\ubar{v},v_2}-G^m_{\ubar{v},v_1})
-\frac{\sn({\ubf_{\betai}})}{\sn(\thetaf)}\Bigl(\nd\bigl(\frac{K-\thetaf}{2}\bigr)G^{m,*}_{\ubar{f},f_2}-
\nd\bigl(\frac{K+\thetaf}{2}\bigr)G^{m,*}_{\ubar{f},f_1}\Bigr)\right).
\end{align*}
\end{cor}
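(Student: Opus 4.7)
The proof chains three already-established identities. First, Case~1 of Theorem~\ref{thm:KFKQinv_intro} writes $(\KF)^{-1}_{\ubar{a},b}$ as a two-term linear combination in $(\KQt)^{-1}_{\ubar{\ws},\bs}$ and $(\KQt)^{-1}_{\ubar{\ws},\bs'}$ with coefficients built from $e^{\pm 2\Jf_e}$. Second, Corollary~\ref{cor:KD_KQ_intro} transforms a \emph{specific} two-term linear combination of $(\KQ)^{-1}_{\ubar{\ws},\bs}$ and $(\KQ)^{-1}_{\ubar{\ws},\bs'}$ into a two-term combination of $\KD(u)^{-1}_{\ubar{v},w}$ and $\KD(u)^{-1}_{\ubar{f},w}$. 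Third, Corollary~\ref{cor:KD_G_intro} rewrites $\KD(u)^{-1}$ in terms of the $Z$-massive Green function. The crux of the plan is to pick the unique value of $u$ for which the combination delivered by Case~1 is proportional to the one on the LHS of Corollary~\ref{cor:KD_KQ_intro}; the claim is that $\ubf=\frac{\alphaf+\betaf}{2}+K$ is the correct choice.

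First I would convert $(\KQt)^{-1}$ into $(\KQ)^{-1}$ using the gauge relating $\KQt$ and $\KQ$, which supplies the phase $q_{\bs,\ubar{\ws}}$ appearing in the conclusion. In the $Z$-invariant case the coupling-constant formula $e^{2\Jf_e}=\frac{1+\sn\thetaf}{\cn\thetaf}$ turns $e^{-2\Jf_e}$ and $1+e^{-4\Jf_e}$ into explicit elliptic expressions in $\thetaf$. At $u=\ubf$ one has $\ubf_{\betaf}=K-\thetaf$ and $\ubf_{\alphaf}=K+\thetaf$ (up to sign conventions), so the quarter-period identities $\cn(x+K)=-k'\sd x$, $\sn(x+K)=\cd x$, $\dn(x+K)=k'\nd x$ reduce $\cn(\ubf_{\betaf})$ and $\sn(\ubf_{\betaf})\dn(\ubf_{\alphaf})$ to elementary functions of $\thetaf$ alone. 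A direct check then shows that the ratio of these two coefficients equals $\eps_{b',b}e^{-2\Jf_e}$ (up to a scalar independent of $b$ vs. $b'$), so the two linear combinations coincide modulo an overall factor. Substituting Corollary~\ref{cor:KD_KQ_intro} yields the first line of the claim, the overall prefactor being the explicit scalar gathering $(1+e^{-4\Jf_e})^{-1}$, the factor $\Lambda(\ubf_{\alphaf},\ubf_{\betaf})^{-1}=[\sn\thetaf\cn\thetaf\nd(\ubf_{\alphaf})\nd(\ubf_{\betaf})]^{-1/2}$, and the gauge factor.

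For the second line I would substitute Corollary~\ref{cor:KD_G_intro} into the first line. The $\dn$-factors appearing there at arguments of the form $\ubf_{\alphaf\pm 2K}$ and $\ubf_{\betaf\pm 2K}$ collapse via the shift $\dn(x+2K)=\dn x$ together with the earlier quarter-period identities, producing either $\dn\thetaf$ or $k'$; this is exactly what is needed for the brackets involving $G^m_{\ubar{v},v_1},G^m_{\ubar{v},v_2}$ and $G^{m,*}_{\ubar{f},f_1},G^{m,*}_{\ubar{f},f_2}$ in the stated formula, and it explains the appearance of $\nd(\tfrac{K\pm\thetaf}{2})$ on the dual side. The main obstacle is purely one of Jacobi-elliptic bookkeeping: matching the scalar prefactor $(k')^{1/2}\cn(\tfrac{K-\thetaf}{2})\bigl(1+(k')^{-1}\dn\thetaf\bigr)/\bigl(2[\cn\thetaf\sn\thetaf]^{1/2}\bigr)$ requires half-angle identities for $\cn$ and $\sn$ around $K\pm\thetaf$, since the factor $1+(k')^{-1}\dn\thetaf$ does not appear directly from Case~1 and emerges only after a nontrivial consolidation of the gauge factors with $\Lambda^{-1}$. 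No new structural input beyond the three cited results is needed, but the bookkeeping must be performed carefully in order to guarantee that the final prefactor has exactly the displayed symmetric form.
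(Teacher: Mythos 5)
Your proposal follows essentially the same route as the paper's proof of Corollary~\ref{thm:KFKQinv_Zinv}: apply Case~1 of Theorem~\ref{thm:KFKQinv_intro}, gauge $(\KQt)^{-1}$ into $(\KQ)^{-1}$ (producing $q_{\bs,\ubar{\ws}}$), use the $Z$-invariant identities $e^{-2\Jf_e}=\cn\bigl(\tfrac{K+\thetaf}{2}\bigr)/\cn\bigl(\tfrac{K-\thetaf}{2}\bigr)$ and $\tfrac{1}{1+e^{-4\Jf_e}}=\cn^2\bigl(\tfrac{K-\thetaf}{2}\bigr)\tfrac{1+(k')^{-1}\dn(\thetaf)}{2}$ to recognize the resulting combination as the left-hand side of Corollary~\ref{cor:KD_KQ_intro} evaluated at $u=\ubf$, and finally substitute Corollary~\ref{cor:KD_G_intro}. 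The only blemish is the slip $\ubf_{\betaf}=K-\thetaf$ (the correct value is $\tfrac{K-\thetaf}{2}$, as your later half-angle expressions correctly use), which does not affect the argument.
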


\emph{Connection to previously known results.} 
Apart from allowing to compute the contour Ising Boltzmann/Gibbs measures,
coefficients of the inverse Kasteleyn matrix $(\KF)^{-1}$ are important observables of the Ising model:
$(\KF)^{-1}_{\ubar{b},b}$ is related to the spin-Ising observable~\cite{ChelkakSmirnov:ising}, see for example~\cite{CCK}. 
Dub\'edat~\cite{Dubedat} proves that $(\KF)^{-1}_{\ubar{a},a}$ is the fermionic spinor observable of~\cite{KadanoffCeva} and,
referring to Nienhuis-Knops~\cite{NK}, mentions that it 
is related to the FK-Ising observable of~\cite{Smirnov3} (up to normalization). As a consequence, 
in the specific case $k=0$ (the critical case),
Theorem~\ref{thm:KFKQinv_intro}, Corollary~\ref{thm:KFKQinv_Zinv_intro} and Corollary~\ref{cor:gibbsGQ_GD_intro} 
 are deeply related to the discrete part of~\cite{ChelkakSmirnov:ising} proving that these
observables are holomorphic, and integrating the square to obtain close to harmonic functions, see also Section~\ref{sec:ex_Z_inv_critical}.
Our results have two important features: they prove that in the infinite \emph{and} finite cases, these observables have an exact explicit expression 
involving Green functions, and also that these expressions not only hold at criticality but in the \emph{full} $Z$-invariant regime.

The paper~\cite{Lis2} gives a \emph{non-backtracking random walk} representation of the inverse Kac-Ward operator, the latter being 
connected to the inverse Kasteleyn operator. In this paper, we give a \emph{(massive) random walk} representation of the inverse Kasteleyn 
operator where, in the finite case, this random walk has some vortices along the boundary. In the critical case, and for one choice of $u$ 
(namely $u=i\infty)$, part of the relation of Theorem~\ref{thm:main_intro} was obtained in~\cite{Cimasoni:KacWard2}. 
Let us end this introduction with a comment on the paper~\cite{BeffaraDuminil} based on an observation by Messikh~\cite{Messikh}
about the occurrence of large deviation estimates of a massive random walk in the correlation length of the super-critical Ising model on $\ZZ^2$. The proof 
consists in showing that, in the super-critical regime, spin correlations are approximated by the FK-Ising observable, and then analyzing 
the latter. By Theorem~\ref{thm:KFKQinv_intro} and Corollary~\ref{cor:gibbsGQ_GD_intro}, the latter a directly related to the massive 
Green function, thus explaining the occurrence of the massive random walk, see also Section~\ref{sec:ex_Z_inv_ZZ2} specifying our results 
to the case where $\Gs=\ZZ^2$.

\textbf{Acknowledgments.} We thank C\'edric Boutillier, Dima Chelkak, David Cimasoni, Adrien Kassel, Marcin Lis, Paul Melotti, 
Sanjay Ramassamy and Kilian Raschel for interesting conversations
in the course of writing this paper.


\section{Preliminaries}\label{sec:prelim}

This section contains all the preliminaries required for this paper. We give the definitions of the Ising model, the dimer model per se, 
the dimer model on decorated graphs arising from the Ising model, from the XOR-Ising model and from pairs of dual directed spanning 
trees; we also define the rooted directed spanning forests model. We end with isoradial graphs, $Z$-invariance and the $Z$-invariant 
versions of the above models.

\subsection{The Ising model}\label{sec:defIsing}

Consider a finite, planar, simple graph $\Gs=(\Vs,\Es)$. Suppose that edges of $\Gs$ are assigned positive 
\emph{coupling constants} $\Js=(\Js_e)_{e\in\Es}$.
The \emph{Ising model on $\Gs$ with free boundary conditions} is defined as follows. A \emph{spin configuration} is a function on
vertices of $\Gs$ taking values in $\{-1,1\}$. The probability on the set of spin configurations $\{-1,1\}^{\Vs}$ is given by
the \emph{Ising Boltzmann measure} $\PPising$, defined by:
\[
\forall\,\sigma\in\{-1,1\}^\Vs,\quad \PPising(\sigma)=\frac{1}{\Zising(\Gs,\Js)}
\exp\Bigl(\sum_{e=vv'\in\Es}\Js_e \sigma_v\sigma_{v'}\Bigr),
\]
where $\Zising(\Gs,\Js)=\sum_{\sigma\in\{-1,1\}^\Vs}\exp\Bigl(\sum_{e=vv'\in\Es}\Js_e \sigma_v\sigma_{v'}\Bigr)$ is the normalizing
constant known as the \emph{Ising partition function}.

From now on, we suppose that the planar graph $\Gs$ is embedded and simply connected. \emph{Boundary vertices} of $\Gs$ are vertices on the boundary of 
the unbounded face of $\Gs$. The Ising model with \emph{$+$ boundary conditions} has the additional restriction that boundary vertices
have +1 spin. Denote by $\PPising^+$
and $\Zising^+(\Gs,\Js)$ the corresponding Boltzmann measure and partition function\footnote{Note that the 
Ising model with $+$ boundary conditions on the graph $\Gs$ 
can be seen as the Ising model with free boundary conditions on the graph $\Gs'$ obtained from $\Gs$ by merging all 
boundary edges and vertices into a single vertex.}.

Denote by $\bar{\Gs}^*=(\bar{\Vs}^*,\bar{\Es}^*)$ the dual graph of $\Gs$, and by $\outer$ the vertex of $\bar{\Gs}^*$ corresponding
to the unbounded face of $\Gs$. Consider also the \emph{restricted dual graph} 
$\Gs^*=(\Vs^*,\Es^*)$ obtained from $\bar{\Gs}^*$ by removing the vertex $\outer$ and all of its incident edges. 
A \emph{polygon configuration} of $\Gs^*$ is a subset of edges such that every vertex has even degree;
let $\P(\Gs^*)$ denote the set of polygon configurations of $\Gs^*$. Then, the \emph{low temperature expansion} (LTE) of the Ising 
partition function with + boundary conditions is~\cite{KramersWannier1,KramersWannier2}:
\begin{equation}\label{equ:LTE}
\Zising^+(\Gs,\Js)=
\Bigl(\prod_{e\in\Es}e^{\Js_e}\Bigr)
\sum_{\Ps\in\P(\Gs^*)}\prod_{e^*\in\Ps} e^{-2\Js_e}.
\end{equation}
Polygon configurations of this expansion separate clusters of $\pm 1$ spins of the Ising model.

In this paper we consider the case where the graph is finite or infinite. The definition of the Boltzmann measure does not hold in 
the infinite case but extends naturally, and this will be clarified as we go along. 

In the finite case, we consider the Ising model with + boundary conditions. It will be crucial
to use the \emph{boundary trick} of Chelkak and Smirnov~\cite{ChelkakSmirnov:ising}
consisting
in adding one extra vertex with +1 spin on every boundary edge of the graph. This has no effect on the Ising model,
but the graph gains geometric freedom along the boundary, which will be key to handling boundary terms in Theorem~\ref{thm:main}.
In order not to introduce too many graphs and confuse the reader, from 
now on we let $\Gs$ be the graph we started from \emph{with the extra vertex on every boundary edge}, 
then $\bar{\Gs}^*$ is its dual graph and $\Gs^*$ its restricted dual. Figure~\ref{fig:low_temp} provides an example of:
a graph $\Gs$, its restricted dual $\Gs^*$, a spin configuration with + boundary conditions and the corresponding low temperature 
polygon configuration of~$\Gs^*$.


In the infinite case, we suppose that the embedded graph together with its faces cover the whole plane.  
So as not to have too many notation, and since it will be clear from the setting, we also denote by $\Gs$ the infinite graph; 
the dual graph is denoted $\Gs^*$.

\begin{figure}[ht]
\centering
\begin{overpic}[width=7.5cm]{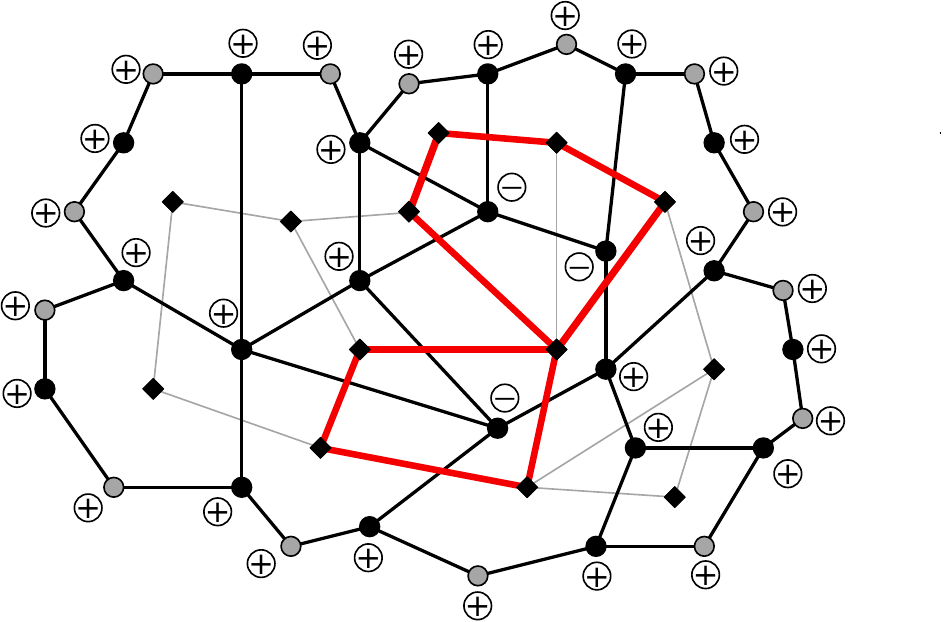}
\end{overpic}
\caption{An example of a graph $\Gs$ (black), of its restricted dual $\Gs^*$ (grey), of a spin configuration with + boundary conditions on 
$\Gs$ and its corresponding polygon configuration on $\Gs^*$ (red). Vertices of $\Gs$ are pictured as bullets -- black ones represent vertices of the
original graph and grey ones are the additional vertices on boundary edges -- vertices of $\Gs^*$ are pictured as diamonds.}
\label{fig:low_temp}
\end{figure}

\subsection{The dimer model}\label{sec:dimer_model}

Throughout the paper, we use the dimer model defined on three decorated versions of the graph $\Gs$. Prior to defining these 
decorated graphs, we recall the definition of the dimer model per se, as well as that of the
\emph{Kasteleyn matrix}.
We also recall the founding results that we will use.

Consider a planar, simple, simply connected graph $G=(V,E)$.
A \emph{dimer configuration} of $G$,
also known as a \emph{perfect matching}, is a subset of edges 
such that every vertex is incident to exactly one edge of this subset. Denote by $\M(G)$ the set of dimer configurations
of the graph $G$. Suppose that a positive weight function $\nu$ is assigned to edges of $G$. 

\subsubsection{Finite case}

Suppose that the graph $G$ is finite, and that $|V|$ is even. Then, the probability of 
occurrence of a dimer configuration, chosen with respect to the \emph{dimer Boltzmann measure} $\PPdimer$, is given by
\begin{equation*}
\forall\,\Ms\in\M(G),\quad \PPdimer(\Ms)=\frac{\prod_{e\in \Ms}\nu_e}{\Zdimer(G,\nu)},
\end{equation*}
where $\Zdimer(G,\nu)=\sum_{\Ms\in\M(G)}\prod_{e\in \Ms}\nu_e$ is the normalizing constant known as the \emph{dimer partition function}.

The main tool used to study the dimer model is the \emph{Kasteleyn matrix}~\cite{Kasteleyn1,Kasteleyn2,TF},
it is defined as follows. A \emph{face-cycle} is a cycle of $G$ bounding a bounded face of the graph. A
\emph{Kasteleyn orientation} is an orientation of the edges such that every face-cycle is \emph{clockwise odd}, meaning
that when traveling clockwise around a face-cycle, the number of co-oriented edges is odd. By the results 
of~\cite{Kasteleyn2}, a Kasteleyn orientation always exists for planar graphs. A \emph{Kasteleyn matrix},
denoted by $K$, is a 
weighted, directed, adjacency matrix of the graph $G$ associated to the weight function $\nu$ and to a Kasteleyn orientation. More
precisely, rows and columns of the matrix $K$ are indexed by vertices of $G$, and non-zero coefficients of $K$ are defined by,
\begin{align*}
\forall\text{ edge $(x,y)$ of $\Gs$},\quad K_{x,y}=\eps_{x,y}\nu_{xy},
\end{align*}
where
\begin{align*}
\eps_{x,y}=
\begin{cases}
1&\text{ if $xy\in E$ and $x\rightarrow y$}\\
-1&\text{ if $xy\in E$ and $x\leftarrow y$}.
\end{cases}
\end{align*}
Note that the matrix $K$ is skew symmetric. 

When the graph $G$ is bipartite, the set of vertices can be split into $V=W\cup B$, where $W$
represents the set of white vertices, $B$ the set of black ones, and vertices in $W$ are only adjacent to vertices in $B$.
Suppose that $|W|=|B|$ for otherwise $G$ has no dimer configurations.
The Kasteleyn matrix $K$ is naturally block diagonal with two 0 blocks corresponding to rows/columns indexed by 
$W/W$ or $B/B$. It thus suffices to consider the \emph{bipartite}, weighted, directed, adjacency matrix of the graph $G$, denoted 
by $\tilde{K}$. It has rows indexed by white vertices of $G$ and column by black ones. Non-zero coefficients are defined by:
\begin{align*}
\forall\,\text{ edge $wb$ of $G$},\quad \tilde{K}_{w,b}=\eps_{w,b}\nu_{wb}.
\end{align*}
Note that the bipartite Kasteleyn matrix can also be defined 
as minus the transpose of the above matrix $\tilde{K}$; rows are then indexed by black vertices and columns by 
white ones. Actually both bipartite Kasteleyn matrices are considered in this paper.

The two founding results of the dimer model are: an explicit expression for the partition 
function~\cite{Kasteleyn1,Kasteleyn2,TF} and for the dimer Boltzmann measure~\cite{Kenyon0}. Here are their statements. 

\begin{thm}[\cite{Kasteleyn1,Kasteleyn2,TF}]
The dimer partition function of the graph $G$ with weight function $\nu$ is equal to:
\begin{equation*}
\Zdimer(\Gs,\nu)=|\Pf K |.
\end{equation*}
When the graph $G$ is moreover bipartite, we have:
\begin{equation*}
\Zdimer(\Gs,\nu)=|\det \tilde{K}|.
\end{equation*}
\end{thm}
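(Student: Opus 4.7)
The plan is to expand the Pfaffian as a sum over pair partitions of $V$:
\[
\Pf K = \sum_{\sigma} \mathrm{sgn}(\sigma)\prod_{i=1}^{|V|/2} K_{\sigma(2i-1),\sigma(2i)}.
\]
Since $K_{x,y}=0$ whenever $xy\notin E$, only the partitions $\sigma$ in which every pair $\{\sigma(2i-1),\sigma(2i)\}$ forms an edge of $G$ can contribute, and these are in bijection with dimer configurations $\Ms\in\M(G)$. Each surviving term then equals
$\mathrm{sgn}(\sigma_\Ms)\bigl(\prod_{e=xy\in\Ms}\eps_{x,y}\bigr)\prod_{e\in\Ms}\nu_e$, so the statement reduces to showing that the overall sign $\mathrm{sgn}(\sigma_\Ms)\prod_{e=xy\in\Ms}\eps_{x,y}$ is independent of $\Ms$; if so, $|\Pf K|$ collects the sum of unsigned dimer weights, which is $\Zdimer(G,\nu)$.

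To prove sign constancy I would compare any two configurations $\Ms,\Ms'\in\M(G)$ through their symmetric difference $\Ms\triangle\Ms'$, which decomposes as a disjoint union of vertex-disjoint simple cycles of even length alternating between $\Ms$- and $\Ms'$-edges. A direct permutation calculation shows that the ratio of the two signed contributions factorises across these cycles, and that each cycle $C$ of length $2\ell$ contributes a factor $(-1)^{\ell+1+n(C)}$, where $n(C)$ is the number of edges of $C$ oriented along a chosen traversal direction.

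The crux is then the classical planar parity lemma: under the Kasteleyn hypothesis that every face-cycle is clockwise odd, every simple cycle $C$ satisfies
\[
\ell+1+n(C)\equiv v(C)\pmod{2},
\]
where $v(C)$ is the number of vertices strictly enclosed by $C$ and $|C|=2\ell$. This is proved by induction on the number of faces enclosed by $C$, peeling off one interior face at a time and using the clockwise-odd hypothesis as the base step; the inductive step amounts to tracking how replacing one boundary arc of $C$ by the complementary arc of an adjacent face alters $\ell$, $n(C)$ and $v(C)$ simultaneously. This planarity/parity propagation is the real obstacle of the proof. Since $\Ms$ and $\Ms'$ saturate the vertices strictly enclosed by each cycle $C$ in exactly the same way, the factor $(-1)^{v(C)}$ cancels between the two configurations, giving sign constancy.

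For the bipartite refinement I would order the vertices as $(W,B)$, so that the Kasteleyn matrix takes the block anti-diagonal form
\[
K=\begin{pmatrix} 0 & \tilde K \\ -\tilde K^{\,t} & 0\end{pmatrix},
\]
and invoke the elementary identity $\Pf K=(-1)^{n(n-1)/2}\det\tilde K$ with $n=|W|=|B|$ to obtain $|\Pf K|=|\det\tilde K|$, completing the theorem.
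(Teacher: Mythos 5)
Your outline is correct, and it is the standard classical argument of Kasteleyn and Temperley--Fisher: Pfaffian expansion over pair partitions, reduction of the sign question to alternating cycles in the symmetric difference of two matchings, the clockwise-odd parity lemma proved by peeling faces, and the even enclosed-vertex count (forced by planarity, since edges of $\Ms\cap\Ms'$ cannot cross a cycle of $\Ms\triangle\Ms'$) to kill the residual factor. Note that the paper does not prove this statement at all --- it is quoted as a founding result with citations to the original works --- so there is no in-paper proof to compare against; your sketch is a faithful reconstruction of the argument in those references, and the bipartite reduction via the block anti-diagonal form and $\Pf K=(-1)^{n(n-1)/2}\det\tilde K$ is likewise standard and correct.
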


\begin{thm}[\cite{Kenyon0}]~\label{thm:Kenyon0}
The probability of occurrence of a subset $\E=\{e_1=x_1 y_1,\dots,e_l=x_l y_l\}$ of edges of $G$, chosen with respect to the dimer Boltzmann
measure $\PPdimer$ is equal to
\begin{equation*}
\PPdimer(e_1,\dots,e_l)=\Bigl(\prod_{j=1}^l K_{x_i,y_i}\Bigr)\Pf (K^{-1})^t_{\E},
\end{equation*}
where $(K^{-1})_{\E}$ is the sub-matrix of the inverse Kasteleyn matrix $K^{-1}$ whose rows and columns are indexed by vertices 
$x_1,y_1,\dots,x_l,y_l$.

When the graph $G$ is moreover bipartite, the subset of edges $\E$ is written as 
$\E=\{e_1=w_1b_1,\dots,e_l=w_l b_l\}$, and we also have,
\begin{equation*}
\PPdimer(e_1,\dots,e_l)=\Bigl(\prod_{j=1}^l \tilde{K}_{w_i,b_i}\Bigr)\det (\tilde{K}^{-1})_{\E},
\end{equation*}
where $(\tilde{K}^{-1})_{\E}$ is the sub-matrix of the inverse bipartite Kasteleyn matrix $\tilde{K}^{-1}$ whose rows are indexed
by black vertices $b_1,\dots,b_l$ and columns by white vertices $w_1,\dots,w_l$.
\end{thm}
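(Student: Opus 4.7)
The plan is to follow the classical two-step strategy: express the joint edge probability as a ratio of partition functions, and then invoke a Jacobi-type minor identity to convert this ratio into a Pfaffian (or determinant) of the inverse Kasteleyn matrix. I would treat the general (Pfaffian) case first; the bipartite case then runs in parallel with determinants in place of Pfaffians, using the block-diagonal structure of $K$ to pass from $\Pf K$ to $\det \tilde K$.

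First, directly from the definition of $\PPdimer$, one has
\begin{equation*}
\PPdimer(e_1,\dots,e_l)=\Bigl(\prod_{j=1}^l\nu_{e_j}\Bigr)\frac{\Zdimer(G\setminus V_{\E},\nu)}{\Zdimer(G,\nu)},
\end{equation*}
where $V_{\E}=\{x_1,y_1,\dots,x_l,y_l\}$ and $G\setminus V_{\E}$ denotes the graph obtained by deleting these vertices and their incident edges. Next I would show that the principal submatrix $K_{V_{\E}^c}$ obtained by deleting the rows and columns indexed by $V_{\E}$ still enjoys the Kasteleyn sign property, namely $|\Pf K_{V_{\E}^c}|=\Zdimer(G\setminus V_{\E},\nu)$. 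This reduces to checking that all perfect matchings of $G\setminus V_{\E}$ contribute with the same sign in the Pfaffian expansion, which follows from Kasteleyn's parity argument applied to the reduced graph: faces of $G\setminus V_{\E}$ are unions of faces of $G$, and the clockwise-odd property is preserved modulo 2 along such unions.

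Combining the previous step with Kasteleyn's theorem $\Zdimer(G,\nu)=|\Pf K|$, the probability becomes a ratio of two Pfaffians. I would then apply Jacobi's identity for Pfaffians: for any skew-symmetric invertible matrix $A$ and any even subset $S$ of indices,
\begin{equation*}
\Pf(A_{S^c})=\pm\,\Pf(A)\,\Pf\bigl((A^{-1})_S\bigr),
\end{equation*}
with an explicit sign depending on the ordering of $S$. Setting $A=K$ and $S=V_{\E}$ and absorbing the resulting signs into the Kasteleyn weights $\eps_{x_j,y_j}$ exactly produces the product $\prod_j K_{x_j,y_j}=\prod_j\eps_{x_j,y_j}\nu_{e_j}$ multiplied by $\Pf(K^{-1})^t_{\E}$, which is the claimed formula. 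In the bipartite case, the same scheme applies with $\det$ in place of $\Pf$: one writes $\Zdimer(G\setminus V_{\E},\nu)=|\det \tilde K[W\setminus\{w_j\};B\setminus\{b_j\}]|$ and uses the classical Jacobi minor identity expressing this determinant as $\det \tilde K\cdot\det(\tilde K^{-1})_{\E}$ up to sign.

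The main obstacle will be careful sign bookkeeping at two places: first, verifying that $K_{V_{\E}^c}$ is a bona fide Kasteleyn matrix of $G\setminus V_{\E}$ (equivalently, that the Pfaffian expansion of $K_{V_{\E}^c}$ assigns the same sign to every matching of the reduced graph); second, reconciling the sign in Jacobi's Pfaffian identity with the Kasteleyn signs $\eps_{x_j,y_j}$ so that the product $\prod_j\nu_{e_j}$ combines with them into $\prod_j K_{x_j,y_j}$. Once these signs are handled, the identity is a direct algebraic consequence of Kasteleyn's theorem and Jacobi's formula, and no additional probabilistic input is needed.
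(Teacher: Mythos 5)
The paper does not prove this statement: it is recalled as a founding result and attributed to \cite{Kenyon0}, so there is no in-paper argument to compare against. Your proposal follows the standard route for Kenyon's local-statistics formula (write the cylinder probability as $\prod_j\nu_{e_j}\,\Zdimer(G\setminus V_{\E},\nu)/\Zdimer(G,\nu)$, identify the numerator with $|\Pf K_{V_{\E}^c}|$, then apply the Jacobi complementary-minor identity for Pfaffians, resp.\ determinants in the bipartite case), and this scheme is correct and is essentially how the result is established in the literature.

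One justification you give is not right as stated, though the conclusion it supports is true. You claim that $|\Pf K_{V_{\E}^c}|=\Zdimer(G\setminus V_{\E},\nu)$ because ``faces of $G\setminus V_{\E}$ are unions of faces of $G$, and the clockwise-odd property is preserved modulo 2 along such unions.'' It is not: for a Kasteleyn orientation, a cycle bounding a union of faces enclosing $p$ vertices is clockwise-odd iff $p$ is even, so a face of the reduced graph enclosing an odd number of deleted vertices is clockwise-\emph{even}, and the induced orientation on $G\setminus V_{\E}$ is in general not Kasteleyn. What you actually need, and what does hold, is that any two perfect matchings $M_1,M_2$ of $G\setminus V_{\E}$ carry the same sign in $\Pf K_{V_{\E}^c}$. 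This follows from the superposition argument applied in $G$: the alternating cycles of $(M_1\cup\E)\triangle(M_2\cup\E)$ enclose an even number of vertices of $G$ (interior vertices are perfectly matched among themselves, and deleted vertices are enclosed in the pairs $\{x_j,y_j\}$ by planarity), hence each cycle is oddly oriented, and the relative sign is preserved when passing to the principal sub-Pfaffian. With that repair, and the sign bookkeeping you already flag for the Jacobi identity, the argument goes through.
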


\subsubsection{Infinite case}\label{sec:dimer_infinite}

Suppose that the graph $G$ is infinite. The dimer Boltzmann measure is not well defined and is replaced by the notion of 
Gibbs measure. A 
\emph{Gibbs measure} is a probability measure on $\M(G)$ satisfying the \emph{DLR-conditions}:
when one fixes a dimer configuration in an annular region, then
perfect matchings inside and outside of the annulus are independent; moreover, the probability of a dimer configuration in the finite
region separated by the annulus is proportional to the product of the edge-weights. 

Consider a Kasteleyn orientation of the graph $G$ and the corresponding Kasteleyn matrix $K$, then $K$ 
can also be seen as an operator acting on $\CC^V$:
\[
\forall\,F\in\CC^V,\quad (KF)_x=\sum_{y\sim x}K_{x,y}F_y.
\]
When $G$ is bipartite, the bipartite Kasteleyn matrix $\tilde{K}$ is an operator mapping $\CC^W$ to~$\CC^B$:
\[
\forall\,F\in\CC^W,\quad (\tilde{K}F)_b=\sum_{w\sim b}\tilde{K}_{b,w}F_w.
\]
Explicit expressions for Gibbs measures involve inverse Kasteleyn matrices/operators. An \emph{inverse Kasteleyn operator}
$L$ is asked to satisfy the following conditions:
\begin{align*}
\bullet &\, KL=\mathrm{Id} \text{ or } LK=\mathrm{Id},\\
\bullet &\, L_{x,y}\rightarrow 0 \text{ as the distance between $x$ and $y$ tends to infinity}.
\end{align*}
\emph{Existence} of an inverse Kasteleyn operator and explicit expressions for coefficients are proved for:
$\ZZ^2$-periodic bipartite graphs using Fourier techniques~\cite{CKP,KOS};  $\ZZ^2$-periodic (non-bipartite) 
Fisher graphs~\cite{BoutillierdeTiliere:iso_perio,Dubedat}; non-periodic, bipartite isoradial graphs, bipartite quadri-tiling graphs,
and Fisher graphs, all with specific weights arising from $Z$-invariance~\cite{Kenyon3,BoutillierdeTiliere:iso_gen,BdtR2}, 
see Sections~\ref{sec:def_Fisher_graph},~\ref{sec:def_graph_GQ},~\ref{sec:isoradial_Z_invariance} for definitions;
coefficients of the inverse then have the remarkable property of being \emph{local}. We refer to the original papers for the explicit expressions. 

\emph{Uniqueness} is established when the graph $G$ is $\ZZ^2$-periodic~\cite{Sheffield0,BoutillierdeTiliere:iso_perio}. When the inverse
Kasteleyn operator exists and is unique, it is denoted by $K^{-1}$. Note that uniqueness and the fact that the product 
$(K K^{-1})K=K(K^{-1}K)$ is associative implies that if $K^{-1}$ is a right, resp. left, inverse it is also a left, resp. right, 
inverse~\cite{Cooke}.

Consider the $\sigma$-field generated by cylinder sets of $\M(G)$. In all of the above cases, there is
an explicit expression for a Gibbs measure $\PPdimer$ on $(\M(G),\F)$
whose probabilities on cylinder sets is given by the formulas of Theorem~\ref{thm:Kenyon0} with $K^{-1}$ being the inverse Kasteleyn 
operator above. When the graph $G$ is moreover $\ZZ^2$-periodic, this Gibbs measure is obtained as weak limit of the Boltzmann measures 
on the toroidal exhaustion $(G_n)_{n\geq 1}$, where $G_n=G/n\ZZ^2$. We refer to the original papers for an exact statement, see also 
Theorem~\ref{thm:Gibbs_KD} which has the same form.

\subsection{Dimer models on decorated graphs}

In this paper, an important role is played by the dimer model on the \emph{double graph $\GD$},
a model in correspondence with random pairs of dual directed spanning trees~\cite{Temperley,BurtonPemantle,KPW}. Furthermore, 
we consider two dimer representations of the Ising model. The first is related
to the LTE of the Ising model~\cite{KramersWannier1,KramersWannier2}, while the second arises from the XOR-Ising 
model, built from two independent copies of the Ising model~\cite{Dubedat,BoutillierdeTiliere:XORloops}. 
The two corresponding dimer models live on the \emph{Fisher graph} $\GF$ and 
the bipartite graph $\GQ$, respectively. The three graphs $\GD,\GF$ and $\GQ$ are decorated versions of the graph~$\Gs$. 

In the next three sections, we define these decorated graphs and the mappings considered. We treat the case where the graph $\Gs$ is 
infinite or finite. In the 
finite case, the graph $\Gs$, the dual graph $\bar{\Gs}^*$ and the restricted dual $\Gs^*$ are those defined in Section~\ref{sec:defIsing}, where recall
that $\Gs$ has an additional vertex on every boundary edge, and that $\outer$ denotes the vertex of $\bar{\Gs}^*$ corresponding to the 
unbounded face of $\Gs$. In the infinite case, the dual graph is $\Gs^*$. Figures illustrate the finite case; a local picture of the infinite case is
obtained by looking at the interior of the finite case.

\subsubsection{Dimers on the double graph $\GD$ and Temperley's bijection}\label{sec:def_double_graph}

The \emph{double graph} is denoted by $\GD=(\VD,\ED)$. It is defined as follows, see also Figure~\ref{fig:G_Gdouble_tree} (left).

\emph{Infinite case.} Embed the dual graph $\Gs^*$ so that edges of the primal 
and the dual intersect at a single point. The \emph{double graph}
is obtained by superimposing $\Gs$ and $\Gs^*$ and adding an extra vertex at the crossing of each primal 
and dual edge.

\emph{Finite case.} It is constructed similarly to the infinite case from the superimposition of $\Gs$ and the dual graph 
$\bar{\Gs}^*$. Edges incident to the vertex $\outer$ are then removed.

In the infinite and fine cases, the double graph $\GD$ is bipartite and face-cycles are quadrangles. 
The set of black vertices of $\GD$, denoted by $B$, consists of 
vertices of $\Gs$ and $\Gs^*$; the set of white vertices of $\GD$, denoted by $W$, consists of vertices at the crossing of 
edges of $\Gs$ and $\Gs^*$ in the infinite case, and of $\Gs$ and $\bar{\Gs}^*$ in the finite case. 
White vertices are in bijection with edges of the graph $\Gs$, or equivalently with edges of the dual graph. 
We thus have, $\VD=B\cup W$, where $B=\Vs\cup \Vs^*$ and $W\leftrightarrow \Es$.

Suppose again that $\Gs$ is finite, fix a vertex $\rs$ of $\Gs$ amongst the additional vertices on boundary edges, 
and let $\Vs^\rs=\Vs\setminus\{\rs\}$.
Denote by $\GDro$ the graph obtained from $\GD$ by removing the vertex
$\rs$ and all edges incident to it. The graph $\GDro$ is also bipartite;
its set of black vertices is $B^\rs$, where $B^\rs=\Vs^\rs\cup\Vs^*$
and its set of white vertices is $W^\rs=W\leftrightarrow\Es$, see Figure~\ref{fig:G_Gdouble_tree} (right) for an example.
Note that $\GDro$ has the same number of black and white vertices:
$|B^\rs|=|W^\rs|$.

\paragraph{Bijection between pairs of dual directed spanning trees and dimers.} Suppose that $\Gs$ is finite. 
Prior to stating the bijection, we need a few definitions.
A \emph{tree} of $\Gs$ is an acyclic connected subset of edges. A \emph{spanning tree} is 
a tree spanning all vertices of the graph. Let $v$ be a vertex of $\Gs$, then
a \emph{$v$-directed spanning tree} ($v$-dST) is obtained from a spanning tree 
by directing all edges towards the vertex $v$, referred to as the \emph{root}; with such an orientation, every vertex has 
exactly one outgoing edge except the root which has none. Given a spanning tree, the set of dual edges of the edges 
absent in the spanning tree form a spanning tree of the dual graph $\bar{\Gs}^*$, known as the \emph{dual spanning tree}.

Consider the fixed boundary vertex $\rs$ of $\Gs$ as above.
Denote by $\T^\rs(\Gs)$ the set of $\rs$-dST of $\Gs$, by $\T^\outer(\bar{\Gs}^*)$ the set of 
$\outer$-dST of $\bar{\Gs}^*$, and by $\T^{\rs,\outer}(\Gs,\bar{\Gs}^*)$ the set of pairs of dual 
directed spanning trees (dST-pairs) of $\Gs$ and $\bar{\Gs}^*$ such that the primal tree is 
rooted at $\rs$ and the dual tree is rooted at $\outer$, see Figure~\ref{fig:G_Gdouble_tree} (left) for an example.

The result of Temperley~\cite{Temperley}, extended by~\cite{BurtonPemantle} to general non-directed graphs and by~\cite{KPW} 
to the directed case, proves a weight 
preserving bijection between dimer configurations of the double graph $\GDro$ and dST-pairs
of $\T^{\rs,\outer}(\Gs,\bar{\Gs}^*)$. It relies on the following bijection
between edges of $\GDro$ and directed edges of $\Gs$ and $\bar{\Gs}^*$. Let $w\in W\ro=W$, $x\in\Vs\ro\cup\Vs^*$ such that $wx$ is an edge of 
$\GDro$, then 
\begin{equation}\label{equ:bij_Temperley_edges}
wx\, \longleftrightarrow\,  
\begin{cases}
(v,v') \text{ of }\Gs & \text{if $x=v\in\Vs\ro$ and $v'$ is s.t. $w$ belongs to the edge $(v,v')$}\\
(f,f') \text{ of }\bar{\Gs}^* & \text{if $x=f\in\Vs^*$ and $f'$ is s.t. $w$ belongs to the edge $(f,f')$}.
\end{cases}
\end{equation}
Note that there are no directed edges of $\Gs$ exiting the vertex $\rs$, and no directed edges of $\bar{\Gs}^*$ exiting the vertex $\outer$.
Using this bijection, a subset of edges of $\GDro$ corresponds to a
subset of directed edges of $\Gs$ and $\bar{\Gs}^*$; Temperley's bijection states that subsets defining dimer configurations
are in correspondence with subsets defining dST-pairs of $\T^{\rs,\outer}(\Gs,\bar{\Gs}^*)$. 
An example is provided in Figure~\ref{fig:G_Gdouble_tree}, the vertex $\outer$ is represented in a spread-out way, \emph{i.e.},
the dotted line should be thought of as being the single vertex $\outer$.

\begin{figure}[ht]
\begin{minipage}[b]{0.5\linewidth}
\begin{center}
\begin{overpic}[width=7.5cm]{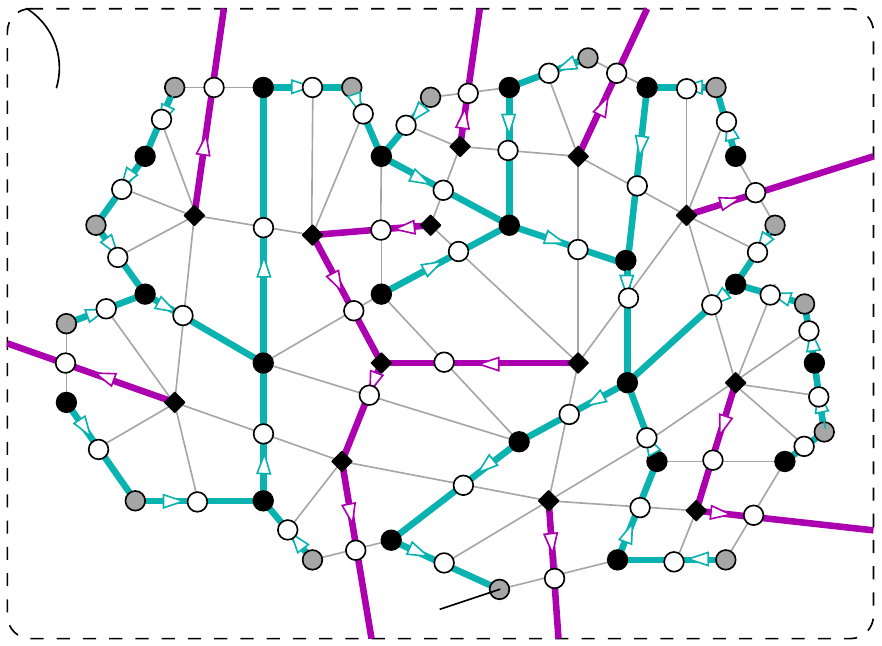}
\put(4,61){\scriptsize $\outer$}
\put(48,3){\scriptsize $\rs$}
\end{overpic}
\end{center}
\end{minipage}
\begin{minipage}[b]{0.5\linewidth}
\begin{center}
\begin{overpic}[width=7.5cm]{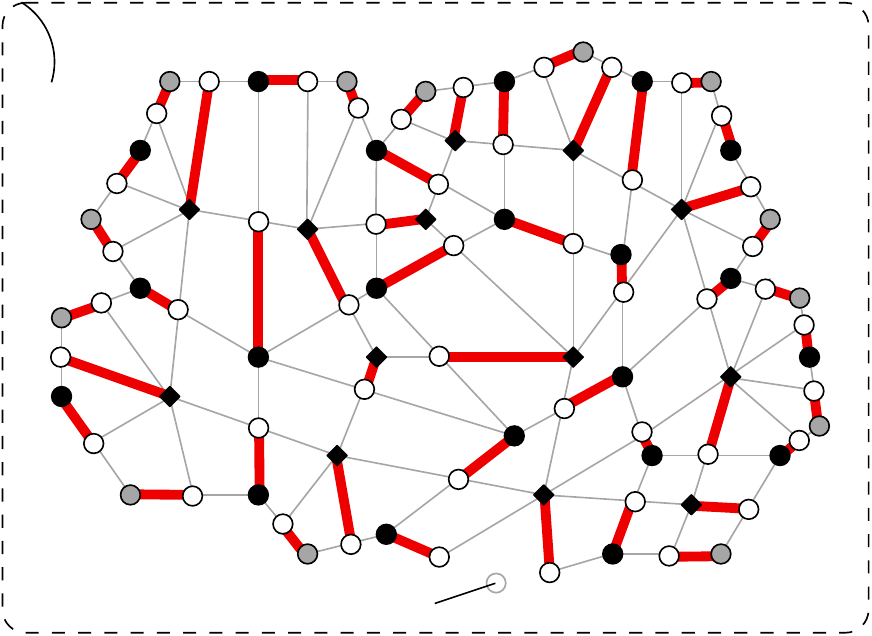}
\put(48,3){\scriptsize $\rs$}
\put(4,61){\scriptsize $\outer$}
\end{overpic}
\end{center}
\end{minipage}
\caption{Left: double graph $\GD$ of the graph $\Gs$ of Figure~\ref{fig:low_temp} (grey lines), 
$\rs$-directed spanning tree of $\Gs$ (turquoise) 
and dual $\outer$-directed spanning tree of $\bar{\Gs}^*$ (purple). Right: graph $\GDro$ (grey) and the dimer configuration (red)
in bijection with the pair of dual directed spanning trees of the left figure.}
\label{fig:G_Gdouble_tree}
\end{figure}
Let $c$ be a weight function on edges of $\GDro$ and $\tilde{c}$ a weight function on directed edges of $\Gs,\bar{\Gs}^*$. 
The relation between $c$ and $\tilde{c}$ which makes Temperley's bijection weight preserving naturally arises from 
the bijection between edges of $\GDro$ and directed edges of $\Gs,\bar{\Gs}^*$. Let $w\in W\ro$, $x\in \Vs^\rs\cup \Vs^*$, such that $wx$ is an edge 
of $\GDro$. Using the notation of~\eqref{equ:bij_Temperley_edges}, we have 
\begin{equation}\label{equ:relation_c_ctilde}
c_{wx}=
\begin{cases}
\tilde{c}_{v,v'} & \text{ if $x=v\in\Vs\ro$}\\
\tilde{c}_{f,f'} & \text{ if $x=f\in\Vs^*$},
\end{cases}
\end{equation}
and $\tilde{c}_{\rs,v}=0$ for every vertex $v\in \Vs^\rs$ adjacent to $\rs$, $\tilde{c}_{\outer,f'}=0$ for every vertex $f'\in\Vs^*$
adjacent to $\outer$.

\paragraph{Model on pairs of dual directed spanning trees.} Suppose that directed edges of $\Gs,\bar{\Gs}^*$ are assigned the weight function 
$\tilde{c}$. Consider the \emph{Boltzmann measure on dST-pairs}, denoted 
$\PPtreepair$, defined by
\begin{equation*}
\forall\,(\Ts,\Ts^*)\in\T^{\rs,\outer}(\Gs,\bar{\Gs}^*),
\quad 
\PPtreepair(\Ts,\Ts^*)=
\frac{\bigl(\prod_{(v,v')\in \Ts} \tilde{c}_{v,v'}\bigr)\bigl(\prod_{(f,f')\in\Ts^*}\tilde{c}_{f,f'}\bigr)}{\Ztreepair(\Gs,\bar{\Gs}^*)},
\end{equation*}
where $
\Ztreepair((\Gs,\bar{\Gs}^*),\tilde{c})=\sum_{(\Ts,\Ts^*)\in\T^{\rs,\outer}(\Gs,\bar{\Gs}^*)} 
\Bigl(\prod_{(v,v')\in \Ts} \tilde{c}_{v,v'}\Bigr)\Bigl(\prod_{(f,f')\in\Ts^*}\tilde{c}_{f,f'}\Bigr)$,
is the \emph{dST-pairs partition function}. 
As a consequence of the KPW-Temperley bijection~\cite{Temperley,KPW}, we have
\begin{equation*}
\Ztreepair((\Gs,\bar{\Gs}^*),\tilde{c})=\Zdimer(\GDro,c).
\end{equation*}
There is also a natural correspondence between the dST-pairs Boltzmann measure $\PPtreepair$ and
the dimer Boltzmann measure $\mathbb{P}_{\mathrm{dimer}}^{\scriptscriptstyle{\,\mathrm{D}}}$ on $\GDro$ with weight function $c$.

Note that if $\tilde{c}\equiv 1$ on edges of $\bar{\Gs}^*$, resp. on edges of $\Gs$, then 
$\Ztreepair((\Gs,\bar{\Gs}^*),\tilde{c})$ is equal to the partition function $\Ztree^\rs(\Gs,\tilde{c})$
of $\rs$-directed spanning trees of $\Gs$, resp. $\Ztree^\outer(\bar{\Gs}^*,\tilde{c})$ of $\outer$-directed spanning 
trees of $\bar{\Gs}^*$.

\subsubsection{Dimers on the Fisher graph $\GF$ and the LTE of the Ising model}\label{sec:def_Fisher_graph}

The \emph{Fisher graph} is denoted by $\GF=(\VF,\EF)$. It is constructed as follows~\cite{Fisher,Dubedat},
see Figure~\ref{fig:G_GFisher} for an example.

\emph{Infinite case.} Start from the dual graph $\Gs^*$ and
replace every vertex of $\Gs^*$ by a \emph{decoration} made of triangles, where each of the triangles corresponds to 
an edge incident to this vertex, then join the triangles in a circular way.

\emph{Finite case.} Start from the dual graph $\bar{\Gs}^*$ and do the same procedure as in the infinite case. Then,
remove the decoration of the vertex $\outer$ as well as all edges of $\bar{\Gs}^*$ incident to 
this decoration.

In both the infinite and finite case, the Fisher graph consists of \emph{internal edges}, which are edges of the decorations, and \emph{external edges} which are in
bijection with edges of $\Gs^*$ and will often be identified with them. Each decoration has a dual vertex in its center, giving a way of
identifying decorations and vertices of $\Gs^*$.

\begin{figure}[ht]
\begin{center}
\begin{overpic}[width=7.5cm]{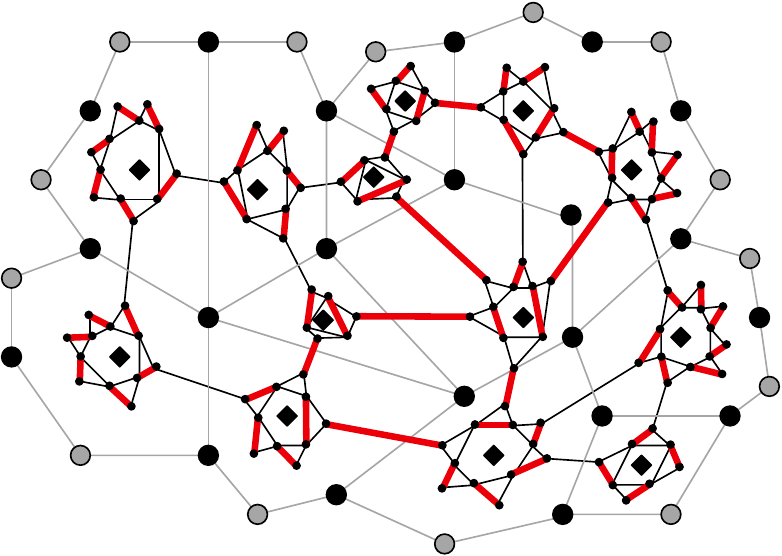}
\end{overpic}
\end{center}
\caption{The Fisher graph $\GF$ for the LTE expansion of the Ising model on $\Gs$ with + boundary conditions (black); one of
the $2^{13}$ dimer configurations corresponding to the polygon configuration of Figure~\ref{fig:low_temp}.}
\label{fig:G_GFisher}
\end{figure}

\paragraph{Mapping between LTE polygon configurations and dimers.} Suppose that $\Gs$ is finite.
Fisher~\cite{Fisher} introduces a mapping between polygon configurations of $\Gs^*$ and dimer
configurations of the corresponding Fisher graph $\GF$. To a given polygon configuration of $\Gs^*$, there corresponds $2^{|\Vs^*|}$ dimer configurations of $\GF$: edges of the polygon 
configuration are exactly the external edges of the dimer configurations and given these external edges, there is 
exactly two ways of filling each decoration so as to have a dimer configuration~\cite{Fisher}, see Figure~\ref{fig:G_GFisher} for an example.
This mapping naturally extends when the graph $\Gs$ is infinite.

We consider polygon configurations arising from the LTE expansion of the Ising model on $\Gs$ with + boundary conditions and coupling 
constants $\Js$. In order for this correspondence to be weight preserving, the dimer weight function $\muJ$ on edges of $\GF$ is 
defined to be, see Equation~\eqref{equ:LTE}:
\begin{equation*}
\forall\,\text{ edge $\es$ of $\GF$},\quad 
\muJ_\es=
\begin{cases}
1&\text{ if $\es$ is an internal edge}\\
e^{-2\Js_e}&\text{ if $\es$ is an external edge arising from a dual edge $e^*$ of $\Gs^*$}.
\end{cases}
\end{equation*}
Let $\PPdimerF$ and $\Zdimer(\GF,\muJ)$ be the corresponding dimer Boltzmann measure and partition function. Then, as a consequence
of Fisher's correspondence we have,
\begin{equation}\label{equ:PF_Ising_1}
\Zising^+(\Gs,\Js)=2^{-|\Vs^*|}\Bigl(\prod_{e\in\Es}e^{\Js_e}\Bigr)\Zdimer(\GF,\muJ).
\end{equation}

\subsubsection{Dimers on the bipartite graph $\GQ$ and the XOR-Ising model}\label{sec:def_graph_GQ}

The \emph{quadri-tiling graph} is denoted by $\GQ=(\VQ,\EQ)$, where the name comes from the paper~\cite{deTiliere:quadri}. 
In both the 
finite and infinite cases, we start from the 
preceding definition of the double graph $\GD$. Recall that face-cycles of $\GD$ are quadrangles
consisting of two black and two white vertices, then add the edges joining opposite black vertices in quadrangles.

\emph{Infinite case}. The graph $\GQ$ is the dual of the modified graph $\GD$.

\emph{Finite case}. The graph $\GQ$ is the restricted dual of the modified graph $\GD$, see Figure~\ref{fig:G_GQ}.

Vertices of $\GQ$ are partitioned as $\VQ=\BQ\cup\WQ$, and the bipartite coloring is fixed as in Figure~\ref{fig:G_GQ}. Black, resp. white,
vertices of $\GQ$ are denoted by $\bs$, resp. $\ws$, with or with sub/super-scripts.

In the infinite case, the graph $\GQ$ consists of \emph{quadrangles} that are joined by \emph{external edges}. Quadrangles
are in bijection with edges of $\Gs$, or equivalently edges of $\Gs^*$, or equivalently white vertices of $\GD$: each quadrangle
has a white vertex of $\GD$ in its interior, two of its edges are ``parallel'' to an edge of $\Gs$ and the two other edges are
``parallel'' to the dual edge of $\Gs^*$. Face-cycles of $\GQ$ other than quadrangles either have a vertex of $\Gs$ or a 
vertex of $\Gs^*$ in their interior. 

In the finite case, the description is similar away from the boundary. Along the boundary 
``quadrangles'' in bijection with boundary edges of $\Gs$, or equivalently with boundary white vertices of $\GD$, are actually reduced to single edges
``parallel'' to boundary edges of $\Gs$. We refer to those degenerate quadrangles as \emph{boundary quadrangles} of $\GQ$, keeping in mind
that they actually are \emph{edges}. Note that some quadrangle edges of $\GQ$ are boundary edges of $\GQ$ (in the sense that they belong to the boundary
of the unbounded face) but still belong to ``full'' quadrangles; as such they are \emph{not} boundary quadrangle edges.

\begin{figure}[ht]
\begin{center}
\begin{overpic}[width=7.5cm]{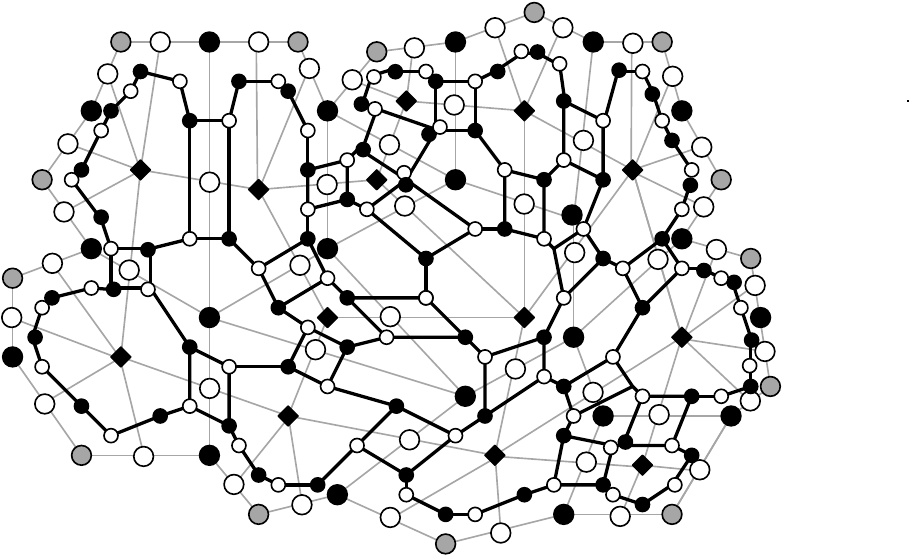}
\end{overpic}
\end{center}
\caption{The quadri-tiling graph $\GQ$: take the double graph $\GD$ of 
Figure~\ref{fig:G_Gdouble_tree} (left) and add edges joining opposite black vertices in quadrangle-faces; the restricted dual of this graph
is~$\GQ$.}
\label{fig:G_GQ}
\end{figure}

We consider the dimer model on the bipartite graph $\GQ$ arising from the \emph{XOR-Ising model} \cite{WilsonXOR}, also known as 
the \emph{polarization} of the Ising model~\cite{KadanoffBrown}, obtained by taking the product of the spins of two independent
Ising models. There are two mappings leading to the dimer model on $\GQ$, both of them are rather long to describe so that 
we refer to the original papers:~\cite{Dubedat} based on results of~\cite{KadanoffWegner,Wu71,FanWu,WuLin} for the first approach,
and~\cite{BoutillierdeTiliere:XORloops} based on results of~\cite{Nienhuis,WuLin} for the second one. The dimer weight function 
$\nuJ$ on $\GQ$ is defined by, for every edge $\es$ of $\GQ$,
\begin{equation*}
\nuJ_{\es}=
\begin{cases}
1&\text{ if $\es$ is an external edge}\\
1&\text{ if $\es$ is a boundary quadrangle edge in the finite case}\\
\tanh(2\Js_e)&\text{ if $\es$ is a quadrangle edge/non-boundary quadrangle edge}\\
&\text{ in the infinite/finite case, ``parallel'' to an edge $e$ of $\Gs$}\\
\cosh^{-1}(2\Js_e)&\text{ if $\es$ is a quadrangle edge, ``parallel'' to a dual edge $e^*$ of an edge $e$ of $\Gs$}.
\end{cases}
\end{equation*}
When the graph $\GQ$ is finite, we let $\PPdimerQ$ and $\Zdimer(\GQ,\nuJ)$ be the corresponding dimer Boltzmann measure and partition function.
As a consequence of~\cite{Dubedat}, see also Corollary~\ref{cor:det_KF_KQ}, we have:
\begin{equation*}
\Zdimer(\GF,\muJ)^2=2^{|\Vs^*|}\prod_{e^*\in \Es^*}(1+e^{-4\Js_e})\Zdimer(\GQ,\nuJ).
\end{equation*}
Combining this with Equation~\eqref{equ:PF_Ising_1} for the Ising partition function, and denoting by $\Es^\spartial$
the set of boundary edges of the graph $\Gs$, we obtain
\begin{align}\label{equ:PF_Ising_2}
[\Zising^+(\Gs,\Js)]^2&=
2^{-|\Vs^*|}\Bigl(\prod_{e\in\Es} e^{2\Js_e}\Bigr)\Bigl(\prod_{e^*\in\Es^*}(1+e^{-4\Js_e})\Bigr)\Zdimer(\GQ,\nuJ)\nonumber \\
&=2^{-|\Vs^*|+|\Es^*|}\Bigl(\prod_{e\in\Es^\spartial} e^{2\Js_e}\Bigr)\Bigl(\prod_{e^*\in\Es^*}\cosh(2\Js_e)\Bigr)\Zdimer(\GQ,\nuJ)\nonumber \\
&= 2^{|\Vs|-1}\Bigl(\prod_{e\in\Es^\spartial} \frac{e^{2\Js_e}}{2}\Bigr)\Bigl(\prod_{e^*\in\Es^*}\cosh(2\Js_e)\Bigr)\Zdimer(\GQ,\nuJ),
\end{align}
where in the last line we used that $|\Es|=|\Es^*|+|\Es^\spartial|$ and Euler's formula: $|\Es|=|\Vs|+|\Vs^*|-1$.

\subsection{Rooted directed spanning forests and directed spanning trees}\label{sec:spanning_forests}

We also need the model of \emph{rooted directed spanning forests} on the graphs $\Gs$ and $\Gs^*$, resp. $\Gs^\rs$ and $\Gs^*$,
in the infinite, resp. finite, case. So as to include both the primal and the dual graphs, we now define this model on a simple graph 
$G=(V,E)$.

Suppose that vertices are assigned non-negative \emph{masses}, denoted $m=(m_{x})_{x\in V}$, and that directed
edges have positive \emph{conductances}, denoted $\rho$, meaning that every directed edge $(x,x')$
has conductance $\rho_{x,x'}$. 

A \emph{rooted directed spanning forest} (rdSF) of $G$ is a subset of edges spanning all vertices of the graph, 
such that each connected component is a directed tree $\Ts$ rooted at a vertex of $G$, denoted $x_\Ts$. Let 
$\F(G)$ denoted the set of rdSF of the graph $G$. 

Suppose that $G$ is finite and consider the \emph{Boltzmann measure on rdSF}, denoted $\PPforest$, defined by:
\begin{equation*}
\forall\,\Fs\in\F(G),\quad \PPforest(\Fs)=
\frac{\prod_{\Ts\in\Fs}\Bigl(m_{x_\Ts} \prod_{(x,x')\in\Ts}\rho_{x,x'}\Bigr)}{\Zforest(G,\rho,m)},
\end{equation*}
where $\Zforest(G,\rho,m)=\sum_{\Fs\in \F(G)} \prod_{\Ts\in\Fs}m_{x_\Ts} \prod_{(x,x')\in\Ts}\rho_{x,x'}$ 
is the \emph{rdSF partition function}. Whenever conductances are symmetric, \emph{i.e.}, $\rho_{x,x'}=\rho_{x',x}$, we will remove 
the ``d'' in rdSF. 

As a consequence of the directed version of Kirchhoff's matrix-tree theorem~\cite{Kirchhoff,Tutte}, the rdSF partition function is computed 
using the massive Laplacian operator/matrix as follows.
The \emph{massive 
Laplacian operator} $\Delta^m:\CC^V\rightarrow\CC^V$ is defined by:
\begin{equation*}
\forall\, F\in\CC^V,\quad (\Delta^m F)_x=\sum_{x'\sim x} \rho_{x,x'}(F_x-F_{x'}) + m_x F_x.
\end{equation*}
The operator $\Delta^m$ is represented by a matrix, also denoted $\Delta^m$, whose non-zero coefficients are given by:
\begin{equation*}
\Delta^m_{x,x'}=
\begin{cases}
-\rho_{x,x'}& \text{ if $(x,x')$ is an edge of $G$}\\ 
m_x+\sum_{x'\sim x} \rho_{x,x'} & \text{ if $x'=x$ is a vertex of $G$}.
\end{cases}
\end{equation*}
A function $F\in\CC^V$ is said to be \emph{massive-harmonic}, if $\Delta^m F=0$.

Consider the graph $G_\dag$ constructed from $G$ by adding a \emph{cemetery vertex} $\dag$ and an edge $(x,\dag)$ for every 
vertex $x$ such that $m_x\neq 0$. Define the modified weight function $\rho^m$ on (directed) edges of $G_\dag$ by, 
\begin{equation*}
\forall\text{ edge $(x,x')$ of }G_\dag,\quad
\rho^m_{x,x'}=
\begin{cases}
\rho_{x,x'} & \text{ if $x'\neq\dag$}\\
m_x & \text{ if $x'=\dag$}.
\end{cases}
\end{equation*}
There is a natural weight-preserving bijection between $\T^\dag(G_\dag)$ and $\F(G)$: a $\dag$-directed spanning tree of $G_\dag$ corresponds to the 
rooted directed spanning forest of $G$
obtained by replacing every edge $(x,\dag)$ of the dST by a root of the rdSF. 

Denote by $\Delta_\dag$ the (non-massive) Laplacian matrix of $G_\dag$ with weight function $\rho^m$ on the edges. 
Then, $\Delta^m$ is the Laplacian matrix $\Delta_\dag$ from which one has removed
the row and column corresponding to the cemetery $\dag$ and thus, by Kirchhoff's matrix-tree theorem~\cite{Kirchhoff,Tutte}, 
the determinant of $\Delta^m$ counts $\rho^m$ weighted $\dag$-dST of $G_\dag$. Using the bijection between $\dag$-dST of $G_\dag$ and rdSF of $G$,
we thus have,
\begin{thm}[\cite{Kirchhoff,Tutte}]\label{thm:Kirchhoff_Tutte}
\begin{equation*}
\Zforest(G,\rho,m)=\Ztree^\dag(G_\dag,\rho^m)=\det(\Delta^m).
\end{equation*}
\end{thm}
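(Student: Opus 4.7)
\medskip

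\noindent\textbf{Proof plan.} The statement is a direct assembly of two ingredients already set up in the paragraphs preceding the theorem: a bijection between rdSFs and $\dag$-dSTs, and the identification of $\Delta^{m}$ as a principal minor of the Laplacian of $G_\dag$. My plan is therefore to handle the two equalities separately, then remark that the second one is the classical statement of the matrix-tree theorem for in-arborescences.

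For the first equality $\Zforest(G,\rho,m)=\Ztree^\dag(G_\dag,\rho^m)$, I would write out the bijection explicitly. Given $\Fs\in\F(G)$ with component trees $\Ts_1,\dots,\Ts_k$ rooted respectively at $x_{\Ts_1},\dots,x_{\Ts_k}$, define the image in $\T^\dag(G_\dag)$ as $\Fs$ together with the directed edges $(x_{\Ts_i},\dag)$ for $i=1,\dots,k$; these are edges of $G_\dag$ because every root of a component of an rdSF has $m_{x_{\Ts_i}}\neq 0$ (if $m_x=0$, then $x$ cannot be a root, as it would contribute weight $0$ to the rdSF partition function -- one can restrict to this case without loss of generality, or simply treat such roots as forbidden). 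The inverse map removes the edges incident to $\dag$ in a $\dag$-dST, whose other endpoints are declared roots. The weight-preservation is immediate from $\rho^m_{x,\dag}=m_x$ and $\rho^m_{x,x'}=\rho_{x,x'}$ for $x'\in V$.

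For the second equality $\Ztree^\dag(G_\dag,\rho^m)=\det(\Delta^m)$, I would first verify that $\Delta^m$ coincides with the $(V,V)$-principal minor of the Laplacian matrix $\Delta_\dag$ of $G_\dag$ with edge weights $\rho^m$. Indeed, for $x\in V$, the diagonal entry of $\Delta_\dag$ at $x$ is $\rho^m_{x,\dag}+\sum_{x'\sim x}\rho^m_{x,x'}=m_x+\sum_{x'\sim x}\rho_{x,x'}$, which agrees with $\Delta^m_{x,x}$; off-diagonal entries with both indices in $V$ are $-\rho_{x,x'}$ in both matrices. Then I would invoke the directed matrix-tree theorem of Kirchhoff and Tutte: for any vertex $v$ of a weighted directed graph, the $(v,v)$-minor of the corresponding Laplacian equals the weighted sum, over in-arborescences rooted at $v$, of the product of edge conductances. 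Applied to $G_\dag$ at the vertex $v=\dag$, this yields $\det(\Delta^m)=\Ztree^\dag(G_\dag,\rho^m)$.

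The only real point of care is the correct form of the matrix-tree theorem in the \emph{directed} setting, where one must distinguish in-arborescences from out-arborescences; the convention $\Delta^m_{x,x}=m_x+\sum_{x'\sim x}\rho_{x,x'}$ (row sums of outgoing conductances, off the cemetery) is exactly the one for which the cofactor counts in-trees toward the deleted root, consistent with our definition of $\dag$-dST. Since this classical result is used here as a black box, I would not reprove it but simply refer to \cite{Kirchhoff,Tutte}; the genuinely new content of the theorem, within the scope of this paper, is the bookkeeping identification of the rdSF model with a $\dag$-dST model on the augmented graph $G_\dag$.
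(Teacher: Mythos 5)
Your proposal is correct and follows essentially the same route as the paper: a weight-preserving bijection between rdSFs of $G$ and $\dag$-dSTs of $G_\dag$, the identification of $\Delta^m$ as the principal minor of $\Delta_\dag$ obtained by deleting the row and column of $\dag$, and the directed matrix-tree theorem of Kirchhoff and Tutte applied at $\dag$. Your extra care about roots with zero mass (which are excluded by the construction of $G_\dag$, since the edge $(x,\dag)$ is only added when $m_x\neq 0$) is a sensible clarification of a point the paper leaves implicit.
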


When there is at least one vertex with positive mass, the~\emph{massive Green function}, denoted $G^m$, 
is the inverse of the massive Laplacian $\Delta^m$. In the chore of the paper, graphs are written with the letter $\Gs$ with or without
superscripts, so that we believe that the notation $G^m$ will not create confusion. The massive Green function is naturally related to
the expected number of visits of the network random walk associated to the conductances $\rho$ and masses $m$, see for example Appendix 
D of~\cite{BdTR1}, where a number of facts are recalled.

\subsection{Isoradial graphs and $Z$-invariance}\label{sec:isoradial_Z_invariance}

Sections~\ref{sec:Z_Dirac_Z_Lap},~\ref{sec:GQ_Z_Dirac} and~\ref{sec:GQ_GF_Zinv} use $Z$-invariant models defined on isoradial graphs.
We recall these notions, related concepts and more specifically give the definitions of the $Z$-invariant versions of the Ising model on $\Gs$, of the dimer model
on the decorated graphs $\GF$ and $\GQ$ and of rooted spanning forests on $\Gs$ or $\Gs^*$.

\subsubsection{Isoradial graphs, diamond graphs and angles}

Isoradial graphs naturally appear when considering a discrete version of the Cauchy-Riemann equations, see~\cite{Duffin} 
and~\cite{Mercat:ising,Kenyon3,ChelkakSmirnov:toolbox}; they also arise in $Z$-invariant models when solving the corresponding
Yang-Baxter equations~\cite{Baxter:exactly,CostaSantos}; the name \emph{isoradial} comes from the paper~\cite{Kenyon3}.

Suppose that $\Gs$ is an infinite, planar graph. Then $\Gs$ is said to be \emph{isoradial} if it can be embedded in the plane 
in such a way that every face is inscribable in a circle of the same radius, and such that the circumcircles
are in the interior of the faces. We consider $\Gs$ as an embedded graph and take the common radius to be 
2. Note that the dual $\Gs^*$ of an isoradial graph is also isoradial, an embedding of $\Gs^*$ is 
obtained by taking as vertices the circumcenters of the circles.

This definition also holds when the graph is finite. Recall that in this case, the notation $\Gs$ is used for the graph 
having an additional vertex on each boundary edge. We now fix
the isoradial embedding of $\Gs$ when the original graph (the one without the additional vertices) is isoradial. This is done in the same way as 
in~\cite{ChelkakSmirnov:ising}: each additional boundary vertex corresponds to a boundary edge $xy$ of the original graph
and we embed this additional vertex in the middle of the arc joining $x$ and $y$, see Figure~\ref{fig:iso} (left and right).

\begin{figure}[ht]
\begin{minipage}[b]{0.5\linewidth}
\begin{center}
\begin{overpic}[width=8cm]{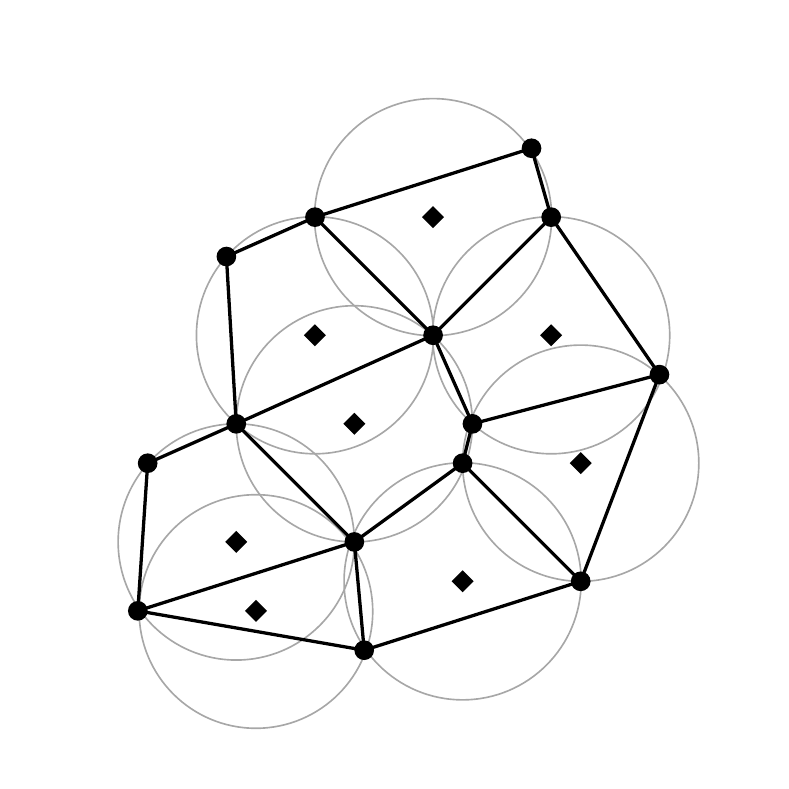}
\end{overpic}
\end{center}
\end{minipage}
\begin{minipage}[b]{0.5\linewidth}
\begin{center}
\begin{overpic}[width=8cm]{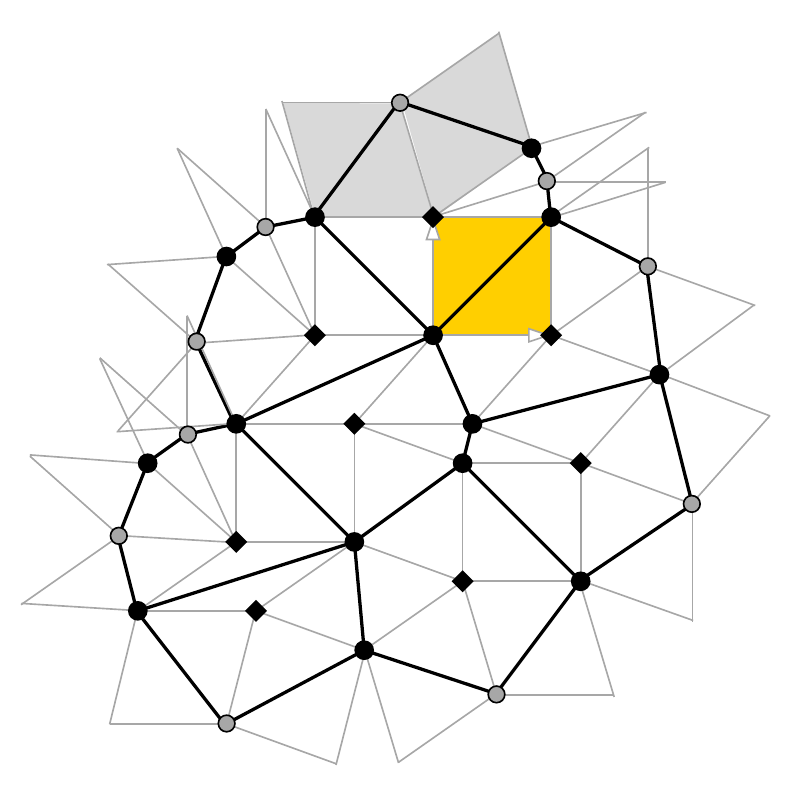}
\put(63,63){\scriptsize $e$}
\put(53,53){\scriptsize $v$}
\put(70.5,67.5){\scriptsize $v'$}
\put(58,54){\scriptsize $2e^{i\bar{\alpha}_e}$}
\put(47,65.5){\scriptsize $2e^{i\bar{\beta}_e}$}
\put(55.5,62){\scriptsize $\bar{\theta}_e$}
\end{overpic}
\end{center}
\end{minipage}
\caption{Left: original isoradial graph (black lines) with circumcircles (grey) and dual vertices embedded as circumcenters (diamonds).
Right: isoradial graph $\Gs$ (black lines) with the additional boundary vertices (grey bullets); diamond graph $\GR$ (grey lines);
rhombus (yellow) assigned to an edge $e=(v,v')$ with the corresponding half-angle $\bar{\theta}_e$ and rhombus vectors 
$2e^{i\bar{\alpha}_e}$, $2e^{i\bar{\beta}_e}$; a boundary rhombus pair of $\R^\spartial$ (light grey).}
\label{fig:iso}
\end{figure}

In the infinite case, the \emph{diamond graph}, denoted~$\GR$, is constructed from an isoradial graph $\Gs$ and its dual $\Gs^*$ as follows:
its vertex set is $\Vs\cup\Vs^*$, the vertices of $\Gs$ and $\Gs^*$; and each dual vertex is joined to all vertices bounding the face 
it corresponds to. Since the graph $\Gs$ is isoradial, faces of the diamond graph $\GR$ are side-length-2 rhombi. 

There is a bijection between rhombi of $\GR$ and pairs $e,e^*$ of primal and dual edges, the latter being the two diagonals of 
the rhombi. To every edge $e$, one assigns an \emph{angle} $\bar{\theta}_e\in(0,\frac{\pi}{2})$ 
defined to be the half-angle of the corresponding rhombus at the edge $e$. We furthermore ask that $\bar{\theta}_e\in(\eps,\frac{\pi}{2}-\eps)$,
for some $\eps>0$. The rhombus angle of the dual edge $e^*$ is 
$\bar{\theta}_{e^*}=\frac{\pi}{2}-\bar{\theta}_e:=\bar{\theta}_e^*$.
To a directed edge $e=(v,v')$ of $\Gs$ we further assign two rhombus vectors $2e^{i\bar{\alpha}_e}$, $2e^{i\bar{\beta}_e}$ of $\GR$, such that
$2e^{i\bar{\alpha}_e}$ is on the right of the edge $(v,v')$, see Figure~\ref{fig:iso} (right). The angles $\bar{\alpha}_e$ and $\bar{\beta}_e$ are 
defined so that $\frac{\bar{\beta}_e-\bar{\alpha}_e}{2}=\bar{\theta}_e$. Whenever no confusion occurs, we remove the subscript $e$ from the notation.
In the finite case, the \emph{diamond graph}, also denoted $\GR$, is constructed in a similar way from $\Gs$ and 
its restricted dual $\Gs^*$. One then adds the missing half-rhombi along the boundary; they may overlap but this causes no 
problem, see Figure~\ref{fig:iso} (right). Angles and rhombus vectors assigned to edges are defined in the same way. 

Because of the additional vertex on each boundary edge 
and because of our choice of embedding, rhombi along the boundary of 
$\GR$ come in pairs, both having the same rhombus half-angle; let us denote by $\R^\spartial$ the set of boundary rhombus pairs,  
an instance is highlighted in Figure~\ref{fig:iso} (right, light grey).

\subsubsection{Isoradial embeddings of the decorated graphs $\GD$ and $\GQ$}\label{sec:iso_GDGQ}

Consider an isoradial graph $\Gs$, its dual $\Gs^*$ in the infinite case and its restricted dual $\Gs^*$ in the finite case.
The double graph $\GD$ is embedded so that the black vertices are those of $\Gs$ and $\Gs^*$ and the white 
vertices are at the crossing of the diagonals of the rhombi of $\GR$, see Figure~\ref{fig:iso_2} (left).

\begin{figure}[ht]
\begin{minipage}[b]{0.5\linewidth}
\begin{center}
\begin{overpic}[width=7cm]{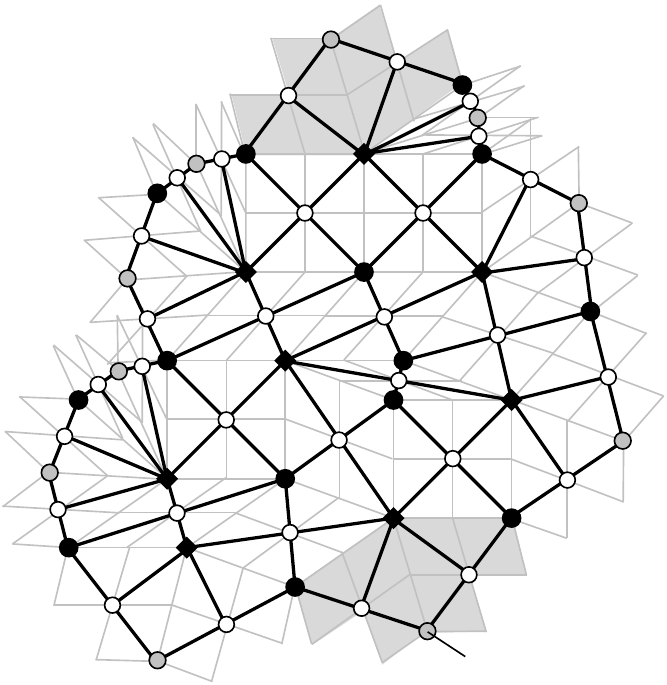}
\put(69,2.5){\scriptsize $\rs$}
\end{overpic}
\end{center}
\end{minipage}
\begin{minipage}[b]{0.5\linewidth}
\begin{center}
\begin{overpic}[width=7cm]{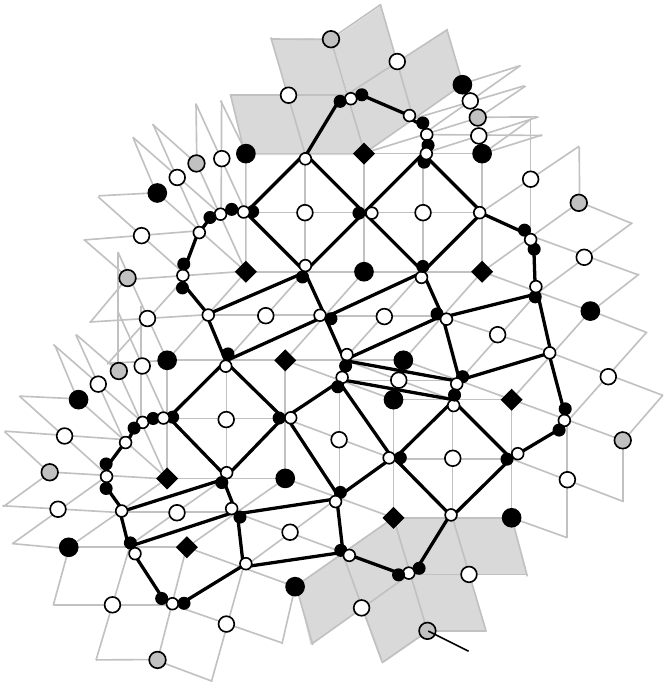}
\put(69.5,3){\scriptsize $\rs$}
\end{overpic}
\end{center}
\end{minipage}
\caption{Left: isoradial embedding of the graph $\GD$ (black lines).
Right: isoradial embedding of the graph $\GQ$ (black lines). In both cases is also pictured the diamond graph $\GRR$ (grey lines),
a boundary rhombus pair of $\R^\spartial$ and the root pair (light grey).}
\label{fig:iso_2}
\end{figure}

In the infinite case and in the finite non-boundary case, 
consider the embedding of the bipartite graph $\GQ$ where external edges have length-0
and their endpoints become a single vertex in the middle of the rhombus edges of $\GR$, see Figure~\ref{fig:iso_2} 
(right, inner vertices); then, inner quadrangles of $\GQ$ are rectangles. 
Note that although external edges are embedded as single vertices, 
they still consist of two vertices joined by (a length-0) edge, \emph{i.e.}, the combinatorics of the graph does not change.

When the graph $\GQ$ is \emph{finite}, the procedure along 
the boundary is different. Consider a boundary rhombus pairs of $\R^\spartial$,
the following notation will be used throughout the paper and is illustrated in Figure~\ref{fig:rhombus_pair} below. 
Let
$v^\ell,v^c,v^r$ be the vertices of $\Gs$ in cw order and let $f^c$ be the vertex of $\Gs^*$; note that 
$v^c$ is the additional vertex on the edge 
$v^\ell v^r$ of the original graph. Denote by $w^\ell,w^r$ the white vertices of the double graph $\GD$ and by 
$\ws^\ell,\bs^\ell,\ws^c,\bs^r,\ws^r$ the vertices of $\GQ$. Then, 
taking the same convention as in the infinite case 
for the embedding gives Figure~\ref{fig:rhombus_pair} (left); but it turns out that the appropriate embedding to obtain Theorem~\ref{thm:main}
is that of Figure~\ref{fig:rhombus_pair} (center), see also Figure~\ref{fig:iso_2} (right),
where the boundary quadrangle edge $\bs^\ell\ws^c$ has length-0
and is ``replaced'' by the external edge $\ws^\ell \bs^\ell$. This change
of embedding preserves the combinatorics of the graph; it has the effect of exchanging the colors of the bipartite coloring of $\GQ$ 
in the left rhombus of the rhombus pair.

We will often be using the fact that vertices/edges of the boundary rhombus pairs of $\R^\spartial$ 
encode: boundary vertices/edges of $\Gs$, boundary vertices of the restricted dual $\Gs^*$, where a \emph{boundary vertex} of $\Gs^*$ is
defined to be a vertex adjacent to the vertex $\outer$ in $\bar{\Gs}^*$; boundary quadrangle vertices/edges of $\GQ$, where recall that 
boundary quadrangles are degenerate and reduced to edges in bijection with boundary edges of $\Gs$. 
We will use the notation $\{v^\ell\in\R^\spartial\}$ for the set of vertices of type $v^\ell$ belonging to boundary rhombus pairs, and 
similarly for other vertices or edges of $\R^\spartial$.

\begin{figure}[ht]
\begin{minipage}{0.66\linewidth}
\begin{center}
\begin{overpic}[width=8.5cm]{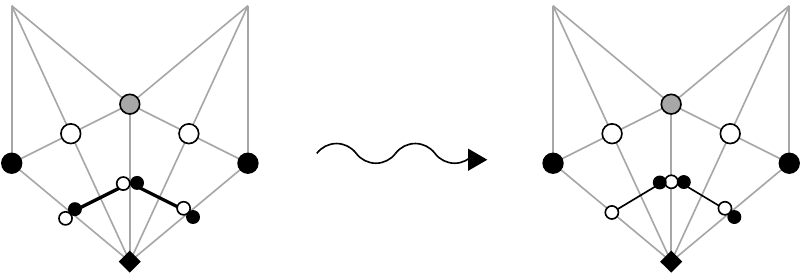}
\put(0,10.5){\scriptsize $v^\ell$}
\put(14.5,23.5){\scriptsize $v^c$}
\put(30,10.5){\scriptsize $v^r$}
\put(8,20){\scriptsize $w^\ell$}
\put(20,20){\scriptsize $w^r$}
\put(7,4){\scriptsize $\ws^\ell$}
\put(7.5,9.8){\scriptsize $\bs^\ell$}
\put(12,13){\scriptsize $\ws^c$}
\put(16.5,12.8){\scriptsize $\bs^r$}
\put(22,9.8){\scriptsize $\ws^r$}
\put(14.5,-2.8){\scriptsize $f^c$}
\put(68,10.5){\scriptsize $v^\ell$}
\put(82,23.5){\scriptsize $v^c$}
\put(97,10.5){\scriptsize $v^r$}
\put(76,20){\scriptsize $w^\ell$}
\put(87.5,20){\scriptsize $w^r$}
\put(74.5,4.5){\scriptsize $\ws^\ell$}
\put(78.2,13){\scriptsize $\bs^\ell$}
\put(82,13.5){\scriptsize $\ws^c$}
\put(86,13){\scriptsize $\bs^r$}
\put(88.5,9.8){\scriptsize $\ws^r$}
\put(82,-2.8){\scriptsize $f^c$}
\end{overpic}
\end{center}
\end{minipage}
\begin{minipage}{0.33\linewidth}
\vspace{0.75cm}
\begin{overpic}[width=2.8cm]{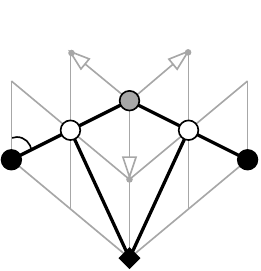}
\put(1,31){\scriptsize $v^\ell$}
\put(43,69){\scriptsize $v^c$}
\put(88,30.5){\scriptsize $v^r$}
\put(27,59){\scriptsize $w^\ell$}
\put(58,59){\scriptsize $w^r$}
\put(43,-8){\scriptsize $f^c$}
\put(21,81){\scriptsize $e^{i\bar{\alpha}^\ell}$}
\put(30.5,45){\scriptsize $e^{i\bar{\beta}^\ell}$}
\put(48,45){\scriptsize $e^{i\bar{\alpha}^r}$}
\put(58,81){\scriptsize $e^{i\bar{\beta}^r}$}
\put(8,49.5){\scriptsize $\bar{\theta}^\spartial$}
\end{overpic}
\begin{center}
\end{center}
\end{minipage}
\caption{Notation for vertices of boundary pairs of rhombi of $\R^\spartial$. 
Left: isoradial embedding of $\GQ$ under the convention that external edges have length 0.
Center: isoradial embedding used in this paper. Right: rhombus vectors of $\GRR$ and half-angle assigned to the edges $(v^c,w^\ell),(v^c,w^r)$ of $\GD$.
}
\label{fig:rhombus_pair}
\end{figure}

The embeddings of $\GD$ and $\GQ$ are both isoradial with circumcircles having common radius $1$. 
Consider the graph obtained from the diamond graph $\GR$ by cutting rhombi into four equal length-1 rhombi. Denote this graph 
by $\GRR$ in the infinite case and, in the finite case, let $\GRR$ be this graph where 
the boundary quarter rhombi crossed by no edge of $\GD$ are removed.
Then $\GRR$ is the diamond graph of $\GD$. Note that $\GRR$ is nearly the diamond graph of $\GQ$: it is slightly extended along the 
boundary and one should think of it as having flat rhombi associated to length-0 edges of $\GQ$. We will nevertheless refer to it
as the \emph{diamond graph} of $\GD$ or $\GQ$.
An example of graph $\GRR$ is given in Figure~\ref{fig:iso_2} (left and right, grey).

Consider a boundary rhombus pair of $\R^\spartial$, and
let $e^{i\bar{\alpha}^\ell},e^{i\bar{\beta}^\ell}$, resp. $e^{i\bar{\alpha}^r},e^{i\bar{\beta}^r},$ be the rhombus vectors of 
the diamond graph $\GRR$ assigned to the edge $(v^c,w^\ell)$, resp. $(v^c,w^r)$, of $\GD$ see Figure~\ref{fig:rhombus_pair} (right).
By definition we have
$\frac{\bar{\beta}^\ell-\bar{\alpha}^\ell}{2}$, $\frac{\bar{\beta}^r-\bar{\alpha}^r}{2}\in(\eps,\frac{\pi}{2}-\eps)$, and 
by construction the two rhombi have the same half-angle denoted $\bar{\theta}^\spartial=\frac{\bar{\beta}^\ell-\bar{\alpha}^\ell}{2}=
\frac{\bar{\beta}^r-\bar{\alpha}^r}{2}$.
\emph{We further impose that
$\bar{\beta}^\ell=\bar{\alpha}^r+2\pi$ or equivalently that $\frac{\bar{\alpha}^\ell-\bar{\beta}^r}{2}\in(2\eps,\pi-2\eps)$.}

Amongst the boundary rhombus pairs of $\R^\spartial$ the one containing 
the fixed vertex $\rs$, \emph{i.e.} the one for which  $v^c=\rs$, plays a special role; it will be referred to as the 
\emph{root-boundary rhombus pair} or simply \emph{root-pair}, see Figure~\ref{fig:iso_2} where the root pair is highlighted in 
light grey. We 
denote by  
$\R^{\spartial,\rs}$ the set $\R^\spartial$ without the root pair.
The isoradial embedding of the graph $\GDro$ is obtained from $\GD$ by removing the vertex $\rs$ and the edges 
$w^\ell \rs$, $w^r \rs$ of the root pair. Whenever needed, we add a superscript $\rs$ to the notation of Figure~\ref{fig:rhombus_pair} to 
specify vertices of the root pair. 

\subsubsection{Train-tracks}\label{sec:train_tracks}

A \emph{train-track} of a finite isoradial graph $\Gs$, also known 
as a \emph{de Bruijn line}~\cite{deBruijn1,deBruijn2} or a \emph{rapidity line}~\cite{Baxter:Zinv} is 
a maximal chain of edge-adjacent rhombi of the diamond graph $\GR$,
such that when entering a rhombus one exits along the opposite edge~\cite{KeSchlenk}; each train-track $\tau$ has a parallel direction 
$\pm 2e^{i\bar{\alpha}_\tau}$. Consider the simply connected domain $D(\Gs)$ obtained by taking the union of the faces of $\Gs$.
Then a train-track $\tau$ enters and exits $D(\Gs)$, and there are exactly two parallel edges of $\tau$ outside of $D(\Gs)$, see Figure~\ref{fig:iso},
(right). The boundary rhombus vectors $\{2e^{i\bar{\alpha}^\ell}, 2e^{i\bar{\beta}^r}\in\R^\spartial\} $,
$\{2e^{i\bar{\alpha}^r}, 2e^{i\bar{\beta}^\ell}\in\R^\spartial\}$
come in parallel pairs, and all the parallel directions of the train-tracks are encoded in 
$\{\pm 2e^{i\bar{\alpha}^\ell},\pm 2e^{i\bar{\beta}^\ell}\in\R^\spartial\}=\{\pm 2e^{i\bar{\alpha}^r},\pm 2e^{i\bar{\beta}^r}\in\R^\spartial\}$.

\subsubsection{Elliptic angles}\label{sec:elliptic_angles}

Consider an \emph{elliptic modulus} $k$, then $k'=(1-k^2)^{\frac{1}{2}}$ is the \emph{complementary elliptic modulus}. Suppose that
$k$ is such that $(k')^2\in(0,\infty)$. The \emph{complete elliptic integral of the first kind}, denoted 
$K=K(k)$ is
\[
K=\int_{0}^{\frac{\pi}{2}} \frac{1}{(1-k^2\sin\tau)}\ud\tau,
\]
and for later purposes we also need $K'=K(k')$.
As in~\cite{BdTR1,BdtR2}, we need the following linear transformation of rhombus angles and vectors of the diamond graph $\GR$ 
associated to edges:
\begin{equation*}
\theta=\bar{\theta}\frac{2K}{\pi},\quad \alpha=\bar{\alpha}\frac{2K}{\pi},\quad \beta=\bar{\beta}\frac{2K}{\pi}. 
\end{equation*}

\subsubsection{$Z$-invariant Ising model and corresponding dimer models}

Underlying $Z$-invariance is the \emph{star-triangle transformation}, also known as the $\mathsf{Y}$-$\Delta$ \emph{move}, on isoradial graphs.
Suppose that an isoradial graph $\Gs$ has a triangle, then this triangle can be transformed into a three-legged star while preserving 
isoradiality. This amounts to performing a cubic flip in the underlying diamond graph $\GR$; the embedding of the additional vertex of the 
triangle is given by the cubic flip, see Figure~\ref{fig:Zinv_0}. 

\begin{figure}[ht]
\begin{minipage}[b]{0.5\linewidth}
\begin{center}
\begin{overpic}[width=3cm]{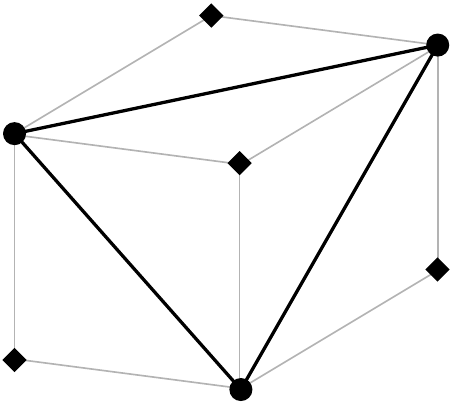}
\end{overpic}
\end{center}
\end{minipage}
\begin{minipage}[b]{0.5\linewidth}
\begin{center}
\begin{overpic}[width=3cm]{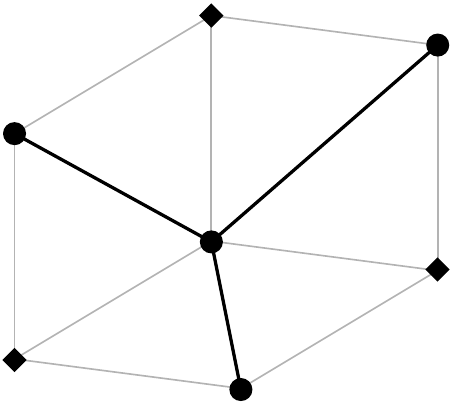}
\end{overpic}
\end{center}
\end{minipage}
\caption{A star-triangle transformation of an isoradial graph.}
\label{fig:Zinv_0}
\end{figure}

\emph{$Z$-invariance}~\cite{Baxter:8V,Baxter:Zinv}
phrased in the context of the Ising model requires that when decomposing the partition function according to the 
$2^3$ possible spin configurations at the three vertices bounding the star/triangle, it only changes by an overall constant when performing a 
$\mathsf{Y}$-$\Delta$ move, and this constant is independent of the choice of spin configuration. This yields a set of equations
for the coupling constants, known as the \emph{Ising Yang-Baxter} equations, see also~\cite{Perk:YB}. Extending the form of the solutions to the whole of the graph
naturally leads to introducing isoradial graphs: the solution is parametrized  by the rhombus half-angles assigned to edges and by 
the \emph{elliptic modulus} $k$, where $k$ is such that $(k')^2=1-k^2 \in(0,\infty)$, see~\cite{Baxter:exactly}. The 
\emph{Z-invariant coupling constants} are explicitly given by:
\begin{align}\label{equ:def_J_Zinv}
\forall\,e\in\Es,\quad \Js_e=\frac{1}{2}\ln\left(\frac{1+\sn(\theta_e|k)}{\cn(\theta_e|k)}\right),
\end{align}
where $\sn,\,\cn$ are two of the twelve Jacobi elliptic trigonometric functions. We refer the reader 
to~\cite{AS,Lawden} for more on elliptic and related functions.

Suppose that the $Z$-invariant coupling constants are chosen for the Ising model. Then, the dimer weight function $\muJ$ 
on the Fisher graph $\GF$ arising from Fisher's correspondence is given by, for every edge $\es$ of $\GF$,
\begin{equation}\label{equ:def_mu_Zinv}
\muJ_\es=
\begin{cases}
1&\text{ if $\es$ is an internal edge}\\
\frac{\cn(\theta_e|k)}{1+\sn(\theta_e|k)}&\text{ if $\es$ is an external edge arising from an edge $e^*$ of $\Gs^*$}.
\end{cases}
\end{equation}
 
The dimer weight function $\nuJ$ on the bipartite graph $\GQ$ arising from the XOR-Ising model is given by, for every edge $\es$ of $\GQ$,
\begin{equation}\label{eq:def_nu_Zinv}
\nuJ_{\es}=
\begin{cases}
1&\text{ if $\es$ is an external edge}\\
1&\text{ if $\es$ is a boundary quadrangle edge in the finite case}\\
\sn(\theta_e|k)&\text{ if $\es$ is a quadrangle edge/non-boundary quadrangle edge}\\
&\text{ in the infinite/finite case, parallel to an edge $e$ of $\Gs$}\\
\cn(\theta_e|k)&\text{ if $\es$ is parallel to a dual edge $e^*$ of an edge $e$ of $\Gs$}.
\end{cases}
\end{equation}

\subsubsection{The Z-invariant massive Laplacian}\label{sec:def_Lapmass_0}

In the paper~\cite{BdTR1}, we consider an infinite, isoradial graph $\Gs$ and introduce conductances and masses defining the 
\emph{$Z$-invariant massive Laplacian operator} or simply \emph{$Z$-massive Laplacian}, related to \emph{$Z$-invariant rooted spanning forests}.
Recall 
that every edge $(v,v')$ of $\Gs$ is assigned two rhombus vectors $2e^{i\bar{\alpha}},2e^{i\bar{\beta}}$ and a half-angle $\bar{\theta}$
of the diamond graph $\GR$. Denote by $v_1,\dots,v_d$ the neighbors of a vertex $v$ of degree $d$, and for every edge $(v,v_j)$ use the notation
$2e^{i\bar{\alpha}_j},2e^{i\bar{\alpha}_{j+1}},\bar{\theta_j}$ for the associated rhombus vectors and half-angle. Fix an elliptic modulus $k$ such that 
$(k')^2\in(0,\infty)$. Then, for every edge $(v,v')$ and every vertex $v$ of $\Gs$, the conductances $\rho^k$ and masses $m^k$ of~\cite{BdTR1}
are defined by:
\begin{equation*}
\rho^k_{v,v'}=\sc(\theta|k),\quad 
m^k_v=\sum_{j=1}^d [A(\theta_j|k)-\sc(\theta_j|k)],
\end{equation*}
where $\sc=\frac{\sn}{\cn}$, and 
\[A(u|k)=(k')^{-1}\Bigl(\mathrm{Dc}(u|k)+\frac{E-K}{K}u\Bigr),\]

$\mathrm{Dc}(u|k)=\int_{0}^u \dc^2(v|k)\ud v$, and
$E=E(k)=\int_{0}^{\frac{\pi}{2}} (1-k^2\sin^2(\tau))^{\frac{1}{2}}\ud\tau,$ is the \emph{complete elliptic integral of the second kind}.
Note that $m^k\geq 0$, by Proposition 6 of~\cite{BdTR1}.

The corresponding $Z$-massive Laplacian matrix $\Delta^{m(k)}$ has non-zero coefficients given by:
\begin{equation}\label{equ:defLap}
\forall\,v,v'\in \Vs,\quad \Delta^{m(k)}_{v,v'}=
\begin{cases}
-\sc(\theta|k)&\text{ if $(v,v')$ is an edge of $\Gs$}\\
\sum_{j=1}^d A(\theta_j|k)&\text{ if $v'=v$}.
\end{cases}
\end{equation}
The dual graph $\Gs^*$ is also isoradial, we denote by $\rho^{k,*}, m^{k,*}$ the associated conductances and masses, and by  
 $\bar{\theta}^*$ the half-angle of a dual edge.
We let $\Delta^{m(k),*}$ be the corresponding $Z$-massive Laplacian operator and refer to it as the \emph{dual $Z$-massive Laplacian}.

The inverse of the $Z$-massive Laplacian $\Delta^{m(k)}$ is the~\emph{$Z$-massive Green function}; it
is denoted $G^{m(k)}$. In~\cite{BdTR1}, we prove the following explicit \emph{local} expression for coefficients of $G^{m(k)}$: 
for every pair of vertices $x,y$ of $\Gs$, 
\begin{equation}\label{equ:massiv_Green}
G^{m(k)}=\frac{k'}{4\pi}\int_{\Gamma_{x,y}} \expo_{(x,y)}(u|k)\ud u, 
\end{equation}
where $\Gamma_{x,y}$ is a vertical contour on the torus $\TT(k):=\CC/(4K\ZZ+4iK'\ZZ)$, and
$\expo(\, \cdot\, |k):\Vs\times\Vs\times\CC\rightarrow \CC$ is the~\emph{massive exponential function} defined in~\cite{BdTR1}. To compute
$\expo_{(x,y)}(u|k)$, consider a path $x=x_1,\dots,x_n=y$ of the diamond graph $\GR$ from $x$ to $y$, let $2e^{i\bar{\alpha}_j}$ be
the rhombus vector corresponding to the edge $x_jx_{j+1}$, then
\begin{equation}\label{equ:expo_function}
\expo_{(x_j,x_{j+1})}(u|k)=i(k')^{\frac{1}{2}}\sc(u_{\alpha_j}|k),\quad \text{ and } \expo_{(x,y)}(u|k)=\prod_{j=1}^{n-1}\expo_{(x_j,x_{j+1})}(u|k),
\end{equation}
where $u_{\alpha_j}:=\frac{u-\alpha_j}{2}$.

Up to an explicit multiplicative constant $G^{m(k)}(x,y)$ is the expected number of visits to $y$
of the associated massive random walk on the infinite graph $\Gs$ started at $x$, see for example Appendix D4 of~\cite{BdTR1}.

From now on, we consider $k$ such that $(k')^2\in(0,\infty)$ as fixed and omit all reference to $k$ in the notation.

\subsubsection{Bipartite dimer models on isoradial graphs}\label{sec:Kast_complex}

When considering a dimer model on a bipartite isoradial graph $G=(W\cup B,E)$ with weight function $\nu$, instead of a Kasteleyn orientation, one can multiply edge-weights
by a complex phase~\cite{Kuperberg,Kenyon3}. This defines the \emph{complex, bipartite Kasteleyn matrix}, denoted $\tilde{K}$
in the context of this section, whose non-zero coefficients are given by:
\[
\forall\,\text{ edge $wb$ of $G$},\quad \tilde{K}_{w,b}=e^{i^{\frac{\bar{\alpha}_e+\bar{\beta}_e}{2}}}\nu_{wb},
\]
where $e^{i\bar{\alpha}_e},e^{i\bar{\beta}_e}$ are the rhombus vectors of $\GR$ associated to the edge 
$e=(w,b)$. The real and complex bipartite Kasteleyn matrices satisfy the alternating cycle condition around every inner face of 
$G$ and are gauge equivalent, see Section~\ref{sec2:app} of Appendix~\ref{app:gauge}. The results of~\cite{Kasteleyn1,Kasteleyn2,Kenyon0}
recalled in Section~\ref{sec:dimer_model} also hold with the complex, bipartite Kasteleyn matrix.
In the sequel the graph $G$ will be the double graph $\GD$ or the bipartite graph $\GQ$.


\section{$Z^u$-Dirac and $Z$-massive Laplacian operators}\label{sec:Z_Dirac_Z_Lap}

We let $\Gs$ be an infinite/finite isoradial graph; its dual graph is $\Gs^*$/$\bar{\Gs}^*$, and in the finite case $\Gs^*$ is
its restricted dual. We consider the isoradial embedding of the double graph $\GD=(\VD,\ED)$ given in 
Section~\ref{sec:iso_GDGQ}, see also Figure~\ref{fig:iso_2}.
Fix $k$ and recall the definition of the torus $\TT(k)=\CC/(4K\ZZ+4iK'\ZZ)$. 

In Section~\ref{sec:def_Dirac} we introduce a 
family of bipartite Kasteleyn matrices/operators $(\KD(u))_{u\in\Re(\TT(k))}$ on the double graph $\GD$, referred to as the 
\emph{$Z^u$-Dirac operators}. Fixing $u\in\Re(\TT(k))$, a function 
$F\in\CC^{B}$ is said to be \emph{$Z^u$-holomorphic} if, $\KD(u)F=0$. When $k=0$, the dependence in $u$ disappears and we recover the 
discrete Dirac operator $\bar{\partial}$ introduced in~\cite{Kenyon3}, see Remark~\ref{rem:Zu_Dirac}. As a consequence of Theorem~\ref{prop:KDtKD} of Section~\ref{sec:KD_Lap_mas}, we have that if
$F$ is a $Z^u$-holomorphic function, then $F_{|\Vs}$ is massive harmonic on $\Gs$
and $F_{|\Vs^*}$ is massive harmonic on $\Gs^*$,
for the $Z$-massive Laplacian $\Delta^m$ and its dual $\Delta^{m,*}$ of~\cite{BdTR1}; explaining the part \emph{Dirac} of the terminology. 
In the finite case, we moreover introduce the operator $\KD^\spartial(u)$ with  
specific boundary conditions arising from the forthcoming Theorem~\ref{thm:main} related to the Ising model,
that are different from the natural dimer ones.

In Section~\ref{sec:det_Kpartial_Lap} we restrict to the finite case. Theorem~\ref{thm:det} proves that, for every $u\in\Re(\TT(k))$,
the determinant of the $Z^u$-Dirac operator $\KD(u)$ is equal, up to an explicit multiplicative constant, to the determinant of the dual massive Laplacian $\Delta^{m,*}$; we show a similar result
for the operator $\KD^\spartial(u)$ and the massive Laplacian $\Delta^{m,\spartial}(u)$, where $\Delta^{m,\spartial}(u)$ has
specific boundary conditions and depends on $u$ \emph{along the boundary only}. 
Interpreting these determinants as partition functions, this proves that the weighted sum of pairs of dual directed spanning 
trees is equal, up to an explicit constant, to the weighted sum of rooted spanning forests. In the case $k=0$, pairs of directed spanning trees
become undirected and rooted spanning forests are un-rooted so that this theorem is a consequence of Temperley's bijection~\cite{Temperley} and 
of the matrix-tree theorem~\cite{Kirchhoff}. For $k\neq 0$, this result is non-trivial; the proof uses gauge equivalences on bipartite Kasteleyn matrices and on
\emph{weighted adjacency matrices of directed graphs (digraphs)}, see Appendix~\ref{app:gauge}.

For every $u\in\Re(\TT(k))$, the operator $\KD(u)$ is gauge equivalent to an operator $\KD^\gs(u)$ associated to a model of directed spanning trees. 
In Proposition~\ref{prop:Zinv} of
Section~\ref{sec:Z_invariance}, we prove that this model of directed spanning trees is $Z$-invariant, thus explaining the part ``$Z^u$'' 
in the terminology \emph{$Z^u$-Dirac operator}. 

Using Theorem~\ref{prop:KDtKD}, in Corollary~\ref{cor:KD_G} of Section~\ref{sec:KD_Lap_inv}, we express the inverse $Z^u$-Dirac operator
using the $Z$-massive Green function $G^m$ and its dual $G^{m,*}$ of~\cite{BdTR1} in the infinite case. This proves in Theorem~\ref{thm:Gibbs_KD}
an explicit \emph{local} expression for a Gibbs measure for the dimer model on $\GD$ with operator $\KD(u)$, generalizing 
to the full $Z$-invariant case the results of~\cite{Kenyon3} proved in the case $k=0$. In Corollary~\ref{cor:KD_G_finite}, we explicitly express the inverse of the 
$Z^u$-Dirac operator $\KD^\spartial(u)$ as a function of the finite versions of the $Z$-massive Green functions.
Theorem~\ref{thm:Gibbs_KD} is a planar, directed version of the transfer-impedance theorem of~\cite{BurtonPemantle},
where probabilities of pairs of dual directed spanning trees are computed using the Green functions of massive non-directed random walks. 
Apart from the locality property which is specific to $Z$-invariance,
a similar result is obtained by Chhita~\cite{Chhita} in the case of the square lattice with a specific choice of weights. 
Sun~\cite{Sun} expresses probabilities of directed spanning trees using the Green function of directed random walks, and Kenyon~\cite{Kenyon8}
proves that probabilities of rooted spanning forests are determinantal, without connecting them to directed spanning trees. It might 
be that the techniques of this paper, in particular gauge transformations on weighted adjacency matrices of digraphs, extend in some respect and 
allow to relate probabilities of pairs of dual directed spanning trees to massive non-directed Green functions.

\subsection{Family of $Z^u$-Dirac operators $(\KD(u))_{u\in\Re(\TT(k))}$}\label{sec:def_Dirac}

We will be using Section~\ref{sec:def_double_graph} on the double graph and Temperley's bijection.
Recall that vertices of the double graph $\GD$ are partitioned as 
$\VD=B\cup W$, where $B=\Vs\cup\Vs^*$ and $W\leftrightarrow \Es$. The diamond graph of $\GD$ is $\GRR$,
see Section~\ref{sec:iso_GDGQ} and Figure~\ref{fig:iso_2}. 
Let $u\in\Re(\TT(k))$ be fixed; we now define the \emph{$Z^u$-Dirac operator}, first in the infinite case, then in the finite case. 

\emph{Infinite case.}
Consider the weight function $c(u)$ on edges of $\GD$ defined by, $\forall\,w\in W,\,\forall\,x\in\Vs\cup\Vs^*$ such that 
$wx$ is an edge of $\GD$,
\begin{equation}\label{equ:defKD}
c(u)_{wx}=
\begin{cases}
\fs(u_{\alpha_e},u_{\beta_e})&\text{ if $x\in\Vs$}\\
\fs((u_{\alpha_e})^*,(u_{\beta_e})^*)&\text{ if $x\in\Vs^*$},
\end{cases}
\end{equation}
where $e^{i\bar{\alpha}_e},\,e^{i\bar{\beta}_e}$ are the rhombus vectors of $\GRR$ associated to the edge $e=(w,x)$;
$u_\alpha:=\frac{u-\alpha}{2}$, $u^*:=K-u$, and
\begin{equation*}
\fs(u_{\alpha},u_{\beta})= 
[\sc(u_{\alpha}-u_{\beta})\dn(u_{\alpha})\dn(u_{\beta})]^{\frac{1}{2}}.
\end{equation*}
\begin{rem}$\,$\label{rem:positivit_c}
The function $\dn$ is periodic in two directions and naturally defined on the torus $\CC/(2K\ZZ+4iK'\ZZ)$~\cite{AS}.
Since $\dn(u+2iK')=-\dn(u)$, and since the function $c(u)$ involves products of two $\dn$'s and half arguments, it is defined 
on the torus $\TT(k)$. This argument is similar to that used in~\cite{BdTR1} to define the domain of the massive exponential function.

We restrict to $u\in\Re(\TT(k))$ because the weight function $c(u)$ is then positive; indeed the function $\dn$ is, and 
$\alpha_e,\beta_e$ are such that
$\frac{\bar{\beta}_e-\bar{\alpha}_e}{2}\in(\eps,\frac{\pi}{2}-\eps)$. Also, on $\Re(\TT(k))$ the (pure imaginary) poles of $c(u)$ are 
avoided and the weights are thus finite.
Results in the sequel which use elliptic trigonometric identities 
actually hold for all $u\in\TT(k)$; it is when considering the corresponding dimer model that we use positivity of the weights.
\end{rem}

Let $\KD(u)$ be the complex, bipartite Kasteleyn matrix defined in Section~\ref{sec:Kast_complex} corresponding to the weight function $c(u)$, with rows indexed by 
white vertices of $\GD$. Non-zero coefficients of $\KD(u)$ are given by
\begin{equation}\label{equ:defKD_1}
\forall\,\text{edge $wx$ of $\GD$},\quad 
\KD(u)_{w,x}=
e^{i\frac{\bar{\alpha}_e+\bar{\beta}_e}{2}}c(u)_{wx}.
\end{equation}
Recall that this Kasteleyn matrix can also be interpreted as 
an operator mapping $\CC^B$ to $\CC^W.$ We refer to this matrix/operator as the \emph{$Z^u$-Dirac operator}. As an example, we 
explicitly compute $\KD(u)$ around a white vertex $w$ of $\GD$. 

\begin{exm} Figure~\ref{fig:GD} below sets the notation. 
A white vertex $w$ of $\GD$ is adjacent to the black vertices $v_1,f_1,v_2,f_2$ of $\GD$ defining a rhombus of the diamond graph 
$\GR$, such that $v_1,v_2$ belong to $\Gs$ and $f_1,f_2$ to $\Gs^*$.
Denote by $2e^{i\bar{\alpha}}$, $2e^{i\bar{\beta}}$ the rhombus vectors of $\GR$ associated to
the edge $(v_1,v_2)$, and 
by $\bar{\theta}=\frac{\bar{\beta}-\bar{\alpha}}{2}$ the rhombus half-angle of this edge.

\begin{figure}[ht]
\begin{minipage}[b]{0.5\linewidth}
\begin{center}
\begin{overpic}[width=4cm]{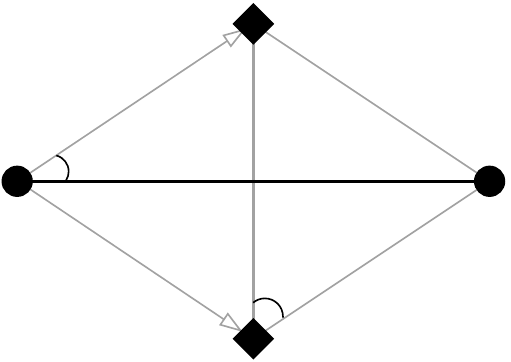}
  \put(15,12){\scriptsize $2e^{i\bar{\alpha}}$}
  \put(15,50){\scriptsize $2e^{i\bar{\beta}}$}
  \put(-8,34){\scriptsize $v_1$}
  \put(101,34){\scriptsize $v_2$}
  \put(48,-6){\scriptsize $f_1$}
  \put(48,72){\scriptsize $f_2$}
  \put(15,36){\scriptsize $\bar{\theta}$}
  \put(50.5,16){\scriptsize $\frac{\pi}{2}\!-\!\bar{\theta}$}
\end{overpic}
\end{center}
\end{minipage}
\begin{minipage}[b]{0.5\linewidth}
\begin{center}
\begin{overpic}[width=4cm]{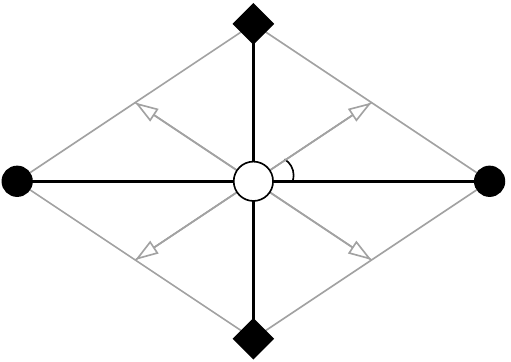}
  \put(67,24){\scriptsize $e^{i\bar{\alpha}}$}
  \put(11,25){\scriptsize $e^{i(\bar{\beta}+\pi)}$}
  \put(11,40){\scriptsize $e^{i(\bar{\alpha}+\pi)}$}
  \put(67,42){\scriptsize $e^{i\bar{\beta}}$}
  \put(-8,34){\scriptsize $v_1$}
  \put(101,34){\scriptsize $v_2$}
  \put(48,-6){\scriptsize $f_1$}
  \put(48,72){\scriptsize $f_2$}
  \put(51,25){\scriptsize $w$}
  \put(59.5,36){\scriptsize $\bar{\theta}$}
\end{overpic}
\end{center}
\end{minipage}
\caption{Rhombus $v_1,f_1,v_2,f_2$ of the diamond graph $\GR$ (left) and rhombi of the diamond 
graph $\GRR$ associated to the edges $wv_1,wf_1,wv_2,wf_2$, with corresponding rhombus vectors and half-angles.}
\label{fig:GD}
\end{figure}
Then, we have:
\begin{align*}
\KD(u)_{w,v_1}&=-e^{i\frac{\overline{\alpha}+\overline{\beta}}{2}}[\sc(\theta)\dn(u_{\alpha+2K})\dn(u_{\beta+2K})]^{\frac{1}{2}}\\
\KD(u)_{w,v_2}&=e^{i\frac{\overline{\alpha}+\overline{\beta}}{2}}[\sc(\theta)\dn(u_{\alpha})\dn(u_{\beta})]^{\frac{1}{2}}\\ 
\KD(u)_{w,f_1}&=-ie^{i\frac{\overline{\alpha}+\overline{\beta}}{2}}[(k')^2\sc(\theta^*)\nd(u_{\beta-2K})\nd(u_\alpha)]^{\frac{1}{2}}=
-ie^{i\frac{\overline{\alpha}+\overline{\beta}}{2}}[k'\cs(\theta)\nd(u_{\beta-2K})\nd(u_\alpha)]^{\frac{1}{2}}\\
\KD(u)_{w,f_2}&=ie^{i\frac{\overline{\alpha}+\overline{\beta}}{2}}
[(k')^2\sc(\theta^*)\nd(u_{\beta})\nd(u_{\alpha+2K})]^{\frac{1}{2}}=
ie^{i\frac{\overline{\alpha}+\overline{\beta}}{2}}[k'\cs(\theta)\nd(u_{\beta})\nd(u_{\alpha+2K})]^{\frac{1}{2}},
\end{align*} 
using that, $\sc(\theta^*)=\sc(K-\theta)=(k')^{-1}\cs(\theta)$, $\dn(K-u)=k'\nd(u)$.
\end{exm}

\begin{defi}
A function $F\in\CC^B$ is said to be \emph{$Z^u$-holomorphic} if 
\begin{equation*}
\Ks(u)F=0.
\end{equation*}
With the notation of Figure~\ref{fig:GD}, this is equivalent to asking that the function $F$ satisfies, 
\begin{align*}
\forall\,w \in W,&\quad\Ks(u)_{w,v_1}\cdot F_{v_1}+\Ks(u)_{w,v_2}\cdot F_{v_2}+\Ks(u)_{w,f_1}\cdot F_{f_1}+\Ks(u)_{w,f_2}\cdot F_{f_2}=0\\
\Leftrightarrow\quad  \forall\,w \in W,& \quad \sc(\theta)^\frac{1}{2}\left( [\dn(u_{\alpha})\dn(u_{\beta})]^{\frac{1}{2}}\cdot F_{v_2}-
 [\dn(u_{\alpha+2K})\dn(u_{\beta+2K})]^{\frac{1}{2}}\cdot F_{v_1}\right)+\\
&\quad +i[k'\cs(\theta)]^\frac{1}{2}\left(\nd(u_{\beta})\nd(u_{\alpha+2K})]^{\frac{1}{2}}\cdot F_{f_2}
-[\nd(u_{\beta-2K})\nd(u_\alpha)]^{\frac{1}{2}}\cdot F_{f_1}\right)=0.
\end{align*}
\end{defi}

\begin{rem}\label{rem:Zu_Dirac}
When $k=0$, then $k'=1$, $\dn\equiv 1$, $\sc=\tan$, and the $Z^u$-Dirac operator is the discrete Dirac operator $\bar{\partial}$ 
introduced in~\cite{Kenyon3}. A
$Z^u$-holomorphic function is then a \emph{holomorphic} function as defined in~\cite{Duffin,Mercat:ising,Kenyon3,ChelkakSmirnov:toolbox}:
\[
\forall\,w\in W,\quad 
\tan(\theta)^\frac{1}{2}\left(F_{v_2}-F_{v_1}\right)
+i[\cot(\theta)]^\frac{1}{2}\left(F_{f_2}-F_{f_1}\right)=0.
\]
\end{rem}

\emph{Finite case.} We will be using Sections~\ref{sec:def_double_graph} and~\ref{sec:iso_GDGQ}: 
the graph $\GDro$ is obtained from $\GD$ by removing the vertex $\rs$ and its incident edges; 
the set of black vertices of $\GDro$ is $B^\rs=\Vs^\rs\cup\Vs^*$, the set of white vertices is $W^\rs=W\leftrightarrow \Es$, and $|B^\rs|=|W^\rs|$,
see Figure~\ref{fig:iso_2} (left) for an example. The set of boundary rhombus pairs of the diamond graph $\GR$ is $\R^\spartial$; $\R^{\spartial,\rs}$
is the set $\R^\spartial$ without the root pair containing the vertex $\rs$. For boundary rhombus pairs of $\R^{\spartial}$, we use the notation of Figure~\ref{fig:rhombus_pair},
which we recall here for convenience of the reader.

\begin{figure}[ht]
\begin{center}
\begin{overpic}[width=3.8cm]{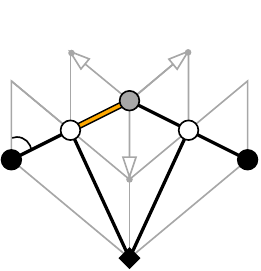}
\put(-7,40){\scriptsize $v^\ell$}
\put(44,68){\scriptsize $v^c$}
\put(95,39){\scriptsize $v^r$}
\put(18,42){\scriptsize $w^\ell$}
\put(68,42){\scriptsize $w^r$}
\put(44,-5){\scriptsize $f^c$}
\put(26,64){\scriptsize $e^{i\bar{\alpha}^\ell}$}
\put(34,49){\scriptsize $e^{i\bar{\beta}^\ell}$}
\put(48,49){\scriptsize $e^{i\bar{\alpha}^r}$}
\put(56,64){\scriptsize $e^{i\bar{\beta}^r}$}
\put(8,49.5){\scriptsize $\bar{\theta}^\spartial$}
\end{overpic}
\end{center}
\caption{Notation for vertices of $\GD$ in a boundary rhombus pair of $\R^\spartial$, and rhombus vectors
$e^{i\bar{\alpha}^\ell}$, $e^{i\bar{\beta}^\ell}$, resp. $e^{i\bar{\alpha}^r}$, $e^{i\bar{\beta}^r}$, of $\GRR$ assigned to the 
edge $(v^c,w^\ell)$, resp. $(v^c,w^r)$. When this pair is not the root one,
the marked black edge is where the operator $\KD^\spartial(u)$ differs from $\KD(u)$.}
\label{fig:def_Kpartial}
\end{figure}

We introduce two versions $\KD(u)$ and $\KD^\spartial(u)$ of the $Z^u$-Dirac operator, corresponding 
to different boundary conditions; both operators map $\CC^{B^\rs}$ to $\CC^{W^\rs}$.

The operator $\KD(u)$ is the finite version of the operator $\KD(u)$ defined in the infinite case, 
thus justifying the notation; it is the complex, bipartite Kasteleyn matrix corresponding to the weight function $c(u)$ of~\eqref{equ:defKD}
restricted to $\GDro$, and we do not repeat the definition here. 

The operator $\KD^\spartial(u)$ has specific boundary conditions arising from the forthcoming Theorem~\ref{thm:main} related 
to the Ising model. 
It is the complex, bipartite Kasteleyn matrix 
associated to the weight function $c^\spartial(u)$ differing from $c(u)$ along 
edges $(w^\ell, v^c)$ of the boundary rhombus pairs of $\R^{\spartial,\rs}$. For every edge $wx$ of $\GD$, we have
\begin{equation}\label{equ:def_c_partial}
c^\spartial(u)_{wx}=
\begin{cases}
c(u)_{wx}&\text{ if $(w,x)\notin\{(w^\ell,v^c)\in\R^{\spartial,\rs}\}$}\\
c(u)_{w^\ell v^c}\frac{\cd(u_{\beta^r})}{\cd(u_{\alpha^\ell})}& \text{ if $(w,x)\in\{(w^\ell,v^c)\in\R^{\spartial,\rs}\}$}.
\end{cases}
\end{equation}
In order for $c^\spartial(u)$ to be finite, we restrict the domain of $u$ to:
\begin{align}
\Re(\TT(k))'&=\Re(\TT(k))\setminus\{(\alpha^\ell+2K)[4K]:e^{i\bar{\alpha}^\ell}\in \R^{\spartial}\}\label{equ:Re_Tk'}\\
&=\Re(\TT(k))\setminus\{(\beta^r+2K)[4K]:e^{i\bar{\alpha}^r}\in \R^{\spartial}\},\nonumber
\end{align}
where the second equality is a consequence of properties of train-tracks, see Section~\ref{sec:train_tracks}.
As a consequence, the weight function $c^{\spartial}(u)$ is everywhere non-zero. 

The coefficients of $\KD^\spartial(u)$ differing from those of $\KD(u)$ are those corresponding to
edges $(w^\ell, v^c)$ of $\R^{\spartial,\rs}$, and we have:
\begin{align}\label{equ:defKpartial}
\KD^\spartial(u)_{w^\ell,v^c}&=\KD(u)_{w^\ell,v^c}\frac{\cd(u_{\beta^r})}{\cd(u_{\alpha^\ell})}=
-e^{i\frac{\overline{\alpha}_\ell+\overline{\beta}_\ell}{2}}[\sc(\theta^\spartial)\dn(u_{\alpha^\ell+2K})\dn(u_{\beta^\ell+2K})]^{\frac{1}{2}}
\frac{\cd(u_{\beta^r})}{\cd(u_{\alpha^\ell})}.
\end{align}

\begin{rem}
The weight $c^\spartial(u)_{w^\ell v^c}$ might be negative. Using the physics terminology, see also Remark~\ref{rem:positivity_dimer}, 
this means that the
corresponding dimer model has vortices on boundary faces $v^c,w^\ell,f^c,w^r$ where this is the case.
\end{rem}

\subsection{$Z$-massive Laplacian operators}\label{sec:def_Lap}

In the \emph{infinite case}, we consider the $Z$-massive Laplacian associated to the isoradial graph $\Gs$
introduced in~\cite{BdTR1}, whose definition is recalled in Section~\ref{sec:def_Lapmass_0}. We also consider the 
dual $Z$-massive Laplacian $\Delta^{m,*}$ of the dual isoradial graph $\Gs^*$.

The purpose of this section is to define the \emph{finite} versions of the $Z$-massive Laplacian and dual 
$Z$-massive Laplacian that are used in this paper;
we introduce two operators. The first is the finite version of the dual $Z$-massive Laplacian, denoted 
$\Delta^{m,*}$ as in the infinite case. It acts on $\CC^{\Vs^*}$, where recall that $\Vs^*$ is the vertex set of the restricted dual $\Gs^*$,
whose boundary vertices are defined to be those adjacent to the vertex $\outer$ in $\bar{\Gs}^*$.
For every vertex $f$ of $\Gs^*$ of degree $d$ in $\bar{\Gs}^*$, denote its neighbors by $f_1,\dots,f_d$. Let $\ubar{d}$ be the degree of $f$ in 
$\Gs^*$, then if $f$ is a boundary vertex of $\Gs^*$ we have $\ubar{d}\neq d$, and we label the vertices so that the first $\ubar{d}$ ones
are common to $\Gs^*$ and $\bar{\Gs}^*$. For every $j\in\{1,\dots,\ubar{d}\}$, $(f,f_j)$ is an edge of $\Gs^*$
and we let $\bar{\theta}_j^*$ denote its half-angle in $\GR$; for every $j\in\{\ubar{d}+1,\dots,d\}$, $(f,f_j)=(f,\outer)$ corresponds
to an edge $fw_j$ of the double graph $\GD$, where $w_j$ is the unique vertex of $\GD$ which belongs to the edge $f\outer$, we let 
$\bar{\theta}_j^*$ denote the half-angle of the edge $(f,w_j)$ in $\GRR$. Then, non-zero coefficients of the finite version of the 
dual $Z$-massive Laplacian $\Delta^{m,*}$ are given by,
\begin{equation*}
\Delta^{m,*}_{f,f'}=
\begin{cases}
-\sc(\theta^*)& \text{ if $(f,f')$ is an edge of $\Gs^*$}\\
\sum_{j=1}^d A(\theta_j^*)&\text{ if $f'=f$ is a vertex of $\Gs^*$}.
\end{cases}
\end{equation*}
The corresponding conductances $\rho^*$ and masses $m^*$ are as defined in Section~\ref{sec:spanning_forests}.

The second operator acts on $\CC^{\Vs^\rs}$; it is denoted $\Delta^{m,\spartial}(u)$ and we restrict to $u\in\Re(\TT(k))'$.
It is \emph{not} the natural finite version of the operator $\Delta^m$ but has boundary conditions
inherited from those of the $Z^u$-Dirac operator $\KD^\spartial(u)$, see the proof of Theorem~\ref{prop:KDtKD}
below. We use the notation of the infinite $Z$-massive Laplacian operator $\Delta^m$, see Section~\ref{sec:def_Lapmass_0}, and 
also the notation of Figure~\ref{fig:def_Kpartial} for the boundary rhombus pairs of $\R^\spartial$. 

In the following, the two boundary vertices $v^r,v^\ell$ of $\R^\spartial$
enter the same framework, and we denote such a vertex by $v^\spartial$. For every $v^\spartial$ of $\R^\spartial$ of degree $d$ in $\Gs$,
let $v_1,\dots,v_d$ denote its neighbors in cclw order with
$v_1$ on the boundary of $\Gs$ on the left of $v^\spartial$, see Figure~\ref{fig:Kvv} (right). Note that if $v^\spartial=v^{\spartial,\rs}$ belongs to the 
root pair, then $v^c=\rs$ is considered as a neighbor of $v^{\spartial,\rs}$. 
%
The coefficients of $\Delta^{m,\spartial}(u)$ differing from those of $\Delta^m$ are those corresponding to edges $(v^c,v^\ell)$
of $\R^{\spartial,\rs}$, and to vertices $v^r,v^\ell,v^c$ of $\R^\spartial$ such that $v^c\neq \rs$:
\begin{equation}\label{equ:def_Lap_bry}
\Delta^{m,\spartial}(u)_{v,v'}=
\begin{cases}
-\sc(\theta^\spartial)\frac{\cd(u_{\beta^r})}{\cd(u_{\alpha^\ell})}&
     \text{ if $(v,v')\in\{(v^c,v^\ell)\in \R^{\spartial,\rs}\}$}\\
k'\sc(\theta^\spartial)
\frac{\nd(u_{\beta^\ell})\nd(u_{\beta^r})}{\cn(u_{\alpha^\ell})}\bigl(\cn(u_{\beta^r})+\cn(u_{\alpha^\ell})\bigr)
&\text{ if $v'=v=v^c\in\{v^c\in \R^{\spartial,\rs}\}$}\\
k'\sum_{j=1}^d \sc(\theta_j)\nd(u_{\alpha_j})\nd(u_{\alpha_{j+1}})&\text{ if $v'=v=v^\spartial\in\{v^r,v^\ell\in\R^\spartial\}$}.
\end{cases}
\end{equation}
The corresponding conductances and masses are denoted $\rho^\spartial(u)$ and $m^\spartial(u)$.
Note that the diagonal term $\Delta^{m,\spartial}(u)_{v^\spartial,v^\spartial}$ has the same form as that of the natural 
finite version of the $Z$-massive Laplacian.

In order to state Theorem~\ref{prop:KDtKD} relating the $Z^u$-Dirac and the $Z$-massive Laplacian operators, in the finite case we 
need to introduce the \emph{matrix $Q(u)$}. It has rows indexed by black vertices of $\Gs\ro$ and columns by 
those of the restricted dual $\Gs^*$. 
The only non-zero coefficients of $Q(u)$ are for rows corresponding to vertices $v^c$ of boundary rhombus pairs of $\R^{\spartial,\rs}$.
For such a vertex $v^c$, 
there is one non-zero coefficient, corresponding to the column $f^c$, given by
\[
Q(u)_{v^c,f^c}=-i\frac{\nd(u_{\beta^\ell})}{\cd(u_{\alpha^\ell})}\bigl(\cd(u_{\beta^r})-\cd(u_{\alpha^\ell})\bigr).
\]

\subsection{Relating the $Z^u$-Dirac and the $Z$-massive Laplacian operators}\label{sec:KD_Lap_mas}

Theorem~\ref{prop:KDtKD} below relates the $Z^u$-Dirac operator, the $Z$-massive Laplacian and the dual $Z$-massive Laplacian in the infinite
and finite cases. In the infinite case, this extends to the full $Z$-invariant case the results of Section 6 of~\cite{Kenyon3} which correspond to $k=0$.

\begin{thm}\label{prop:KDtKD}$\,$
\begin{itemize}
 \item[$\bullet$] \emph{Infinite case.} Let $u\in\Re(\TT(k))$, then the $Z^u$-Dirac operator $\KD(u)$, the $Z$-massive Laplacian 
 $\Delta^m$ and the dual 
 $\Delta^{m,*}$ satisfy the following identity:
\begin{equation*} 
\overline{\KD(u)}^{\,t\,}\KD(u)=k'
\begin{pmatrix}
\Delta^m&0\\
0&\Delta^{m,*}
\end{pmatrix}.
\end{equation*}\item[$\bullet$] \emph{Finite case.} Let $u\in\Re(\TT(k))'$, then the $Z^u$-Dirac operators $\KD(u)$, $\KD^\spartial(u)$, 
the $Z$-massive Laplacian
$\Delta^{m,\spartial}(u)$ and the dual $\Delta^{m,*}$ satisfy the following identity:
\begin{equation*} 
\overline{\KD^\spartial(u)}^{\,t\,}\KD(u)=k'
\begin{pmatrix}
\Delta^{m,\spartial}(u)& Q(u)\\
0&\Delta^{m,*}
\end{pmatrix}.
\end{equation*}
\end{itemize}
\end{thm}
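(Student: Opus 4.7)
The approach is to compute the $(x,y)$-entry of $\overline{\KD(u)}^{\,t\,}\KD(u)$ (resp.\ $\overline{\KD^\spartial(u)}^{\,t\,}\KD(u)$) for every pair of black vertices $x,y$ and to match it against the corresponding entry of the right-hand side. The entry equals $\sum_{w}\overline{\KD(u)_{w,x}}\,\KD(u)_{w,y}$ summed over white vertices adjacent to both $x$ and $y$; this sum is non-empty only when $x=y$ or when $x,y$ share a rhombus of $\GR$, i.e.\ are equal or adjacent in $\Gs$, in $\Gs^*$, or in $\GR$. The four sub-cases (primal-primal, dual-dual, primal-dual, and the diagonals) are handled in turn using the local notation around a white vertex displayed in Figure~\ref{fig:GD}.

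\emph{Infinite case.} For $v,v'$ adjacent in $\Gs$, a single white vertex $w$ contributes, and plugging in the example formulas together with the reflection $\dn(u-K)=k'/\dn(u)$ (equivalently $\dn(u_\alpha)\dn(u_{\alpha+2K})=k'$) gives the entry $-k'\sc(\theta)=k'\Delta^m_{v,v'}$. The primal diagonal at $v$ is the sum $\sum_{j=1}^{d}\sc(\theta_j)\,\dn(u_{\alpha_j})\,\dn(u_{\alpha_{j+1}})$; I plan to match it with $k'\Delta^m_{v,v}=k'\sum_j A(\theta_j)$ via a local elliptic identity
\[
\sc(\theta)\,\dn(u_\alpha)\,\dn(u_\beta)\;=\;k'A(\theta)+g(u_\beta)-g(u_\alpha),\qquad\theta=\tfrac{\beta-\alpha}{2},
\]
for an explicit, $2K$-periodic function $g\colon\TT(k)\to\CC$; the $g$-differences then telescope around $v$ because $u_{\alpha_{d+1}}\equiv u_{\alpha_1}\pmod{2K}$. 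The dual diagonal and off-diagonal entries come out by the same argument applied to $\Gs^*$, with complementary-angle substitutions $\sc(K-u)=(k')^{-1}\cs(u)$, $\dn(K-u)=k'\nd(u)$ turning the primal expressions into the dual ones. Primal-dual cross terms $(v,f)$ with $v,f$ adjacent in $\GR$ receive one white-vertex contribution from each of the two rhombi sharing the edge $(v,f)$; the factor $i$ on dual-direction Kasteleyn coefficients, together with the $\pi$-shift of rhombus vectors between the two rhombi and the $2K$-periodicity of $\dn$, produces opposite signs and perfect cancellation. Non-adjacent $(v,f)$ pairs are trivially zero.

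\emph{Finite case.} Since $\KD^\spartial(u)$ and $\KD(u)$ differ only on the primal edges $(w^\ell,v^c)\in\R^{\spartial,\rs}$, the rows of $\overline{\KD^\spartial(u)}^{\,t}$ indexed by dual vertices $f\in\Vs^*$ coincide with those of $\overline{\KD(u)}^{\,t}$; hence the dual-dual block is $k'\Delta^{m,*}$ and the lower-left block is zero by the infinite-case primal-dual cancellation. Primal-primal entries away from the boundary inherit directly from the infinite case. Along the boundary, the factor $\cd(u_{\beta^r})/\cd(u_{\alpha^\ell})$ in $c^\spartial(u)_{w^\ell v^c}$ enters linearly into the off-diagonal $(v^c,v^\ell)$-entry, matching~\eqref{equ:def_Lap_bry}; the diagonals at $v^c,v^r,v^\ell$ are computed by the same telescoping identity transported to the boundary, where the relation $\dn(u_{\alpha+2K})=k'/\dn(u_\alpha)$ accounts for the $\dn\!\leftrightarrow\!k'\nd$ substitution in~\eqref{equ:def_Lap_bry}, and the constraint $\beta^\ell=\alpha^r+2\pi$ between the two rhombi of a pair ensures that the $v^c$-diagonal assembles correctly with the extra $\cd$-factor. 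Finally, the matrix $Q(u)$ records the only primal-dual cross terms that no longer cancel: for $(v^c,f^c)\in\R^{\spartial,\rs}$, the $w^r$-contribution is unchanged while the $w^\ell$-contribution is scaled by $\cd(u_{\beta^r})/\cd(u_{\alpha^\ell})$, and a short calculation identifies the residue, proportional to $\cd(u_{\beta^r})-\cd(u_{\alpha^\ell})$, with $k'Q(u)_{v^c,f^c}$.

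The main obstacle is pinning down the telescoping identity. A natural candidate is $g(a)=k'A(a)-\sc(a)\,\dn(a)$ (up to an additive constant), which passes the $\beta=0$, $\alpha=0$ and $\alpha=\beta$ consistency checks. I would verify the full identity by differentiating both sides in $\alpha$, reducing to a matching of $\dn(a-b)/\cn^2(a-b)$, $k^2\sc(a-b)\sn(a)\cn(a)$ and $\dc^2(a)$-type terms via the standard Jacobi addition formulas together with the defining relation $A'(u)=(k')^{-1}(\dc^2(u)+\tfrac{E-K}{K})$. Once the bulk identity is in hand, the boundary-adapted version needed in the finite case follows by the complementary-angle substitution, and the remaining bookkeeping around the boundary rhombus pairs is routine.
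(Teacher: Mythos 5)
Your proposal is correct and follows essentially the same route as the paper: an entry-by-entry verification with the same case split (single-white-vertex contribution for a primal edge giving $-k'\sc(\theta)$, a telescoping sum for the diagonal, the dual block via $u\mapsto K-u$ and $u$-independence, cancellation of the two mixed contributions because $\bar\beta=\bar\alpha'\,[2\pi]$, and the boundary modifications producing $\Delta^{m,\spartial}(u)$ and $Q(u)$). The only presentational difference is at the diagonal, where the paper rewrites $\nd$ as a shifted $\sc$ and invokes Proposition~11 of~\cite{BdTR1}, whereas you rederive the underlying local identity; your candidate $g(a)=k'A(a)-\sc(a)\dn(a)$ does work — it equals minus the Jacobi zeta function $E(a)-\tfrac{E}{K}a$ (hence is $2K$-periodic, as required for the telescoping), and the identity reduces to the zeta addition formula together with $\dn(a)\dn(b)-\dn(a-b)=-k^{2}\sn(a)\sn(b)\cn(a-b)$.
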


\begin{rem}\label{rem:Lap}$\,$
\begin{itemize}
\item[$\bullet$] As a consequence of Theorem~\ref{prop:KDtKD} in the infinite case, we have that if $F\in\CC^B$ is a $Z^u$-holomorphic function, 
then $F_{|\Vs}$ is $Z$-massive harmonic on $\Gs$ and $F_{|\Vs^*}$ is $Z$-massive harmonic on $\Gs^*$.
\item[$\bullet$]
From the proof of Theorem~\ref{prop:KDtKD} it follows that, in the finite case, we also have the following matrix relation:
\begin{equation*} 
\overline{\KD(u)}^{\,t\,}\KD(u)=k'
\begin{pmatrix}
\Delta^{m}(u)& 0 \\
0&\Delta^{m,*}
\end{pmatrix},
\end{equation*} 
where $\Delta^{m}(u)$ is the natural finite version of the $Z$-massive Laplacian, \emph{i.e.}, where diagonal coefficients 
of boundary vertices of $\Delta^{m}(u)$ are defined as in the last line of~\eqref{equ:def_Lap_bry}.
\end{itemize}
\end{rem}

\begin{proof}
As in the critical case~\cite{Kenyon3}, the proof consists in showing that matrix coefficients on both sides are equal.
We separate the infinite case together with the part of the finite case which enters 
the infinite framework, from the part of the finite case which is specific. To simplify notation,
we omit the argument $u$ from the operators.

When two black vertices $x,y$ of $\GD$, resp. $\GDro$, are at distance more than two, the corresponding coefficient
$[\overline{\KD}^{\,t\,}\KD]_{x,y}$, resp. $[\overline{\KD^\spartial}^{\,t\,}\KD]_{x,y}$, is trivially equal to 0, we thus suppose that
$x,y$ are at distance 2 or 0. 

\emph{Infinite case and part of the finite case}.
Let $(x,y)=(v,f)$ or $(f,v)$, with $v$ a vertex of $\Gs$ and $f$ a vertex of $\Gs^*$, at distance two. Denote by $w,w'$ the two white vertices of the quadrangle
of $\GD$/$\GDro$ containing $v$ and $f$; let $e^{i\bar{\alpha}},e^{i\bar{\beta}},$ resp. 
$e^{i\bar{\alpha}'},e^{i\bar{\beta}'},$ be the rhombus vectors of $\GRR$ associated to the edge $(w,v)$, resp. $(w',v)$, and 
let $\bar{\theta}$, resp. $\bar{\theta}'$, be the corresponding half-angle, see Figure~\ref{fig:GD1}.
In the finite case, we moreover suppose that 
$(v,f)\neq (v^c,f^c)$ for all boundary rhombus pairs of $\R^{\spartial,\rs}$
, then we have $\KD^\spartial=\KD$ for all coefficients involved. Let us prove that 
$[\overline{\KD}^{\,t}\KD]_{v,f}=\overline{\KD}_{w,v}\KD_{w,f}+\overline{\KD}_{w',v}\KD_{w',f}=0$.

\begin{figure}[H]
\begin{center}
\centering
\begin{overpic}[width=5.5cm]{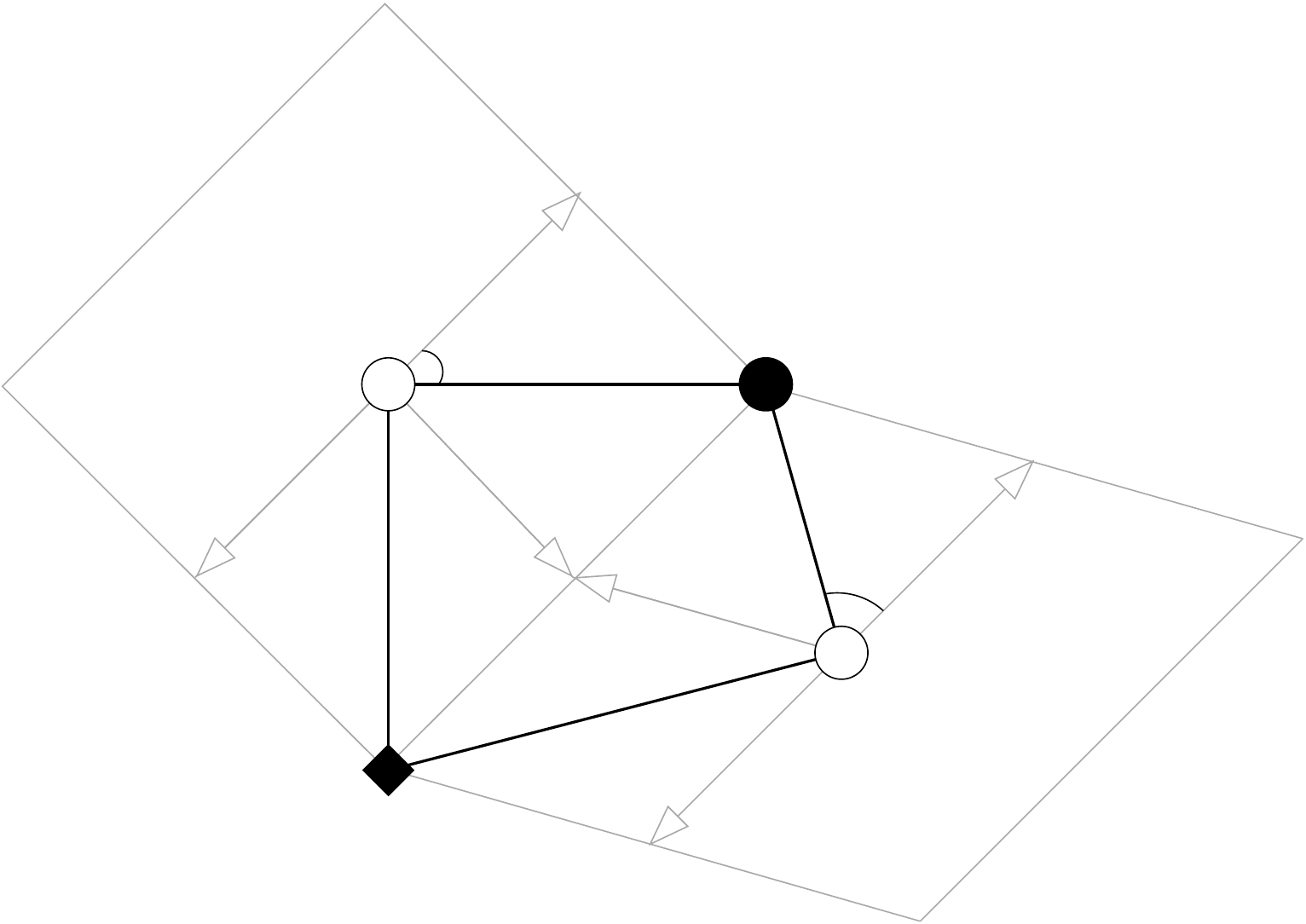}
\put(61,42){\scriptsize $v$}
\put(28,5){\scriptsize $f$}
\put(23,40){\scriptsize $w$}
\put(63,15.5){\scriptsize $w'$}
\put(39,32){\scriptsize $e^{i\bar{\alpha}}$}
\put(29,46){\scriptsize $e^{i\bar{\beta}}$}
\put(72,25){\scriptsize $e^{i\bar{\alpha}'}$}
\put(52.5,25){\scriptsize $e^{i\bar{\beta}'}$}
\put(36,42){\scriptsize $\bar{\theta}$}
\put(65,26){\scriptsize $\bar{\theta}'$}
\end{overpic}
\caption{Notation for computing $[\overline{\KD}^{\,t\,}\KD]_{v,f}$.}
\label{fig:GD1}
\end{center}
\end{figure}
By definition of $\KD$, we have
\begin{align*}
\overline{\KD}_{w,v}\KD_{w,f}&=e^{-i\frac{\overline{\alpha}+\overline{\beta}}{2}}[\sc(\theta)\dn(u_{\alpha})\dn(u_\beta)]^{\frac{1}{2}}
e^{i\frac{\overline{\beta}-\pi+\overline{\alpha}}{2}}
[k'\cs(\theta)\nd(u_{\beta-2K})\nd(u_\alpha)]^{\frac{1}{2}}
=-i\dn(u_\beta)\\
\overline{\KD}_{w',v}\KD_{w',f}&=
e^{-i\frac{\bar{\alpha}'+\bar{\beta}'}{2}}[\sc(\theta')\dn(u_{\alpha'})\dn(u_\beta')]^{\frac{1}{2}}
e^{i\frac{\bar{\beta}'+\bar{\alpha}'+\pi}{2}}
[k'\cs(\theta')\nd(u_{\beta'})\nd(u_{\alpha'+2K})]^{\frac{1}{2}}
=i\dn(u_{\alpha'}).
\end{align*}
Since $\bar{\beta}=\bar{\alpha}'[2\pi]$, we deduce that $[\overline{\KD}^{\,t\,}\KD]_{v,f}=0$. We also have 
$[\overline{\KD}^{\,t\,}\KD]_{f,v}=0$, because $[\overline{\KD}^{\,t\,}\KD]_{f,v}=\overline{[\overline{\KD}^{\,t\,}\KD]_{v,f}}$. 

Next, let $(x,y)=(v_1,v_2)$ be an edge of the graph $\Gs$/$\Gs^\rs$ corresponding to a path $v_1,w,v_2$ of $\GD$/$\GDro$. In the finite
case, we moreover suppose that $(v_1,v_2)\neq (v^c,v^\ell)$ for all boundary rhombus pairs of $\R^{\spartial,\rs}$, 
then we have $\KD^\spartial=\KD$ for all coefficients involved. 
Using the notation of Figure~\ref{fig:GD}, we have
\begin{align*}
[\overline{\KD}^{\,t\,}\KD]_{v_1,v_2}&=\overline{\KD}_{w,v_1}\KD_{w,v_2}\\
&=e^{-i\frac{\bar{\alpha}+\bar{\beta}+2\pi}{2}}[\sc(\theta)\dn(u_{\alpha+2K})\dn(u_{\beta+2K})]^{\frac{1}{2}}
e^{i\frac{\bar{\alpha}+\bar{\beta}}{2}}[\sc(\theta)\dn(u_{\alpha})\dn(u_\beta)]^{\frac{1}{2}}\\
&=-k'\sc(\theta)=k'\Delta^m_{v_1,v_2}.
\end{align*}
We now handle the case where $x,y$ are at distance $0$, \emph{i.e.}, when $x=y$ and we suppose that $x$ is a vertex $v$ of $\Gs$ of degree $d$. 
In the finite case, we moreover suppose that $v\neq v^c$ for all boundary rhombus pairs of $\R^{\spartial,\rs}$, then we have $\KD^\spartial=\KD$ for all coefficients involved.
Using the notation of Figure~\ref{fig:Kvv} we have:
\begin{align*}
[\overline{\KD}^{\,t\,}\KD]_{v,v}&=\sum_{j=1}^{d} \overline{\KD}_{w_j,v}\KD_{w_j,v}=
\sum_{j=1}^{d}|\KD_{w_j,v}|^2\\
&=\sum_{j=1}^{d} \sc(\theta_j)\dn(u_{\alpha_j+2K})\dn(u_{\alpha_{j+1}+2K})=
(k')^2 \sum_{j=1}^{d} \sc(\theta_j)\nd(u_{\alpha_j})\nd(u_{\alpha_{j+1}}).
\end{align*}

\begin{figure}[H]
\begin{minipage}[b]{0.5\linewidth}
\begin{center}
\begin{overpic}[width=4cm]{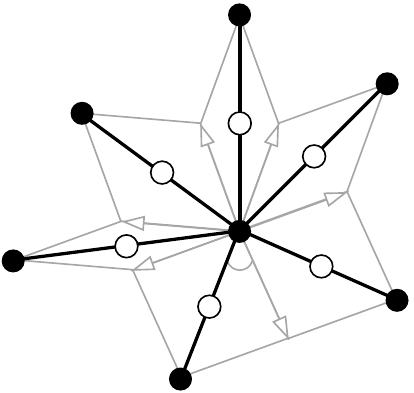}
\put(60.5,32.5){\scriptsize $v$}
\put(97,76){\scriptsize $v_1$}
\put(78,51){\scriptsize $w_1$}
\put(55.5,95.5){\scriptsize $v_2$}
\put(46,66){\scriptsize $w_2$}
\put(39,-4){\scriptsize $v_j$}
\put(39,18){\scriptsize $w_j$}
\put(37,27){\scriptsize $2e^{i\bar{\alpha}_j}$}
\put(66,21){\scriptsize $2e^{i\bar{\alpha}_{j\!+\!1}}$}
\put(55,22){\scriptsize $\bar{\theta}_j$}
\end{overpic}
\end{center}
\end{minipage}
\begin{minipage}[b]{0.5\linewidth}
\begin{center}
\begin{overpic}[width=4.3cm]{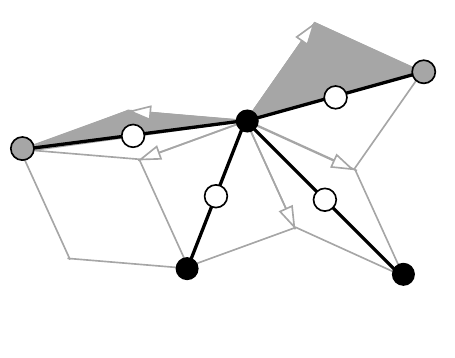}
\put(50,55){\scriptsize $v^\spartial$}
\put(-5,44){\scriptsize $v_1$}
\put(21,53){\scriptsize $w_1$}
\put(50.5,31){\scriptsize $w_j$}
\put(97,62){\scriptsize $v_d$}
\put(42,13){\scriptsize $v_j$}
\put(72,51){\scriptsize $w_d$}
\put(50,63){\scriptsize $2e^{i\bar{\alpha}_d}$}
\put(34,55){\scriptsize $2e^{i\bar{\alpha}_{1}}$}
\end{overpic}
\end{center}
\end{minipage}
\caption{Notation for computing $[\overline{\KD}^{\,t\,}\KD]_{v,v}$, when $v$ is not a boundary vertex 
(left), and when $v$ is a boundary vertex $v^\spartial\in\{v^\ell,v^r\in\R^\spartial\}$
(the additional half-rhombi along the boundary are pictures in grey).}
\label{fig:Kvv}
\end{figure}

In the finite case, when $v\in\{v^\ell,v^r\in\R^\spartial\}$, both vertices
enter the same framework and we denote $v$ by $v^\spartial$. We stop the computation here, and
returning to~\eqref{equ:def_Lap_bry}, we have that $[\overline{\KD}^{\,t\,}\KD]_{v^\spartial,v^\spartial}=
k'\Delta^{m,\spartial}_{v^\spartial,v^\spartial}$. In the infinite case or in the finite case when $v$ is not a boundary vertex
we write, $\nd(u_{\alpha})=\nd((u-2iK')_{\alpha}+iK')=i\sc((u-2iK')_{\alpha})$ and obtain
\begin{align*}
[\overline{\KD}^{\,t\,}\KD]_{v,v}&=-(k')^2\sum_{j=1}^{d}\sc(\theta_j)\sc((u-2iK')_{\alpha_j})\sc((u-2iK')_{\alpha_{j+1}}).
\end{align*}
Since in this case we have $\bar{\alpha}_{d+1}=\bar{\alpha}_1+2\pi$, we can use 
Proposition 11 of~\cite{BdTR1} and obtain that, for every $u\in\TT(k)$,
\begin{align*}
-(k')^2\sum_{j=1}^{d} \sc(\theta_j)\sc(u_{\alpha_j})\sc(u_{\alpha_{j+1}})=k'[\sum_{j=1}^{d} A(\theta_j)].
\end{align*}
Evaluating this identity at $u-2iK'$, and recalling the definition of $\Delta^m$, see~\eqref{equ:defLap},
we deduce that 
$
[\overline{\KD}^{\,t\,}\KD]_{v,v}=k'[\sum_{j=1}^{d} A(\theta_j)]
=k'\Delta^m_{v,v}.
$

We are left with handling the cases where $(x,y)=(f_1,f_2)$ is an edge of $\Gs^*$, and where $x=y=f$ is a vertex of $\Gs^*$.
First note that coefficients of $\KD,\KD^\spartial$ involved in these computations arise from dual edges, and that 
for these edges we have $\KD^\spartial=\KD$. Next, 
observe that coefficients of the operator $\KD$ on edges arising from the primal and from the dual differ in that 
the weight function $c$ is evaluated at $u$ or $u^*=K-u$, see~\eqref{equ:defKD}. Given that the value of 
$[\overline{\KD}^{\,t\,}\KD]_{v_1,v_2}$ is independent of $u$, we immediately deduce the corresponding result for
$[\overline{\KD}^{\,t\,}\KD]_{f_1,f_2}$. In the infinite case and in the finite case if 
$v$ is an inner vertex of $\Gs$, $[\overline{\KD}^{\,t\,}\KD]_{v,v}$ is also 
independent of $u$ thus giving the corresponding result for $[\overline{\KD}^{\,t\,}\KD]_{f,f}$. 
The computation of $[\overline{\KD}^{\,t\,}\KD]_{v,v}$ when 
$v$ is a boundary vertex $v^\spartial\in\{v^\ell,v^r\}$, yields a result which depends on $u$. Recall that what prevents us from 
proceeding with the computation as in the non-boundary case is the fact that $\bar{\alpha}_{d+1}\neq \bar{\alpha}_1+2\pi$.
Since for boundary vertices of the restricted dual the condition $\bar{\alpha}_{d+1}= \bar{\alpha}_1+2\pi$ is satisfied,
we proceed with the computation as for inner vertices and obtain the corresponding result for $[\overline{\KD}^{\,t\,}\KD]_{f,f}$.
This ends the proof of Theorem~\ref{prop:KDtKD} in the infinite case and part of the finite case.

\emph{Remaining part of the finite case.}
When $(x,y)=(v,f)$ or $(f,v)$, with $v,f$ at distance 2, we are left with the case where $(v,f)=(v^c,f^c)$ for some boundary rhombus pair 
of $\R^{\spartial,\rs}$. 
Since the operator $[{\overline{\KD^\spartial}}^{\,t\,}\KD]$ is not skew-symmetric, 
we have to consider the coefficients $(v^c,f)$ and $(f,v^c)$ separately. 
For the coefficient $(f^c,v^c)$, using the notation of Figure~\ref{fig:def_Kpartial}, we have
$
[{\overline{\KD^\spartial}}^{\,t\,}\KD]_{f^c,v^c}=\overline{\KD^\spartial}_{w^\ell,f^c}\KD_{w^\ell,v^c}+
\overline{\KD^\spartial}_{w^r,f^c}\KD_{w^r,v^c}.
$
This is equal to 0 as in the finite non-boundary case, 
because $\KD^\spartial$ is equal to $\KD$ on the edges $(w^\ell,f^c)$ and $(w^r,f^c)$. 
For the coefficient $(v^c,f^c)$, we have
\begin{align*}
[{\overline{\KD^\spartial}}^{\,t\,}\KD]_{v^c,f^c}&=
\overline{\KD^\spartial}_{w^\ell,v^c}\KD_{w^\ell,f^c}+\overline{\KD^\spartial}_{w^r,v^c}\KD_{w^r,f^c}\\
&=\frac{\cd(u_{\beta^r})}{\cd(u_{\alpha^\ell})}\overline{\KD}_{w^\ell,v^c}\KD_{w^\ell,f^c}+\overline{\KD}_{w^r,v^c}\KD_{w^r,f^c},
\text{ by definition of $\KD^\spartial$, see~\eqref{equ:defKpartial}}\\
&=-i\dn(u_{\beta^\ell+2K})\Bigl(\frac{\cd(u_{\beta^r})}{\cd(u_{\alpha^\ell})}-1\Bigr),\text{ by the finite non-boundary computation}\\
&=-ik'\frac{\nd(u_{\beta^\ell})}{\cd(u_{\alpha^\ell})}\bigl(\cd(u_{\beta^r})-\cd(u_{\alpha^\ell})\bigr)=k'Q_{v^c,f^c}.
\end{align*}
Now consider an edge $(v_1,v_2)$ of $\Gs^\rs$. Then, we are left with the case where $(v_1,v_2)=(v^c,v^\ell)$
for some boundary rhombus pair of $\R^{\spartial,\rs}$. We have,
\begin{align*}
[{\overline{\KD^\spartial}}^{\,t\,}\KD]_{v^c,v^\ell}&=\overline{\KD^\spartial}_{w^\ell,v^c}\KD_{w^\ell,v^\ell}
=\frac{\cd(u_{\beta^r})}{\cd(u_{\alpha^\ell})}\overline{\KD}_{w^\ell,v^c}\KD_{w^\ell,v^\ell},\text{ by definition of $\KD^\spartial$
, see~\eqref{equ:defKpartial}}\\
&=-k'\frac{\cd(u_{\beta^r})}{\cd(u_{\alpha^\ell})}\sc(\theta^\spartial)
=k'\Delta^{m,\spartial}_{v^c,v^\ell},\text{ by the finite non-boundary computation}.
\end{align*}
When $x,y$ are at distance 0, and $x=y=v$ is a vertex of $\Gs$, we are left with the case where $v=v^c$ for some boundary rhombus pair of 
$\R^{\spartial,\rs}$. We have,
\begin{align*}
[{\overline{\KD^\spartial}}^{\,t\,}\KD]_{v^c,v^c}&=
\overline{\KD^\spartial}_{w^\ell,v^c}\KD_{w^\ell,v^c}+\overline{\KD^\spartial}_{w^r,v^c}\KD_{w^r,v^c}\\
&=|\KD_{w^\ell,v^c}|^2\frac{\cd(u_{\beta^r})}{\cd(u_{\alpha^\ell})}+ |\KD_{w^r,v^c}|^2,
\text{ by definition of $\KD^\spartial$, see~\eqref{equ:defKpartial}}\\
&=(k')^2\sc(\theta^\spartial)\Bigl(\nd(u_{\alpha^\ell})\nd(u_{\beta^\ell})\frac{\cd(u_{\beta^r})}{\cd(u_{\alpha^\ell})}
+\nd(u_{\alpha^r})\nd(u_{\beta^r})\Bigr),\\
&\hspace{6.5cm}\text{ by the finite non-boundary computation}\\
&=(k')^2\sc(\theta^\spartial)\frac{\nd(u_{\beta^\ell})\nd(u_{\beta^r})}{\cn(u_{\alpha^\ell})}
\bigl(\cn(u_{\beta^r})+\cn(u_{\alpha^\ell})\bigr)=k' \Delta^{m,\spartial}_{v^c,v^c},
\text{ since $\overline{\beta}_\ell=\overline{\alpha}_r[2\pi]$}.\qedhere
\end{align*}
\end{proof}

\subsection{Determinants of the $Z^u$-Dirac and $Z$-massive Laplacian operators}\label{sec:det_Kpartial_Lap}

We restrict to the finite case and consider the $Z^u$-Dirac operators $\KD(u)$ and $\KD^\spartial(u)$
of Section~\ref{sec:def_Dirac} (finite case), the $Z$-massive Laplacian $\Delta^{m,\spartial}(u)$, and the dual $Z$-massive Laplacian
$\Delta^{m,*}$ of Section~\ref{sec:def_Lap} (finite case). 
Theorem~\ref{thm:det} below proves that the 
determinants of $\KD(u)$ and $\Delta^{m,*}$ are equal up to an explicit constant depending on $u$. Corollary~\ref{cor:det} establishes a 
similar result for the determinants of $\KD^{\spartial}(u)$ and $\Delta^{m,\spartial}(u)$, thus implying identities between 
partition functions.

\subsubsection{Determinants as partition functions}\label{sec:determinant_pf}

Returning to Section~\ref{sec:def_double_graph} on Temperley's bijection, in particular to Equation~\eqref{equ:relation_c_ctilde},
let $\tilde{c}(u)$, be the weight function on directed edges of 
$\Gs,\bar{\Gs}^*$ corresponding to the weight function $c(u)$ of Equation~\eqref{equ:defKD}. Then,
the partition function of pairs of dual $\rs$-rooted and $\outer$-rooted directed spanning trees of $\Gs,\bar{\Gs}^*$ with weight function $\tilde{c}(u)$,
is equal to
the partition function of the dimer model on the graph $\GDro$ with weight function $c(u)$~\cite{Temperley,KPW},
which is equal to $|\det\KD(u)|$~\cite{TF,Kasteleyn1}. We proceed in a similar way with dimers weighted by $c^\spartial(u)$ and thus have, 
\begin{equation}\label{equ:partition_function_1}
\Ztreepair((\Gs,\bar{\Gs}^*),\tilde{c}(u))=|\det \KD(u)|,\quad 
\Ztreepair((\Gs,\bar{\Gs}^*),\tilde{c}^\spartial(u))=|\det \KD^\spartial(u)|.
\end{equation}
Returning to Section~\ref{sec:spanning_forests}, we have that $\det\Delta^{m,*}$ counts weighted rooted spanning forests of $\Gs^*$ 
with conductances $\rho^*$ and masses $m^*$,
where conductances are symmetric. In a similar way, $\det \Delta^{m,\spartial}(u)$ counts weighted rooted directed spanning forests
of $\Gs\ro$ with conductances $\rho^\spartial(u)$,
$m^\spartial(u)$, where the dependence in $u$ is along the boundary only, and the conductances are symmetric away from the boundary. That is,
\begin{equation}\label{equ:partition_function_2}
 \ZforestA(\Gs^*,\rho^*,m^*)=\det \Delta^{m,*},\quad 
 \Zforest(\Gs\ro,\rho^\spartial(u),m^\spartial(u))=\det \Delta^{m,\spartial}(u).
\end{equation}

\subsubsection{Statements}

Recall that every edge $e=(w,x)$ of $\GDro$ is assigned two rhombus vectors $e^{i\bar{\alpha}_e}$, $e^{i\bar{\beta}_e}$
of the diamond graph $\GRR$ and a half-angle $\bar{\theta}_e$. Moreover, every white vertex $w$ of $\GDro$ is in the
center of a rhombus of the diamond graph $\GR$; we let $\bar{\theta}_w$ be the half-angle of this rhombus at one of the two 
\emph{primal} vertices. 

\begin{thm}\label{thm:det}
Let $\Ms_0$ be a dimer configuration of $\GDro$. 
Then, for every $u\in\Re(\TT(k))$, we have
\begin{equation}\label{equ:thm_det}
|\det\KD(u)|=
(k')^{\frac{|\Vs^*|}{2}}\Bigl(\prod_{w\in W}\sc(\theta_w)^\frac{1}{2}\Bigr)
\Bigl(\prod_{e=wx\in\Ms_0} [\dn(u_{\alpha_e})\dn(u_{\beta_e})]^{\frac{1}{2}}  \Bigr)
\det \Delta^{m,*},
\end{equation}
where, $|\det\KD(u)|=\Ztreepair((\Gs,\bar{\Gs}^*),\tilde{c}(u))$, $\det \Delta^{m,*}=\ZforestA(\Gs^*,\rho^*,m^*)$.
\end{thm}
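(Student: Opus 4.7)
The plan is to interpret both sides of~\eqref{equ:thm_det} as partition functions and then match them via a planar-duality bijection combined with a gauge transformation. By Kasteleyn's theorem for dimers on the bipartite graph $\GDro$ and Temperley's bijection (Equation~\eqref{equ:partition_function_1}), the left-hand side satisfies $|\det\KD(u)|=\Ztreepair((\Gs,\bar{\Gs}^*),\tilde{c}(u))$. By the directed matrix-tree theorem (Theorem~\ref{thm:Kirchhoff_Tutte}) together with the cemetery-vertex construction that turns the restricted dual $\Gs^*$ with masses $m^*$ into $\bar{\Gs}^*$ with $\outer$ as cemetery, the determinant on the right satisfies $\det\Delta^{m,*}=\Zforest(\Gs^*,\rho^*,m^*)=\Ztree^\outer(\bar{\Gs}^*,\rho^{*,m})$. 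The theorem thus reduces to a weighted-counting identity between pairs of dual directed spanning trees and $\outer$-directed spanning trees of $\bar{\Gs}^*$, with the claimed explicit prefactor.

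The key input is the planar-duality bijection: a pair $(T,T^*)\in\T^{\rs,\outer}(\Gs,\bar{\Gs}^*)$ is uniquely determined by $T^*$, because the primal edges whose duals lie outside $T^*$ form a spanning tree of $\Gs$ which, rooted at $\rs$, gives $T=T(T^*)$. Reindexing the left-hand sum by $T^*$, I would use the two elliptic identities
\[
 c(u)_{w,v_1}\,c(u)_{w,v_2}=k'\sc(\theta_w)=k'\rho_e,\qquad
 c(u)_{w,f_1}\,c(u)_{w,f_2}=\cs(\theta_w)=\rho^*_{e^*},
\]
which follow from $\dn(u_{\alpha+2K})=k'\nd(u_\alpha)$ and $\sc(K-u)=(k')^{-1}\cs(u)$, to factor each directional weight $\tilde{c}(u)_{v,v'}$, respectively $\tilde{c}(u)_{f,f'}$, as an undirected half-conductance $(k'\rho_e)^{1/2}$, resp.~$(\rho^*_{e^*})^{1/2}$, times a tail-vertex factor $f_v(e)$, resp.~$f^*_f(e^*)$, obeying the reciprocal relation $f_{v_1}(e)f_{v_2}(e)=f^*_{f_1}(e^*)f^*_{f_2}(e^*)=1$. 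Combined with the planar-duality identity $\prod_{e\in T}\rho_e=(\prod_e\rho_e)(k')^{|\Vs^*|}\prod_{e^*\in T^*}\rho^*_{e^*}$, which follows from $\rho_e\rho^*_{e^*}=1/k'$, this rewrites $w_{\mathrm{pair}}(T,T^*)$ as a constant times $\Phi(T,T^*)\prod_{e^*\in T^*}\rho^*_{e^*}$, where $\Phi$ aggregates the tail-vertex factors.

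To close the argument one shows that $\Phi(T,T^*)$ is, up to the explicit normalization $\prod_{e=wx\in\Ms_0}[\dn(u_{\alpha_e})\dn(u_{\beta_e})]^{1/2}$, independent of the chosen pair. This is exactly the content of the gauge equivalences for weighted adjacency matrices of bipartite graphs and of digraphs developed in Appendix~\ref{app:gauge}: an appropriate rescaling of vertices of $\Gs$ and $\bar{\Gs}^*$ by $u$-dependent factors telescopes $\Phi$ into a constant, with the $\Ms_0$-edges carrying the residual $u$-dependent contribution. Summing over $T^*$ and invoking the matrix-tree theorem then identifies the sum with $\det\Delta^{m,*}$, and tracking the powers of $k'$, $\sc(\theta_w)$, and $\dn(u_{\alpha_e})\dn(u_{\beta_e})$ reproduces the prefactor $(k')^{|\Vs^*|/2}\prod_{w\in W}\sc(\theta_w)^{1/2}\prod_{e\in\Ms_0}[\dn(u_{\alpha_e})\dn(u_{\beta_e})]^{1/2}$. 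The main obstacle is the cycle-consistency check: that the $f_v$- and $f^*_f$-factors admit a global primitive on $\GDro$, equivalently that the $\Ms_0$-product appearing in the prefactor is independent of the choice of $\Ms_0$. Establishing this requires the careful elliptic bookkeeping of the appendix, combined with the geometric constraint that the rhombus angles $\alpha_e,\beta_e$ are consistent along train-tracks of $\GR$ and $\GRR$.
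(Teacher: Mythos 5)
Your overall frame --- reduce the left-hand side to $\Ztreepair((\Gs,\bar{\Gs}^*),\tilde{c}(u))$ via Kasteleyn and the KPW--Temperley bijection, the right-hand side to a tree/forest partition function via the matrix-tree theorem, and connect the two by factoring directed weights --- starts in the right place, and your identity $c(u)_{w,v_1}c(u)_{w,v_2}=k'\sc(\theta_w)$ is correct (the dual one should read $c(u)_{w,f_1}c(u)_{w,f_2}=\cs(\theta_w)=k'\rho^*_{e^*}$, a minor slip). But the central step --- that the aggregate of tail-vertex factors $\Phi(T,T^*)$ is independent of the pair up to the $\Ms_0$-normalization --- is false, and the argument cannot be repaired along these lines. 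First, even though the tail factors do admit a global primitive $g$ with $f_v(e)=[g(v)/g(v')]^{\pm 1/2}$ (it is the massive exponential function), the product $\prod_{(v,v')\in T}g(v)/g(v')$ over a \emph{directed spanning tree} is not tree-independent: each vertex appears in the denominator once per child, so the product depends on the branching structure (compare a directed path $1\to 2\to 3$, giving $g_1/g_3$, with the star $1\to 3$, $2\to 3$, giving $g_1g_2/g_3^2$). Telescoping holds for cycles, hence for determinants via conjugation, but not for individual trees. Second, and more fundamentally, if $\Phi$ were constant your remaining sum would be $\sum_{T^*}\prod_{e^*\in T^*}\rho^*_{e^*}$ with the \emph{symmetric} conductances, i.e.\ the determinant of the \emph{non-massive} Laplacian of $\bar{\Gs}^*$ with the row and column of $\outer$ deleted; the masses $m^*$, which for $k\neq 0$ make the diagonal $\sum_j A(\theta_j^*)$ strictly larger than $\sum_j\sc(\theta_j^*)$, would never appear. (Relatedly, $\det\Delta^{m,*}$ is not $\Ztree^\outer(\bar{\Gs}^*,\rho^*)$: the cemetery graph attaches every vertex of $\Gs^*$ to $\dag$ with conductance $m^*_f>0$ and is not $\bar{\Gs}^*$.)

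The paper's proof keeps the asymmetry instead of discarding it. One first gauges the bipartite matrix by $\gs(u)_{wx}=[\cs(\theta_w)\nd(u_{\alpha_e})\nd(u_{\beta_e})]^{1/2}$ so that all primal edge weights become $1$; the determinant changes by exactly the prefactor $\prod_{w}\sc(\theta_w)^{1/2}\prod_{e\in\Ms_0}[\dn(u_{\alpha_e})\dn(u_{\beta_e})]^{1/2}$ (Corollary~\ref{cor:gauge_bip_det}). Temperley's bijection then reduces everything to $\outer$-directed spanning trees of $\bar{\Gs}^*$ with the genuinely directed, $u$-dependent conductances $\gamma^*(u)_{f,f'}=(k')^{-1/2}\sc(\theta^*)\dn(u_\alpha)\dn(u_\beta)$, whose partition function is $\det\Delta^{*,\outer}(u)$ by the directed matrix-tree theorem. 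Finally $(k')^{-1/2}\Delta^{*,\outer}(u)$ is conjugated by a diagonal matrix built from the massive exponential function: the off-diagonal entries symmetrize to $-\sc(\theta^*)$, and the diagonal is recognized as $\sum_j A(\theta_j^*)$ through the identity $(k')^{-1}\sum_j\sc(\theta_j^*)\dn(u_{\alpha_j})\dn(u_{\beta_j})=\sum_j A(\theta_j^*)$ of Proposition~11 of~\cite{BdTR1} --- this is precisely where the mass enters, and it is the ingredient your proposal never produces. Each of these moves preserves determinants but not individual tree weights, which is why the theorem is not a term-by-term bijection (as the paper itself remarks after Proposition~\ref{prop:Zinv}); your proposal implicitly assumes such a bijection and therefore has a genuine gap.
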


\begin{rem}\label{rem:partition_funciont_KD_Lap}$\,$
Comments on this theorem are given in the introduction to Section~\ref{sec:Z_Dirac_Z_Lap} especially how, in the critical case $k=0$, it is an 
easy consequence of Temperley's bijection and the matrix-tree theorem, and how the argument does not directly extend to the non-critical case.

As one expects, the quantity $\prod_{e=wx\in\Ms_0} [\dn(u_{\alpha_e})\dn(u_{\beta_e})]^{\frac{1}{2}}$
is independent of the choice of perfect matching $\Ms_0$. To see this, it suffices to check that the alternating product around every inner quadrangle face of $\GDro$
is equal to 1; this is proved in Lemma~\ref{lem:K_Kg} below.

\end{rem}

The following is an immediate corollary of Theorem~\ref{thm:det}.
\begin{cor}\label{cor:det}
Let $\Ms_0$ be a dimer configuration of $\GDro$. Then, for every $u\in\Re(\TT(k))'$ we have,
\[
|\det\KD^\spartial(u)|=
(k')^{\frac{|\Vs|-1}{2}}\Bigl(\prod_{w\in W}\cs(\theta_w)^\frac{1}{2}\Bigr)  
\Bigl(\prod_{e=wx\in\Ms_0} [k'\nd(u_{\alpha_e})\nd(u_{\beta_e})]^{\frac{1}{2}}  \Bigr)
\det\Delta^{m,\spartial}(u),
\] 
where $|\det\KD^\spartial(u)|=\Ztreepair((\Gs,\bar{\Gs}^*),\tilde{c}^\spartial(u))$, 
$\det\Delta^{m,\spartial}(u)=\Zforest(\Gs\ro,\rho^\spartial(u),m^\spartial(u))$.
\end{cor}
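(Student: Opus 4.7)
\textbf{Proof plan for Corollary~\ref{cor:det}.} The plan is to combine Theorem~\ref{prop:KDtKD} (finite case) with Theorem~\ref{thm:det}, and then rewrite the result using elementary identities on Jacobi elliptic functions. Since the corollary is labelled ``immediate'', I expect essentially no hard step: the main (and only) computation is matching the exponents of $k'$ and the products of elliptic functions. The hard part, if any, is the bookkeeping.

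First, I take the determinant of both sides of the matrix identity
\[
\overline{\KD^\spartial(u)}^{\,t\,}\KD(u)=k'
\begin{pmatrix}
\Delta^{m,\spartial}(u)& Q(u)\\
0&\Delta^{m,*}
\end{pmatrix}
\]
given by Theorem~\ref{prop:KDtKD}. Because the right-hand side is block upper triangular of size $|B^\rs|=|\Vs^\rs|+|\Vs^*|=(|\Vs|-1)+|\Vs^*|$, its determinant factors as $(k')^{|\Vs|-1+|\Vs^*|}\det \Delta^{m,\spartial}(u)\det \Delta^{m,*}$. Passing to moduli and using the partition-function interpretation of Section~\ref{sec:determinant_pf} (which guarantees all quantities involved are positive), I obtain
\[
|\det \KD^\spartial(u)|\cdot|\det \KD(u)|=(k')^{|\Vs|-1+|\Vs^*|}\det \Delta^{m,\spartial}(u)\det \Delta^{m,*}.
\]

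Next, I substitute the expression for $|\det \KD(u)|$ provided by Theorem~\ref{thm:det}, whose $\det\Delta^{m,*}$ factor cancels with the one above, and I solve for $|\det\KD^\spartial(u)|$. This yields
\[
|\det\KD^\spartial(u)|=
\frac{(k')^{|\Vs|-1+|\Vs^*|/2}}{\bigl(\prod_{w\in W}\sc(\theta_w)^{1/2}\bigr)\bigl(\prod_{e=wx\in\Ms_0}[\dn(u_{\alpha_e})\dn(u_{\beta_e})]^{1/2}\bigr)}\det\Delta^{m,\spartial}(u).
\]

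Finally, I convert reciprocals of elliptic functions into the forms stated in the corollary using $\cs=1/\sc$ and $\nd=1/\dn$, and I rebalance the power of $k'$. Since $\Ms_0$ is a perfect matching of $\GDro$, we have $|\Ms_0|=|W|=|B^\rs|=|\Vs|-1+|\Vs^*|$, so
\[
\prod_{e=wx\in\Ms_0}[k'\nd(u_{\alpha_e})\nd(u_{\beta_e})]^{1/2}=(k')^{(|\Vs|-1+|\Vs^*|)/2}\prod_{e=wx\in\Ms_0}[\dn(u_{\alpha_e})\dn(u_{\beta_e})]^{-1/2},
\]
and splitting $(k')^{|\Vs|-1+|\Vs^*|/2}=(k')^{(|\Vs|-1)/2}\cdot (k')^{(|\Vs|-1+|\Vs^*|)/2}$ produces exactly the prefactor $(k')^{(|\Vs|-1)/2}$ of the statement together with the $k'$ inside each bracket of the product over $\Ms_0$. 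This yields the announced formula and completes the proof. Note that the independence of the product $\prod_{e\in\Ms_0}[\dn(u_{\alpha_e})\dn(u_{\beta_e})]^{1/2}$ on the choice of $\Ms_0$, already used in Theorem~\ref{thm:det} (Remark~\ref{rem:partition_funciont_KD_Lap}), automatically transfers to the corresponding product of $\nd$'s here.
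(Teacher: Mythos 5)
Your proposal is correct and follows exactly the paper's route: take determinants in the finite-case identity of Theorem~\ref{prop:KDtKD}, substitute the expression for $|\det\KD(u)|$ from Theorem~\ref{thm:det} to cancel $\det\Delta^{m,*}$, and rebalance the powers of $k'$ using $|\Ms_0|=|W|=|\Vs|-1+|\Vs^*|$ together with $\cs=1/\sc$, $\nd=1/\dn$. Your bookkeeping of the exponents checks out, so there is nothing to add beyond noting that your careful passage to moduli (the left-hand side of the matrix identity actually gives $\overline{\det\KD^\spartial(u)}\cdot\det\KD(u)$) is a welcome precision over the paper's terse wording.
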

\begin{proof}
From Theorem~\ref{prop:KDtKD}, we have:
\begin{equation*}
\det \KD^\spartial(u)\det\KD(u)=(k')^{|\Vs^\rs|+|\Vs^*|}\det\Delta^{m,\spartial}(u)\det\Delta^{m,*}. 
\end{equation*}
Replacing $\det\KD(u)$ in the above by Equation~\eqref{equ:thm_det} and recalling that $|W^\rs|=|B^\rs|=|\Vs^\rs|+|\Vs^*|$ 
and that $|\Vs^\rs|=|\Vs|-1$, yields the result. Note that an argument similar to that used for proving Theorem~\ref{thm:det} 
would give an alternative direct proof of Corollary~\ref{cor:det}.
\end{proof}

\subsubsection{Proof of Theorem~\ref{thm:det}}

The proof of Theorem~\ref{thm:det} is postponed until the end of this section. It is a consequence of four intermediate results,
see Equations~\eqref{equ:lem_K_Kg}, \eqref{equ:Kg_arbres}, \eqref{equ:egalite_KgD}, \eqref{lem:D*Dm*} below, which we now establish. We will
use Section~\ref{sec:def_double_graph} on Temperley's bijection and Appendix~\ref{app:gauge} on gauge transformations.

\paragraph{From pairs of directed spanning trees to directed spanning trees.}

We start by defining a gauge transformation $\KD^\gs(u)$ of the matrix $\KD(u)$. Being gauge equivalent, the determinants of 
$\KD^\gs(u)$ and $\KD(u)$ are equal up to an explicit constant. 
We then use Temperley's bijection to deduce
that the determinant of $\KD^\gs(u)$ counts weighted $\outer$-directed spanning trees of the dual graph $\bar{\Gs}^*$.

Let $\gs(u)$ be the weight function on white-to-black edges of $\GDro$ defined by, for every edge $wx$ of $\GDro$, 
\begin{equation*}
\gs(u)_{wx}=
[\cs(\theta_w)\nd(u_{\alpha_e})\nd(u_{\beta_e})]^{\frac{1}{2}}.
\end{equation*}

Consider the matrix $\KD^{\gs}(u)$ obtained from $\KD(u)$ by multiplying edge-weights by $\gs(u)$. Returning to the definition 
of $\KD(u)$, see~\eqref{equ:defKD} and~\eqref{equ:defKD_1}, we obtain that non-zero coefficients of $\KD^\gs(u)$ are given by,
\begin{align*}
\KD^\gs(u)_{w,x}=
\begin{cases}
e^{i\frac{\overline{\alpha}_e+\overline{\beta}_e}{2}} &\text{ if $x\in\Vs^\rs$}\\
e^{i\frac{\overline{\alpha}_e+\overline{\beta}_e}{2}}
(k')^\frac{1}{2} \cs(\theta_w)\nd({u_{\alpha_e}})\nd({u_{\beta_e}})
&\text{ if $x\in\Vs^*$}.
\end{cases}
\end{align*}

\begin{lem}\label{lem:K_Kg}
Consider a perfect matching $\Ms_0$ of $\GDro$.
The bipartite, weighted adjacency matrices $\KD^\gs(u)$ and $\KD(u)$ are gauge equivalent and we have,
\begin{equation}\label{equ:lem_K_Kg}
|\det\KD(u)|=\Bigl(\prod_{w\in W}\sc(\theta_w)^\frac{1}{2}\Bigr)
\Bigl(\prod_{e=wx\in\Ms_0} [\dn(u_{\alpha_e})\dn(u_{\beta_e})]^{\frac{1}{2}}  \Bigr)|\det\KD^\gs(u)|. 
\end{equation}
\end{lem}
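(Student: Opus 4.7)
The plan is to interpret the claim as a gauge equivalence of bipartite weighted adjacency matrices on $\GDro$ and then extract the determinant identity from the standard calculus recalled in Appendix~\ref{app:gauge}. By construction $\KD^\gs(u)_{w,x}=\gs(u)_{wx}\KD(u)_{w,x}$, so by the alternating cycle criterion the two matrices are gauge equivalent iff the alternating product of $\gs(u)$ around every inner face of $\GDro$ equals $1$. Once this is shown, the determinant ratio equals $\prod_{e=wx\in\Ms_0}\gs(u)_{wx}$ for any perfect matching $\Ms_0$; since each white vertex appears in exactly one edge of $\Ms_0$, this product factorises as $\bigl(\prod_{w\in W}\cs(\theta_w)^{1/2}\bigr)\prod_{e\in\Ms_0}[\nd(u_{\alpha_e})\nd(u_{\beta_e})]^{1/2}$. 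Inverting and using $\sc=1/\cs$, $\dn=1/\nd$ recovers the multiplicative constant in~\eqref{equ:lem_K_Kg}; the same argument also shows that the right-hand side is independent of $\Ms_0$, settling the independence claim of Remark~\ref{rem:partition_funciont_KD_Lap}.

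The crux is the face-cycle verification. Every inner face of $\GDro$ is a quadrangle $(w_1,v,w_2,f)$ (white-primal-white-dual) in bijection with an interior rhombus edge $vf$ of $\GR$, where $w_1,w_2$ are the centres of the two rhombi $R_1,R_2$ of $\GR$ sharing this edge. Writing $\phi$ for the angle of $vf$ and $\theta_1,\theta_2$ for the primal half-angles of $R_1,R_2$, the rhombus vectors of $\GRR$ attached to the four bounding edges of the face read off from the sides of $R_1,R_2$ incident to $v$ and $f$. The key geometric input is the parallelogram identity in each $R_i$ (in $R_1$, $\vec{fv_1}=\vec{vf_1}$), which forces the ``transverse'' directions to match pairwise between $(w_i,v)$ and $(w_i,f)$, while the shared edge $vf$ contributes the direction of angle $\phi$ at the two edges meeting $v$ and the opposite direction (equivalently $\phi\pm 2K$ in the $\theta$-scaling) at the two edges meeting $f$. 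Substituting into the alternating product, the factors $\cs(\theta_{w_i})^{1/2}$ cancel (each $w_i$ borders one numerator edge and one denominator edge), the matched transverse $\nd$-factors cancel pairwise, and the residue collapses to $\nd(u_\phi-K)/\nd(u_\phi+K)$, which equals $1$ by the $2K$-periodicity of $\dn$.

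The main obstacle is this geometric and algebraic bookkeeping: one must carefully identify the four rhombus vector pairs $(\bar\alpha_e,\bar\beta_e)$ in the common framework of $R_1,R_2$ and recognise that the resulting elliptic product reduces to the $2K$-periodicity of $\dn$, rather than to some subtler Jacobi identity. With this in place, the remainder of the argument -- gauge equivalence via Appendix~\ref{app:gauge}, determinant ratio as edge-weight product over any $\Ms_0$, and the $(\cs,\nd)\leftrightarrow(\sc,\dn)$ inversion -- is purely formal.
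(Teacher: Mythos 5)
Your proof is correct and follows essentially the same route as the paper's: you verify Kuperberg's alternating-product condition on the quadrangle faces of $\GDro$, reduce the elliptic product to $\nd(u_\phi-K)/\nd(u_\phi+K)=1$ via the shared rhombus side (the paper's relation $\bar{\alpha}'=\bar{\beta}\,[2\pi]$) and the $2K$-periodicity of $\nd$, and then extract the constant from Corollary~\ref{cor:gauge_bip_det} using the fact that a perfect matching covers each white vertex exactly once. The only cosmetic difference is that you state the determinant ratio as $\prod_{e\in\Ms_0}\gs(u)_e$ and invert, whereas the paper applies the corollary directly to the ratio $\KD(u)/\KD^\gs(u)$.
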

\begin{proof}
Details on gauge equivalences for bipartite, weighted adjacency matrices are given in Section~\ref{sec2:app} of Appendix~\ref{app:gauge}.
To prove that $\KD$ and $\KD^\gs$ are gauge equivalent, it suffices to show that the alternating products
of $\KD$ and $\KD^\gs$ around inner face-cycles of $\GDro$ are equal. Inner face-cycles of $\GDro$ are quadrangles; using the notation of 
Figure~\ref{fig:GD1} we have,
\[
\frac{\KD^\gs_{w,v}\KD^\gs_{w',f}}{\KD^\gs_{w,f}\KD^\gs_{w',v}}=
\frac{\KD_{w,v}\KD_{w',f}}{\KD_{w,f}\KD_{w',v}}\quad \Leftrightarrow\quad
\frac{{\gs}_{w,v}}{{\gs}_{w',f}}\frac{{\gs}_{w,f}}{{\gs}_{w',v}}=1.
\]
By definition of $\gs$ we have,
\begin{align*}
\frac{{\gs}_{w,v}}{{\gs}_{w,f}}\frac{{\gs}_{w',f}}{{\gs}_{w',v}}&=
\frac{\cs(\theta)\nd(u_\alpha)\nd(u_\beta)}{\cs(\theta)\nd(u_{\beta-2K})\nd(u_{\alpha})} 
\frac{\cs(\theta')\nd(u_{\beta'})\nd(u_{\alpha'+2K})}{\cs(\theta')\nd(u_{\alpha'})\nd(u_{\beta'})
}=1,
\end{align*}
since $\bar{\alpha}'=\bar{\beta}[2\pi]$. The equality between determinants comes from~\cite{Kuperberg}, see also
Corollary~\ref{cor:gauge_bip_det},
and from the fact that, $\prod_{wx\in\Ms_0}\sc(\theta_w)=\prod_{w\in W}\sc(\theta_w)$, since a dimer configuration covers all 
white vertices $W^\rs$ of $\GDro$, and $W^\rs=W$.
\end{proof}

Now, the matrix $\KD^\gs(u)$ is a complex, bipartite Kasteleyn matrix of the graph $\GDro$, where edges are assigned the positive
weight function $c^{\,\gs}(u)$ given by:
\begin{equation*}
c^{\,\gs}(u)_{w,x}=
\begin{cases}
1&\text{ if $x\in \Vs^\rs$}\\
(k')^{\frac{1}{2}}\cs(\theta_w)\nd(u_{\alpha_e})\nd(u_{\beta_e})&\text{ if $x\in\Vs^*$}.
\end{cases}
\end{equation*}
Returning to Section~\ref{sec:def_double_graph}, the weight function $c^{\,\gs}(u)$ determines a weight function 
$\tilde{c}^{\,\gs}(u)$ on directed edges of $\Gs,\bar{\Gs}^*$, see~\eqref{equ:relation_c_ctilde}. 
Then, $\tilde{c}^{\,\gs}(u)_{v,v'}=1$, for all directed edges $(v,v')$ of $\Gs$ such that $v$ is a vertex of $\Gs\ro$.
Let us now express the weight function $\tilde{c}^{\,\gs}(u)$ on directed edges of $\bar{\Gs}^*$ using the
rhombus vectors and half-angles assigned to those edges. Recall that an edge 
$(f,f')$ of the restricted dual $\Gs^*$ is assigned two rhombus vectors $2e^{i\bar{\alpha}}$, $2e^{i\bar{\beta}}$
and a half-angle $\bar{\theta}^*$ of $\GR$. 
An edge $(f,\outer)$ of $\bar{\Gs}^*$ corresponds to an edge $fw$ of $\GD$, and one assigns to $(f,\outer)$ the rhombus 
vectors $e^{i\bar{\alpha}}$, $e^{i\bar{\beta}}$ and half-angle $\bar{\theta}^*$ 
of $\GRR$ associated to the edge $(f,w)$. Then, for every directed edge $(f,f')$ of $\bar{\Gs}^*$ such that $f$ is 
a vertex of $\Gs^*$ we have, using the notation of~\eqref{equ:relation_c_ctilde}:
\begin{align}
\tilde{c}^{\,\gs}(u)_{f,f'}=c^{\,\gs}(u)_{w,f}&=
(k')^{\frac{1}{2}}\cs(\theta_w)\nd(u_{\alpha_e})\nd(u_{\beta_e})\nonumber \\
&=(k')^{-\frac{1}{2}}\sc(\theta^*)\dn(u_{\alpha})\dn(u_{\beta}) \label{equ:weightZ},
\end{align}
using that $\bar{\theta}^*=\frac{\pi}{2}-\bar{\theta}_w$, $\bar{\alpha}_e=\bar{\alpha}\pm \pi$, 
$\bar{\beta}_e=\bar{\beta}\pm \pi$; and $\tilde{c}^{\,\gs}(u)_{\outer,f'}=0$, for every edge $(\outer,f')$ of $\bar{\Gs}^*$. 

Since edges of the primal graph have weight 1, by the KPW-Temperley bijection~\cite{Temperley,KPW},
the determinant of $\KD^\gs(u)$ counts weighted $\outer$-directed spanning trees of $\T^\outer(\bar{\Gs}^*)$, 
where directed edges of $\bar{\Gs}^*$ are assigned the weight function $\gamma^*(u)$ given by, for every directed edge $(f,f')$ of $\bar{\Gs}^*$,
\begin{equation}\label{equ:gamma_star}
\gamma^*(u)_{f,f'}:=\tilde{c}^{\,\gs}(u)_{f,f'}=
\begin{cases}
(k')^{-\frac{1}{2}}\sc(\theta^*)\dn(u_{\alpha})\dn(u_{\beta})&\text{ if $f$ is a vertex of $\Gs^*$}\\
0 & \text{ if $f=\outer$}.
\end{cases}
\end{equation}
That is,
\begin{equation}\label{equ:Kg_arbres}
|\det\KD^\gs(u)|=\Ztree^\outer(\bar{\Gs}^*,\gamma^*(u)).
\end{equation}

\paragraph{Matrix-tree theorem.} An alternative way of computing the partition function~\eqref{equ:Kg_arbres} is to use the directed version
of the matrix-tree theorem~\cite{Kirchhoff,Tutte}. Let $\Delta^{*}(u)$ be the (non-massive) Laplacian matrix of the graph $\bar{\Gs}^*$
with conductances $\gamma^*(u)$ on directed edges. The non-zero coefficients of the matrix $\Delta^{*}(u)$ are given by:
\begin{equation}\label{equ:Lap_oriente}
\Delta^{*}(u)_{f,f'}=
\begin{cases}
-\gamma^*(u)_{f,f'}&\text{ if $(f,f')$ is an edge of $\bar{\Gs}^*$}\\
\sum_{j=1}^d \gamma^*(u)_{f,f_j} &\text{ if $f'=f$ is a vertex of $\bar{\Gs}^*$}.
\end{cases}
\end{equation}
Let $\Delta^{*,\outer}(u)$ be the matrix obtained from $\Delta^{*}(u)$ by removing the row and column corresponding to the vertex 
$\outer$. Then,
\begin{equation}\label{equ:egalite_KgD}
\Ztree^\outer(\bar{\Gs}^*,\gamma^*(u))=\det\Delta^{*,\outer}(u).
\end{equation}

\paragraph{From $\outer$-directed spanning trees to rooted spanning forests.}

The last step consists in going from $\outer$-directed spanning trees of $\bar{\Gs}^*$ counted by $\det\Delta^{*,\outer}(u)$ to 
rooted spanning forests of $\Gs^*$ counted by $\det\Delta^{m,*}$ using gauge equivalences on weighted adjacency matrices of digraphs, see
Section~\ref{sec:app_gauge_1} of Appendix~\ref{app:gauge}. 

\begin{lem}
The weighted adjacency matrices $(k')^{-\frac{1}{2}}\Delta^{*,\outer}(u)$ and $\Delta^{m,*}$ are gauge equivalent and we have,
\begin{equation}\label{lem:D*Dm*}
\det\Delta^{*,\outer}(u)=(k')^{\frac{|\Vs^*|}{2}}\det\Delta^{m,*}.
\end{equation}
\end{lem}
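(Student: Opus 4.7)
The plan is to produce an explicit diagonal gauge $\Lambda=\diag(\lambda_f)_{f\in\Vs^*}$ that conjugates $(k')^{-1/2}\Delta^{*,\outer}(u)$ into $\Delta^{m,*}$, thereby realising the two as gauge equivalent in the sense of Appendix~\ref{sec:app_gauge_1} (gauge equivalence for weighted adjacency matrices of digraphs). Once this is done, the determinant identity is immediate: similarity preserves determinant, so $\det[(k')^{-1/2}\Delta^{*,\outer}(u)]=\det\Delta^{m,*}$, and pulling out the scalar factor gives $\det\Delta^{*,\outer}(u)=(k')^{|\Vs^*|/2}\det\Delta^{m,*}$.

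First, I would compare off-diagonal entries. For an edge $(f,f')$ of $\Gs^*$ with associated rhombus vectors $e^{i\bar{\alpha}},e^{i\bar{\beta}}$ in $\GR$, equation~\eqref{equ:gamma_star} gives
\[
[(k')^{-1/2}\Delta^{*,\outer}(u)]_{f,f'}=-(k')^{-1}\sc(\theta^*)\dn(u_{\alpha})\dn(u_{\beta}),\qquad \Delta^{m,*}_{f,f'}=-\sc(\theta^*).
\]
Matching off-diagonals under conjugation by $\Lambda$ therefore forces
\[
\lambda_f/\lambda_{f'}=(k')^{-1}\dn(u_{\alpha})\dn(u_{\beta})
\]
along every edge. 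To see that $\lambda$ is well defined, I would check that the product of these ratios around every face-cycle of $\Gs^*$ is $1$. Using the train-track structure (each train-track carries a fixed rhombus-angle) together with the fact that edges of a face-cycle around a primal vertex $v$ share rhombus directions in a paired way, the cyclic product reduces via double-periodicity of $\dn$ on $\TT(k)$ to $1$; the boundary faces of $\Gs^*$ pose no obstruction as they close through vertex $\outer$, which has been removed.

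Next, I would verify the diagonal entries. Conjugation by $\Lambda$ leaves the diagonal unchanged, so the claim $\Lambda^{-1}\cdot(k')^{-1/2}\Delta^{*,\outer}(u)\cdot\Lambda=\Delta^{m,*}$ reduces to checking, for every $f\in\Vs^*$,
\[
(k')^{-1/2}\sum_{f_j\sim f}\gamma^*(u)_{f,f_j}=\Delta^{m,*}_{f,f}.
\]
For an interior vertex $f$ of degree $d$, this is the identity
\[
(k')^{-1/2}\sum_{j=1}^{d}\sc(\theta^*_j)\dn(u_{\alpha^*_j})\dn(u_{\alpha^*_{j+1}})=\sum_{j=1}^{d}A(\theta^*_j),
\]
which I would derive from Proposition~11 of~\cite{BdTR1} (the same identity used in the proof of Theorem~\ref{prop:KDtKD}), applied on $\Gs^*$ after the shift $u\mapsto u-2iK'$ via the relation $\nd(v_\alpha)=i\sc((v-2iK')_\alpha)$ that was already invoked in the computation of $[\overline{\KD}^{\,t}\KD]_{v,v}$. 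For boundary vertices $f\in\Vs^*$, additional diagonal contributions come from the edges $(f,\outer)$ with weights $\gamma^*(u)_{f,\outer}$ from~\eqref{equ:gamma_star}; these must be shown to reproduce the mass term $m^*_f$ in $\Delta^{m,*}_{f,f}$, using the definition of $m^*$ together with the dagger-construction of Section~\ref{sec:spanning_forests}. Both checks are trigonometric in nature and parallel the diagonal computation in the proof of Theorem~\ref{prop:KDtKD}.

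The main obstacle is the cycle-consistency of $\lambda$ and the boundary reconciliation. The off-diagonal relation only constrains ratios $\lambda_f/\lambda_{f'}$, and verifying that these glue across every closed loop of $\Gs^*$ requires a careful use of the isoradial geometry (train-tracks, pairing of rhombus directions around primal vertices) together with the periodicity of $\dn$ on the torus $\TT(k)$. Once cycle-consistency is established and the diagonal identities verified, the gauge equivalence follows and the determinant identity~\eqref{lem:D*Dm*} drops out.
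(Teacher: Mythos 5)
Your proposal is correct and follows essentially the same route as the paper: an explicit diagonal gauge built from the edge ratios $(k')^{-1}\dn(u_{\alpha})\dn(u_{\beta})$, with the diagonal entries matched via Proposition~11 of~\cite{BdTR1}. The only difference is cosmetic: where you propose to check cycle-consistency of the gauge by hand around face-cycles, the paper observes that $(k')^{-1}\dn(u_\alpha)\dn(u_\beta)=[i(k')^{1/2}\sc((u-2K-2iK')_\alpha)][i(k')^{1/2}\sc((u-2K-2iK')_\beta)]$, i.e.\ the ratio is the massive exponential function of~\cite{BdTR1} evaluated at a shifted argument, so path-independence is already known.
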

\begin{proof}
Both matrices $(k')^{-\frac{1}{2}}\Delta^{*,\outer}(u)$ and $\Delta^{m,*}$ have the same associated digraph which is the restricted dual 
$\Gs^*$ where a loop is added at every vertex, and each undirected edge is replaced by the two possible directed edges. 
The graph $\Gs^*$ being connected, the associated digraph is strongly connected. 
We use Lemma~\ref{lem:crit_gauge} to prove gauge equivalence of the matrices. 

For every vertex $f$ of $\Gs^*$, define $q_{f,f}=1$. Next, consider two distinct vertices $f_1,f_2$ of $\Gs^*$ and a simple
di-path $\gamma$ from $f_1$ to $f_2$. Set,
\[
q_{f_1,f_2}=\prod_{e\in\gamma}[(k')^{-1}\dn(u_\alpha)\dn(u_{\beta})].
\]
The function $q$ is well defined because it is the exponential function of~\cite{BdTR1} evaluated at $u-2K-2iK'$, 
see also Equation~\eqref{equ:expo_function}. Indeed,
\begin{align*}
(k')^{-1}\dn(u_\alpha)\dn(u_\beta)&=(k')^{-1}\dn((u-2K-2iK')_\alpha +K+iK')\dn((u-2K-2iK')_\beta +K+iK')\\
&=[i(k')^{\frac{1}{2}}\sc((u-2K-2iK')_\alpha) ][i(k')^{\frac{1}{2}}\sc((u-2K-2iK')_\beta)].
\end{align*}
Fix a vertex $f_0$ of $\Gs^*$, and define $D^{f_0}$ to be the diagonal matrix whose diagonal coefficient $D^{f_0}_{f,f}$ corresponding 
to the vertex $f$ of $\Gs^*$ is $q_{f_0,f}$. Let us prove that
\begin{equation}\label{equ:Lapo_Lapm}
\Delta^{m,*}=(k')^{-\frac{1}{2}}D^{f_0}\,\Delta^{*,\outer}(u)(D^{f_0})^{-1}.
\end{equation}
Recall the definition of the Laplacian matrix $\Delta^{*,\outer}(u)$, see~\eqref{equ:Lap_oriente}
and~\eqref{equ:gamma_star}. For the diagonal coefficient corresponding to a vertex $f$ of $\Gs^*$, we have
\begin{align*}
[(k')^{-\frac{1}{2}}D^{f_0}\,\Delta^{*,\outer}(u)(D^{f_0})^{-1}]_{f,f}&=
(k')^{-\frac{1}{2}}\Delta^{*,\outer}(u)_{f,f}=(k')^{-1}\sum_{j=1}^d \sc(\theta_j^*)\dn(u_{\alpha_j})\dn(u_{\beta_j}), 
\\
&=\sum_{j=1}^d A(\theta_j^*)=\Delta^{m,*}_{f,f}, \text{ by Proposition 11 of~\cite{BdTR1}}.
\end{align*}
For an edge $(f,f')$ of $\Gs^*$, we have
\begin{align*}
[(k')^{-\frac{1}{2}}D^{f_0}\,\Delta^{*,\outer}(u)(D^{f_0})^{-1}]_{f,f'}&=
(k')^{-\frac{1}{2}}\frac{q_{f_0,f}}{q_{f_0,f'}}\Delta^{*,\outer}(u)_{f,f'}
=(k')^{-\frac{1}{2}}\frac{1}{q_{f,f'}}\Delta^{*,\outer}(u)_{f,f'}\\
&=-(k')^{-\frac{1}{2}}(k')\nd(u_{\alpha})\nd(u_{\beta})(k')^{-\frac{1}{2}}\sc(\theta^*)\dn(u_{\alpha})\dn(u_{\beta})\\
&=-\sc(\theta^*)=\Delta^{m,*}_{f,f'}.
\end{align*}
By Lemma~\ref{lem:crit_gauge}, Equation~\eqref{equ:Lapo_Lapm} implies that the matrices 
$(k')^{-\frac{1}{2}}\Delta^{*,\outer}(u)$ and $\Delta^{m,*}$ are gauge equivalent, and we have
equality of the determinants:
$\det\Delta^{m,*}=\det[(k')^{-\frac{1}{2}}\Delta^{*,\outer}(u)]$. 
\end{proof}

\begin{proof}[Proof of Theorem~\ref{thm:det}]
Combining Equation~\eqref{equ:lem_K_Kg}, \eqref{equ:Kg_arbres}, \eqref{equ:egalite_KgD} and \eqref{lem:D*Dm*}, we obtain,
\begin{align*}
|\det\KD(u)|&= \Bigl(\prod_{w\in W }\sc(\theta_w)^\frac{1}{2}\Bigr)\Bigl(\prod_{e=wx\in\Ms_0} [\dn(u_{\alpha_e})\dn(u_{\beta_e})]^{\frac{1}{2}}  \Bigr)|\det\KD^\gs(u)|\\
&=\Bigl(\prod_{w\in W }\sc(\theta_w)^\frac{1}{2}\Bigr)\Bigl(\prod_{e=wx\in\Ms_0} [\dn(u_{\alpha_e})\dn(u_{\beta_e})]^{\frac{1}{2}}  \Bigr) \det\Delta^{*,\outer}(u)\\
&=(k')^{\frac{|\Vs^*|}{2}}\Bigl(\prod_{w\in W }\sc(\theta_w)^\frac{1}{2}\Bigr)\Bigl(\prod_{e=wx\in\Ms_0} [\dn(u_{\alpha_e})\dn(u_{\beta_e})]^{\frac{1}{2}}  \Bigr)
\det\Delta^{m,*}.\qedhere
\end{align*}
\end{proof}

\subsection{Z-invariance of the $Z^u$-Dirac operator}\label{sec:Z_invariance}

According to Baxter~\cite{Baxter:8V,Baxter:Zinv,Baxter:exactly} a model of statistical mechanics is \emph{$Z$-invariant} if, 
when decomposing the partition function according to the possible 
configurations outside of the hexagon of the diamond graph $\GR$ defining the star/triangle, it only changes by a constant
independent of the outer configurations when performing a $\mathsf{Y}$-$\Delta$ move.

Suppose that the isoradial graph $\Gs$ is finite. The $Z^u$-Dirac operator $\KD(u)$ is the bipartite Kasteleyn matrix of the dimer model on the 
double graph $\GDro$ with weight function $c(u)$ on the edges given by~\eqref{equ:defKD}. By Section~\ref{sec:determinant_pf} 
this dimer model is in bijection with pairs of dual directed spanning trees of 
$\T^{\rs,\outer}(\Gs,\bar{\Gs}^*)$, with weight function $\tilde{c}(u)$ on directed edges of $\Gs,\bar{\Gs}^*$:
\begin{equation*}
\Zdimer(\GDro,c(u))=|\det\KD(u)|=\Ztreepair((\Gs,\bar{\Gs}^*),\tilde{c}(u)).
\end{equation*}
By Lemma~\ref{lem:K_Kg} and Equation~\eqref{equ:Kg_arbres}, $|\det \KD(u)|$ is equal up to a constant to $|\det \KD^{\,\gs}(u)|$
which counts $\outer$-directed spanning trees of $\T^\outer(\bar{\Gs}^*)$ with conductances $\gamma^* (u)$ on directed edges of 
$\bar{\Gs}^*$ given by~\eqref{equ:gamma_star}; that is:
\begin{equation*}
\Zdimer(\GDro,c(u))=C(u)\cdot \Ztree^\outer(\bar{\Gs}^*,\gamma^*(u)),
\end{equation*}
where $C(u)$ is given in Lemma~\ref{lem:K_Kg}. By Temperley's bijection again, there is a one-to-one correspondence between 
dimer configurations of $\GDro$ and $\outer$-directed spanning trees of $\T^\outer(\bar{\Gs}^*)$ (such that the primal tree is $\rs$-rooted). 
We prove $Z$-invariance of this $\outer$-rooted directed spanning tree model on $\bar{\Gs}^*$;
using the above, the decomposition of the partition function has a direct 
interpretation in terms of the dimer model on $\GDro$. 

Since duality preserves isoradiality, we actually show $Z$-invariance of 
the $\rs$-directed spanning tree model on $\Gs$, where directed edges of $\Gs$ are assigned conductances $\gamma(u)$
given by, for every directed edge $(v,v')$ of $\Gs$, 
\begin{equation*}
\gamma(u)_{v,v'}=(k')^{-\frac{1}{2}}\sc(\theta)\dn(u_\alpha)\dn(u_\beta), 
\end{equation*}
where $2e^{i\bar{\alpha}},2e^{i\bar{\beta}},\bar{\theta}$ are the rhombus vectors and half-angle of $\GR$ associated to the edge $(v,v')$;
$\gamma(u)$ is the primal version of the conductances $\gamma^*(u)$ of~\eqref{equ:gamma_star}. 

\begin{prop}\label{prop:Zinv}
Consider a finite isoradial graph $\Gs$, and let $u\in\Re(\TT(k))$. Then, the model of $\rs$-directed spanning trees on $\Gs$, with weight function $\gamma(u)$ on the 
edges is $Z$-invariant.
\end{prop}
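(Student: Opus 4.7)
The plan is to reduce $Z$-invariance of the directed spanning tree model on $\Gs$ with conductances $\gamma(u)$ to the $Z$-invariance of the $Z$-massive Laplacian of~\cite{BdTR1}, established for rooted spanning forests in~\cite{BdtR2}. By the directed matrix-tree theorem~\cite{Kirchhoff,Tutte}, $\Ztree^\rs(\Gs,\gamma(u))=\det\Delta^\rs(u)$, where $\Delta(u)$ is the directed Laplacian on $\Gs$ with conductances $\gamma(u)$ and $\Delta^\rs(u)$ is the submatrix obtained by deleting the row and column indexed by $\rs$.

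First I would run the primal analogue of the gauge-equivalence argument underlying~\eqref{lem:D*Dm*}: fix an interior reference vertex $v_0\in\Vs\setminus\R^\spartial$, define $q_{v_0,v}=\prod_{e\in\gamma}(k')^{-1}\dn(u_\alpha)\dn(u_\beta)$ along any simple di-path in $\Gs$ from $v_0$ to $v$, and check path-independence via the identification with the massive exponential function of~\cite{BdTR1}. Proposition~11 of~\cite{BdTR1} then yields, on interior vertices of $\Gs$, the matrix identity $\Delta^m=(k')^{-1/2}D^{v_0}\Delta(u)(D^{v_0})^{-1}$ with $D^{v_0}=\mathrm{diag}(q_{v_0,v})$. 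In particular, $\det\Delta^\rs(u)=(k')^{(|\Vs|-1)/2}\det\Delta^{m,\rs}$, where $\Delta^{m,\rs}$ denotes the submatrix of $\Delta^m$ indexed by $\Vs\setminus\{\rs\}$; this relates the directed-tree partition function to the rooted spanning forest partition function on $\Gs$ wired at $\rs$.

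The decisive step is that the gauge equivalence is \emph{local}: the edge-wise rescaling $\gamma(u)_{v,v'}\mapsto(k')^{1/2}(q_v/q_{v'})\gamma(u)_{v,v'}=\sc(\theta_e)$ depends only on the rhombus vectors of the edge $e=(v,v')$. Consequently, under a $\mathsf{Y}$-$\Delta$ move on a hexagon of $\GR$, the gauge factors on the six inner edges depend only on the outer rhombus vectors, which are preserved by the move. Combining this locality with the $Z$-invariance of $\Delta^m$ proved in~\cite{BdtR2}, which provides the star-triangle identity for the rooted spanning forest model \emph{per external configuration}, I would transfer the identity to the conductances $\gamma(u)$, hence to $\Ztree^\rs$.

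The main obstacle is precisely this transfer step: the per-tree effect of the gauge transformation accumulates the local factors $q_v/q_{v'}$ along root-paths and thus acts in a tree-dependent, non-constant way, so that $Z$-invariance of the total determinant does not immediately yield $Z$-invariance per external configuration. A cleaner alternative is to verify the star-triangle identity directly by enumerating the finitely many local tree configurations at the star and triangle (three choices of outgoing edge at the star center $v_0$; the few acyclic in/out choices of internal edges on the triangle matching a prescribed pattern at $v_1,v_2,v_3$) and checking the resulting elliptic identities in $\sc(\theta)\dn(u_\alpha)\dn(u_\beta)$ using the isoradial relations between the star and triangle parameters. These identities are structurally the Yang--Baxter relations underlying the $Z$-invariant Ising model, specialized to the Dirac-model weights, and should reduce to standard addition formulas for $\sc$ and $\dn$.
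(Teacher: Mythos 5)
Your final strategy---abandoning the determinant/gauge reduction and verifying the star-triangle identity directly by enumerating the local directed-tree configurations---is exactly the route the paper takes, and your diagnosis of why the first route cannot work is also the paper's own observation: the gauge factors accumulate along root-paths in a tree-dependent way, so equality of the total determinants does not yield the per-external-configuration identity that $Z$-invariance requires. So the approach you end on is the right one.

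The gap is that the proposal stops where the proof begins. Concretely, one must (i) classify the external configurations $\Ts$ not by ``choices of outgoing edge at the star center'' but by the connectivity pattern of $\{v_1,v_2,v_3\}$ to the root $\rs$ induced by $\Ts$ (all three connected, exactly two, exactly one); this is what makes the restricted partition functions $Z(\,\cdot\,|\,i\,)$ well defined and reduces the check to three identities; (ii) enumerate, within each class, the admissible internal directed edges, keeping track of the two orientations $\gammat_{0,j}$ versus $\gammat_{j,0}$ (and $\gammas_{j,j+1}$ versus $\gammas_{j+1,j}$), whose weights differ by more than a sign ($\dn$ versus $\nd$ factors and an extra power of $k'$); and (iii) verify the three resulting elliptic identities. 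Step (iii) is not a matter of ``standard addition formulas for $\sc$ and $\dn$'': identifying the constant $\C=(k')^{\frac{3}{2}}\prod_{j=1}^3\sc(\theta_j)$ and closing the cases of all-three-connected and one-connected requires the specific identities $(k')^{-\frac{1}{2}}\sum_{j}\sc(\theta_j)\dn(u_{\alpha_j})\dn(u_{\alpha_{j+1}})=(k')^{\frac{1}{2}}\sum_{j}A(\theta_j)$ (Proposition 11 of~\cite{BdTR1}, applied to a closed fan of rhombi) and $(k')^{\frac{1}{2}}\sum_{j=1}^3 A(\theta_j)=(k')^{\frac{3}{2}}\prod_{j=1}^3\sc(\theta_j)$ for a triangle (Lemma 47 of~\cite{BdTR1}). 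Without invoking these, neither the value of $\C$ nor the cancellations in the remaining cases can be obtained, so as written the proposal is a correct plan rather than a proof.
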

\begin{rem}$\;$
\begin{itemize}
\item By Theorem~\ref{thm:det}, we have:
\begin{equation*}
\Zdimer(\GDro,c(u))=C'(u)\cdot \ZforestA(\Gs^*,\rho^*,m^*),
\end{equation*}
and in the paper~\cite{BdTR1}, the model of rooted spanning forests with these weights
is shown to be $Z$-invariant. But, since the proof of Theorem~\ref{thm:det}
does not provide a bijection between directed spanning trees and rooted spanning forests, 
the two decompositions of the partition functions are not directly comparable; they should nevertheless be 
compatible. Note that the computations of the proof of Proposition~\ref{prop:Zinv} are reminiscent but
much simpler than those of~\cite{BdTR1} (the latter have been removed from the published version).
\item The critical spanning tree model of~\cite{Kenyon3} with conductances $\tan(\theta)$ is
$Z$-invariant~\cite{Kennelly}. The result of~\cite{BdTR1} extends this to rooted spanning forest while Proposition~\ref{prop:Zinv} 
extends it to directed spanning trees.
\end{itemize}
\end{rem}

\begin{proof}
Let $\Gsstar$ and $\Gstriang$ be two finite isoradial graphs differing by a star-triangle transformation, and let $\rs$ be a fixed root on the 
boundary of the graph, outside of the hexagon defining the star/triangle. Let $\gammas(u)$, resp. $\gammat(u)$, be the weight function on 
$\rs$-dST of $\Gstriang$, resp. $\Gsstar$. We use the notation of Figure~\ref{fig:Zinv}, and write $\gammas_{i,j}$/$\gammat_{i,j}$
for the weight of the edge $(v_i,v_j)$. Using the identities $\dn(u+K)=k'\nd(u)$, $\sc(\theta^*)=\sc(K-\theta)=
(k')^{-1}\cs(\theta)$, we have the following: for every $j\in\{1,2,3\}$, with cyclic notation for indices:
\[
\begin{array}{clcl}
\gammat_{0,j}&=(k')^{-\frac{1}{2}}\sc(\theta_j)\dn(u_{\alpha_{j}})\dn(u_{\alpha_{j+1}}),\quad     
    &  \gammas_{j,j+1}&=(k')^{-\frac{1}{2}}\cs(\theta_{j-1}) \nd(u_{\alpha_{j}})  \dn(u_{\alpha_{j-1}}),\quad   \\
\gammat_{j,0}&=(k')^{\frac{3}{2}}\sc(\theta_j) \nd(u_{\alpha_{j}})  \nd(u_{\alpha_{j+1}}),    
   &\gammas_{j+1,j}&=(k')^{-\frac{1}{2}}\cs(\theta_{j-1})\dn(u_{\alpha_{j}}) \nd(u_{\alpha_{j-1}}).
\end{array}
\]

\begin{figure}[ht]
\begin{minipage}[b]{0.5\linewidth}
\begin{center}
\begin{overpic}[width=4.5cm]{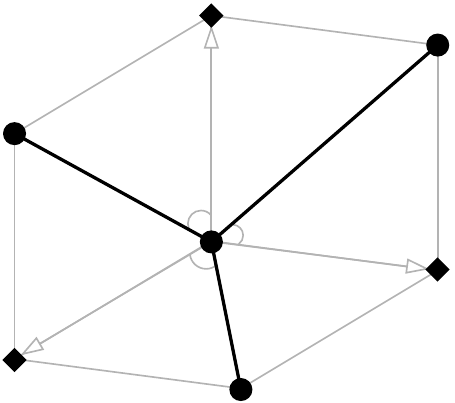}
  \put(51,-5){\scriptsize $v_1$}
  \put(100,78){\scriptsize $v_2$}
  \put(-7,58.5){\scriptsize $v_3$}
  \put(50,28){\scriptsize $v_0$}
  \put(40,23){\scriptsize $\bar{\theta}_1$}
  \put(56,37){\scriptsize $\bar{\theta}_2$}
  \put(40,44){\scriptsize $\bar{\theta}_3$}
  \put(48,65){\scriptsize $e^{i\bar{\alpha}_3}$}
  \put(15,22.5){\scriptsize $e^{i\bar{\alpha}_1}$}
  \put(66,33.5){\scriptsize $e^{i\bar{\alpha}_2}$}
\end{overpic}
\end{center}
\end{minipage}
\begin{minipage}[b]{0.5\linewidth}
\begin{center}
\begin{overpic}[width=4.5cm]{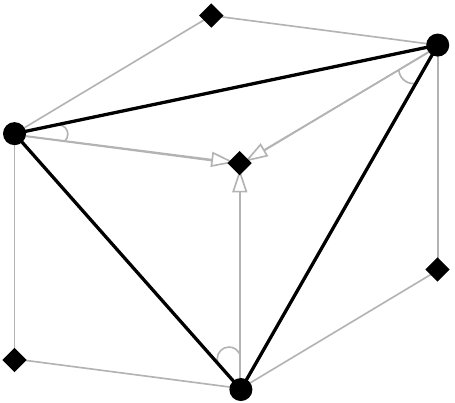}
\put(51,-5){\scriptsize $v_1$}
\put(100,78){\scriptsize $v_2$}
\put(-5,58.5){\scriptsize $v_3$}
\put(24,58){\scriptsize $\bar{\theta}_1^*$}
\put(45,16){\scriptsize $\bar{\theta}_2^*$}
\put(80,63){\scriptsize $\bar{\theta}_3^*$}
\put(54,33){\scriptsize $e^{i\bar{\alpha}_3}$}
\put(58,62){\scriptsize $e^{i\bar{\alpha}_1}$}
\put(28,48){\scriptsize $e^{i\bar{\alpha}_2}$}
\end{overpic}
\end{center}
\end{minipage}
\caption{Notation for vertices, angles and rhombus vectors in a star-triangle transformation.}
\label{fig:Zinv}
\end{figure}

Consider a directed edge configuration $\Ts$ outside of the hexagon which can be extended to an $\rs$-dST of $\Gsstar$/$\Gstriang$. 
Then given $\Ts$, the configurations inside the $\mathsf{Y}$-$\Delta$ only depend on connection properties of $\Ts$ outside. 
As in the non-directed case~\cite{Kennelly}, we thus have three kinds of configurations to consider.
\begin{itemize}
 \item[I.] $\{\Ts:\{v_1,v_2,v_3\}$ are connected to $\rs\}$. 
 \item[II.] $\{\Ts:\{v_j,v_{j+1}\}$ are connected to   $\{\rs\},\,j\in\{1,2,3\}\}$.
 \item[III.] $\{\Ts:\{v_j\}$ is connected to  $\{\rs\}$, $j\in\{1,2,3\}\}$.
\end{itemize}

Let $Z(\Gsstar|\,i\,)$, resp. $Z(\Gstriang|\,i\,)$, be the partition function of $\Gsstar$, resp. $\Gstriang$, 
restricted to an outer configuration $\Ts$ belonging to the set defined in $i$,
$i\in\{\mathrm{I},\mathrm{II},\mathrm{III}\}$, divided by the contribution
of the configuration $\Ts$. Then, proving $Z$-invariance amounts to showing that there exists a constant $\C$, such that
\begin{equation*}
\forall\,i\in\{\mathrm{I,II,III}\},\quad Z(\Gsstar|\,i\,)=\C\cdot Z(\Gstriang|\,i\,). 
\end{equation*}
Let us prove that this is indeed the case, with $\C=(k')^{\frac{3}{2}}\prod_{j=1}^3 \sc(\theta_j)$.

\paragraph{Restriction $\mathrm{I}.$}
We have:
\begin{align}
&Z(\Gsstar|\,\mathrm{I}\,)=(k')^{-\frac{1}{2}}\sum_{j=1}^3 \sc(\theta_j)\dn(u_{\alpha_{j}}) \dn(u_{\alpha_{j+1}})=
(k')^{\frac{1}{2}}\sum_{j=1}^3 A(\theta_j)\nonumber\\
&Z(\Gstriang|\,\mathrm{I}\,)=1,\nonumber\\
\text{implying that }\ &\C=(k')^{\frac{1}{2}}\sum_{j=1}^3 A(\theta_j)=(k')^{\frac{3}{2}}\prod_{j=1}^3 \sc(\theta_j)\label{equ:cst_C},
\end{align}
by Equation~(71) of Lemma 47 of~\cite{BdTR1}.
\paragraph{Restriction $\mathrm{II}.$}
We have, $\forall\,j\in\{1,2,3\}$,
\begin{align*}
Z(\Gsstar|\,\mathrm{II}\,)&=\gammat_{j+2,0}(\gammat_{0,j}+\gammat_{0,j+1})\\
&=k'\sc(\theta_{j+2})\nd(u_{\alpha_{j+2}})\nd(u_{\alpha_j})
  [\sc(\theta_j)\dn(u_{\alpha_j})\dn(u_{\alpha_{j+1}})  + \\
&\hspace{6.5cm}  +\sc(\theta_{j+1})\dn(u_{\alpha_{j+1}})\dn(u_{\alpha_{j+2}})]\\
  &=k'\sc(\theta_{j+2})[\sc(\theta_j)\nd(u_{\alpha_{j+2}})\dn(u_{\alpha_{j+1}})+
  \sc(\theta_{j+1})\nd(u_{\alpha_j})\dn(u_{\alpha_{j+1}})],\\
Z(\Gstriang|\,\mathrm{II}\,)&=\gammas_{j+2,j}+\gammas_{j+2,j+1}\\
&=(k')^{-\frac{1}{2}}[\cs(\theta_{j+1})\nd(u_{\alpha_{j+2}})\dn(u_{\alpha_{j+1}})+
\cs(\theta_{j})\dn(u_{\alpha_{j+1}})\nd(u_{\alpha_j})].
\end{align*}
Using Equation~\eqref{equ:cst_C}, it is straightforward that $Z(\Gsstar|\,\mathrm{II}\,)=\C \,Z(\Gstriang|\,\mathrm{II}\,)$.

\paragraph{Restriction $\mathrm{III}.$}
We have, $\forall\,j\in\{1,2,3\}$,
\begin{align*}
Z(\Gsstar|\,\mathrm{III}\,)&=\gammat_{j+1,0}\gammat_{j+2,0}\gammat_{0,j}\\
&=(k')^{\frac{5}{2}}\sc(\theta_{j+1})\nd(u_{\alpha_{j+1}})\nd(u_{\alpha_{j+2}})\times
\sc(\theta_{j+2})\nd(u_{\alpha_{j+2}})\nd(u_{\alpha_j})\times \\
&\hspace{6.1cm} \times \sc(\theta_j) \dn(u_{\alpha_j})\dn(u_{\alpha_{j+1}})\\
&=(k')^{\frac{5}{2}}\sc(\theta_j)\sc(\theta_{j+1})\sc(\theta_{j+2})\nd^2(u_{\alpha_{j+2}}),
\end{align*}
\begin{align*}
Z(\Gstriang|\,\mathrm{III}\,)&=\gammas_{j+1,j+2}\gammas_{j+2,j}+\gammas_{j+2,j+1}\gammas_{j+1,j}+
\gammas_{j+1,j}\gammas_{j+2,j}\\
&=(k')^{-1}[\cs(\theta_j)\nd(u_{\alpha_{j+1}})\dn(u_{\alpha_j})\cs(\theta_{j+1})\nd(u_{\alpha_{j+2}})\dn(u_{\alpha_{j+1}})+\\
&\hspace{1.6cm}  +\cs(\theta_j)\dn(u_{\alpha_{j+1}})\nd(u_{\alpha_j})   \cs(\theta_{j+2})\dn(u_{\alpha_j})\nd(u_{\alpha_{j+2}})+\\
&\hspace{1.6cm}  +\cs(\theta_{j+2})\dn(u_{\alpha_{j}})\nd(u_{\alpha_{j+2}}) \cs(\theta_{j+1})\nd(u_{\alpha_{j+2}})\dn(u_{\alpha_{j+1}})]\\
&=\frac{(k')^{-1}}{\prod_{j=1}^3 \sc(\theta_j)}\nd^2(u_{\alpha_{j+2}})
[\sc(\theta_{j+2})\dn(u_{\alpha_{j+2}}) \dn(u_{\alpha_j}) +\\
&\hspace{2cm}+ \sc(\theta_{j+1})\dn(u_{\alpha_{j+1}}) \dn(u_{\alpha_{j+2}}) +
\sc(\theta_{j})\dn(u_{\alpha_{j}}) \dn(u_{\alpha_{j+1}})
]\\
&=k'\nd^2(u_{\alpha_{j+2}}),\text{ by the proof of Restriction I}.
\end{align*}
The proof that $Z(\Gsstar|\,\mathrm{III}\,)=\C\, Z(\Gstriang|\,\mathrm{III}\,)$ is concluded by using Equation~\eqref{equ:cst_C} again. 
\end{proof}

\subsection{Inverse $Z^u$-Dirac operator and dimer model on the double graph}\label{sec:KD_Lap_inv}

Using Theorem~\ref{prop:KDtKD}, in Corollaries~\ref{cor:KD_G} and~\ref{cor:KD_G_finite}, we express the inverse $Z^u$-Dirac operators
$\KD(u)^{-1}$ and $\KD^\spartial(u)^{-1}$ using the $Z$-massive and dual $Z$-massive Green functions.
In the infinite case, this allows to prove in 
Theorem~\ref{thm:Gibbs_KD} an explicit~\emph{local} expression for a Gibbs measure for the dimer model on the double graph with weight 
function $c(u)$.

\subsubsection{Inverse $Z^u$-Dirac operator and $Z$-massive Green functions}

\emph{Infinite case}.
Consider an infinite isoradial graph $\Gs$, and the $Z^u$-Dirac operator $\KD(u)$. Consider also the $Z$-massive Laplacian $\Delta^m$ on $\Gs$,
the dual $Z$-massive Laplacian $\Delta^{m,*}$ of the dual $\Gs^*$~\cite{BdTR1}. When 
$k\neq 0$, let $G^m$ and $G^{m,*}$ be the $Z$-massive and dual massive Green functions of $\Gs$ and $\Gs^*$ of~\cite{BdTR1}, 
whose definition is recalled in
Section~\ref{sec:def_Lapmass_0}. When $k=0$, the mass is 0 and we let $G^0$ and $G^{0,*}$ be the Green and dual Green functions of~\cite{Kenyon3}.

\paragraph{Notation for coefficients of Corollary~\ref{cor:KD_G}.}
 Let $\ubar{v}$, resp. $\ubar{f}$, be a vertex of $\Gs$, resp. $\Gs^*$. 
 Let $w$ be a white vertex of $\GD$, its neighbors in $\GD$ are $v_1,f_1,v_2,f_2$, and let 
 $e^{i\alphafb},e^{i\betafb},\thetafb$ be the rhombus vectors and half-angle of $\GRR$ associated to the edge $(w,v_2)$,
 where the subscript ``$\mathrm{f}$'' stands for ``final'', see Figure~\ref{fig:corKDG}.

\begin{figure}[ht]
\begin{minipage}[b]{0.5\linewidth}
\begin{center}
 \begin{overpic}[height=3.2cm]{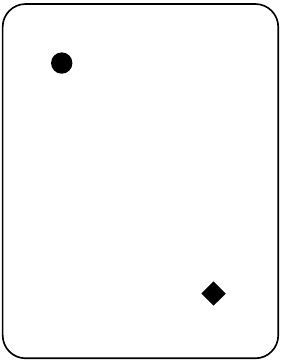}
\put(24,81){\scriptsize $\ubar{v}$}
\put(45,15){\scriptsize $\ubar{f}$}
\end{overpic}
\end{center}
\end{minipage}
\begin{minipage}[b]{0.5\linewidth}
\begin{center}
\begin{overpic}[height=3.2cm]{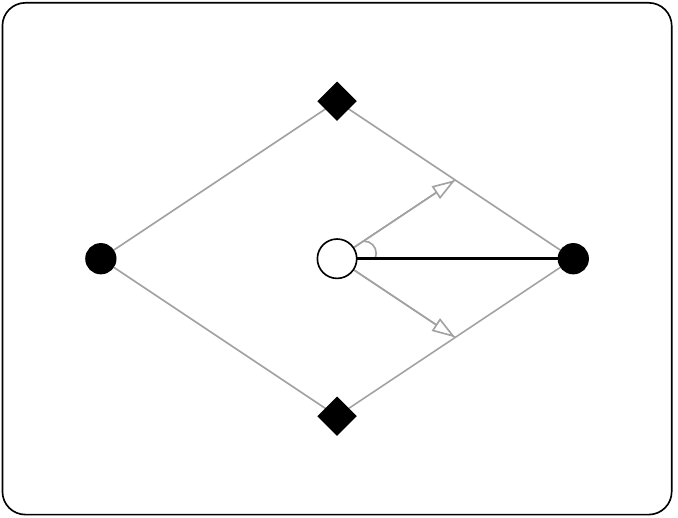}
  \put(5,37){\scriptsize $v_1$}
  \put(89,37){\scriptsize $v_2$}
  \put(47,4){\scriptsize $f_1$}
  \put(47,68){\scriptsize $f_2$}
  \put(52,23){\scriptsize $e^{i\alphafb}$}
  \put(52,47){\scriptsize $e^{i\betafb}$}
  \put(41,37){\scriptsize $w$}
  \put(61,40){\scriptsize $\thetafb$}
\end{overpic} 
\end{center}
\end{minipage}
\caption{Notation for coefficients of Corollary~\ref{cor:KD_G} and~\ref{cor:KD_G_finite}.}
\label{fig:corKDG}
\end{figure}

As a consequence of Theorem~\ref{prop:KDtKD} (infinite case) we obtain,
\begin{cor}\label{cor:KD_G}
For every $u\in\Re(\TT(k))$, consider the operator $\KD(u)^{-1}$
mapping $\CC^{W}$ to $\CC^{B}$ defined by:

$\bullet$ \emph{Matrix form.} 
$\displaystyle
\KD(u)^{-1}=(k')^{-1}
\begin{pmatrix}
G^m & 0\\
0 & G^{m,*}
\end{pmatrix}
\overline{\KD(u)}^{\,t\,}.
$

$\bullet$ \emph{Coefficients.} For every $\ubar{v},\ubar{f},w$ as in the notation above, 
\begin{align*}
\KD(u)^{-1}_{\ubar{v},w}&=e^{-i\frac{\alphafb+\betafb}{2}}(k')^{-1}\sc(\thetaf)^{\frac{1}{2}}
\Bigl([\dn(u_{\alphaf})\dn(u_{\betaf})]^{\frac{1}{2}} G^m_{\ubar{v},v_2}-[\dn(u_{\alphaf+2K})\dn(u_{\betaf+2K})]^{\frac{1}{2}}G^m_{\ubar{v},v_1}\Bigr)\\
\KD(u)^{-1}_{\ubar{f},w}&=-ie^{-i\frac{\alphafb+\betafb}{2}}(k')^{-1}\sc(\thetaf^*)^{\frac{1}{2}}
\Bigl([\dn((u_{\betaf})^*)\dn((u_{\alphaf+2K})^*)]^{\frac{1}{2}}G^{m,*}_{\ubar{f},f_2}+\\
&\hspace{7.5cm} -[\dn((u_{\betaf-2K})^*)\dn((u_{\alphaf})^*)]^{\frac{1}{2}}G^{m,*}_{\ubar{f},f_1} \Bigr).
\end{align*}
Then $\KD(u)^{-1}$ is an inverse of the Kasteleyn operator $\KD(u)$. When the graph $\GD$ is moreover $\ZZ^2$-periodic, it is 
the unique inverse decreasing to zero at infinity.
\end{cor}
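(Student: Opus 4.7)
The plan is to deduce the corollary directly from Theorem~\ref{prop:KDtKD} together with the defining property of $G^m$ and $G^{m,*}$ as inverses of the $Z$-massive Laplacians. First I would verify that the proposed matrix form is a left inverse of $\KD(u)$. Composing with $\KD(u)$ on the right and invoking Theorem~\ref{prop:KDtKD},
\[
\KD(u)^{-1}\KD(u) = (k')^{-1}\begin{pmatrix} G^m & 0 \\ 0 & G^{m,*} \end{pmatrix}\overline{\KD(u)}^{\,t\,}\KD(u) = \begin{pmatrix} G^m\Delta^m & 0 \\ 0 & G^{m,*}\Delta^{m,*} \end{pmatrix} = \mathrm{Id},
\]
where the factor $(k')^{-1}\cdot k'$ cancels and the block-diagonal structure reduces the identity to $G^m\Delta^m = \mathrm{Id}$ and $G^{m,*}\Delta^{m,*} = \mathrm{Id}$. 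Each entry of the product is a finite sum, since $\overline{\KD(u)}^{\,t\,}$ has at most four nonzero entries per row, indexed by the black neighbors of a white vertex in $\GD$.

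Next I would extract the closed-form coefficients by unpacking the matrix product at the entries $(\ubar{v}, w)$ and $(\ubar{f}, w)$. Since the row for $\ubar{v}\in\Vs$ of the block-diagonal Green-function matrix has support in $\Vs$ only, and the column for $w$ of $\overline{\KD(u)}^{\,t\,}$ has support in $\{v_1,v_2,f_1,f_2\}$, only $v_1,v_2$ contribute:
\[
\KD(u)^{-1}_{\ubar{v}, w} = (k')^{-1}\bigl[G^m_{\ubar{v}, v_1}\,\overline{\KD(u)_{w, v_1}} + G^m_{\ubar{v}, v_2}\,\overline{\KD(u)_{w, v_2}}\bigr].
\]
Substituting the values of $\KD(u)_{w,v_1}$ and $\KD(u)_{w,v_2}$ computed in the worked example of Section~\ref{sec:def_Dirac}, and observing that for $u\in\Re(\TT(k))$ the radicands are real positive so that conjugation only reverses the phase $e^{i(\alphafb+\betafb)/2}$, yields exactly the stated formula. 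The coefficient $\KD(u)^{-1}_{\ubar{f}, w}$ is derived identically using $G^{m,*}$ and the dual coefficients $\KD(u)_{w, f_j}$, which are related to the primal ones through the substitutions $\theta\mapsto \theta^*=K-\theta$ and $u\mapsto u^*=K-u$, together with the identities $\sc(\theta^*)=(k')^{-1}\cs(\theta)$ and $\dn(u^*)=k'\nd(u)$; this also accounts for the prefactor $-i$.

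Finally, decay at infinity of $\KD(u)^{-1}$ follows at once from decay of $G^m$ and $G^{m,*}$ proved in~\cite{BdTR1}, since each entry is a bounded linear combination of two Green function values whose basepoints tend to infinity together with $w$. For uniqueness in the $\ZZ^2$-periodic case, suppose $L$ is another inverse of $\KD(u)$ decaying at infinity, and set $N = L - \KD(u)^{-1}$; each column of $N$ lies in the kernel of $\KD(u)$ and decays. Applying $\overline{\KD(u)}^{\,t\,}$ and using Theorem~\ref{prop:KDtKD}, every column restricted to $\Vs$ is $Z$-massive harmonic and decaying on the periodic isoradial graph $\Gs$, and likewise for $\Vs^*$; standard Fourier analysis on the $\ZZ^2$-periodic graph (equivalent to the uniqueness of the decaying inverse of $\Delta^m$ from~\cite{BdTR1}) forces these restrictions to vanish, so $N=0$, and by the discussion in Section~\ref{sec:dimer_infinite} this left inverse is automatically also a right inverse. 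The only delicate point is the coefficient bookkeeping---correctly tracking the $2K$-shifts and phases through complex conjugation---while the matrix identity itself is an immediate consequence of Theorem~\ref{prop:KDtKD}.
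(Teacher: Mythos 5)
Your argument follows the paper's own route: right-multiply the proposed matrix form by $\KD(u)$, invoke Theorem~\ref{prop:KDtKD} so that the product collapses to $G^m\Delta^m$ and $G^{m,*}\Delta^{m,*}$, note that associativity of the infinite matrix product is licensed by the finite support of each row of $\overline{\KD(u)}^{\,t\,}$, and read off the coefficients from the worked example of Section~\ref{sec:def_Dirac}. All of that is correct and matches the paper; the uniqueness claim is, in the paper, delegated to the cited references rather than reproved, so your sketch there is extra rather than required (though you should be careful about which side the inverse acts on before claiming the columns of $L-\KD(u)^{-1}$ lie in $\ker \KD(u)$).

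The one place where your write-up is genuinely incomplete is the decay statement. You assert that decay of $\KD(u)^{-1}$ ``follows at once from decay of $G^m$ and $G^{m,*}$,'' but the standing hypothesis $(k')^2\in(0,\infty)$ includes the critical case $k=0$, where the Green function does \emph{not} decay --- it grows like $-\log|x-x'|$. The paper treats this case separately: since $\dn\equiv 1$ at $k=0$, each coefficient of $\KD(u)^{-1}$ is an exact \emph{difference} of Green function values at two neighbouring base points, and it is this difference that decays (polynomially), as proved in~\cite{Kenyon3}. For $k\neq 0$ your argument is fine, because the elliptic prefactors are uniformly bounded and the Green functions decay exponentially. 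So the gap is not fatal, but as written your decay argument fails at $k=0$ and needs the difference structure of the formula to be invoked explicitly.
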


\begin{rem}\label{rem:localKD_m_1}$\,$
\begin{itemize}
\item[$\bullet$] When $k=0$, then $\dn\equiv 1$ and we recover Corollary 7.2 of~\cite{Kenyon3}. When the graph 
is $\ZZ^2$ and the weights are specific, Chhita~\cite{Chhita} has a result in the same flavor, relating the inverse Kasteleyn 
operator of pairs of dual directed spanning trees to a massive Green function.
\item[$\bullet$] The operator $\KD(u)^{-1}$ is \emph{local}. Indeed the expression for $\KD(u)^{-1}_{\ubar{x},w}$, with $\ubar{x}=\ubar{v}\in \Vs$ or 
$\ubar{x}=\ubar{f}\in\Vs^*$, 
only depends on: two paths of the diamond graph $\GR$ from $\ubar{x}$ to the two neighbors $x_1,x_2$ of $w$ in $\Gs$ or $\Gs^*$,
where the two paths are those used in computing the massive exponential function of coefficients of the massive Green function of~\cite{BdTR1};
and from the rhombus vectors and half-angles of $\GRR$ associated to the edges $(w,x_1), (w,x_2)$. 
\end{itemize}
\end{rem}

\begin{proof}
By definition, see Section~\ref{sec:dimer_infinite}, to show that $\KD(u)^{-1}$ is an inverse, we need to prove that 
$\KD(u)^{-1}\KD(u)=\Id$, and that $\KD(u)^{-1}_{x,w}\rightarrow 0$ as $|w-x|\rightarrow \infty$. 

For the second part, when $k\neq 0$ we prove in~\cite{BdTR1} that the Green function $G^m$ and $G^{m,*}$ decrease exponentially fast
to 0 at infinity; since the function $\dn$ is uniformly bounded, the same holds for $\KD(u)^{-1}$. When $k=0$, the Green function 
explodes like $-\log|x-x'|$ at infinity, but the difference converges polynomially fast to 0 as is proved in~\cite{Kenyon3}. 

For the first part, we apply $\KD(u)$ to the right of the definition of $\KD(u)^{-1}$ in matrix form and use Theorem~\ref{prop:KDtKD}.
Note that this step also uses associativity of the infinite matrix product, which holds in this case.
\end{proof}

\emph{Finite case}.
Consider a finite isoradial graph $\Gs$ and the $Z^u$-Dirac operators $\KD(u)$ and $\KD^\spartial(u)$ of 
Section~\ref{sec:def_Dirac} (finite case). Consider also the $Z$-massive Laplacian $\Delta^{m,\spartial}(u)$ and dual Laplacian $\Delta^{m,*}$
of Section~\ref{sec:def_Lap} (finite case). For the purpose of handling the Ising model, our goal is to 
obtain an explicit expression for the inverse of the operator $\KD^\spartial(u)$. In order to ensure that $\KD^\spartial(u)$ and 
$\Delta^{m,\spartial}(u)$ are invertible, we restrict $u$ to:
\begin{align}
\Re(\TT(k))''&=\Re(\TT(k))\setminus\{\alpha^\ell[2K],\ \beta^{\ell}[2K]:e^{i\bar{\alpha}^\ell},\,e^{i\bar{\beta}^\ell}\in\R^\spartial\}.
\end{align}
Indeed, by the forthcoming Remark~\ref{rem:part_function} this ensures that $\det \KD^\spartial(u)\neq 0$ and by Corollary~\ref{cor:det} that 
$\det\Delta^{m,\spartial}(u)\neq 0$ (since $\nd$ is positive when $u$ is real).
Denote by $G^{m,*}$ the inverse of the (finite) matrix $\Delta^{m,*}$ and by
$G^{m,\spartial}(u)$ the inverse of $\Delta^{m,\spartial}(u)$. 
Note that when some of the conductances $(\rho^\spartial_{v^c,v^\ell}(u))$ of the boundary rhombus 
pairs of $\R^{\spartial,\rs}$ are negative, there is a twist in defining the associated random walk and in giving
the random walk interpretation of the Green function, but this might at most happen for half of the boundary edges.

\paragraph{Notation for coefficients of Corollary~\ref{cor:KD_G_finite}.} Let $w$ be a white vertex of $\GDro$; if 
$w\notin\{w^\ell,w^r\in\R^\spartial\}$, then we use the notation of the infinite case, see Figure~\ref{fig:corKDG};
if $w\in\{w^\ell,w^r\in\R^\spartial\}$, then the vertex $f_2$ is absent.
As an immediate consequence of Theorem~\ref{prop:KDtKD} (finite case) we obtain
\begin{cor}\label{cor:KD_G_finite}
For every $u\in\Re(\TT(k))''$, the inverse matrix $\KD^\spartial(u)^{-1}$ has the following explicit expression.

$\bullet$ \emph{Matrix form.}
$\displaystyle
\KD^\spartial(u)^{-1}=(k')^{-1}
\Bigl[\overline{\KD(u)}
\begin{pmatrix}
G^{m,\spartial}(u) & -G^{m,\spartial}(u)\overline{Q(u)}G^{m,*}\\
0 & G^{m,*}
\end{pmatrix}\Bigr]^t.
$

$\bullet$ \emph{Coefficients.} For every vertex $\ubar{v}$ of $\Gs\ro$, $\ubar{f}$ of $\Gs^*$, and every white vertex $w$ of $\GDro$, 
using the notation of Figure~\ref{fig:corKDG},
\begin{align*}
\KD^\spartial(u)^{-1}_{\ubar{v},w}&\textstyle =e^{-i\frac{\alphafb+\betafb}{2}}(k')^{-1}\sc(\thetaf)^{\frac{1}{2}}
\Bigl([\dn(u_{\alphaf})\dn(u_{\betaf})]^{\frac{1}{2}} G^{m,\spartial}(u)_{v_2,\ubar{v}}+\\
&\hspace{7cm}-
[\dn(u_{\alphaf+2K})\dn(u_{\betaf+2K})]^{\frac{1}{2}}G^{m,\spartial}(u)_{v_1,\ubar{v}}\Bigr)\\
\KD^\spartial(u)^{-1}_{\ubar{f},w}&\textstyle =-ie^{-i\frac{\alphafb+\betafb}{2}}(k')^{-1}\sc(\thetaf^*)^{\frac{1}{2}}
\Bigl(\II_{\{w\notin\{w^\ell,w^r\in \R^\spartial\}\}}[\dn((u_{\betaf})^*)\dn((u_{\alphaf+2K})^*)]^{\frac{1}{2}}G^{m,*}_{f_2,\ubar{f}}+\\
&\textstyle\hspace{7cm}-[\dn((u_{\betaf-2K})^*)\dn((u_{\alphaf})^*)]^{\frac{1}{2}}G^{m,*}_{f_1,\ubar{f}}\Bigl)+\\
&\hspace{0.4cm}\textstyle+i \sum\limits_{(v^c,f^c)\in\R^{\spartial,\rs}}\frac{\nd(u_{\beta^\ell})}{\cd(u_{\alpha^\ell})}\bigl(\cd(u_{\beta^r})-\cd(u_{\alpha^\ell})\bigr)
\KD^{\spartial}(u)^{-1}_{v^c,w}\cdot G^{m,*}_{f^c,\ubar{f}},
\end{align*}
where $\II_{\{w\notin \{w^\ell,w^r\in \R^\spartial\}\}}$ is equal to 0 if $w$ is a boundary vertex of $\GDro$ and 1 otherwise.
In the sum over $(v^c,f^c)\in\R^{\spartial,\rs}$, we use the notation of 
Figure~\ref{fig:def_Kpartial} for vertices and rhombus vectors of the boundary rhombus pairs of $\R^{\spartial,\rs}$;
$\KD^{\spartial}(u)^{-1}_{v^c,w}$ in the formula for $\KD^\spartial(u)^{-1}_{\ubar{f},w}$ is given by the first formula with $\ubar{v}=v^c$.
\end{cor}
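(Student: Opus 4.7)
The plan is to invert the finite-case identity of Theorem~\ref{prop:KDtKD}, namely $\overline{\KD^\spartial(u)}^{\,t}\,\KD(u)=k'\,M$, where $M:=\begin{pmatrix}\Delta^{m,\spartial}(u) & Q(u)\\ 0 & \Delta^{m,*}\end{pmatrix}$. For $u\in\Re(\TT(k))''$ all three of $\KD^\spartial(u)$, $\Delta^{m,\spartial}(u)$ and $\Delta^{m,*}$ are invertible, as explained in the discussion preceding the corollary ($\Delta^{m,*}$ being the standard finite $Z$-massive Laplacian on $\Gs^*$). Being block upper-triangular with invertible diagonal blocks, $M$ has inverse
\[M^{-1}=\begin{pmatrix}G^{m,\spartial}(u) & -G^{m,\spartial}(u)\,Q(u)\,G^{m,*}\\ 0 & G^{m,*}\end{pmatrix}.\]

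Taking the conjugate-transpose of Theorem~\ref{prop:KDtKD} gives $\overline{\KD(u)}^{\,t}\,\KD^\spartial(u)=k'\,\overline{M}^{\,t}$; inverting (both factors on the left are square and invertible since $|W^\rs|=|B^\rs|$) yields
\[\KD^\spartial(u)^{-1}=(k')^{-1}\,\overline{M^{-1}}^{\,t}\,\overline{\KD(u)}^{\,t}=(k')^{-1}\bigl[\,\overline{\KD(u)}\,\overline{M^{-1}}\,\bigr]^{\,t}.\]
For $u\in\Re(\TT(k))''$ the defining conductances and masses of $\Delta^{m,\spartial}(u)$ and $\Delta^{m,*}$ are real, hence so are the Green functions $G^{m,\spartial}(u)$ and $G^{m,*}$; therefore $\overline{M^{-1}}$ coincides with the matrix $N$ displayed in the statement, only $Q(u)$ being affected by the complex conjugation. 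This proves the matrix form.

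The coefficient formulas are then obtained by expanding $[\overline{\KD(u)}\,N]^{\,t}$ entry by entry. Fixing a white vertex $w$ of $\GDro$, the row of $\overline{\KD(u)}$ indexed by $w$ has at most four non-zero entries, one for each black neighbor $v_1,v_2,f_1,f_2$ of $w$, with $f_2$ absent precisely when $w\in\{w^\ell,w^r\in\R^\spartial\}$ (accounting for the indicator in the statement); their explicit values are those computed in the example following the definition of $\KD(u)$ in Section~\ref{sec:def_Dirac}, combined with $\sc(\theta^*)=(k')^{-1}\cs(\theta)$ and $\dn(K-u)=k'\nd(u)$. Pairing these entries with the primal column of $N$ indexed by $\ubar{v}\in\Vs^\rs$ involves only the diagonal block $G^{m,\spartial}(u)$ and directly produces the formula for $\KD^\spartial(u)^{-1}_{\ubar{v},w}$; pairing with the dual column indexed by $\ubar{f}\in\Vs^*$ produces the two $G^{m,*}$ terms from the diagonal block, plus a correction coming from the off-diagonal block $-G^{m,\spartial}(u)\,\overline{Q(u)}\,G^{m,*}$.

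The key step for this correction is to recognise the already-established primal formula inside it. Since $\overline{Q(u)}$ is supported on pairs $(v^c,f^c)\in\R^{\spartial,\rs}$, the off-diagonal contribution equals $-\sum_{(v^c,f^c)}\overline{Q(u)}_{v^c,f^c}\,G^{m,*}_{f^c,\ubar{f}}\,\bigl(\sum_{v\in\Vs^\rs}\overline{\KD(u)_{w,v}}\,G^{m,\spartial}(u)_{v,v^c}\bigr)$, and by the primal formula already derived the inner sum equals $k'\,\KD^\spartial(u)^{-1}_{v^c,w}$; inserting the explicit (conjugated) value of $Q(u)_{v^c,f^c}$ then yields the boundary correction displayed in the statement. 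The main obstacle is pure bookkeeping: tracking the various factors of $k'$ and $i$, the sign flip under complex conjugation of the purely imaginary coefficient of $Q(u)_{v^c,f^c}$, and the repeated use of identities such as $\dn(K-u)=k'\nd(u)$ to convert between the primal parameter $\theta$ and the dual parameter $\theta^*$.
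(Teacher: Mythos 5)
Your proposal is correct and is essentially the paper's own (unwritten) argument: the corollary is stated as an immediate consequence of Theorem~\ref{prop:KDtKD}, and your route --- conjugate-transpose the finite-case identity, invert the block upper-triangular matrix, and expand $(k')^{-1}[\overline{\KD(u)}\,\overline{M^{-1}}]^{\,t}$ entry by entry, recognising the already-proved primal formula inside the off-diagonal block --- is exactly the intended one. One caveat: if you actually carry out the last step, the boundary correction is $-\sum_{(v^c,f^c)}\overline{Q(u)}_{v^c,f^c}\,\KD^\spartial(u)^{-1}_{v^c,w}\,G^{m,*}_{f^c,\ubar{f}}$ with $\overline{Q(u)}_{v^c,f^c}=+i\frac{\nd(u_{\beta^\ell})}{\cd(u_{\alpha^\ell})}\bigl(\cd(u_{\beta^r})-\cd(u_{\alpha^\ell})\bigr)$, so the prefactor comes out as $-i$ rather than the $+i$ displayed in the coefficient formula; your assertion that the computation ``yields the boundary correction displayed in the statement'' glosses over this sign, which is in fact an internal inconsistency between the corollary's matrix form (which your derivation correctly reproduces and verifies as a left inverse) and its stated coefficient form.
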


\begin{exm}\label{ex:KD_u_v_special}
Let us compute the explicit values of Corollaries~\ref{cor:KD_G} and~\ref{cor:KD_G_finite} in the case where
$u=\ubf:=\frac{\alphaf+\betaf}{2}+K$. This will be used again in Sections~\ref{sec:inv_KQ_KD} and~\ref{sec:GQ_GF_Zinv}.
In the finite case, we will also need to evaluate it at $u=\vbf:=\frac{\alphaf+\betaf}{2}-K$. We have,
\begin{align}
&\textstyle \vbf=\ubf-2K,\ \ \ubf_{\alphaf}=\frac{K+\thetaf}{2},\ \ \ubf_{\betaf}=\frac{K-\thetaf}{2},\ \,\vbf_{\alphaf}=-\ubf_{\betaf},\ \,\vbf_{\betaf}
=-\ubf_{\alphaf}\nonumber\\
&\ubf_{\alphaf}+\ubf_{\betaf}=K,\ \ \ubf_{\alphaf+2K}+\ubf_{\betaf+2K}=-K,\ \
\ubf_{\alphaf+2K}+\ubf_{\betaf}=0,\ \ \ubf_{\alphaf}+\ubf_{\betaf-2K}=2K \label{equ:relation_u_v}\\
&\vbf_{\alphaf}+\vbf_{\betaf}=-K,\ \ \vbf_{\alphaf+2K}+\vbf_{\betaf+2K}=-3K,\ \
\vbf_{\alphaf+2K}+\vbf_{\betaf}=-2K,\ \ \vbf_{\alphaf}+\vbf_{\betaf-2K}=0.\nonumber
\end{align}

Using that $\dn(u\pm K)=k'\nd(u)$, $\dn(u\pm 2K)=\dn(u)$, $\dn(-u)=\dn(u)$ and the above relations, we obtain, for $u\in\{\ubf,\vbf\}$,
\begin{align*}
&[\dn(u_{\alphaf})\dn(u_{\betaf})]^\frac{1}{2}=[\dn(u_{\alphaf+2K})\dn(u_{\betaf+2K})]^\frac{1}{2}=(k')^\frac{1}{2}\\
&[\dn((u_{\betaf})^*)\dn((u_{\alphaf+2K})^*)]^\frac{1}{2}=k' \nd(u_{\betaf})=k'
\begin{cases}
\nd\bigl(\frac{K-\thetaf}{2}\bigr)& \text{ if $u=\ubf$} \\
\nd\bigl(\frac{K+\thetaf}{2}\bigr)& \text{ if $u=\vbf$}
\end{cases}\\
&[\dn((u_{\betaf-2K})^*)\dn((u_{\alphaf})^*)]^\frac{1}{2}=k'\nd(u_{\alphaf})=k'
\begin{cases}
\nd\bigl(\frac{K+\thetaf}{2}\bigr)& \text{ if $u=\ubf$} \\
\nd\bigl(\frac{K-\thetaf}{2}\bigr)& \text{ if $u=\vbf$},
\end{cases}
\end{align*}
an thus, with the notation of Corollaries~\ref{cor:KD_G} and~\ref{cor:KD_G_finite}: 

$\bullet$ \emph{Infinite case.}
\begin{align*}\textstyle
\KD(\ubf)^{-1}_{\ubar{v},w}&=e^{-i\frac{\alphafb+\betafb}{2}}\sc(\thetaf)^{\frac{1}{2}}(k')^{-\frac{1}{2}}
\bigl(G^m_{\ubar{v},v_2}-G^m_{\ubar{v},v_1}\bigr)\\
\KD(\ubf)^{-1}_{\ubar{f},w}&\textstyle=-i e^{-i\frac{\alphafb+\betafb}{2}}\sc(\thetaf^*)^{\frac{1}{2}}
\Bigl(\nd\bigl(\frac{K-\thetaf}{2}\bigr)G^{m,*}_{\ubar{f},f_2}-\nd\bigl(\frac{K+\thetaf}{2}\bigr)G^{m,*}_{\ubar{f},f_1} \Bigr).
\end{align*}
$\bullet$ \emph{Finite case.}
\begin{align*}
\KD^\spartial(\ubf)^{-1}_{\ubar{v},w}&\textstyle =e^{-i\frac{\alphafb+\betafb}{2}}(k')^{-\frac{1}{2}}\sc(\thetaf)^{\frac{1}{2}}
\Bigl(G^{m,\spartial}(\ubf)_{v_2,\ubar{v}}-G^{m,\spartial}(\ubf)_{v_1,\ubar{v}}\Bigr)\\
\KD^\spartial(\ubf)^{-1}_{\ubar{f},w}&\textstyle =-ie^{-i\frac{\alphafb+\betafb}{2}}\sc(\thetaf^*)^{\frac{1}{2}}
\Bigl(\II_{\{w\notin\{w^\ell,w^r\in \R^\spartial\}\}}\nd\bigl(\frac{K-\thetaf}{2}\bigr)G^{m,*}_{f_2,\ubar{f}}-
\nd\bigl(\frac{K+\thetaf}{2}\bigr)G^{m,*}_{f_1,\ubar{f}}\Bigl)+\\
&\ \ \  \textstyle+i \sum\limits_{(v^c,f^c)\in\R^{\spartial,\rs}}\frac{\nd(\ubf_{\beta^\ell})}{\cd(\ubf_{\alpha^\ell})}
\bigl(\cd(\ubf_{\beta^r})-\cd(\ubf_{\alpha^\ell})\bigr) \KD^{\spartial}(\ubf)^{-1}_{v^c,w}\cdot G^{m,*}_{f^c,\ubar{f}},
\end{align*}
\begin{align*}
\KD^\spartial(\vbf)^{-1}_{\ubar{v},w}&\textstyle =e^{-i\frac{\alphafb+\betafb}{2}}(k')^{-\frac{1}{2}}\sc(\thetaf)^{\frac{1}{2}}
\Bigl(G^{m,\spartial}(\vbf)_{v_2,\ubar{v}}-G^{m,\spartial}(\vbf)_{v_1,\ubar{v}}\Bigr)\\
\KD^\spartial(\vbf)^{-1}_{\ubar{f},w}&\textstyle =-ie^{-i\frac{\alphafb+\betafb}{2}}\sc(\thetaf^*)^{\frac{1}{2}}
\Bigl(\II_{\{w\notin\{w^\ell,w^r\in \R^\spartial\}\}}\nd\bigl(\frac{K+\thetaf}{2}\bigr)G^{m,*}_{f_2,\ubar{f}}-
\nd\bigl(\frac{K-\thetaf}{2}\bigr)G^{m,*}_{f_1,\ubar{f}}\Bigl)+\\
&\ \ \  \textstyle+i \sum\limits_{(v^c,f^c)\in\R^{\spartial,\rs}}\frac{\nd(\vbf_{\beta^\ell})}{\cd(\vbf_{\alpha^\ell})}
\bigl(\cd(\vbf_{\beta^r})-\cd(\vbf_{\alpha^\ell})\bigr) \KD^{\spartial}(\vbf)^{-1}_{v^c,w}\cdot G^{m,*}_{f^c,\ubar{f}}.
\end{align*}

\end{exm}

\subsubsection{Dimer model on an infinite isoradial double graph $\GD$}

Suppose that the isoradial graph $\Gs$ is infinite; when it is moreover $\ZZ^2$-periodic, we consider the 
natural exhaustion $(\GD_n)_{n\geq 1}$ of $\GD$ by toroidal graphs, where $\GD_n=\GD/n\ZZ^2$. 
Let $\F$ denote the $\sigma$-field generated by cylinder sets of $\GD$.
Using arguments of~\cite{CKP,KOS,deTiliere:quadri},
we obtain an explicit, \emph{local} expression for a Gibbs measure of the dimer model on $\GD$ with 
weight function $c(u)$. Since the proof closely follows that done in the papers~\cite{deTiliere:quadri,BoutillierdeTiliere:iso_gen,BdtR2}, we do not
repeat it here. The key requirements are that the operator $\KD(u)^{-1}$ is local and unique in 
the $\ZZ^2$-periodic case.

\begin{thm}\label{thm:Gibbs_KD}
For every $u\in\Re(\TT(k))$,
there exists a unique probability measure on $(\M(\GD),\F)$, denoted $\PPdimerD$, such that the probability of occurrence of a subset of edges 
$\E=\{w_1x_1,\cdots,w_lx_l\}$ in a dimer configuration of $\GD$ is given by:
\begin{equation*}
\PPdimerD(w_1x_1,\dots,w_lx_l)=\Bigl(\prod_{j=1}^l \KD(u)_{w_j,x_j}\Bigr)\det(\KD(u)^{-1})_\E,
\end{equation*}
where $(\KD(u)^{-1})_\E$ is the sub-matrix of $\KD(u)^{-1}$ given by Corollary~\ref{cor:KD_G}, whose rows are indexed by $x_1,\dots,x_l$
and columns by $w_1,\dots,w_l$. The measure $\PPdimerD$ is a Gibbs measure.
Moreover, when the graph $\GD$ is $\ZZ^2$-periodic, the probability measure $\PPdimerD$ is obtained as weak limit of the 
Boltzmann measures on the toroidal exhaustion $(\GD_n)_{n\geq 1}$.
\end{thm}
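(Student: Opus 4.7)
Following the scheme developed in~\cite{deTiliere:quadri,BoutillierdeTiliere:iso_gen,BdtR2}, the plan is to first handle the $\ZZ^2$-periodic case by Fourier analysis on the torus, and then to use the \emph{locality} of $\KD(u)^{-1}$ from Corollary~\ref{cor:KD_G} (together with Kolmogorov extension) to pass to the general, non-periodic case. The two essential inputs are the local expression of $\KD(u)^{-1}$ in terms of $G^m$ and $G^{m,*}$ and, in the periodic setting, its uniqueness among inverses decreasing at infinity.

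\textbf{Periodic case.} On the torus $\GD_n=\GD/n\ZZ^2$, the Boltzmann measure with weights $c(u)$ is well defined and its cylinder probabilities are given by Kenyon's determinantal formula (Theorem~\ref{thm:Kenyon0}) applied to the inverse of the corresponding toroidal Kasteleyn matrix. Using the $\ZZ^2$-periodic structure of the weights, I would diagonalize this Kasteleyn operator by the discrete Fourier transform and write its inverse as a Riemann sum over characters of the Fourier dual of $\ZZ^2$. Letting $n\to\infty$, the Riemann sums converge to a double Fourier integral, which defines an inverse of $\KD(u)$ vanishing at infinity. By the uniqueness clause of Corollary~\ref{cor:KD_G}, this Fourier-integral inverse must coincide with the local expression $\KD(u)^{-1}$ built from $G^m$ and $G^{m,*}$. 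Pointwise convergence of cylinder probabilities then yields weak convergence of the toroidal Boltzmann measures to the probability measure $\PPdimerD$ defined by the stated formula, and the limit is automatically a Gibbs measure because the DLR property passes to weak limits of finite-volume Boltzmann measures. This step also proves the non-negativity of the cylinder probabilities since they arise as limits of genuine probabilities.

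\textbf{General case and Gibbs property.} For a general infinite isoradial $\Gs$, define $\mu(\E)$ by the right-hand side of the stated formula for every finite cylinder $\E$. The Kolmogorov consistency relations then reduce to a determinantal identity that follows from $\KD(u)\KD(u)^{-1}=\Id$ (Corollary~\ref{cor:KD_G}) by standard row/column expansions, as in~\cite{KOS,BoutillierdeTiliere:iso_gen}. Non-negativity of $\mu(\E)$ is the main technical obstacle; I would obtain it either by approximation through a sequence of isoradial $\ZZ^2$-periodic graphs agreeing with $\Gs$ on the support of $\E$ and invoking the periodic step together with the locality property, or through a direct analytic check using the explicit integral representation of $G^m$ and $G^{m,*}$. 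Kolmogorov's extension theorem then produces the probability measure $\PPdimerD$, with uniqueness among such measures being automatic. Finally, the DLR property is verified directly: conditioning on a dimer configuration in a finite annulus, the determinantal formula with the \emph{same} matrix $\KD(u)^{-1}$ decouples the interior from the exterior and shows that the conditional interior law is proportional to the product of the $c(u)$-weights, which is precisely the Gibbs condition.
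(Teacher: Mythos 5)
Your proposal follows exactly the scheme the paper itself invokes: the paper does not write out a proof but states that it "closely follows" the arguments of \cite{deTiliere:quadri,BoutillierdeTiliere:iso_gen,BdtR2}, with the two key inputs being the locality of $\KD(u)^{-1}$ and its uniqueness in the $\ZZ^2$-periodic case — precisely the ingredients your outline is built on (Fourier analysis on the toroidal exhaustion, identification of the limiting inverse with the local one via uniqueness, then Kolmogorov extension and locality for the general isoradial case). So your route is essentially the paper's, correctly identified and correctly structured.
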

\begin{rem}$\,$
\begin{itemize}
\item[$\bullet$] As mentioned in the introduction to Section~\ref{sec:Z_Dirac_Z_Lap}, this theorem is a directed version of the 
transfer impedance theorem of~\cite{BurtonPemantle}. A result in the same flavor, \emph{i.e.}, computing probabilities of 
pairs of directed spanning trees using the massive Green functions of massive~\emph{non-directed} random walks, is obtained 
by~\cite{Chhita} in the case where $\Gs=\ZZ^2$ with specific weights.
\item[$\bullet$] A version of this theorem in the finite case can be obtained using Remark~\ref{rem:Lap}.
\end{itemize}
\end{rem}

\begin{exm}\label{ex:Prob_GD} As an example of application we express the probability of single edges occurring in dimer configurations of $\GD$
chosen with respect to the measure $\PPdimerD$, using the $Z$-massive and dual $Z$-massive Green functions of~\cite{BdTR1}. 
We use the notation of Figure~\ref{fig:corKDG} and omit the subscript ``$\mathrm{f}$'' since there is no confusion
possible between the initial and final vertices.
Details of computations are given in Appendix~\ref{app:dimers_double}.
\begin{align*}
\PPdimerD(wv_2)&=\frac{\sc(\theta)}{k'}
\left[\dn(u_{\alpha})\dn(u_{\beta})G^m_{v_2,v_2}-k'G^m_{v_2,v_1}\right]\\
&=H(2u_{\alpha})-H(2u_{\beta}), \\
\PPdimerD(wf_2)&=\frac{\sc(\theta^*)}{k'}
\left[
\dn((u_{\beta})^*)\dn((u_{\alpha+2K})^*)G^{m,*}_{f_2,f_2}-k'G^{m,*}_{f_2,f_1}\right]\\
&=H(2(u_{\alpha+2K})^*)-H(2(u_{\beta})^*),
\end{align*}
where $H(u|k)=-\frac{ik'}{\pi}\mathrm{A}\bigl(\frac{iu}{2}\vert k'\bigr)$, and $\mathrm{A}(u|k)=\frac{1}{k'}\Bigl(\int_0^u \dc^2(v|k)\ud v + \frac{E-K}{K}u\Bigr)$,
see \cite[(9)]{BdTR1}.
\end{exm}


\section{Kasteleyn operator of the graph $\GQ$ and $Z^u$-Dirac operator} \label{sec:GQ_Z_Dirac}

Let $\Gs$ be an isoradial graph, infinite or finite. We consider the isoradial embedding of the bipartite graph 
$\GQ=(\VQ,\EQ)$ given in Section~\ref{sec:iso_GDGQ}, and 
the weight function $\nu^\Js$ of Equation~\eqref{eq:def_nu_Zinv} arising from the $Z$-invariant Ising model.
Let $\KQ$ be the associated complex, bipartite Kasteleyn matrix defined in Section~\ref{sec:Kast_complex}, with rows indexed by 
black vertices. In the finite case we moreover consider the diagonal matrix $\DQB$, resp. $\DQW$, whose rows/columns are indexed by 
black/white vertices of $\GQ$, and whose diagonal coefficients are:
\begin{align*}
\forall\,\bs\in\Bs,\quad \DQB_{\bs,\bs}&=
\begin{cases}
\sn(\theta^\spartial) & \text{ if $\bs\in\{\bs^\ell,\bs^r\in\R^\spartial\}$}\\
1 & \text{ otherwise},
\end{cases}\\
\forall\,\ws\in\Ws,\quad
\DQW_{\ws,\ws}&=
\begin{cases}
\sn(\theta^\spartial)^{-1} & \text{ if $\ws\in\{\ws^c\in \R^\spartial\}$}\\
1&\text{ otherwise}.
\end{cases}
\end{align*}
Let $\KQu$ be the modified, complex, bipartite Kasteleyn matrix defined by 
\begin{equation}\label{equ:KQ_KQu}
\KQu=\DQB \KQ \DQW,
\end{equation}
that is, $\KQu$ is obtained from $\KQ$ by 
multiplying the weight of the edges $\bs^\ell \ws^\ell$, $\bs^r \ws^r$ of all boundary rhombus pairs of $\R^\spartial$ 
by $\sn(\theta^\spartial)$.

Consider also the isoradial embedding of the double graph $\GD$ of Section~\ref{sec:iso_GDGQ}. For every $u\in\Re(\TT(k))$,
let $\KD(u)$ be the $Z^u$-Dirac operator and, in the finite case, for every $u\in\Re(\TT(k))'$,
let $\KD^\spartial(u)$ be the $Z^u$-Dirac operator with specific boundary conditions, as defined in 
Equation~\eqref{equ:defKD}.

The main result of this section, and actually the key result of this paper, is Theorem~\ref{thm:main} of 
Section~\ref{sec:KQ_KD_relation} proving, for every $u\in\Re(\TT(k))$,
explicit linear relations between the matrices $\KQ$ and $\KD(u)$ in the infinite case, and between $\KQu$ and $\KD^\spartial(u)$ in the finite case.

In Section~\ref{sec:det_KQ_KD} we restrict to the finite case;
using Theorem~\ref{thm:main} and a combinatorial argument, we prove in Theorem~\ref{thm:part_function} that the determinant of the Kasteleyn
matrix $\KQ$ is equal, up to an explicit multiplicative constant depending on $u$, to the determinant of the $Z^u$-Dirac operator $\KD^\spartial(u)$. 
Combining this with Theorem~\ref{thm:det} proves that the determinant of $\KQ$ is equal, up to an explicit constant, to the 
determinant of the $Z$-massive Laplacian $\Delta^{m,\spartial}(u)$. Interpreting these determinants as partition functions, 
Theorem~\ref{thm:part_function} proves that the squared partition function of the $Z$-invariant Ising model with $+$ boundary conditions
is equal, up to an explicit constant, to the partition function of weighted rooted directed spanning forests counted by
$\Delta^{m,\spartial}(u)$, where the dependence in $u$ is along the boundary only. This generalizes to the full $Z$-invariant case the results
of~\cite{deTiliere:mapping,deTiliere:partition} proved in the $Z$-invariant \emph{critical} case, and to the case of simply connected domains 
the result of~\cite{BdtR2} proved in the toroidal case. 
The proof we provide here has a slight combinatorial flavor but is mainly based on matrix relations, 
so quite different from~\cite{deTiliere:mapping,deTiliere:partition}. Note that the combinatorics argument of~\cite{deTiliere:partition} can be generalized to the full $Z$-invariant case and would 
give an alternative proof. Note also that the boundary trick of Chelkak and Smirnov~\cite{ChelkakSmirnov:ising} allows us 
to remove dual trees along the boundary which we could not do in~\cite{deTiliere:partition}.

Using Theorem~\ref{thm:main}, Corollaries~\ref{cor:KD_KQ} and~\ref{cor:KD_KQ_finite}
of Section~\ref{sec:inv_KQ_KD} prove linear relations between the inverse operator $(\KQ)^{-1}$ and the inverse $Z^u$-Dirac operator. 
Choosing specific values of $u$ allows us to express
the dimer measure of the graph $\GQ$ using the inverse $Z^u$-Dirac operator and the $Z$-massive Green functions,
see Corollaries~\ref{cor:gibbsGQ_GD} and~\ref{cor:BoltzmannGQ_GD}. This also provides an alternative direct way of finding a local 
formula for $(\KQ)^{-1}$~\cite{BdtR2}, explicitly relating it to the $Z$-massive Green functions.

\subsection{Relating the Kasteleyn operator $\KQ$ and the $Z^u$-Dirac operator}\label{sec:KQ_KD_relation}

The main result of this section is Theorem~\ref{thm:main}
proving an explicit relation between the matrices $\KQ$ and $\KD(u)$ in the infinite case, and between $\KQu$ and $\KD^\spartial(u)$
in the finite case. In order to state this theorem we need to introduce two additional matrices $S(u)$ and $T(u)$.
Both of them are ``rectangular'' with ``twice'' more rows than columns. 

The \emph{matrix $S(u)$} has rows indexed by black vertices of $\GQ$ and columns by white vertices of $\GD$, resp. of $\GDro$,
in the \emph{infinite} case, resp. \emph{finite} case. If $\GQ$ is \emph{infinite}, let 
$\bs$ be a black vertex; if $\GQ$ is \emph{finite},
let $\bs$ be a black vertex of an inner quadrangle. 
Let $w$ be the 
white vertex of $\GD$ corresponding to the quadrangle 
to which $\bs$ belongs. Then, the only non-zero coefficient of the row corresponding to $\bs$ is:
\begin{equation}\label{def:sbw}
s(u)_{\bs,w}=e^{-i\frac{\bar{\beta}}{2}}\cn(u_\beta)[\sn(\theta)\cn(\theta)\nd(u_\alpha)\nd(u_\beta)]^{\frac{1}{2}},
\end{equation}
with the following notation, see Figure~\ref{fig:def_ST} (left): $\ws$ is the white vertex of $\GQ$ such that 
$\bs\ws$ is parallel to an edge of $\Gs$; $e^{i\bar{\alpha}}$, $e^{i\bar{\beta}},\bar{\theta}$ are the rhombus vectors and half-angle
of $\GRR$ assigned to the edge $(\bs,\ws)$.

Suppose that $\GQ$ is \emph{finite}, and let $\bs$ be a black vertex of a boundary quadrangle of $\GQ$. 
Then $\bs\in\{\bs^r,\bs^\ell\}$ for some boundary rhombus pair of $\R^\spartial$,
see Figure~\ref{fig:def_ST} (right), and the non-zero coefficient of the row corresponding to $\bs$ is:
\begin{align}\label{def:sbl_boundary}
\begin{split}
s(u)_{\bs^r,w^r}&=e^{-i\frac{\bar{\beta}^r}{2}}\cn(u_{\beta^r})[\sn(\theta^\spartial)\cn(\theta^\spartial)\nd(u_{\alpha^r})\nd(u_{\beta^r})]^{\frac{1}{2}}\\
s(u)_{\bs^\ell,w^\ell}&=e^{-i\frac{\bar{\alpha}^\ell}{2}}\cn(u_{\alpha^\ell})[\sn(\theta^\spartial)\cn(\theta^\spartial)
\nd(u_{\alpha^\ell})\nd(u_{\beta^\ell})]^{\frac{1}{2}},
\end{split}
\end{align}
where for $\bs^r$ the definition is coherent with that of the non-boundary case, since the rhombus vectors of $\GRR$ assigned to the edge
$(\bs^r,\ws^r)$ are $e^{i\bar{\alpha}^r},\,e^{i\bar{\beta}^r}$. For $\bs^\ell$, the rhombus vectors assigned to the edge
$(\bs^\ell,\ws^\ell)$ are $e^{i\bar{\alpha}^\ell},\,e^{i\bar{\beta}^\ell}$ so that there is a change of definition; 
this comes from our choice of embedding of $\GQ$ which exchanges the bipartite coloring of vertices in the left rhombus of the pair.

\begin{figure}[ht]
\begin{minipage}[b]{0.5\linewidth}
\begin{center}
\begin{overpic}[width=5cm]{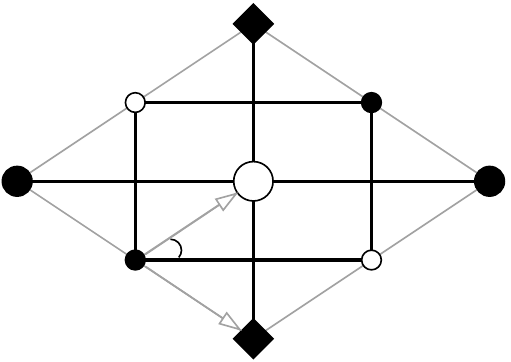}
  \put(22,15){\scriptsize $\bs$}
  \put(74,15){\scriptsize $\ws$}
  \put(31,6.5){\scriptsize $e^{i\bar{\alpha}}$}
  \put(31,26){\scriptsize $e^{i\bar{\beta}}$}
  \put(101,34){\scriptsize $v$}
  \put(47,-5){\scriptsize $f$}
  \put(51,28){\scriptsize $w$}
  \put(37,21){\scriptsize $\bar{\theta}$}
\end{overpic}
\end{center}
\end{minipage}
\begin{minipage}[b]{0.5\linewidth}
\begin{center}
\begin{overpic}[width=3.8cm]{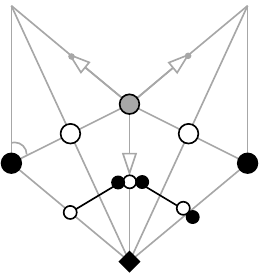}
\put(-7,40){\scriptsize $v^\ell$}
\put(44,68){\scriptsize $v^c$}
\put(95,39){\scriptsize $v^r$}
\put(14,53){\scriptsize $w^\ell$}
\put(72,52){\scriptsize $w^r$}
\put(22,16){\scriptsize $\ws^\ell$}
\put(36,24){\scriptsize $\bs^\ell$}
\put(43,26){\scriptsize $\ws^c$}
\put(52,23){\scriptsize $\bs^r$}
\put(66,28){\scriptsize $\ws^r$}
\put(44,-5){\scriptsize $f^c$}
\put(26,64){\scriptsize $e^{i\bar{\alpha}^\ell}$}
\put(34,49){\scriptsize $e^{i\bar{\beta}^\ell}$}
\put(48,49){\scriptsize $e^{i\bar{\alpha}^r}$}
\put(56,64){\scriptsize $e^{i\bar{\beta}^r}$}
\put(6,48.5){\scriptsize $\bar{\theta}^\spartial$}
\end{overpic}
\end{center}
\end{minipage}
\caption{Notation around a black vertex $\bs$ and a white vertex $\ws$ of $\GQ$: non-boundary case
(left), boundary case (right).}
\label{fig:def_ST}
\end{figure}

The \emph{matrix $T(u)$} has rows indexed by white vertices of $\GQ$ and columns 
by black vertices of $\GD$, resp. of $\GDro$, in the \emph{infinite} case, resp. \emph{finite} case. 
If $\GQ$ is \emph{infinite}, let $\ws$ be a white vertex; if $\GQ$ is \emph{finite}, let $\ws$ be a white vertex 
such that $\ws\neq\ws^c$ for all boundary rhombus pairs of $\R^\spartial$.
The vertex $\ws$ is on a rhombus edge $vf$ of the diamond
graph $\GR$, where $v$ is a vertex of $\Gs$ and $f$ a vertex of $\Gs^*$, see Figure~\ref{fig:def_ST} (left). 
Then the row of $T(u)$ corresponding to $\ws$ has two non-zero coefficients defined by,
\begin{align}
\label{def:twvf}
\begin{split}
&t(u)_{\ws,v}=e^{-i\frac{\bar{\beta}}{2}}\cn(u_\beta)\\
&t(u)_{\ws,f}=e^{-i\frac{\bar{\beta}+\pi}{2}}\sn((u_{\beta+2K})^*)=e^{-i\frac{\bar{\beta}+\pi}{2}}\cd(u_{\beta+2K}),
\end{split}
\end{align}
where $e^{i\bar{\beta}}$ is the rhombus vector $(\ws,v)$ and $e^{i(\bar{\beta}+\pi)}$ is the rhombus vector $(\ws,f)$.

Suppose that $\GQ$ is \emph{finite}, and let $\ws=\ws^c$ for some boundary rhombus pair of $\R^{\spartial}$, 
then $\ws^c$ is on a rhombus edge $v^c f^c$
of $\GR$, see Figure~\ref{fig:def_ST} (right). As long as the rhombus pair is not the root one, \emph{i.e.}, the one where $v^c=\rs$,
the row of $T(u)$ corresponding to $\ws^c$ has non-zero coefficients
given by,
\begin{align}\label{def:twv_boundary}
\begin{split}
t(u)_{\ws^c,v^c}&=
-i k' e^{-i\frac{\bar{\alpha}^r}{2}}\sn(\theta^\spartial)\nd(u_{\alpha^r})\cd(u_{\beta^r}),\\
t(u)_{\ws^c,f^c}&=e^{-i\frac{\bar{\beta}^\ell}{2}}\sn((u_{\beta^\ell})^*)=
e^{-i\frac{\bar{\beta}^\ell}{2}}\cd(u_{\beta^\ell}).
\end{split}
\end{align}
When the boundary rhombus pair is the root pair, then only the term $t(u)_{\ws^c,f^c}$ is defined.
Note that the definition is specific for $v^c$, and coherent with the non-boundary case for $f^c$ since the rhombus vector 
$(\ws^c,f^c)$ is $e^{i\bar{\beta}^\ell}$.

\begin{rem}
Rhombus angles are well defined mod $2\pi$ implying that half-angles are well defined mod $4\pi$,
but the coefficients of $S(u)$ and $T(u)$ are nevertheless well defined. Indeed, keeping
in mind that by definition $\frac{\bar{\beta}-\bar{\alpha}}{2}\in(\eps,\frac{\pi}{2}-\eps)$, one has 
for example, see Equation~\eqref{def:sbw}: 
\begin{align*}
e^{-i\frac{\bar{\beta}+2\pi}{2}}\cn(u_{\beta+4K})[\sn(\theta)\cn(\theta)\nd(u_{\alpha+4K})\nd(u_{\beta+4K})]^{\frac{1}{2}}&=\\
=\bigl(-e^{-i\frac{\bar{\beta}}{2}}\bigr)\bigl(-\cn(u_\beta)\bigr)&
  [\sn(\theta)\cn(\theta)\nd(u_\alpha)\nd(u_\beta)]^{\frac{1}{2}}=s_{\bs,w},
\end{align*}
using that $u_{\beta+4K}=u_{\beta}-2K$ and $\cn(u-2K)=-\cn(u)$, $\nd(u-2K)=\nd(u)$. 
Similar arguments hold for other coefficients.
\end{rem}

We are now ready to state our main result.

\begin{thm}\label{thm:main}$\,$
\begin{itemize}
\item[$\bullet$] \emph{Infinite case.} Let $u\in\Re(\TT(k))$, then the Kasteleyn matrix $\KQ$, the $Z^u$-Dirac operator
$\KD(u)$ and the matrices $S(u)$, $T(u)$ are related by the following identity:
\begin{equation}
\KQ\,T(u) = S(u)\, \KD(u). \label{equ:thm_main_3}
\end{equation}
\item[$\bullet$] \emph{Finite case.} Let $u\in\Re(\TT(k))'$, then the Kasteleyn matrix $\KQu$, the $Z^u$-Dirac operator $\KD^\spartial(u)$
and the matrices $S(u)$, $T(u)$ are related by the following identity:
\begin{equation}
\KQu\,T(u) = S(u)\, \KD^\spartial(u). \label{equ:thm_main_4}
\end{equation}
\end{itemize}
\end{thm}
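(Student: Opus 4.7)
\medskip
\noindent\textbf{Proof plan.}
The identity is a matrix equality, so the plan is to fix a black vertex $\bs$ of $\GQ$ and a black vertex $x$ of $\GD$ (respectively $\GDro$), and show that the $(\bs,x)$-entries of the two sides agree. The key structural simplification is that $S(u)$ has exactly one nonzero entry per row: if $\bs$ is a black vertex of $\GQ$ belonging to the (possibly degenerate) quadrangle associated with the white vertex $w$ of $\GD$, then $(S(u)\KD(u))_{\bs,x}=s(u)_{\bs,w}\KD(u)_{w,x}$. The sum on the left-hand side runs over the white neighbours of $\bs$ in $\GQ$, and $T(u)_{\ws,x}$ is nonzero only when $x$ belongs to the rhombus edge of $\GR$ carrying $\ws$. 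Consequently both sides vanish unless $x$ lies in a small neighbourhood of $\bs$, so only finitely many configurations need to be checked.

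I would organize the verification into the following cases. First, in the \emph{generic interior case}, $\bs$ is an interior black vertex of $\GQ$ and $x$ is either (a) a vertex $v\in\Vs$, (b) a vertex $f\in\Vs^*$, or (c) far from the quadrangle of $\bs$, in which case both sides are trivially $0$. In cases (a) and (b), the left-hand side has at most two nonzero terms (corresponding to the two white neighbours $\ws$ of $\bs$ in $\GQ$ whose associated rhombus edge is incident to $x$), while the right-hand side has exactly one nonzero term $s(u)_{\bs,w}\KD(u)_{w,x}$. The identity then becomes an explicit elliptic trigonometric relation: after substituting the definitions of $\KQ$, $\KD(u)$, $s(u)_{\bs,w}$, and $t(u)_{\ws,\cdot}$ from Equations~\eqref{equ:defKD_1}, \eqref{def:sbw}, \eqref{def:twvf} and the formulas for $\nu^\Js$ of~\eqref{eq:def_nu_Zinv}, one obtains a sum of products of $\sn,\cn,\dn$ at half-arguments $u_\alpha,u_\beta$ around a rhombus. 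The complex phase factors $e^{i\bar{\gamma}/2}$ also have to match; since adjacent rhombus vectors satisfy $\bar\beta=\bar\alpha'\,[2\pi]$, these phases combine coherently. The core computation reduces to the addition formulas for Jacobi elliptic functions (for instance the identities $\cn(u)\cn(v)-\sn(u)\sn(v)\dn(u+v)=\cn(u+v)$ and its companions), together with the duality relations $u^*=K-u$, $\dn(u+K)=k'\nd(u)$, $\cn(u+K)=-k'\sd(u)\sn(\cdot)$, etc., which have already been used in the proof of Theorem~\ref{prop:KDtKD}.

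For the \emph{finite case}, I would first observe that in the interior the computation is identical to the infinite case, because $\KQu$ agrees with $\KQ$ and $\KD^\spartial(u)$ agrees with $\KD(u)$ away from the boundary rhombus pairs of $\R^\spartial$. The interesting check takes place around a boundary rhombus pair of $\R^{\spartial,\rs}$. There three things happen simultaneously: the weight of the edges $\bs^\ell\ws^\ell$ and $\bs^r\ws^r$ in $\KQu$ is multiplied by $\sn(\theta^\spartial)$ (from $\DQB\KQ\DQW$); the weight of the edge $w^\ell v^c$ in $\KD^\spartial(u)$ is multiplied by $\cd(u_{\beta^r})/\cd(u_{\alpha^\ell})$; and the matrix $T(u)$ has the specific boundary entries~\eqref{def:twv_boundary} with the extra factor $-ik'\sn(\theta^\spartial)\nd(u_{\alpha^r})\cd(u_{\beta^r})$ in $t(u)_{\ws^c,v^c}$ and $\cd(u_{\beta^\ell})$ in $t(u)_{\ws^c,f^c}$, while $S(u)$ has the modified boundary entries~\eqref{def:sbl_boundary} with the change of embedding on the left rhombus. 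The plan is to verify case by case that these three modifications combine to restore the identity. The root pair is treated separately: there $v^c=\rs$ is absent from $\GDro$, and one checks that removing the corresponding row in $T(u)$ is consistent with the removal of the column for $\rs$ in $\KD^\spartial(u)$.

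The main obstacle will be the boundary case, since the very definitions of $\KQu$, $\KD^\spartial(u)$, $S(u)$, and $T(u)$ along $\R^\spartial$ are tuned precisely so that the identity extends across the boundary; unwinding this tuning requires juggling the change of bipartite colouring on the left half of each boundary rhombus pair, the elliptic identities relating $\cd(u_{\beta^r})-\cd(u_{\alpha^\ell})$ to $\sn$-products (which is where the $Q(u)$ term of Theorem~\ref{prop:KDtKD} ultimately came from), and the parallelism $\bar\beta^\ell=\bar\alpha^r+2\pi$. Once these boundary identities are verified in the two sub-cases $x=v^c$ and $x=f^c$ (the only ones involving modified entries), the theorem is established.
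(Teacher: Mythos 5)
Your plan is essentially the paper's own proof: an entry-by-entry verification exploiting that $S(u)$ has a single nonzero entry per row, with the generic case reduced to the Lawden addition formula $\cn u\,\cn v-\cn(u+v)=\sn u\,\sn v\,\dn(u+v)$ (and its $\nd$-multiplied companion for the $f_1$ column), phases matching via $\bar\beta=\bar\alpha'\,[2\pi]$, and the finite case reduced to a direct check on the boundary rhombus pairs with the root pair treated separately. The only slight imprecision is that in the boundary check the sub-cases $x=v^{r}$ (or $v^{\ell}$) and $x=f^c$ actually reduce to the full-plane computation, so the only genuinely new verification is the $x=v^c$ column — but this does not affect the validity of your approach.
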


\begin{proof}
In the whole of the proof, we omit the argument $u$ from the matrices.

\emph{Infinite case and finite non-boundary case}. Figure~\ref{fig:notations} below sets the notation. 
Let $\bs$ be a black vertex of $\GQ$, then $\bs$ belongs to a quadrangle corresponding to a vertex $w$ of $\GD$.
If $\GQ$ is finite, suppose further that the quadrangle is not a boundary one, or equivalently that $w$ is not a boundary vertex 
of $\GD$. Let $v_1,f_1,v_2,f_2$ be the four black vertices
of $\GD$ incident to $w$. 
Denote by $\ws_1,\ws_2,\ws_3$ the three white vertices of $\GQ$ incident to $\bs$, and 
let $e^{i\bar{\alpha}}$, $e^{i\bar{\beta}}$ be the rhombus vectors of the edge $(\bs,\ws_1)$.

\begin{figure}[ht]
\centering
\begin{overpic}[width=5cm]{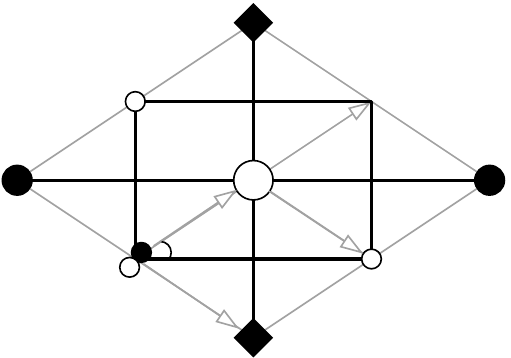}
  \put(27.5,24){\scriptsize $\bs$}
  \put(71,13.5){\scriptsize $\ws_1$}
  \put(22,12.5){\scriptsize $\ws_3$}
  \put(22,54.5){\scriptsize $\ws_2$}
  \put(32,5.5){\scriptsize $e^{i\bar{\alpha}}$}
  \put(32,27){\scriptsize $e^{i\bar{\beta}}$}
  \put(-6,34){\scriptsize $v_1$}
  \put(100,34){\scriptsize $v_2$}
  \put(47,-5){\scriptsize $f_1$}
  \put(47,72){\scriptsize $f_2$}
  \put(51,28){\scriptsize $w$}
  \put(35,20.5){\scriptsize $\bar{\theta}$}
\end{overpic}
\caption{Notation around a black vertex $\bs$ of $\GQ$.}
\label{fig:notations}
\end{figure}

The coefficient $[\KQu\,T]_{\bs,x}$ of the LHS of~\eqref{equ:thm_main_3} and~\eqref{equ:thm_main_4}  is 
non-zero only when $x\in\{v_1,v_2,f_1,f_2\}$, and
\begin{align*}
[\KQu\,T]_{\bs,v_1}=&\KQ_{\bs,\ws_2}t_{\ws_2,v_1}+ \KQ_{\bs,\ws_3}t_{\ws_3,v_1}\\
[\KQu\,T]_{\bs,v_2}=&\KQ_{\bs,\ws_1}t_{\ws_1,v_2}\\
[\KQu\,T]_{\bs,f_1}=&\KQ_{\bs,\ws_1}t_{\ws_1,f_1}+ \KQ_{\bs,\ws_3}t_{\ws_3,f_1}\\
[\KQu\,T]_{\bs,f_2}=&\KQ_{\bs,\ws_2}t_{\ws_2,f_2}.
\end{align*}
The coefficients of $T$ involved are, see definition~\eqref{def:twvf}, 
\begin{align*}
&t_{\ws_2,v_1}=-ie^{-i\frac{\bar{\beta}}{2}}\cn(u_{\beta+2K}),\,
t_{\ws_3,v_1}=-ie^{-i\frac{\bar{\alpha}}{2}}\cn(u_{\alpha+2K}),\,
t_{\ws_1,v_2}=e^{-i\frac{\bar{\beta}}{2}}\cn(u_{\beta})\\
&t_{\ws_1,f_1}=ie^{-i\frac{\bar{\beta}}{2}}\sn((u_{\beta-2K})^*),\,
t_{\ws_3,f_1}=e^{-i\frac{\bar{\alpha}}{2}}\sn((u_{\alpha})^*),\,
t_{\ws_2,f_2}=e^{-i\frac{\bar{\beta}}{2}}\sn((u_{\beta})^*).
\end{align*}
Replacing coefficients of $\KQ$ yields for the LHS:
\begin{align*}
[\KQu\,T]_{\bs,v_1}=&e^{i\frac{\bar{\beta}+\bar{\alpha}+\pi}{2}}\cn(\theta) t_{\ws_2,v_1}+e^{i\frac{\bar{\alpha}-\pi+\bar{\alpha}}{2}}t_{\ws_3,v_1}=
e^{i\frac{\bar{\alpha}}{2}}[\cn(\theta)\cn(u_{\beta+2K})-\cn(u_{\alpha+2K})]\\
[\KQu\,T]_{\bs,v_2}=&e^{i\frac{\bar{\beta}+\bar{\alpha}}{2}}\sn(\theta) t_{\ws_1,v_2}=
e^{i\frac{\bar{\alpha}}{2}}\sn(\theta)\cn(u_{\beta})\\
[\KQu\,T]_{\bs,f_1}=&e^{i\frac{\bar{\beta}+\bar{\alpha}}{2}}\sn(\theta)t_{\ws_1,f_1}+ e^{i\frac{\bar{\alpha}-\pi+\bar{\alpha}}{2}}t_{\ws_3,f_1}
=ie^{i\frac{\bar{\alpha}}{2}}[\sn(\theta)\sn((u_{\beta-2K})^*)-\sn((u_{\alpha})^*)]\\
[\KQu\,T]_{\bs,f_2}=&e^{i\frac{\bar{\beta}+\bar{\alpha}+\pi}{2}}\cn(\theta) t_{\ws_2,f_2}=
ie^{i\frac{\bar{\alpha}}{2}}\cn(\theta)\sn((u_{\beta})^*).
\end{align*}
Note that in the \emph{finite case}, we have $\KD^\spartial=\KD$ for coefficients involved.
The coefficient $[S\, \KD]_{\bs,x}$ of the RHS of~\eqref{equ:thm_main_3} and \eqref{equ:thm_main_4} is also non-zero when $x\in\{v_1,v_2,f_1,f_2\}$
and we have $[S\, \KD]_{\bs,x}=s_{\bs,w}\KD_{w,x}$. 
To compute these terms, we
first express the part $\cn(u_\beta)[\nd(u_\alpha)\nd(u_\beta)]^{\frac{1}{2}}:=(\diamond)$
in $s_{\bs,w}$, see~\eqref{def:sbw}, using the angles and parameters involved in the four coefficients of $\KD$.
\begin{align*}
(\diamond)&=-\sn(u_{\beta+2K})[\dn(u_{\alpha+2K})\nd(u_{\beta+2K})]^{\frac{1}{2}}\\
&=\cn(u_\beta)[\nd(u_\alpha)\nd(u_\beta)]^{\frac{1}{2}}\\
&=(k')^{-\frac{1}{2}}\cn((u_{\beta-2K})^*)[\nd((u_{\beta-2K})^*)\dn((u_\alpha)^*)]^{\frac{1}{2}}\\
&=(k')^{\frac{1}{2}}\sn((u_{\beta})^*)[\nd((u_\beta)^*)\nd((u_{\alpha+2K})^*)]^{\frac{1}{2}},
\end{align*}
where in the first line we used that $u_{\alpha}=K+u_{\alpha+2K}$, in the third that
$u_\beta=-(u_{\beta-2K})^*$, $u_{\alpha}=K-(u_\alpha)^*$, and in the fourth that
$u_\beta=K-(u_{\beta})^*$, $u_{\alpha}=2K-(u_{\alpha+2K})^{*}$. 
Replacing coefficients of $\KD$ by their definition gives for the RHS 
of~\eqref{equ:thm_main_3},
\begin{align*}
[S\, \KD]_{\bs,v_1}&=s_{\bs,w}\KD_{w,v_1}=e^{i\frac{\bar{\alpha}}{2}} \sn(\theta)\sn(u_{\beta+2K})\dn(u_{\alpha+2K})\\
[S\, \KD]_{\bs,v_2}&=s_{\bs,w}\KD_{w,v_2}=e^{i\frac{\bar{\alpha}}{2}} \sn(\theta)\cn(u_\beta)\\
[S\, \KD]_{\bs,f_1}&=s_{\bs,w}\KD_{w,f_1}=-i(k')^{-1}e^{i\frac{\bar{\alpha}}{2}}\cn(\theta)\cn((u_{\beta-2K})^*)\dn((u_\alpha)^*)\\
[S\, \KD]_{\bs,f_2}&=s_{\bs,w}\KD_{w,f_2}=ie^{i\frac{\bar{\alpha}}{2}}\cn(\theta)\sn((u_{\beta})^*).
\end{align*}
The equality $[\KQ\,T]_{\bs,x}=[S\, \KD]_{\bs,x}$ is then straightforward when $x=v_2$ and $x=f_2$.
When $x=v_1$, this is a consequence of the identity, see~\cite[chap.2, ex.32 (i)]{Lawden},
\begin{equation}\label{equ:Lawden1}
\cn u\cn v-\cn(u+v)=\sn u\sn v\dn(u+v),
\end{equation}
evaluated at $u=\theta,\,v=u_{\beta+2K},\,u+v=u_{\alpha+2K}$.

We are left with proving the case $x=f_1$. Multiplying the identity~\eqref{equ:Lawden1} by $\nd(u-v)$, using that
$\nd(u-v)=(k')^{-1}\dn(K-(u-v))$, $\cd(u-v)=\sn(K-(u-v))$ we obtain,
\[
(k')^{-1}\cn u\cn v\dn(K-(u-v))=\sn(K-(u-v))-\sn u \sn v.
\]
The proof is concluded by evaluating the above at $u=\theta$, $v=(u_{\beta-2K})^*$, $K-(u-v)=(u_{\alpha})^*$.

\emph{Finite boundary case.} The notation used are those of Figure~\ref{fig:def_ST} (right).
Let $\bs\in\{\bs^\ell,\bs^r\}$ be a black vertex of some boundary rhombus pair of $\R^{\spartial}$.
Suppose first that this rhombus pair is not the root one. Then,
if $\bs=\bs^r$, resp. $\bs=\bs^\ell$, the coefficients of the LHS and RHS of~\eqref{equ:thm_main_4} 
are non-zero when $x\in\{v^c,v^r,f^c\}$, resp. $x\in\{v^c,v^\ell,f^c\}$. We need to prove:
\[
\begin{array}{llcl}
&\KQu_{\bs^r,\ws^c}t_{\ws^c,v^c}=s_{\bs^r,w^r}\KD^\spartial_{w^r,v^c}&&\KQu_{\bs^\ell,\ws^c}t_{\ws^c,v^c}=s_{\bs^\ell,w^\ell}\KD^\spartial_{w^\ell,v^c}\\
&\KQu_{\bs^r,\ws^r}t_{\ws^r,v^r}=s_{\bs^r,w^r}\KD^\spartial_{w^r,v^r}& \text{ resp. }\,\,&\KQu_{\bs^\ell,\ws^\ell}t_{\ws^\ell,v^\ell}=s_{\bs^\ell,w^\ell}\KD^\spartial_{w^\ell,v^\ell}\\
&\KQu_{\bs^r,\ws^r}t_{\ws^r,f^c}+ \KQu_{\bs^r,\ws^c}t_{\ws^c,f^c}=s_{\bs^r,w^r}\KD^\spartial_{w^r,f^c}&&
\KQu_{\bs^\ell,\ws^\ell}t_{\ws^\ell,f^c}+ \KQu_{\bs^\ell,\ws^c}t_{\ws^c,f^c}=s_{\bs^\ell,w^\ell}\KD^\spartial_{w^\ell,f^c}.
\end{array}
\]
In both cases, the last two equalities are as in the full plane with the appropriate change of notation.
We thus need to check the first equality of each case. Returning to the definition of $\KQu$ and boundary values of the matrices $S$ and $T$ defined in
Equations~\eqref{def:sbl_boundary} and\eqref{def:twv_boundary} we have,
\begin{align*}
\frac{\KQu_{\bs^r,\ws^c}t_{\ws^c,v^c}}{s_{\bs^r,w^r}}&=
\frac{e^{i\frac{\bar{\alpha}^r-\pi+\bar{\alpha}^r}{2}}\times (-ik'e^{-i\frac{\bar{\alpha}^r}{2}}\sn(\theta^\spartial)\nd(u_{\alpha^r})\cd(u_{\beta^r}))}
{e^{-i\frac{\bar{\beta}^r}{2}}\cn(u_{\beta^r})[\sn(\theta^\spartial)\cn(\theta^\spartial)\nd(u_{\alpha^r})\nd(u_{\beta^r})]^{\frac{1}{2}}}
\\
&=-k'e^{i\frac{\bar{\alpha}^r+\bar{\beta}^r}{2}}[\sc(\theta^\spartial)\nd(u_{\alpha^r})\nd(u_{\beta^r})]^{\frac{1}{2}}\\
&=e^{i\frac{\bar{\alpha}^r+\pi+\bar{\beta}^r+\pi}{2}}[\sc(\theta^\spartial)\dn(u_{\alpha^r+2K})\dn(u_{\beta^r+2K})]^{\frac{1}{2}}=
\KD_{w^r,v^c}=\KD^\spartial_{w^r,v^c}.
\end{align*}
In a similar way,
\begin{align*}
\frac{\KQu_{\bs^\ell,\ws^c}t_{\ws^c,v^c}}{s_{\bs^\ell,w^\ell}}&=
\frac{e^{i\frac{\bar{\beta}^\ell+\bar{\beta}^\ell+\pi}{2}}\times (-ik'e^{-i\frac{\bar{\alpha}^r}{2}}\sn(\theta^\spartial)\nd(u_{\alpha^r})\cd(u_{\beta^r}))}
{e^{-i\frac{\bar{\alpha}^\ell}{2}}\cn(u_{\alpha^\ell})[\sn(\theta^\spartial)\cn(\theta^\spartial)\nd(u_{\alpha^\ell})\nd(u_{\beta^\ell})]^{\frac{1}{2}}}\\
&=k'e^{i\frac{\bar{\alpha}^\ell+\bar{\beta}^\ell+(\bar{\beta}^\ell-\bar{\alpha}^r)}{2}}
[\sc(\theta^\spartial)\dn(u_{\alpha^\ell})\dn(u_{\beta^\ell})]^{\frac{1}{2}}
\nd(u_{\alpha^r})\cd(u_{\beta^r})\nc(u_{\alpha^\ell}),\\
&=-k'e^{i\frac{\bar{\alpha}^\ell+\bar{\beta}^\ell}{2}}
[\sc(\theta^\spartial)\nd(u_{\alpha^\ell})\nd(u_{\beta^\ell})]^{\frac{1}{2}}\cd(u_{\beta^r})\dc(u_{\alpha^\ell}), 
\text{ since $\bar{\beta}^\ell-\bar{\alpha}^r=2\pi$}\\
&=e^{i\frac{\bar{\alpha}^\ell+\pi+\bar{\beta}^\ell+\pi}{2}}
[\sc(\theta^\spartial)\dn(u_{\alpha_{\ell+2K}})\nd(u_{\beta_{\ell+2K}})]^{\frac{1}{2}}\cd(u_{\beta^r})\dc(u_{\alpha^\ell})\\
&=\KD_{w^\ell,v^c} \cd(u_{\beta^r})\dc(u_{\alpha^\ell})=\KD^\spartial_{w^\ell,v^c}.
\end{align*}
Note that we always have $s_{\bs^r,w^r},s_{\bs^\ell,w^\ell}\neq 0$ on $\Re(\TT(k))'$, see~\eqref{equ:Re_Tk'},
so that it makes sense to divide by these quantities.

The last case we need to consider is if $\bs$ belongs to the root boundary rhombus pair of $\R^{\spartial}$.
But then, of the three equations above, the first one is absent since we have $v^c=\rs$, so we are left with the last two which are as in 
the full plane case. This ends the proof of~\eqref{equ:thm_main_4} and thus finishes the proof of Theorem~\ref{thm:main}.
\end{proof}


\subsection{Determinants of the Kasteleyn matrix $\KQ$ and of the $Z^u$-Dirac operator}\label{sec:det_KQ_KD}

We restrict to the finite case. Theorem~\ref{thm:part_function} proves that the determinants of the matrices $\KQ$ and $\KD^\spartial(u)$
are equal up to an explicit multiplicative constant depending on $u$. By~\cite{Dubedat}, see also Equation~\eqref{equ:PF_Ising_2}, the determinant 
of $\KQ$ is equal, up to a constant, to the squared partition function of the Ising model with + boundary conditions. Using Corollary~\ref{cor:det},
the determinant of $\KD^\spartial(u)$ is equal, up to an explicit constant, to the determinant of $\Delta^{m,\spartial}(u)$ which counts
weighted rooted directed spanning forests. Theorem~\ref{thm:part_function} thus implies identities between partition functions
made explicit in Corollary~\ref{cor:partition_function}.

In the whole of this section, we restrict the domain of $u$ to
\begin{align}
\Re(\TT(k))''&=\Re(\TT(k))\setminus\{\alpha^\ell[2K],\ \beta^{\ell}[2K]:e^{i\bar{\alpha}^\ell},\,e^{i\bar{\beta}^\ell}\in\R^\spartial\}
\label{equ:Re_Tk_2}\\
&=\Re(\TT(k))\setminus\{\alpha^r[2K],\ \beta^r[2K]:\,e^{i\bar{\alpha}^r},\,e^{i\bar{\beta}^r}\in\R^\spartial\}.\nonumber 
\end{align}
Note that by Section~\ref{sec:train_tracks} on train-tracks, this amounts to removing all the parallel directions of the train-tracks of the isoradial graph
$\Gs$.

\subsubsection{Results}

Every edge $e=(w,x)$ of $\GDro$ is assigned two rhombus vectors $e^{i\bar{\alpha}_e}$, $e^{i\bar{\beta}_e}$
of the diamond graph $\GRR$ and a half-angle $\bar{\theta}_e$. We partition the set of white vertices $W^\rs=W$ of $\GDro$ as $W^\spartial\cup W^\scirc$, where 
$W^\spartial$ consists of boundary vertices of $W$ and $W^\scirc$ of inner ones.
Every white vertex $w$ of $\GDro$ is in the
center of a rhombus of the diamond graph $\GR$; we let $\bar{\theta}_w$ be the half-angle of this rhombus at one of the two 
\emph{primal} vertices. 

In the statement below, a specific role is played by the boundary rhombus pair of $\R^\spartial$ containing the root $\rs$.
We use the notation
of Figure~\ref{fig:def_ST} (right) and add a superscript $\rs$ to specify vertices/angles of this root pair.

\begin{thm}\label{thm:part_function}
Let $\Ms_1$ be a dimer configuration of $\GDro$. Then, for every $u\in\Re(\TT(k))''$, we have
\begin{align*}
|\det\KQ|=\tilde{C}(u)\cdot|\det \KD^\spartial(u)|,
\end{align*}
where $\tilde{C}(u)$ is equal to:
\begin{equation*}
\textstyle
(k')^{\frac{|\Vs^*|}{2}}\Bigl(\prod\limits_{w\in W^\spartial}\sn(\theta_w)^{-1} \Bigr)
\Bigl(\prod\limits_{w\in W}[\cn(\theta_w)\sn(\theta_w)]^{\frac{1}{2}}\Bigr) 
\Bigl(\prod\limits_{wx\in\Ms_1:w\in W}|\sc(u_{\alpha_e})\sc(u_{\beta_e})|^{\frac{1}{2}}\Bigr)
\sn(\theta^{\spartial,\rs})\left|\frac{\cd(u_{\beta^{r,\rs}})}{\sn(u_{\alpha^{r,\rs}})}\right|.
\end{equation*}
\end{thm}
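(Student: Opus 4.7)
The plan is to derive the determinant identity as a consequence of Theorem~\ref{thm:main} (finite case), together with the definition~\eqref{equ:KQ_KQu} of $\KQu$ in terms of $\KQ$. For $u\in\Re(\TT(k))''$ both $\KQ$ and $\KD^\spartial(u)$ are non-singular: for $\KQ$ this follows from~\eqref{equ:PF_Ising_2} and the positivity of $\Zising^+(\Gs,\Js)$; for $\KD^\spartial(u)$ it follows from Corollary~\ref{cor:det} together with the restriction on $u$ imposed by~\eqref{equ:Re_Tk_2}, which ensures that the elliptic factors $\dn,\nd,\cd,\sn,\cn$ appearing along the boundary do not vanish. The matrix identity $\KQu\,T(u)=S(u)\,\KD^\spartial(u)$ then yields, via Cauchy--Binet applied to the appropriate maximal minors,
$$|\det\KQu|\cdot|\det T(u)|=|\det S(u)|\cdot|\det\KD^\spartial(u)|,$$
where the determinants of the (possibly rectangular) matrices $S(u)$ and $T(u)$ are interpreted as the maximal minors dictated by the combinatorial structure of $\GQ$ and $\GDro$. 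Combined with $\det\KQu=\det\DQB\cdot\det\KQ\cdot\det\DQW$, this reduces the theorem to the explicit evaluation of $|\det S(u)|$, $|\det T(u)|$, $|\det\DQB|$ and $|\det\DQW|$.

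The two diagonal determinants are immediate products of factors $\sn(\theta^\spartial)^{\pm 1}$ over boundary rhombus pairs of $\R^\spartial$: once one tracks which cancel between $\DQB$ and $\DQW$ and which survive, one obtains the $\prod_{w\in W^\spartial}\sn(\theta_w)^{-1}$ contribution to $\tilde{C}(u)$. Next, the matrix $S(u)$ has exactly one non-zero entry per row by~\eqref{def:sbw}--\eqref{def:sbl_boundary}, and a direct inspection of the assignment $\bs\mapsto w_\bs$ on the appropriately chosen sub-matrix shows it to be a bijection between $\BQ$ and $W$. Its determinant is thus, up to a sign, a product of the non-zero weights, contributing the factor $\prod_{w\in W}[\cn(\theta_w)\sn(\theta_w)]^{1/2}$ together with auxiliary $\cn,\nd$ factors that will recombine with those arising from $T(u)$.

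The main work lies in the computation of $|\det T(u)|$. Each row of $T(u)$ carries up to two non-zero entries, one at a primal and one at a dual vertex, see~\eqref{def:twvf}--\eqref{def:twv_boundary}, so a naive expansion a priori gives a sum over dimer-type configurations of $\GDro$. The plan is to collapse this sum to a single monomial via a gauge transformation on the bipartite weighted-adjacency structure of $T(u)$, in the spirit of Lemma~\ref{lem:K_Kg} and Appendix~\ref{app:gauge}. The relevant gauge ratios, built from $[\sc(u_\alpha)\sc(u_\beta)]^{1/2}$, must satisfy the alternating-cycle identity around every inner quadrangle of $\GDro$; this is a direct consequence of the same elliptic identities used throughout the proof of Theorem~\ref{prop:KDtKD}, namely $\sc(K-u)=(k')^{-1}\cs(u)$, $\cd(u\pm 2K)=-\cd(u)$ and $\dn(u\pm K)=k'\nd(u)$. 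Once gauge-equivalent to a signed monomial matrix indexed by an arbitrary reference dimer configuration $\Ms_1$ of $\GDro$, $|\det T(u)|$ collapses to that single monomial, producing the $\prod_{wx\in\Ms_1:w\in W}|\sc(u_{\alpha_e})\sc(u_{\beta_e})|^{1/2}$ factor of $\tilde{C}(u)$.

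The main obstacle is the boundary anomaly at the root rhombus pair: by the convention stated immediately after~\eqref{def:twv_boundary}, the row of $T(u)$ indexed by $\ws^{c,\rs}$ carries only the single non-zero entry $t(u)_{\ws^{c,\rs},f^{c,\rs}}$ instead of two. The alternating-cycle gauge therefore fails to balance at this vertex, and the resulting deficit generates the asymmetric factor $\sn(\theta^{\spartial,\rs})\bigl|\cd(u_{\beta^{r,\rs}})/\sn(u_{\alpha^{r,\rs}})\bigr|$. Assembling all contributions and redistributing the residual powers of $k'$ via the identities above yields the claimed expression for $\tilde{C}(u)$, in particular the $(k')^{|\Vs^*|/2}$ prefactor, which encodes the conversion between the primal-side $\dn$ weights carried by $S(u)$ and the dual-side $\nd$ weights naturally appearing on the $\Gs^*$ component of $\KD^\spartial(u)$.
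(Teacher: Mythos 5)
Your overall strategy --- start from $\KQu\,T(u)=S(u)\,\KD^\spartial(u)$, pass to determinants, and evaluate the residual factors --- points in the right direction, but the step that turns the rectangular matrix identity into a determinant identity is where the argument breaks down. The equation $\KQu\,T(u)=S(u)\,\KD^\spartial(u)$ is an identity of $|\Bs|\times|B^\rs|$ matrices with $|\Bs|$ strictly larger than $|B^\rs|$; Cauchy--Binet applied to any square minor produces a \emph{sum} of products of minors, not the single product $|\det\KQu|\cdot|\det T(u)|=|\det S(u)|\cdot|\det\KD^\spartial(u)|$, and no choice of ``maximal minors'' of $S(u)$ and $T(u)$ makes that identity come out: the sizes of $\KQu$ and $\KD^\spartial(u)$ differ by $|\Ws_2|=|W^\scirc|$ columns, and those extra columns must be absorbed by an additional determinant. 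The paper does this by splitting $\Bs=\Bs_1\cup\Bs_2$ and $\Ws=\Ws_1\cup\Ws_2$, writing Theorem~\ref{thm:main} in block form, and extracting (Lemma~\ref{lem:KQ_KD_det}) the identity $\det\KQu\cdot\det T_1(u)=\det S_1^\spartial(u)\det S_1^\scirc(u)\det S_2(u)\det R(u)\det\KD^\spartial(u)$, where $R(u)=S_2(u)^{-1}\KQu_{22}-S_1^\scirc(u)^{-1}\KQu_{1^\scirc 2}$ is a Schur-complement-type factor that is entirely absent from your argument. Relatedly, your claim that $\bs\mapsto w_\bs$ is a bijection between $\BQ$ and $W$ is false: each inner quadrangle of $\GQ$ carries two black vertices but corresponds to a single white vertex of $\GD$, which is exactly why the partition $\Bs_1\cup\Bs_2$ and the factor $\det R(u)$ are needed.

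The second gap is the mechanism you propose for evaluating $\det T(u)$. A gauge transformation rescales every term of the permutation expansion of a determinant by the same overall product of gauge factors; it cannot ``collapse'' a genuine sum over configurations to a single monomial. In the paper the monomiality of $\det T_1(u)$ and $\det R(u)$ is a purely combinatorial fact: the partition of $\Bs,\Ws$ is chosen to arise from a specific perfect matching $\Ms_1$ of $\GDro$ (built from a spanning tree of $\Gs^*$ rooted at $f^{c,\rs}$), and for that choice the bipartite graphs $\Gs(T_1)$ and $\Gs(R)$ are a spanning tree, resp.\ a spanning union of trees, hence admit at most one perfect matching, so each determinant is a single signed monomial. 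This choice of $\Ms_1$ and the tree argument are the heart of the proof and are missing from your proposal; without them the expansion of your ``$\det T(u)$'' is an uncontrolled sum. Your identification of the root-pair anomaly as the source of the factor $\sn(\theta^{\spartial,\rs})\bigl|\cd(u_{\beta^{r,\rs}})/\sn(u_{\alpha^{r,\rs}})\bigr|$ is correct in spirit, but it only becomes a computation once the preceding structure is in place.
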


\begin{rem}\label{rem:part_function}$\,$
\begin{itemize}
 \item[$\bullet$] Since for $u\in\Re(\TT(k))''$, $\sc(u_{\alpha_e})\sc(u_{\beta_e})\neq 0$, and since $\det\KQ\neq 0$, we have 
 that the matrix $\KD^\spartial(u)$ is invertible for these values of $u$.
 \item[$\bullet$] The quantity $\prod_{wx\in\Ms_1:w\in W}|\sc(u_{\alpha_e})\sc(u_{\beta_e})|^{\frac{1}{2}}$ is independent 
 of the choice of perfect matching $\Ms_1$. Similarly to Remark~\ref{rem:partition_funciont_KD_Lap}, this consists in showing that 
 the alternating product around every inner face of $\GDro$ is equal to 1; the proof is similar to that of 
 Lemma~\ref{lem:K_Kg}. 
 \item A surprising fact is that the LHS is independent of $u$ while the RHS does not seem to be, also the RHS seems to depend on 
 the choice of root vertex $\rs$. It is not straightforward to see why this indeed not the case. An alternative way of proving
 this theorem is to extend to the full $Z$-invariant case the combinatorial argument of~\cite{deTiliere:partition}; one then better sees the 
 parameter $u$ and the choice of root vertex $\rs$ appearing.
\end{itemize}
\end{rem}

Combining Theorems~\ref{thm:part_function} and~\ref{thm:det}, we deduce that the squared partition function of the $Z$-invariant Ising model
with + boundary conditions is equal, up to an explicit constant, to the determinant of the massive Laplacian $\Delta^{m,\spartial}(u)$,
\emph{i.e.}, to the partition function of rooted directed spanning forests. This generalizes to the full $Z$-invariant case 
the result of~\cite{deTiliere:mapping,deTiliere:partition} and to simply connected domains the result of~\cite{BdtR2} proved 
for the characteristic polynomial in the toroidal case in an abstract way.
\begin{cor}\label{cor:partition_function}
For every $u\in\Re(\TT(k))''$, we have
\begin{align*}\textstyle 
1.\ [\Zising^+(\Gs,\Js)]^2 =&\textstyle2^{|\Vs|-1}(k')^{|\Es|}
\Bigl(\prod\limits_{w\in W^\spartial}\frac{1+\sn(\theta_w)}{2\sn(\theta_w)}\Bigr)
\bigl(\prod\limits_{wx\in\Ms_1:w\in W}|\sc(u_{\alpha_e})\sc(u_{\beta_e})\nd(u_{\alpha_e})\nd(u_{\beta_e})|^{\frac{1}{2}}\bigr)\times\\
\textstyle &\times \sn(\theta^{\spartial,\rs})\left|\frac{\cd(u_{\beta^{r,\rs}})}{\sn(u_{\alpha^{r,\rs}})}\right|
|\det \Delta^{m,\spartial}(u)|.\\
\textstyle 2.\ [\Zising^+(\Gs,\Js)]^2=&\textstyle2^{|\Vs|-1}(k')^{-|\Es|}
\Bigl(\prod\limits_{w\in W^\spartial}\frac{1+\sn(\theta_w)}{2\sn(\theta_w)}\Bigr)
\sn(\theta^{\spartial,\rs})
\left|\frac{\cd(u_{\beta^{r,\rs}})\sn(u_{\beta^{r,\rs}})}{
\cd(u_{\alpha^{r,\rs}})\sn(u_{\alpha^{r,\rs}})}\right|^{\frac{1}{2}}\times\\
&\textstyle \times |\det \Delta^{m,\spartial}(u)\det \Delta^{m,\spartial}(u+2K)|^{\frac{1}{2}}.
\end{align*}
\end{cor}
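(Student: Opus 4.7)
The plan is to derive Part 1 by chaining three previously established identities. Equation~\eqref{equ:PF_Ising_2} relates $[\Zising^+(\Gs,\Js)]^2$ to $\Zdimer(\GQ,\nuJ)=|\det\KQ|$; Theorem~\ref{thm:part_function} rewrites $|\det\KQ|=\tilde C(u)\,|\det\KD^\spartial(u)|$; and Corollary~\ref{cor:det} replaces $|\det\KD^\spartial(u)|$ by an explicit prefactor times $|\det\Delta^{m,\spartial}(u)|$. Substituting these three into each other yields Part 1 up to a simplification of the multiplicative constant. This simplification uses the $Z$-invariance coupling constants~\eqref{equ:def_J_Zinv}, from which $e^{2\Js_e}=(1+\sn\theta_e)/\cn\theta_e$ and $\cosh(2\Js_e)=1/\cn\theta_e$. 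The bijections $W\leftrightarrow\Es$ and $W^\spartial\leftrightarrow\Es^\spartial$ give $\prod_{w\in W}\cn(\theta_w)/\prod_{\Es^*}\cn(\theta_e)=\prod_{\Es^\spartial}\cn(\theta_e)$, which combines with $\prod_{\Es^\spartial}(e^{2\Js_e}/2)$ and the factor $\prod_{W^\spartial}\sn(\theta_w)^{-1}$ from $\tilde C(u)$ to form $\prod_{W^\spartial}(1+\sn\theta_w)/(2\sn\theta_w)$. Choosing $\Ms_0=\Ms_1$ merges the $|\sc\,\sc|^{1/2}$ factor of $\tilde C(u)$ with the $[k'\nd\,\nd]^{1/2}$ factor of Corollary~\ref{cor:det}, the local product $[\cn\sn]^{1/2}\cs^{1/2}$ at each white vertex collapses to $\cn(\theta_w)$, and Euler's relation $|\Vs|+|\Vs^*|-1=|\Es|$ combines the various $(k')$ exponents into $(k')^{|\Es|}$.

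For Part 2, I would apply Part 1 at the two parameters $u$ and $u+2K$ and take the positive square root of the product of the two resulting identities (both left-hand sides equal $[\Zising^+(\Gs,\Js)]^2$). The key elliptic identities come from $\sn(v+K)=\cd(v)$, $\cn(v+K)=-k'\sd(v)$, $\dn(v+K)=k'\nd(v)$, which imply $\sc(v)\nd(v)\cdot\sc(v+K)\nd(v+K)=-(k')^{-2}$. Since the shift $u\mapsto u+2K$ sends $u_\alpha$ to $u_\alpha+K$, this identity makes each per-edge contribution of the dimer product $\prod_{wx\in\Ms_1}|\sc(u_{\alpha_e})\sc(u_{\beta_e})\nd(u_{\alpha_e})\nd(u_{\beta_e})|^{1/2}$ pair up across the two evaluations into a fixed power of $(k')$, which yields the $(k')^{-|\Es|}$ factor after taking the square root. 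The boundary ratio behaves analogously: $\cd(v+K)=-\sn(v)$ and $\sn(v+K)=\cd(v)$ convert the evaluation of $|\cd(u_{\beta^{r,\rs}})/\sn(u_{\alpha^{r,\rs}})|$ at $u+2K$ into $|\sn(u_{\beta^{r,\rs}})/\cd(u_{\alpha^{r,\rs}})|$, so the geometric mean is exactly the symmetric ratio $|\cd(u_{\beta^{r,\rs}})\sn(u_{\beta^{r,\rs}})/[\cd(u_{\alpha^{r,\rs}})\sn(u_{\alpha^{r,\rs}})]|^{1/2}$ of Part 2.

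The main obstacle will be the careful bookkeeping of multiplicative constants: $(k')$ exponents come from $\tilde C(u)$, from Corollary~\ref{cor:det}, and from the dimer and boundary products, and they must combine cleanly via Euler's formula and survive the geometric-mean step correctly. Independence of the choice of perfect matching $\Ms_1$ was already noted in Remark~\ref{rem:part_function} and follows from an alternating-product-around-inner-faces check analogous to Lemma~\ref{lem:K_Kg}, so choosing a single $\Ms_1$ throughout is legitimate. Since the statement is in absolute values, no sign issues arise when taking the square root; the minus sign in the identity $\sc(v)\nd(v)\sc(v+K)\nd(v+K)=-(k')^{-2}$ is absorbed.
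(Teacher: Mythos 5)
Your proposal follows the paper's own proof essentially verbatim: Part~1 is obtained by chaining Equation~\eqref{equ:PF_Ising_2}, Theorem~\ref{thm:part_function} and Corollary~\ref{cor:det} with $\Ms_0=\Ms_1$, simplifying via the $Z$-invariant weights, the correspondence $W\leftrightarrow\Es$ and Euler's formula, and Part~2 is the geometric mean of Part~1 at $u$ and $u+2K$ using exactly the quarter-period identities $|\sc(v+K)|=(k')^{-1}|\cs(v)|$, $|\nd(v+K)|=(k')^{-1}|\dn(v)|$, $|\cd(v+K)|=|\sn(v)|$, $|\sn(v+K)|=|\cd(v)|$ invoked in the paper. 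The argument is correct and no further comparison is needed.
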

\begin{proof}
Equality 2. is obtained by writing $[\Zising^+(\Gs,\Js)]^2=([\Zising^+(\Gs,\Js)]^2[\Zising^+(\Gs,\Js)]^2)^{\frac{1}{2}}$, using 
Equality 1. evaluated once at $u$ and once at $u+2K$, and using the identities $|\sc(u-K)|=|(k')^{-1}\cs(u)|$, 
$|\nd(u-K)|=|(k')^{-1}\dn(u)|$, $|\cd(u-K)|=|\sn(u)|$, $|\sn(u-K)|=\cd(u)$.

Let us prove Equality 1. From Equation~\eqref{equ:PF_Ising_2} we have, for all coupling constants $\Js$,
\begin{equation*}
\textstyle
[\Zising^+(\Gs,\Js)]^2=2^{|\Vs|-1}
\Bigl(\prod\limits_{e\in\Es^\spartial} \frac{e^{2\Js_e}}{2}\Bigr) 
\Bigl(\prod\limits_{e^*\in\Es^*}\cosh(2\Js_e)\Bigr) \Zdimer(\GQ,\nuJ).
\end{equation*}
Returning to the definition of the $Z$-invariant weights~\eqref{equ:def_J_Zinv} and~\eqref{eq:def_nu_Zinv} gives,
\begin{align*}
\textstyle 
\Bigl(\prod\limits_{e\in\Es^\spartial} \frac{e^{2\Js_e}}{2}\Bigr) 
\Bigl(\prod\limits_{e^*\in\Es^*}\cosh(2\Js_e)\Bigr)&=\textstyle
\Bigl(\prod\limits_{e\in\Es^\spartial} \frac{1+\sn(\theta_e)}{2\cn(\theta_e)}\Bigr) 
\Bigl(\prod\limits_{e^*\in\Es^*}\frac{1}{\cn(\theta_e)}\Bigr)\\
&\textstyle=\Bigl(\prod\limits_{w\in W^\spartial}\frac{1+\sn(\theta_w)}{2}\Bigr)
\Bigl(\prod\limits_{w\in W}\frac{1}{\cn(\theta_w)}\Bigr),
\end{align*}
where in the second equality we use the notation introduced before the statement of Theorem~\ref{thm:part_function}.
Since $\Zdimer(\GQ,\nuJ)=|\det\KQ|$, Theorem~\ref{thm:part_function} yields
\begin{align*}
[\Zising^+(\Gs,\Js)]^2&=\textstyle 2^{|\Vs|-1}
(k')^{\frac{|\Vs^*|}{2}}\prod\limits_{w\in W}[\sc(\theta_w)]^{\frac{1}{2}}
\Bigl(\prod\limits_{w\in W^\spartial}\frac{1+\sn(\theta_w)}{2\sn(\theta_w)}\Bigr) 
\Bigl(\prod\limits_{wx\in\Ms_1:w\in W}|\sc(u_{\alpha_e})\sc(u_{\beta_e})|^{\frac{1}{2}}\Bigr)\times \\
&\textstyle \times\sn(\theta^{\spartial,\rs})\left|\frac{\cd(u_{\beta^{r,\rs}})}{\sn(u_{\alpha^{r,\rs}})}\right|
|\det \KD^\spartial(u)|.
\end{align*}
The proof is concluded by using Corollary~\ref{cor:det}, choosing
$\Ms_1$ as perfect matching of $\GDro$ and using that $|\Vs|-1=-|\Vs^*|+|\Es|$.
\end{proof}

\subsubsection{Proof of Theorem~\ref{thm:part_function}}

Let us prove that for the modified Kasteleyn matrix $\KQu$, we have
\begin{align*}
\det\KQu&=(k')^{\frac{|\Vs^*|}{2}}\Bigl(\prod_{w\in W}[\sn(\theta_w)\cn(\theta_w)]^{\frac{1}{2}} \Bigr)
\Bigl(\prod_{w\in W^\spartial}\sn(\theta_w)^{-\frac{1}{2}}\Bigr) 
\Bigl(\prod_{wx\in\Ms_1:w\in W}|\sc(u_{\alpha_e})\sc(u_{\beta_e})|^{\frac{1}{2}}\Bigr)\times \\
&\times \sn(\theta^{\spartial,\rs})\left|\frac{\cd(u_{\beta^{r,\rs}})}{\sn(u_{\alpha^{r,\rs}})}\right|
|\det \KD^\spartial(u)|.
\end{align*}
To obtain the result for $\KQ$, we use Relation~\eqref{equ:KQ_KQu}
which implies that 
\[
\det\KQ = \Bigl(\prod_{w\in W^\spartial}\sn(\theta_w)^{-\frac{1}{2}}\Bigr)\det \KQu,
\]
because $|\{\bs_\ell,\bs_r\in \R^\spartial\}|=2|\{\ws^c\in\R^\spartial\}|=2|W^\spartial|$.
The proof has three main steps: the first
consists in using a partition of black/white vertices of $\GQ$ and Theorem~\ref{thm:main}
for comparing $\det\KQu$ and $\det\KD^\spartial(u)$; in the second step, we specify this partition so
that the computation required by the first step become tractable; finally, we perform these computations in the third step.

\paragraph{Partition of the vertices of $\GQ$ and Theorem~\ref{thm:main}.} We partition black vertices of $\GQ$ into two subsets: 
$\Bs=\Bs_1\cup \Bs_2$, where $\Bs_1$ has one black vertex per quadrangle of $\GQ$ and $\Bs_2=\Bs\setminus \Bs_1$.
Since boundary quadrangles of $\GQ$ are reduced to edges, $\Bs_1$ contains 
all black vertices of boundary quadrangles, and half of the black vertices of inner quadrangles. As a consequence, $\Bs_1$ has a 
natural partition as $\Bs_1^\spartial\cup \Bs_1^\scirc$, where $\Bs_1^\spartial$, resp. $\Bs_1^\scirc$, consists of boundary quadrangle, resp. inner quadrangles,
black vertices. Then $\Bs_2$ has no boundary quadrangle black vertices. 
In a similar way, we partition white vertices of $\GQ$: $\Ws=\Ws_1\cup \Ws_2$,
where $\Ws_1=\Ws_1^\spartial\cup \Ws_1^\scirc$.

Recalling that there is a natural bijection between quadrangles of $\GQ$ 
and white vertices of $\GDro$, we have the following natural bijections: 
\begin{equation*}
\Ws_1^\spartial \leftrightarrow W^\spartial, \quad \Ws_1^\scirc \leftrightarrow W^\scirc,\quad 
\Ws_2 \leftrightarrow W^\scirc.
\end{equation*}
Here are some notation for sub-matrices of the matrices $S(u)$ and $T(u)$ defined in Section~\ref{sec:KQ_KD_relation}, and for sub-matrices
of the matrix $\KQu$.
\begin{align*}
&S_1(u)=S(u)_{\Bs_1}^{{W\ro}},\quad S_2(u)=S(u)_{\Bs_2}^{W^\scirc}\\
&S_1^\spartial (u)=S(u)_{\Bs_1^\spartial}^{{W^\spartial}},\quad S_1^\scirc (u)=S(u)_{\Bs_1^\circ}^{{W^\scirc}}\\
&T_1(u)=T(u)_{\Ws_1}^{B\ro},\quad T_2(u)=T(u)_{\Ws_2}^{B\ro}\\
\forall\,i,j&\in\{1,2\},\quad \KQu_{ij}=(\KQu)_{\Bs_i}^{\Ws_j},\quad \KQu_{1^\scirc j}=(\KQu)_{\Bs_1^\scirc}^{\Ws_j}.
\end{align*}
Using Theorem~\ref{thm:main}, we obtain the following lemma.
\begin{lem}\label{lem:KQ_KD_det}
Consider a partition of the black and white vertices of $\GQ$ as above. Then, for every $u\in\Re(\TT(k))''$,
\begin{equation*}
\det\KQu\cdot \det T_1(u)=\det S_1^\spartial(u)\cdot\det S_1^\scirc(u)\cdot\det S_2(u)\cdot\det R(u)\cdot \det \KD^\spartial(u),
\end{equation*}
where $R(u):=S_2(u)^{-1}\KQu_{22}-S_1^\circ(u)^{-1}\KQu_{1^\scirc 2}$.
\end{lem}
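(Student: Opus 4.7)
The plan is to exploit Theorem~\ref{thm:main} in the finite case, $\KQu\,T(u) = S(u)\,\KD^\spartial(u)$, and convert this rectangular identity into an equality of square determinants by a block-augmentation trick. Partitioning rows/columns of $\KQu$ according to $\Bs=\Bs_1\cup\Bs_2$ and $\Ws=\Ws_1\cup\Ws_2$, I would multiply $\KQu$ on the right by the augmented matrix
\[
\begin{pmatrix} T_1(u) & 0 \\ T_2(u) & I \end{pmatrix},
\]
which is block lower-triangular and has determinant $\det T_1(u)$. Computing the product and using Theorem~\ref{thm:main} on the first block-column yields
\[
\KQu \begin{pmatrix} T_1(u) & 0 \\ T_2(u) & I \end{pmatrix} = \begin{pmatrix} S_1(u)\KD^\spartial(u) & \KQu_{12} \\ S_2(u)\KD^\spartial(u) & \KQu_{22}\end{pmatrix}.
\]

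I would then factor the right-hand side as
\[
\begin{pmatrix} S_1(u) & \KQu_{12} \\ S_2(u) & \KQu_{22}\end{pmatrix}\begin{pmatrix} \KD^\spartial(u) & 0 \\ 0 & I\end{pmatrix},
\]
and take determinants, obtaining $\det\KQu\cdot\det T_1(u) = \det M \cdot \det\KD^\spartial(u)$, where $M$ is the first factor. The task reduces to computing $\det M$.

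To do this, I would refine the partition, using that each row of $S(u)$ has a single non-zero entry, placed in the column of the unique white vertex of $\GDro$ associated to the quadrangle of $\GQ$ containing the row's black vertex (see~\eqref{def:sbw}--\eqref{def:sbl_boundary}). Consequently, rows in $\Bs_1^\spartial$ only hit columns in $W^\spartial$, while rows in $\Bs_1^\scirc\cup\Bs_2$ only hit columns in $W^\scirc$. Splitting rows as $\Bs_1^\spartial\cup\Bs_1^\scirc\cup\Bs_2$ and columns as $W^\spartial\cup W^\scirc\cup\Ws_2$, the matrix $M$ takes the block form
\[
M = \begin{pmatrix} S_1^\spartial(u) & 0 & \KQu_{1^\spartial 2} \\ 0 & S_1^\scirc(u) & \KQu_{1^\scirc 2} \\ 0 & S_2(u) & \KQu_{22}\end{pmatrix}.
\]
Expanding along the first block-column, whose only non-zero block is the square matrix $S_1^\spartial(u)$, yields $\det M = \det S_1^\spartial(u)\cdot\det N$, where $N$ denotes the bottom-right $2\times 2$ block. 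For $N$, I would factor $S_1^\scirc(u)$ out of its top block-row and $S_2(u)$ out of its bottom block-row, then subtract the first (now identity) block-row from the second: the bottom-left block becomes zero and the bottom-right block becomes precisely $R(u)$, giving $\det N = \det S_1^\scirc(u)\cdot\det S_2(u)\cdot\det R(u)$ and hence the claimed identity.

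The proof is essentially block-matrix bookkeeping. The only point requiring a check is the invertibility of $S_1^\spartial(u),S_1^\scirc(u),S_2(u)$ used in the last factoring step. Their non-zero entries are of the form $e^{-i\bar{\beta}/2}\cn(u_\beta)[\sn(\theta)\cn(\theta)\nd(u_\alpha)\nd(u_\beta)]^{1/2}$ or the analogous boundary expression, so it suffices to ensure $\cn(u_\alpha),\cn(u_\beta)\neq 0$ for every edge involved; by the train-track description of Section~\ref{sec:train_tracks}, every edge angle coincides modulo $2K$ with a boundary angle $\alpha^\ell$ or $\beta^\ell$, so excluding $\{\alpha^\ell[2K],\beta^\ell[2K]\}$ in the definition of $\Re(\TT(k))''$ is precisely what is needed.
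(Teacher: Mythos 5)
Your proof is correct and follows essentially the same route as the paper: both augment $T_1(u)$ to the block matrix $\bigl(\begin{smallmatrix} T_1(u) & 0 \\ T_2(u) & I \end{smallmatrix}\bigr)$, invoke Theorem~\ref{thm:main} on the first block-column, and reduce to block-triangular determinants, the paper packaging the last step as an explicit factorization through an auxiliary matrix $\tilde{R}(u)$ where you perform the equivalent block row operations on $M$. The only caveat is notational: in your intermediate $2\times 2$ displays the $(2,1)$ block should be $S(u)_{\Bs_2}^{W^\rs}=(0\mid S_2(u))$ rather than $S_2(u)$ itself, but your subsequent refinement to the $3\times 3$ block form makes this precise.
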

\begin{proof}
Note that $S_2(u)$ and $S_1^\circ(u)$ are invertible because we choose $u\in\Re(\TT(k))''$.
By Theorem~\ref{thm:main}, we have the following identity:
\begin{equation*}
\begin{blockarray}{cccc}
     &\scriptstyle{\Ws_1^\spartial} & \scriptstyle{\Ws_1^\scirc} & \scriptstyle{\Ws_2}\\
\begin{block}{c(cc|c)}
  \scriptstyle{\Bs_1^\spartial}\! & \BAmulticolumn{2}{c|}{\multirow{2}{*}{$\KQu_{11}$}}&\multirow{2}{*}{$\KQu_{12}$}\\ 
  \scriptstyle{\Bs_1^\scirc}\! & & &  \\
  \cline{2-4}
  \multirow{2}{*}{$\scriptstyle{\Bs_2}\!$}& \BAmulticolumn{2}{c|}{\multirow{2}{*}{$\KQu_{21}$}} & \multirow{2}{*}{$\KQu_{22}$}\\
  &&&\\
\end{block}
\end{blockarray}
\begin{blockarray}{ccc}
     &\scriptstyle{B\ro} & \scriptstyle{\Ws_2}\\
\begin{block}{c(c|c)}
  \scriptstyle{\Ws_1^\spartial}\! &\multirow{2}{*}{$T_1(u)$}&\multirow{2}{*}{$0$}\\ 
  \scriptstyle{\Ws_1^\scirc}\! & &   \\
  \cline{2-3}
  \multirow{2}{*}{$\scriptstyle{\Ws_2}\!$}&\multirow{2}{*}{$T_2(u)$}&\multirow{2}{*}{$I$}\\
  &&\\
\end{block}
\end{blockarray}=
\begin{blockarray}{cccc}
     &\scriptstyle{W^\spartial} & \scriptstyle{W^\scirc} & \scriptstyle{W^\scirc}\\
\begin{block}{c(cc|c)}
  \scriptstyle{\Bs_1^\spartial}\! & S_1^\spartial(u) &0&0\\ 
  \scriptstyle{\Bs_1^\scirc}\! & 0 & S_1^\scirc(u)   & 0 \\
  \cline{2-4}
  \multirow{2}{*}{$\scriptstyle{\Bs_2}\!$}& \multirow{2}{*}{$0$} & \multirow{2}{*}{$S_2(u)$} & \multirow{2}{*}{$S_2(u)$}\\
  &&&\\
\end{block}
\end{blockarray}
\begin{blockarray}{ccc}
     &\scriptstyle{B\ro} & \scriptstyle{\Ws_2} \\
\begin{block}{c(c|c)}
  \scriptstyle{W^\spartial} & \multirow{2}{*}{$\KD^\spartial(u)$} &\multirow{2}{*}{$\tilde{R}(u)$}\\ 
  \scriptstyle{W^\scirc} & &\\
  \cline{2-3}
  \multirow{2}{*}{$\scriptstyle{W^\scirc}$}& \multirow{2}{*}{$0$} & \multirow{2}{*}{$R(u)$}\\
  &&\\
\end{block}
\end{blockarray}\,,
\end{equation*}
with
$
\begin{cases}
\KQu_{1^\scirc 2}&=S_1^\scirc(u) \tilde{R}(u)_{W^\scirc}^{\Ws_2}\\ 
\KQu_{22}&=S_2(u) \tilde{R}(u)_{W^\scirc}^{\Ws_2} +S_2(u) R(u).
\end{cases}
$

We extract $\tilde{R}(u)_{W^\scirc}^{\Ws_2}$ from the first equation.
Plugging $\tilde{R}(u)_{W^\scirc}^{\Ws_2}$ in the second equation gives $R(u)$;
taking the determinant ends the proof.
\end{proof}

\paragraph{Combinatorial partition of the vertices of $\GQ$.} Let us specify the partition of the black/white vertices of $\GQ$. It is
constructed from a well chosen perfect matching $\Ms_1$ of $\GDro$, which we now define. Recall that by Temperley's bijection,
perfect matchings of $\GDro$ are in bijection with pairs of dual directed spanning trees of $\T^{\rs,\outer}(\Gs,\bar{\Gs}^*)$, see 
Section~\ref{sec:def_double_graph}. Consider a spanning tree of the restricted dual $\Gs^*$, and root it at the vertex $f^{c,\rs}$ incident to 
the root vertex $v^c=\rs$ in the diamond graph $\GR$. To this spanning tree, add the edge $(f^{c,\rs},\outer)$ which is 
the dual of the edge $\rs v^{\ell,\rs}$ of $\Gs$. This defines an $\outer$-directed spanning tree $\Ts^*$ of $\bar{\Gs}^*$. Consider the 
$\rs$-directed spanning tree $\Ts$ of $\Gs$ which is the dual of $\Ts^*$, rooted at the vertex $\rs$. Then $(\Ts,\Ts^*)$ is a pair of dual 
spanning trees of $\T^{\rs,\outer}(\Gs,\bar{\Gs}^*)$, and we let $\Ms_1$ be the corresponding perfect matching of $\GDro$,
see Figure~\ref{fig:G_Gdual_3}.

\begin{figure}[ht]
\begin{minipage}[b]{0.5\linewidth}
\begin{center}
\begin{overpic}[width=7.8cm]{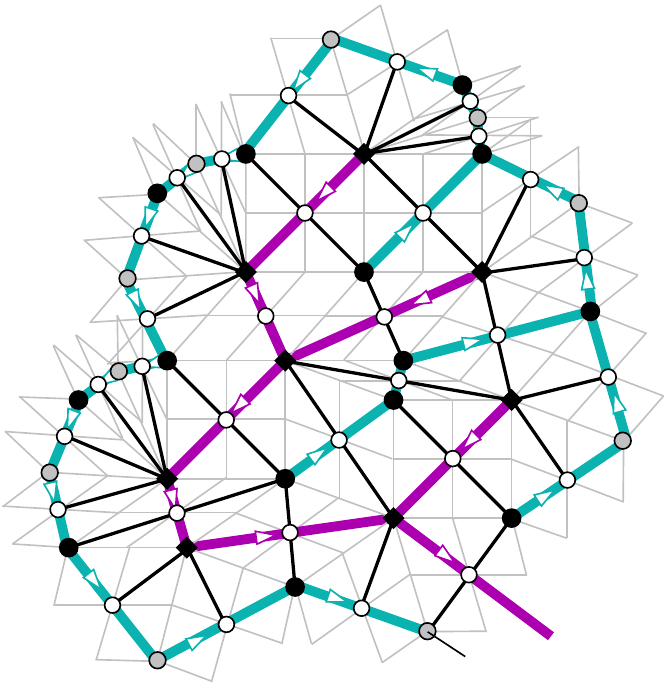}
\put(69.5,3){\scriptsize $\rs$}
\put(75,20){\scriptsize $v^{\ell,\rs}$}
\put(61,23.5){\scriptsize $f^{c,\rs}$}
\put(82,6){\scriptsize  $\outer$}
\end{overpic}
\end{center}
\end{minipage}
\begin{minipage}[b]{0.5\linewidth}
\begin{center}
\begin{overpic}[width=7.8cm]{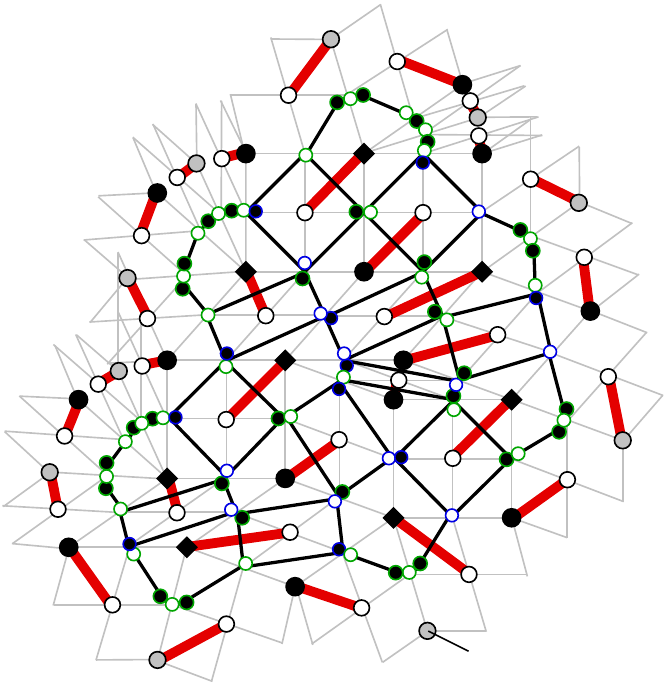}
\put(69.5,3){\scriptsize $\rs$}
\end{overpic}
\end{center}
\end{minipage}
\caption{Left: a pair $(\Ts,\Ts^*)$ of dual directed spanning trees of $\T^{\rs,\outer}(\Gs,\bar{\Gs}^*)$ such that $\Ts^*$ arises from a 
spanning tree of the restricted dual $\Gs^*$. Right: corresponding dimer configuration $\Ms_1$ of $\GDro$ (red); associated partition of black/white
vertices of $\GQ$: vertices of $\Bs_1,\Ws_1$ are in green, those of $\Bs_2,\Ws_2$ are in blue.}
\label{fig:G_Gdual_3}
\end{figure}

We now define the \emph{partition of black/white vertices of $\GQ$ arising from $\Ms_1$}; this amounts to specifying the partition 
for vertices of inner quadrangle since those of boundary quadrangles define $\Bs_1^\spartial$/$\Ws_1^\spartial$. 
For $j\in\{1,2\}$, $\bs_j$/$\ws_j$ denotes a black/white vertex of $\Bs_j/\Ws_j$.
Consider an inner quadrangle of $\GQ$ corresponding to a white vertex $w$ of $\GDro$. 
Then, exactly one edge of the quadrangle is crossed by an edge $wx$ of the perfect matching $\Ms_1$. The partition is defined as follows:
if $x=v\in \Vs\ro$, then $\ws_1$ is the white vertex on the right of the edge $(w,v)$, and $\bs_1$ is the black vertex on the left;
if $x=f\in \Vs^*$, then $\ws_1$ is the white vertex on the left of the edge $(w,f)$, and $\bs_1$ is the black vertex on the right.
An example is provided in Figure~\ref{fig:G_Gdual_3} (right) and Figure~\ref{fig:preuve_detKQ}.

Let us prove a combinatorial lemma. Note that because $u\in\Re(\TT(k))''$, the fact that a coefficient of $T_1(u)$ or $R(u)$ is non-zero is 
independent of $u$. As in Section~\ref{sec2:app} of Appendix~\ref{app:gauge}, to the matrix $T_1(u)$ corresponds a bipartite graph 
$\Gs(T_1)=(\Ws_1\cup B^\rs,\Es(T_1))$, where there is an edge $\ws_1 x$ iff $t(u)_{\ws_1,x}\neq 0$, with $\ws_1\in\Ws_1$, $x\in B\ro=\Vs\ro\cup\Vs^*$.
In a similar way, to the matrix $R(u)$ of Lemma~\ref{lem:KQ_KD_det} corresponds a bipartite graph $\Gs(R)=(W^\scirc\cup \Ws_2,\Es(R))$ where there is an edge $w\ws_2$ iff 
$r(u)_{w,\ws_2}\neq 0$,
with $w\in W^\scirc$, $\ws_2\in \Ws_2$. The matrix $T_1(u)$, resp. $R(u)$, is then a bipartite, weighted adjacency matrix of the graph $\Gs(T_1)$, resp. $\Gs(R)$. 

\begin{figure}[ht]
\begin{center}
\begin{overpic}[width=7.8cm]{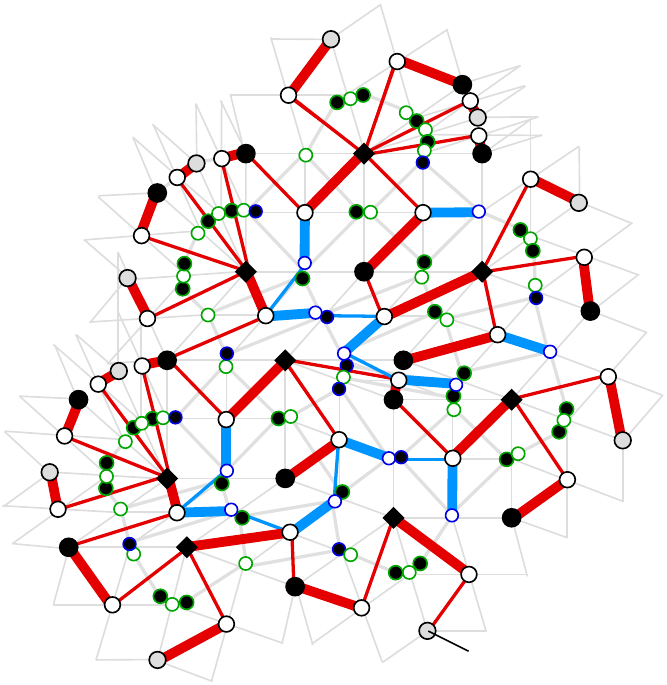}
\put(69.5,3){\scriptsize $\rs$}
\end{overpic}
\caption{Example of graph $\Gs(T_1)$ (red edges) and of graph $\Gs(R)$ (blue edges).}
\label{fig:spanning_trees}
\end{center}
\end{figure}

\begin{lem}
Consider a partition of the black/white vertices of $\GQ$ arising from the perfect matching $\Ms_1$ of $\GDro$. 
Then, the graph $\Gs(T_1)$ is a spanning tree on the vertex set $\Ws_1\cup B^\rs$, and the graph $\Gs(R)$ is a union of trees on the vertex set 
$W^\scirc\cup \Ws_2$, spanning all vertices of $W^\scirc\cup \Ws_2$.
\end{lem}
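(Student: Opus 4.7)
My plan is to combine an Euler-type counting argument with an explicit use of Temperley's bijection tying $\Ms_1$ to the pair $(\Ts, \Ts^*)$.

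For $\Gs(T_1)$, the first step is the count. By~\eqref{def:twvf} and~\eqref{def:twv_boundary}, each $\ws_1 \in \Ws_1$ other than the root-pair vertex $\ws^{c,\rs}$ contributes exactly two nonzero entries to $T_1(u)$, while $\ws^{c,\rs}$ contributes only one (its primal partner would be $\rs \notin \Vs^\rs$). So $\Gs(T_1)$ has $|\Ws_1| + |B^\rs|$ vertices and $2|\Ws_1| - 1$ edges. Using $|\Ws_1| = |W^\rs| = |B^\rs|$ (the first equality because $\Ws_1$ has one vertex per quadrangle of $\GQ$, i.e.\ per white vertex of $\GDro$; the second by Temperley's bijection), the counts match those of a spanning tree, reducing the problem to acyclicity or connectedness.

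For the structural step on $\Gs(T_1)$, I would contract each non-root $\ws_1$ with its two neighbors, turning it into a single primal-dual edge of the diamond graph $\GR$, and leave $\ws^{c,\rs}$ as a pendant at $f^{c,\rs}$. The resulting graph lives on $\Vs^\rs \cup \Vs^*$ and has $|\Ws_1| - 1$ edges. The critical point is that the partition rule of Section~\ref{sec:det_KQ_KD} forces a very specific selection: at an inner quadrangle $w$ with $wx \in \Ms_1$, the vertex $\ws_1$ lies on the rhombus edge of $\GR$ at $w$ that is ``transverse'' to the edge $(w,x)$, and hence a careful unwinding of Temperley's bijection~\eqref{equ:bij_Temperley_edges} identifies the contracted rhombus edge with the directed edge of $\Ts^*$ (resp.\ $\Ts$) dual to the directed edge of $\Ts$ (resp.\ $\Ts^*$) corresponding to $wx$. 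Summed over quadrangles and completed along the boundary, the contracted graph becomes the edge-disjoint union of the trees $\Ts$ and $\Ts^*$, glued at $f^{c,\rs}$ through the pendant of $\ws^{c,\rs}$, and therefore a spanning tree on $\Vs^\rs \cup \Vs^*$. Lifting back across the contractions, $\Gs(T_1)$ itself is a spanning tree on $\Ws_1 \cup B^\rs$.

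For $\Gs(R)$, the matrices $S_1^\circ(u)$ and $S_2(u)$ each have one nonzero entry per row via the bijections $\Bs_1^\circ \leftrightarrow W^\circ$ and $\Bs_2 \leftrightarrow W^\circ$, so from Lemma~\ref{lem:KQ_KD_det} the entry $R(u)_{w, \ws_2}$ is a two-term combination of $\KQu_{\bs, \ws_2}$ where $\bs$ ranges over the two black vertices of the quadrangle at $w$. Consequently $(w, \ws_2)$ is an edge of $\Gs(R)$ iff $\ws_2$ is adjacent in $\GQ$ to a black vertex of the quadrangle of $w$, with no cancellation possible for generic $u \in \Re(\TT(k))''$ (an easy check on the elliptic factors). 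I would then prove the spanning-forest claim by verifying two facts: (i)~every vertex of $W^\circ \cup \Ws_2$ has an incident edge in $\Gs(R)$, which follows because every $\ws_2$ is adjacent in $\GQ$ to the two blacks of its own quadrangle, and every $w \in W^\circ$ has a $\ws_2$ in its quadrangle; and (ii)~acyclicity, which I would derive by showing that any cycle in $\Gs(R)$ would induce a closed circuit in the complement of $\Ms_1$ in $\GQ$, hence, through Temperley's bijection, a cycle in the already-established tree pair $(\Ts, \Ts^*)$, a contradiction.

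The main obstacle is the careful identification of rhombus edges selected by the matching rule in the contraction step for $\Gs(T_1)$, together with the boundary bookkeeping at the root-pair (where the embedding convention of Section~\ref{sec:iso_GDGQ} swaps colors in the left rhombus of the pair and changes the combinatorics); the analogous boundary analysis for $\Gs(R)$ and the explicit argument that matrix-entry cancellations do not occur also require elliptic-function identities analogous to~\eqref{equ:Lawden1}, but these are routine once the combinatorial skeleton is in place.
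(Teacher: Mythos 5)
Your reduction for $\Gs(T_1)$ (degree count, then contract each degree-two vertex $\ws_1$ into the rhombus edge $xx'$ of $\GR$) is fine as far as it goes, but the key structural claim — that the contracted graph is ``the edge-disjoint union of the trees $\Ts$ and $\Ts^*$, glued at $f^{c,\rs}$'' — is false, and it is exactly where the whole content of the lemma lives. The contracted graph consists of \emph{rhombus edges} of the diamond graph $\GR$, each joining a primal vertex to a dual vertex, whereas $\Ts\cup\Ts^*$ consists of primal--primal and dual--dual edges only; as a graph on $\Vs\cup\Vs^*$ it has no edge between the two vertex classes, so it is disconnected and cannot be a spanning tree, and a single pendant at $f^{c,\rs}$ attaches a new vertex rather than joining the two components. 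Worse, the identification you invoke (``the directed edge of $\Ts^*$ dual to the directed edge of $\Ts$ corresponding to $wx$'') cannot hold even combinatorially: by definition the dual spanning tree consists of the duals of the edges \emph{absent} from $\Ts$, so the dual of an edge of $\Ts$ is never in $\Ts^*$. What is actually true, and what the paper uses, is that after identifying $\ws_1$ with $w$ the graph $\Gs(T_1)$ equals $\Ms_1$ together with one extra edge $wx'$ per white vertex, where $x'$ is selected by a right/left rule depending on whether the matched partner is primal or dual; that this union is a spanning tree is the nontrivial ``Temperley tree'' fact, which the paper imports as Proposition~7.3 of~\cite{deTiliere:partition}. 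Your proof supplies no substitute for that input.

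For $\Gs(R)$, the spanning statement is handled the same way as in the paper, and your remark about non-cancellation of the two elliptic terms in $r(u)_{w,\ws_2}$ is a legitimate (and, in the paper, deferred to Lemma~\ref{lem:prod_I_II}) point. But your acyclicity argument — a cycle in $\Gs(R)$ would yield ``a cycle in the already-established tree pair $(\Ts,\Ts^*)$'' — is not drawn out and does not obviously work: a closed circuit of $\GQ$-edges in the complement of the quadrangle structure has no direct translation into a cycle of $\Ts$ or $\Ts^*$. The paper's argument is different and more direct: a cycle of $\Gs(R)$ encloses a region of the plane containing a black vertex of $\GDro$ that then cannot be covered by $\Ms_1$, contradicting that $\Ms_1$ is a perfect matching. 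You should either reproduce that planarity/parity argument or make the tree-pair reduction precise; as written both halves of the proof have gaps, the one for $\Gs(T_1)$ being fatal.
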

\begin{proof}
Let us prove that $\Gs(T_1)$ is a spanning tree. Every vertex $\ws_1$ of $\Ws_1$ has degree 2 in $\Gs(T_1)$: 
it is adjacent to a vertex $x$ of $B^\rs$ such 
that $wx$ is an edge of the perfect matching $\Ms_1$, and to a vertex $x'$ such that $xx'$ is an edge of the diamond graph $\GR$ with 
$\ws_1$ in its middle. Using the natural bijection $\Ws_1\leftrightarrow W$, that is, identifying every vertex $\ws_1$ with the corresponding vertex 
$w$ of $W^\rs$ does not change the combinatorics of the graph $\Gs(T_1)$. With this identification, the graph $\Gs(T_1)$ contains all edges of the 
perfect
matching $\Ms_1$, and the second edge $wx'$ incident to $w$ is such that $(w,x')$ is on the right, resp. left, of $(x,w)$ if $x\in\Vs^*$, resp.
$x\in\Vs^\rs$. Then, by Proposition~7.3. of~\cite{deTiliere:partition}, the graph $\Gs(T_1)$ is a spanning tree. An example of $\Gs(T_1)$ is pictured
with red edges in Figure~\ref{fig:spanning_trees}.

Let us prove that $\Gs(R)$ is a union of trees spanning all vertices of $W^\scirc\cup \Ws_2$. By definition, a white vertex $w$ of $W^\scirc$
is adjacent to all white vertices of $\Ws_2$ that are adjacent to the black vertices $\bs_1$ and $\bs_2$ of the quadrangle $w$ corresponds to; that is,
$w$ is adjacent to the vertex $\ws_2$ of the quadrangle of $w$ and maybe to another vertex of $\Ws_2$. As a consequence $\Gs(R)$ is spanning all 
vertices of $W^\scirc\cup \Ws_2$. It cannot contain a cycle for otherwise it would mean that there is a vertex of $B^\rs$ which does not belong 
to $\Ms_1$, which is a contradiction with it being a perfect matching. An example of $\Gs(R)$ is pictured with blue edges in 
Figure~\ref{fig:spanning_trees}.
\end{proof}

\begin{cor}\label{cor:det_T_C}
Consider a partition of the black/white vertices of $\GQ$ arising from the perfect matching $\Ms_1$ of $\GDro$. 
\begin{itemize}
\item[$\bullet$] Using the identification $\Ws_1\leftrightarrow W\ro=W$, we have,
\begin{equation*}
\left|\det T_1(u)\right|=\prod_{wx\in \Ms_1:\,w\in W} |t(u)_{\ws_1,x}|
=\prod_{wx\in M:\,w\in W^\scirc} |t(u)_{\ws_1,x}|\prod_{wx\in M:\,w\in W^\spartial} |t(u)_{\ws_1,x}|.
\end{equation*}
\item[$\bullet$] Using the identification $\Ws_2\leftrightarrow W^\scirc$, we have,
\begin{equation*}
\left|\det R(u)\right|=\prod_{w\in W^\scirc}|r(u)_{w,\ws_2}|,
\end{equation*}
where for every $w\in W^\scirc$, $r(u)_{w,\ws_2}=s(u)_{\bs_2,w}^{-1}\KQu_{\bs_2,\ws_2}-s(u)_{\bs_1,w}^{-1}\KQu_{\bs_1,\ws_2}$.
\end{itemize}
\end{cor}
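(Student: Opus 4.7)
The plan is to apply the Leibniz expansion to both determinants and reduce each to a sum over perfect matchings of the underlying bipartite graph. Indeed, every non-zero term in $\det T_1(u)$ corresponds to a bijection $\sigma\colon \Ws_1 \to B^\rs$ whose graph lies in $\Es(T_1)$, i.e.\ to a perfect matching of $\Gs(T_1)$, and analogously $\det R(u)$ is a signed sum over perfect matchings of $\Gs(R)$. If I can show that each of these bipartite graphs admits a \emph{unique} perfect matching, which I furthermore identify explicitly, then each determinant collapses to a single term (up to sign), yielding the claimed product formulas in absolute value.

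The key elementary input I will use is that a bipartite forest admits at most one perfect matching: if non-empty, pick any leaf; it must be matched to its unique neighbor in any perfect matching; removing this pair leaves a smaller bipartite forest, and a straightforward induction on the number of vertices closes the argument. By the preceding lemma, $\Gs(T_1)$ is a bipartite spanning tree on $\Ws_1 \cup B^\rs$ (a fortiori a forest), and $\Gs(R)$ is a bipartite spanning forest on $W^\scirc \cup \Ws_2$, so this fact applies directly to both.

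It then remains to exhibit in each case an explicit perfect matching, which by the above must be the unique one. For part~1, under the bijection $\Ws_1 \leftrightarrow W$ the edges of $\Ms_1$ supply such a matching of $\Gs(T_1)$: for each $wx \in \Ms_1$ the construction of the partition places $\ws_1$ on the rhombus edge incident to $x$, so formulas~\eqref{def:twvf} and~\eqref{def:twv_boundary} ensure $t(u)_{\ws_1,x} \neq 0$ on $\Re(\TT(k))''$. For part~2, under the bijection $\Ws_2 \leftrightarrow W^\scirc$ the ``diagonal'' pairing $w \mapsto \ws_2(w)$, where $\ws_2(w)$ is the $\Ws_2$-vertex of the quadrangle containing $w$, gives a perfect matching of $\Gs(R)$, essentially built into the definition of $r(u)$ via the terms $\KQu_{\bs_2,\ws_2}$ and $\KQu_{\bs_1,\ws_2}$.

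No serious obstacle arises: the preceding lemma furnishes the graph-theoretic input, and the leaf-peeling argument closes the combinatorics. The only delicate step is the bookkeeping verification that the non-vanishing pattern of $T_1(u)$ and $R(u)$ is exactly the one described by the two graphs, and in particular that the boundary rhombus pairs behave correctly --- the restriction $u \in \Re(\TT(k))''$, which by~\eqref{equ:Re_Tk_2} amounts to avoiding the parallel directions of the train-tracks of $\Gs$, is precisely what guarantees that the $\cn$ and $\cd$ factors appearing in the relevant coefficients do not vanish.
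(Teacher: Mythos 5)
Your proposal is correct and follows essentially the same route as the paper: expand the determinants over permutations, identify the non-zero terms with perfect matchings of $\Gs(T_1)$ and $\Gs(R)$, invoke the preceding lemma that these graphs are (unions of spanning) trees, and exhibit the unique perfect matching explicitly ($\Ms_1$ for $\Gs(T_1)$, the diagonal identification $W^\scirc\leftrightarrow\Ws_2$ for $\Gs(R)$). The only cosmetic difference is your leaf-peeling induction for uniqueness of the matching in a forest, where the paper observes instead that the symmetric difference of two distinct perfect matchings would contain a cycle; both arguments are standard and equally valid.
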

\begin{proof}
Writing the determinant as a sum over permutations, we have that non-zero terms in the expansion of $\det T_1(u)$, resp.
$\det R(u)$, correspond to perfect matchings of the bipartite graph $\Gs(T_1)$, resp. $\Gs(R)$. Since these two graphs
are trees or union of trees spanning all vertices, they have at most one perfect matching; indeed if they had more, the union of two
different ones would yield a cycle which is in contradiction with being a tree. 
Using the identification $\Ws_1\leftrightarrow W$, The graph $\Gs(T_1)$ has one perfect matching given by edges of $\Ms_1$ (pictured
in thick red lines in Figure~\ref{fig:spanning_trees}), while the graph $\Gs(R)$ has one perfect matching given by the natural identification
$W^\scirc\leftrightarrow \Ws_2$ (pictured in thick blue lines in Figure~\ref{fig:spanning_trees}). Since, the contribution of  
a perfect matching to the determinant is the product of the edge-weights (up to a sign), this ends the proof of the corollary.
\end{proof}

Since $S_1^\spartial(u),S_1^\scirc(u),S_2(u)$ are diagonal matrices, their determinant is the product of the diagonal terms. 
Combining Lemma~\ref{lem:KQ_KD_det} and Corollary~\ref{cor:det_T_C} we thus obtain the following.

\begin{cor}\label{cor:det_KQ_caca}
Consider a partition of the black/white vertices of $\GQ$ arising from the perfect matching $\Ms_1$ of $\GDro$.
Then, for every $u\in\Re(\TT(k))''$,
\begin{equation}\label{equ:detKQ}
|\det\KQu|=\underbrace{\left|\frac{\prod_{w\in W^\spartial}s(u)_{\bs_1,w}}{\prod_{wx\in\Ms_1:\,w\in W^\spartial}t(u)_{\ws_1,x}}\right|}_{(\mathrm{I})}\,
\underbrace{\left|\frac{\prod_{w\in W^\scirc} r'(u)_{w,\ws_2}}{
\prod_{wx\in\Ms_1:\,w\in W^\scirc}t(u)_{\ws_1,x}}\right|}_{(\mathrm{II})} \left|\det \KD^\spartial(u)\right|,
\end{equation}
where $r'(u)_{w,\ws_2}=s(u)_{\bs_1,w}\KQu_{\bs_2,\ws_2}-s(u)_{\bs_2,w}\KQu_{\bs_1,\ws_2}$.
\end{cor}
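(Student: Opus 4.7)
The plan is to combine the two preceding results in a direct way: Lemma~\ref{lem:KQ_KD_det} gives the matrix identity
\[
\det\KQu\cdot \det T_1(u)=\det S_1^\spartial(u)\cdot\det S_1^\scirc(u)\cdot\det S_2(u)\cdot\det R(u)\cdot \det \KD^\spartial(u),
\]
so the main task is to take absolute values, divide by $|\det T_1(u)|$, and rewrite each factor. The first step is to justify the division: since $u\in\Re(\TT(k))''$, the coefficients appearing in $T_1(u)$, $S_1^\spartial(u)$, $S_1^\scirc(u)$, $S_2(u)$ are finite and non-zero (inspection of the definitions~\eqref{def:sbw}--\eqref{def:twv_boundary} together with the definition~\eqref{equ:Re_Tk_2} of $\Re(\TT(k))''$), and Corollary~\ref{cor:det_T_C} then gives that each of these determinants is non-zero, which makes the manipulation legitimate.

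Next I would evaluate the three $S$-determinants. Under the natural bijections $\Bs_1^\spartial\leftrightarrow \Ws_1^\spartial\leftrightarrow W^\spartial$, $\Bs_1^\scirc\leftrightarrow \Ws_1^\scirc\leftrightarrow W^\scirc$ and $\Bs_2\leftrightarrow \Ws_2\leftrightarrow W^\scirc$, the matrices $S_1^\spartial(u)$, $S_1^\scirc(u)$, $S_2(u)$ are diagonal (each black vertex of $\Bs_j$ has a single non-zero row coefficient, namely $s(u)_{\bs_j,w}$, where $w$ is the white vertex of $\GDro$ corresponding to the quadrangle containing $\bs_j$). Hence, up to a sign,
\[
\textstyle
|\det S_1^\spartial(u)|=\prod_{w\in W^\spartial}|s(u)_{\bs_1,w}|,\quad
|\det S_1^\scirc(u)|=\prod_{w\in W^\scirc}|s(u)_{\bs_1,w}|,\quad
|\det S_2(u)|=\prod_{w\in W^\scirc}|s(u)_{\bs_2,w}|.
\]
Corollary~\ref{cor:det_T_C} gives $|\det T_1(u)|=\prod_{wx\in\Ms_1:\,w\in W^\spartial}|t(u)_{\ws_1,x}|\cdot \prod_{wx\in\Ms_1:\,w\in W^\scirc}|t(u)_{\ws_1,x}|$ and $|\det R(u)|=\prod_{w\in W^\scirc}|r(u)_{w,\ws_2}|$ with $r(u)_{w,\ws_2}=s(u)_{\bs_2,w}^{-1}\KQu_{\bs_2,\ws_2}-s(u)_{\bs_1,w}^{-1}\KQu_{\bs_1,\ws_2}$.

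The only non-routine observation is how to absorb the extra factor $\prod_{w\in W^\scirc}|s(u)_{\bs_1,w}||s(u)_{\bs_2,w}|$ coming from $|\det S_1^\scirc(u)||\det S_2(u)|$ into $|\det R(u)|$: multiplying the entry $r(u)_{w,\ws_2}$ by $s(u)_{\bs_1,w}s(u)_{\bs_2,w}$ gives precisely
\[
s(u)_{\bs_1,w}s(u)_{\bs_2,w}\,r(u)_{w,\ws_2}=s(u)_{\bs_1,w}\KQu_{\bs_2,\ws_2}-s(u)_{\bs_2,w}\KQu_{\bs_1,\ws_2}=r'(u)_{w,\ws_2},
\]
so the combined product becomes $\prod_{w\in W^\scirc}|r'(u)_{w,\ws_2}|$. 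Substituting these evaluations back into the identity from Lemma~\ref{lem:KQ_KD_det} (taken in absolute value and divided by $|\det T_1(u)|$), and grouping the boundary contribution $\prod_{w\in W^\spartial}|s(u)_{\bs_1,w}|/\prod_{wx\in\Ms_1:\,w\in W^\spartial}|t(u)_{\ws_1,x}|$ as factor $(\mathrm{I})$ and the interior contribution $\prod_{w\in W^\scirc}|r'(u)_{w,\ws_2}|/\prod_{wx\in\Ms_1:\,w\in W^\scirc}|t(u)_{\ws_1,x}|$ as factor $(\mathrm{II})$, yields exactly~\eqref{equ:detKQ}. There is no real obstacle here; the proof is purely bookkeeping once the identification $r'=s_{\bs_1}s_{\bs_2}\,r$ is spotted.
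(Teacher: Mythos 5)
Your proof is correct and follows exactly the paper's route: the paper derives this corollary in one line by noting that $S_1^\spartial(u),S_1^\scirc(u),S_2(u)$ are diagonal and combining Lemma~\ref{lem:KQ_KD_det} with Corollary~\ref{cor:det_T_C}, which is precisely your bookkeeping, including the absorption $s(u)_{\bs_1,w}s(u)_{\bs_2,w}\,r(u)_{w,\ws_2}=r'(u)_{w,\ws_2}$.
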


\paragraph{Computation of (I)(II) in Identity~\eqref{equ:detKQ}.}  
For every $u\in\Re(\TT(k))''$,
define the weight function $\eta(u)$ on edges of $\GDro$ as follows.
\begin{equation}\label{equ:def_eta_prime}
\eta(u)_{wx}=
\begin{cases}
 \left|\frac{\sn(u_{\alpha_e})}{\cn(u_{\beta_e})}\right|[\nd(u_{\alpha_e})\dn(u_{\beta_e})]^{\frac{1}{2}} &\text{if $x=v$}\\
 (k')^{\frac{1}{2}}\left|\frac{\sn(u_{\beta_e})}{\cn(u_{\alpha_e})}\right|[\dn(u_{\alpha_e})\nd(u_{\beta_e})]^{\frac{1}{2}} &\text{if $x=f$}.
\end{cases}
\end{equation}

\begin{lem}\label{lem:prod_I_II}
The product $\mathrm{(I)(II)}$ is equal to:
\begin{equation*}
\sn(\theta^{\spartial,\rs})\left|\frac{\cd(u_{\beta^{r,\rs}})}{\sn(u_{\alpha^{r,\rs}})}\right|
\Biggl(\prod_{w\in W^\spartial}\sn(\theta_w)^{-\frac{1}{2}}\Biggr)
\Biggl(\prod_{w\in W}|\sn(\theta_w)\cn(\theta_w)|^{\frac{1}{2}}\Biggr)
\prod_{wx\in \Ms_1:\,w\in W} \eta(u)_{wx}.
\end{equation*}
\end{lem}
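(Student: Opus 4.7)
\medskip

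\noindent\textbf{Plan.} The proof will proceed by computing (I) and (II) separately, analysing local contributions around each white vertex involved: (I) decomposes according to the boundary rhombus pairs of $\R^\spartial$ (since $W^\spartial$ consists precisely of the $w^\ell,w^r$ belonging to such pairs), while (II) decomposes according to the inner quadrangles of $\GQ$ (equivalently, to $w\in W^\scirc$).

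\medskip

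\noindent\textbf{Step 1 --- Contribution (I).} I would enumerate the boundary rhombus pairs of $\R^\spartial$ and, for each non-root pair, plug the boundary-case formulas~\eqref{def:sbl_boundary} for $s(u)_{\bs^\ell,w^\ell}$, $s(u)_{\bs^r,w^r}$ and~\eqref{def:twv_boundary} (together with~\eqref{def:twvf}) for the relevant coefficients $t(u)_{\ws_1,x}$ on the matching edges of $\Ms_1$ incident to $w^\ell$ and $w^r$. Because $\Ms_1$ is built from a spanning tree of $\Gs^*$ rooted near $\rs$, the structure of $\Ms_1$ near the boundary is rigid: the matching edge at a boundary $w$ is forced to be the dual edge exiting this boundary rhombus pair. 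This allows the local contribution at each non-root pair to be simplified to $\sn(\theta_{w^\ell})^{-\frac12}\sn(\theta_{w^r})^{-\frac12}[\sn\cn]^{1/2}$-type factors times the required $\eta(u)$-weights on the two matching edges. The root pair is slightly different since the matching edge at $w^{\ell,\rs}$ and $w^{r,\rs}$ cannot go through the removed root vertex $\rs$; tracking these exceptional edges is precisely what produces the special factor $\sn(\theta^{\spartial,\rs})\bigl|\cd(u_{\beta^{r,\rs}})/\sn(u_{\alpha^{r,\rs}})\bigr|$.

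\medskip

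\noindent\textbf{Step 2 --- Contribution (II).} For each inner white vertex $w\in W^\scirc$, I would compute the expression
\begin{equation*}
r'(u)_{w,\ws_2}=s(u)_{\bs_1,w}\,\KQu_{\bs_2,\ws_2}-s(u)_{\bs_2,w}\,\KQu_{\bs_1,\ws_2}
\end{equation*}
by substituting the explicit formulas~\eqref{def:sbw} for $s$ and the Kasteleyn weights of~\eqref{eq:def_nu_Zinv}, with the two rhombus-vector assignments corresponding to the two black vertices $\bs_1,\bs_2$ of the quadrangle of $w$. After factoring out the common phase and the factor $[\sn(\theta_w)\cn(\theta_w)]^{1/2}$, the remaining piece is a difference of two elliptic products which is simplified by a Jacobi addition formula of the same type as~\eqref{equ:Lawden1} (applied with the arguments determined by the two half-rhombi of $\GRR$ adjacent to $w$). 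One then divides by $t(u)_{\ws_1,x}$ with $wx\in\Ms_1$; this ratio collapses, in the two cases $x\in\Vs$ or $x\in\Vs^*$, to exactly the weight $\eta(u)_{wx}$ of~\eqref{equ:def_eta_prime}, which is the uniform way of encoding both cases.

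\medskip

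\noindent\textbf{Step 3 --- Assembly.} Finally, I would multiply the local contributions produced in Steps~1 and~2. The phase factors cancel by the usual alternating-product argument around quadrangles (already invoked in Remark~\ref{rem:partition_funciont_KD_Lap} and Lemma~\ref{lem:K_Kg}); the $[\sn\cn]^{1/2}$ factors assemble into $\prod_{w\in W}|\sn(\theta_w)\cn(\theta_w)|^{1/2}$, the boundary $\sn(\theta_w)^{-1/2}$ factors assemble into $\prod_{w\in W^\spartial}\sn(\theta_w)^{-1/2}$, the $\eta(u)$-weights assemble into the product over $\Ms_1$, and the root-pair contribution survives untouched. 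The main obstacle is Step~2: the case analysis according to the relative position of $\bs_1,\bs_2$ and the direction of the matching edge $wx$ (four sub-cases) is where the elliptic algebra is non-trivial, and one must verify that the same uniform expression in terms of $\eta(u)_{wx}$ arises in every sub-case.
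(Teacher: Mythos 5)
Your overall strategy is the same as the paper's: split the product into the boundary contribution (I), indexed by the rhombus pairs of $\R^\spartial$ with the root pair treated separately, and the inner contribution (II), indexed by $w\in W^\scirc$; compute each local ratio by substituting the explicit formulas \eqref{def:sbw}, \eqref{def:sbl_boundary}, \eqref{def:twvf}, \eqref{def:twv_boundary} and the Kasteleyn weights; simplify $r'(u)_{w,\ws_2}$ with a Jacobi addition formula (the paper uses the companion identity $\dn u\,\sn(u+v)=\cn u\,\sn v+\sn u\,\cn v\,\dn(u+v)$ rather than \eqref{equ:Lawden1} itself); and recognize the ratio $|r'/t|$ as $[\sn\theta_w\cn\theta_w]^{1/2}\eta(u)_{wx}$ in both the $x=v$ and $x=f$ cases. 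A small efficiency point: the paper disposes of the $x=f$ case by observing that $r'$ is skew-symmetric in $\bs_1,\bs_2$ and that passing from $x=v$ to $x=f$ exactly exchanges $\bs_1$ and $\bs_2$ in the partition, so no second elliptic computation is needed; your four sub-cases collapse to one.

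There is, however, a concrete error in your Step~1 that would derail the computation of (I) as written. You assert that at every boundary rhombus pair the matching edge of $\Ms_1$ at a boundary white vertex is ``the dual edge exiting this boundary rhombus pair.'' The opposite is true: $\Ms_1$ is built from a spanning tree of the \emph{restricted} dual $\Gs^*$ rooted at $f^{c,\rs}$, to which only the single edge $(f^{c,\rs},\outer)$ is appended, so the dual tree crosses the boundary exactly once. Consequently, at every non-root pair both boundary white vertices are matched to \emph{primal} vertices, namely $w^\ell v^c$ and $w^r v^r$ (this is why the denominator of the non-root factor in (I) is $t(u)_{\ws^r,v^r}t(u)_{\ws^c,v^c}$), and the dual matching edge $w^{\ell,\rs}f^{c,\rs}$ occurs only at the root pair, where it is forced because $\rs$ has been deleted. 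Computing the local ratios $|s/t|$ with dual $t$-coefficients at the non-root pairs would produce the wrong constant (and is combinatorially impossible anyway, since $w^\ell$ and $w^r$ cannot both be matched to the single vertex $f^c$). Once the boundary structure is corrected, the rest of your assembly in Step~3 goes through as you describe, with the root-pair anomaly yielding precisely the factor $\sn(\theta^{\spartial,\rs})\bigl|\cd(u_{\beta^{r,\rs}})/\sn(u_{\alpha^{r,\rs}})\bigr|$ after comparing $t(u)_{\ws^c,f^c}$ with $t(u)_{\ws^c,v^c}$ and $\eta(u)_{w^\ell f^c}$ with $\eta(u)_{w^\ell v^c}$.
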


\begin{proof}
In the whole of the proof, we simply denote $\bar{\theta}_w$ by $\bar{\theta}$, and omit the argument $u$ from matrix coefficients.

We first handle Part (II) involving inner vertices. Let $w\in W^\scirc$ and $wx$ be an edge of the perfect matching
$\Ms_1$, with $x=v$ or $f$. 
By definition of the partition of black/white vertices arising from $\Ms_1$, we have $\bs_1,\bs_2,\ws_1,\ws_2$ as in 
Figure~\ref{fig:preuve_detKQ}. Let $v'$ be the primal vertex such that the edge $v'v$ crosses the quadrangle, and let $2e^{i\bar{\alpha}}$,
$2e^{i\bar{\beta}}$ be the two rhombus vectors of $\GR$ associated to the edge $(v',v)$. 

\begin{figure}[ht]
\begin{minipage}[b]{0.5\linewidth}
\begin{center}
\begin{overpic}[width=4cm]{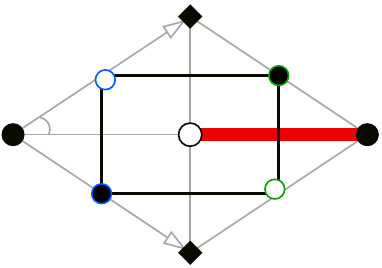}
  \put(15,9){\scriptsize $2e^{i\bar{\alpha}}$}
  \put(15,51){\scriptsize $2e^{i\bar{\beta}}$}
  \put(-6,34){\scriptsize $v'$}
  \put(100,34){\scriptsize $v$}
  \put(53,28){\scriptsize $w$}
  \put(48,-6){\scriptsize $f$}
  \put(15,36){\scriptsize $\bar{\theta}$}
  \put(29,22){\scriptsize $\bs_2$}
  \put(29,43){\scriptsize $\ws_2$}
  \put(60,22){\scriptsize $\ws_1$}
  \put(62,43){\scriptsize $\bs_1$}
\end{overpic}
\end{center}
\end{minipage}
\begin{minipage}[b]{0.5\linewidth}
\begin{center}
\begin{overpic}[width=4cm]{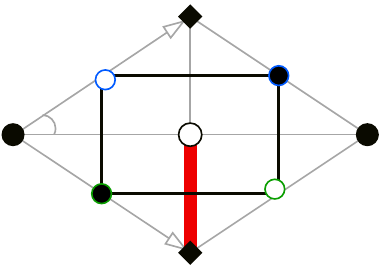}
  \put(15,9){\scriptsize $2e^{i\bar{\alpha}}$}
  \put(15,51){\scriptsize $2e^{i\bar{\beta}}$}
  \put(-6,34){\scriptsize $v'$}
  \put(101,34){\scriptsize $v$}
  \put(48,-6){\scriptsize $f$}
  \put(53,28){\scriptsize $w$}
  \put(16,36){\scriptsize $\bar{\theta}$}
  \put(29,22){\scriptsize $\bs_1$}
  \put(29,43){\scriptsize $\ws_2$}
  \put(60,22){\scriptsize $\ws_1$}
  \put(62,43){\scriptsize $\bs_2$}
\end{overpic}
\end{center}
\end{minipage}
\caption{Notation for computing $\frac{r'_{w,\ws_2}}{t_{\ws_1,x}}$, when $w\in W^\scirc$ and $x=v$ (left), $x=f$ (right).}
\label{fig:preuve_detKQ}
\end{figure}
We compute the term $r'_{w,\ws_2}$ when $x=v$. Using the notation of Figure~\ref{fig:preuve_detKQ},
the rhombus vectors of $\GRR$ assigned to the edge $(w,v)$ are $e^{i\bar{\alpha}_e}=e^{i\bar{\alpha}}$, $e^{i\bar{\beta}_e}=e^{i\bar{\beta}}$.
Returning to the definition of the matrices $\KQu$ and $S$, see~\eqref{def:sbw}, we have
\begin{align*}
s_{\bs_1,w}\KQu_{\bs_2,\ws_2}&= e^{-i\frac{\bar{\beta}+\pi}{2}}\cn(u_{\beta+2K})
  [\sn(\theta)\cn(\theta)\nd(u_{\alpha+2K})\nd(u_{\beta+2K})]^{\frac{1}{2}}\times
  e^{i\frac{\bar{\beta}+\bar{\alpha}+\pi}{2}}\cn(\theta)\\
  &=e^{i\frac{\bar{\alpha}}{2}}\cn(\theta)\sn(u_{\beta})\dn(u_\alpha)[\sn(\theta)\cn(\theta)\nd(u_{\alpha})\nd(u_{\beta})]^{\frac{1}{2}}\\
s_{\bs_2,w}\KQu_{\bs_1,\ws_2}&=
   e^{-i\frac{\bar{\beta}}{2}}\cn(u_{\beta})
  [\sn(\theta)\cn(\theta)\nd(u_{\alpha})\nd(u_{\beta})]^{\frac{1}{2}}\times
  e^{i\frac{\bar{\alpha}+\pi+\bar{\beta}+\pi}{2}}\sn(\theta)\\
&=-e^{i\frac{\bar{\alpha}}{2}}\sn(\theta)\cn(u_\beta)[\sn(\theta)\cn(\theta)\nd(u_\alpha)\nd(u_\beta)]^{\frac{1}{2}}.
\end{align*}
As a consequence,
\begin{align*}
r'_{w,\ws_2}&=
e^{i\frac{\bar{\alpha}}{2}}[\sn(\theta)\cn(\theta)\nd(u_{\alpha})\nd(u_{\beta})]^{\frac{1}{2}}
\bigl(\cn(\theta)\sn(u_{\beta})\dn(u_\alpha)+\sn(\theta)\cn(u_\beta)\bigr)\nonumber\\
&=e^{i\frac{\bar{\alpha}}{2}}[\sn(\theta)\cn(\theta)\nd(u_{\alpha})\nd(u_{\beta})]^{\frac{1}{2}}\dn(u_\beta)\sn(u_\alpha)\nonumber\\
&=e^{i\frac{\bar{\alpha}}{2}}\sn(u_\alpha)[\sn(\theta)\cn(\theta)\nd(u_\alpha)\dn(u_\beta)]^{\frac{1}{2}},
\end{align*}
using the identity 
$\dn(u)\sn(u+v)=\cn(u)\sn(v)+\sn(u)\cn(v)\dn(u+v)$~\cite[chap.2, ex.32 (ii)]{Lawden}, evaluated at $u=u_\beta,v=\theta,u+v=u_\alpha$, in the penultimate line.
Since $e^{i\bar{\alpha}_e}=e^{i\bar{\alpha}}$, $e^{i\bar{\beta}_e}=e^{i\bar{\beta}}$, we have
\begin{equation}\label{equ:cuv}
r'_{w,\ws_2}=e^{i\frac{\bar{\alpha}_e}{2}}\sn(u_{\alpha_e})[\sn(\theta)\cn(\theta)\nd(u_{\alpha_e})\dn(u_{\beta_e})]^{\frac{1}{2}}.
\end{equation}
By definition of $T$, see~\eqref{def:twvf}, we have $t_{\ws_1,v}=e^{-i\frac{\bar{\beta}}{2}}\cn(u_{\beta})=e^{-i\frac{\bar{\beta}_e}{2}}\cn(u_{\beta_e})$,
and we conclude that 
\begin{equation*}
\left|\frac{r'_{w,\ws_2}}{t_{\ws_1,v}}\right|=[\sn(\theta)\cn(\theta)]^\frac{1}{2}
\left|\frac{\sn(u_{\alpha_e})}{\cn(u_{\beta_e})}\right|[\nd(u_{\alpha_e})\dn(u_{\beta_e})]^{\frac{1}{2}}.
\end{equation*}

We now turn to the term $r'_{w,\ws_2}$ in the case where $x=f$.
The rhombus vectors of $\GRR$
assigned to the edge $(w,f)$ are $e^{i\bar{\alpha}_e}=e^{i(\bar{\beta}-\pi)}$, $e^{i\bar{\beta}_e}=e^{i\bar{\alpha}}$. Moreover,
referring to Figure~\ref{fig:preuve_detKQ}, we see that taking $x=f$ has the effect of exchanging $\bs_1$ and
$\bs_2$ and leaving $\ws_1,\ws_2$ fixed. The quantity $r'_{w,\ws_2}$ being skew-symmetric in $\bs_1,\bs_2$, we have that $r'_{w,\ws_2}$ is equal 
to the opposite of~\eqref{equ:cuv}. As a consequence,
\begin{align*}
r'_{w,\ws_2}&=-e^{i\frac{\bar{\alpha}}{2}}\sn(u_\alpha)[\sn(\theta)\cn(\theta)\nd(u_\alpha)\dn(u_\beta)]^{\frac{1}{2}}\nonumber\\
&=-e^{i\frac{\bar{\beta}_e}{2}}\sn(u_{\beta_e})[\sn(\theta)\cn(\theta)\nd(u_{\beta_e})\dn(u_{\alpha_e+2K})]^{\frac{1}{2}}
\nonumber\\
&=-(k')^{\frac{1}{2}}e^{i\frac{\bar{\beta}_e}{2}}\sn(u_{\beta_e})[\sn(\theta)\cn(\theta)\nd(u_{\beta_e})\nd(u_{\alpha_e})]^{\frac{1}{2}}.
\end{align*}
using that $e^{i\bar{\alpha}_e}=e^{i(\bar{\beta}-\pi)}$, $e^{i\bar{\beta}_e}=e^{i\bar{\alpha}}$, and that $\dn(u-K)=k'\nd(u)$.
By definition of $T$, we have $t_{\ws_1,f}=e^{-i\frac{\bar{\beta}-\pi}{2}}\cd(u_{\beta-2K})=e^{-i\frac{\bar{\alpha}_e}{2}}\cd(u_{\alpha_e})$,
and we deduce that
\begin{equation*}
\left|\frac{r'_{w,\ws_2}}{t_{\ws_1,v}}\right|=(k')^{\frac{1}{2}}[\sn(\theta)\cn(\theta)]^\frac{1}{2}
\left|\frac{\sn(u_{\beta_e})}{\cn(u_{\alpha_e})}\right|[\nd(u_{\beta_e})\dn(u_{\alpha_e})]^{\frac{1}{2}}.
\end{equation*}
Summarizing, we have proved that, for every edge $wx\in\Ms_1$ such that $w\in W^\scirc$,
\begin{align*}
\left|\frac{r'_{w,\ws_2}}{t_{\ws_1,x}}\right|&=
[\sn(\theta)\cn(\theta)]^{\frac{1}{2}}\times
\begin{cases}
\left|\frac{\sn(u_{\alpha_e})}{\cn(u_{\beta_e})}\right|[\nd(u_{\alpha_e})\dn(u_{\beta_e})]^{\frac{1}{2}},&\text{ if $x=v$}\\
(k')^{\frac{1}{2}}\left|\frac{\sn(u_{\beta_e})}{\cn(u_{\alpha_e})}\right|[\nd(u_{\beta_e})\dn(u_{\alpha_e})]^{\frac{1}{2}},&\text{ if $x=f$}
\end{cases}\\
&=[\sn(\theta)\cn(\theta)]^{\frac{1}{2}}\,\eta_{wx};
\end{align*}
and thus,
\begin{equation}\label{equ:II}
\mathrm{(II)}=\prod_{w\in W^\scirc}[\sn(\theta)\cn(\theta)]^{\frac{1}{2}}\prod_{wx\in \Ms_1:\,w\in W^\scirc}\eta_{wx}. 
\end{equation}

We now compute Part (I) involving boundary vertices of $W^\spartial$.
We will be using the notation of Figure~\ref{fig:def_ST} 
for vertices, rhombus vectors and angles of boundary rhombus pairs of $\R^\spartial$, and add a superscript $\rs$ 
when the pair is the root pair, \emph{i.e.}, the one where $v^c=\rs$. With our choice of perfect matching 
$\Ms_1$, the boundary contribution (I) of~\eqref{equ:detKQ} can be rewritten as, see also Figure~\ref{fig:G_Gdual_3} (right):
\begin{equation*}
(\mathrm{I})=\Bigl(\prod_{w^r,w^\ell\in W^{\spartial}\setminus\{w^{r,\rs},w^{\ell,\rs}\}}
\Bigl|\frac{s_{\bs^r,w^r}s_{\bs^\ell,w^\ell} }{t_{\ws^r,v^r}t_{\ws^c,v^c}}\Bigr|\Bigr)\times
\Bigl|\frac{s_{\bs^{r,\rs},w^{r,\rs}}s_{\bs^{\ell,\rs},w^{\ell,\rs}}}{
t_{\ws^{r,\rs},v^{r,\rs}}t_{\ws^{c,\rs},f^{c,\rs}}}\Bigr|.
\end{equation*}

Suppose first that $w^r,w^\ell\in W^\spartial\setminus\{w^{r,\rs},w^{\ell,\rs}\}$. Recalling the definition of $S$ and $T$ along the boundary, 
see~\eqref{def:sbl_boundary} and~\eqref{def:twv_boundary}, we have
\begin{align*}
\left|\frac{s_{\bs^r,w^r}s_{\bs^\ell,w^\ell} }{t_{\ws^r,v^r}t_{\ws^c,v^c}}\right|&=
\left|\frac{\sn(\theta^\spartial)\cn(\theta^\spartial)\cn(u_{\beta^r})
\cn(u_{\alpha^\ell})[\nd(u_{\alpha^r})\nd(u_{\beta^r})\nd(u_{\alpha^\ell})\nd(u_{\beta^\ell})]^{\frac{1}{2}}
}{\cn(u_{\beta^r})\times k'\sn(\theta^\spartial)\nd(u_{\alpha^r})\cd(u_{\beta^r})}\right|\\
&=\left|
\frac{\cn(\theta^\spartial)\cn(u_{\alpha^\ell})}{k'\cn(u_{\beta^r})}
[\dn(u_{\alpha^r})\dn(u_{\beta^r})\nd(u_{\alpha^\ell})\nd(u_{\beta^\ell})]^{\frac{1}{2}}
\right|.
\end{align*}

On the other hand, by definition of $\eta$, the product of weights of the edges $w^r v^r,w^\ell v^c$ of the perfect matching $\Ms_1$ is equal to,
\begin{align*}
\eta_{w^r v^r}\eta_{w^\ell v^c}&=
\left|
\frac{\sn(u_{\alpha^r})}{\cn(u_{\beta^r})}\frac{\sn(u_{\alpha^\ell+2K})}{\cn(u_{\beta^\ell+2K})}
[\nd(u_{\alpha^r})\dn(u_{\beta^r})\nd(u_{\alpha^\ell+2K})\dn(u_{\beta^\ell+2K})]^{\frac{1}{2}}
\right|\\
&=\left|
\frac{\cn(u_{\alpha^\ell})}{k'\cn(u_{\beta^r})}
[\dn(u_{\alpha^r})\dn(u_{\beta^r})\nd(u_{\alpha^\ell})\nd(u_{\beta^\ell})]^{\frac{1}{2}}
\right|,
\end{align*}
writing $u_{\alpha^\ell+2K}=u_{\alpha_{\ell}}-K$, using elliptic trigonometric identities and the fact that $\bar{\alpha}^r=\bar{\beta}^\ell[2\pi]$.

so that, $\left|\frac{s_{\bs^r,w^r}s_{\bs^\ell,w^\ell} }{t_{\ws^r,v^r}t_{\ws^c,v^c}}\right|=\cn(\theta^\spartial)\eta_{w^\ell v^c}\eta_{w^r v^r}$.

Let us now consider the term $\Bigl|\frac{s_{\bs^{r,\rs},w^{r,\rs}}s_{\bs^{\ell,\rs},w^{\ell,\rs}}}{
t_{\ws^{r,\rs},v^{r,\rs}}t_{\ws^{c,\rs},f^{c,\rs}}}\Bigr|$. For the purpose of this computation, it is useful to imagine that the vertex 
$v^c=\rs$ is present
and that $t_{\ws^{c,\rs},v^{c,\rs}}, \eta_{w^{\ell,\rs} v^{c,\rs}}$ are defined as for the other pairs of rhombi. We then have, omitting to write the superscript
$\rs$,
\begin{align*}
\left|\frac{s_{\bs^r,w^r}s_{\bs^\ell,w^\ell}}{
t_{\ws^r,v^r}t_{\ws^c,f^c}}\frac{1}{\eta_{w^r v^r}\eta_{w^\ell f^c}}\right|&=
\left|\frac{s_{\bs^r,w^r}s_{\bs^\ell,w^\ell} }{t_{\ws^r,v^r}t_{\ws^c,v^c}}
\times \frac{1}{\eta_{w^r v^r}\eta_{w^\ell v^c}}\times\frac{\eta_{w^\ell,v^c}}{\eta_{w^\ell f^c}}\times\frac{t_{\ws^c,v^c}}{t_{\ws^c,f^c}}\right|.
\end{align*}
The product of the first two terms is equal to $|\cn(\theta^\spartial)|$ by the above computation. Then, returning to the definition of 
the weight function $\eta$, we have
\begin{align*}
\left|\frac{\eta_{w^\ell v^c}}{\eta_{w^\ell f}}\right|&=\left|
\frac{\sn(u_{\alpha^\ell+2K})}{\cn(u_{\beta^\ell+2K})}[\nd(u_{\alpha^\ell+2K})\dn(u_{\beta^\ell+2K})]^\frac{1}{2}
\frac{\cn(u_{\beta^\ell})}{\sn(u_{\alpha^\ell+2K})}[\nd(u_{\beta^\ell})\dn(u_{\alpha^\ell+2K})]^\frac{1}{2}(k')^{-\frac{1}{2}}
\right|\\
&=\left|(k')^{-1}\cs(u_{\beta^\ell})\right|.
\end{align*}
Returning to the definition of the matrix $T$ along the boundary, we have
\begin{align*}
\left|\frac{t_{\ws^c,v^c}}{t_{\ws^c,f^c}}\right|&=\left|
\frac{k'\sn(\theta^\spartial)\nd(u_{\alpha^r})\cd(u_{\beta^r})}{\cd(u_{\beta^\ell})}\right|=\left|\frac{k'\sn(\theta^\spartial)\cd(u_{\beta^r})
}{\cn(u_{\alpha^r})}\right|,
\end{align*}
using that $\bar{\beta}^\ell=\bar{\alpha}^r[2\pi]$. Putting the three computations together, and writing the superscript $\rs$ again, we deduce that 
$\left|\frac{s_{\bs^{r,\rs},w^{r,\rs}}s_{\bs^{\ell,\rs},w^{\ell,\rs}}}{
t_{\ws^{r,\rs},v^{r,\rs}}t_{\ws^{c,\rs},f\ro}}\frac{1}{\eta_{w^{r,\rs} v^{r,\rs}}\eta_{w^{\ell,\rs} f\ro}}\right|=
\left|\sn(\theta^{\spartial,\rs})\cn(\theta^{\spartial,\rs})\frac{\cd(u_{\beta^{r,\rs}})}{\sn(u_{\alpha^{r,\rs}})}\right|$, and thus
\begin{align}
\mathrm{(I)}&=\bigl(\sn(\theta^{\spartial,\rs})\cn(\theta^{\spartial,\rs})\bigr)\left|\frac{\cd(u_{\beta^{r,\rs}})}{\sn(u_{\alpha^{r,\rs}})}\right|
\Bigl(\prod_{w^r,w^\ell\in W^\spartial\setminus\{w^{r,\rs},w^{\ell,\rs}\}}\cn(\theta^\spartial)^\frac{1}{2}\Bigr)
\Bigl(\prod_{wx\in\Ms_1:\,w\in W^\spartial }\eta_{w x}\Bigr)\nonumber\\
&=\sn(\theta^{\spartial,\rs})\left|\frac{\cd(u_{\beta^{r,\rs}})}{\sn(u_{\alpha^{r,\rs}})}\right|
\Bigl(\prod_{w\in W^\spartial}\cn(\theta_w)^\frac{1}{2}\Bigr)
\Bigl(\prod_{wx\in\Ms_1:\,w\in W^\spartial }\eta_{w x}\Bigr),
\label{equ:I} 
\end{align}
using that $W^\spartial=\{w_\ell,w_r\in \R^\spartial\}$.
Combining \eqref{equ:II} and \eqref{equ:I} allows to conclude the proof of Lemma~\ref{lem:prod_I_II}.
\end{proof}

The next lemma proves a simplified expression for the product of the weights $\eta_{w x}$ in Lemma~\ref{lem:prod_I_II}.
\begin{lem}\label{lem:simplified}
For every $u\in\Re(\TT(k))''$, we have the following identity,
\begin{equation*}
\prod_{wx\in\Ms_1:w\in W}\eta_{w x}=(k')^{\frac{|\Vs^*|}{2}}\prod_{wx\in\Ms_1:w\in W}|\sc(u_{\alpha_e})\sc(u_{\beta_e})|^{\frac{1}{2}},
\end{equation*}
where the weight function $\eta$ is defined in~\eqref{equ:def_eta_prime}.
\end{lem}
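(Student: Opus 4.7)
The plan has two stages: an elliptic-function reduction that collapses the identity per edge, and a combinatorial verification of the resulting product identity.

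\textbf{Elliptic reduction.} Using the duplication formula $2\,\sn(u)\cn(u)=\sn(2u)/\dn(u)$ with $2u_{\alpha_e}=u-\alpha_e$ and $2u_{\beta_e}=u-\beta_e$, one expands
\[
|\sc(u_{\alpha_e})\sc(u_{\beta_e})|^{1/2}=\frac{|\sn(u-\alpha_e)\sn(u-\beta_e)|^{1/2}}{2\,[\dn(u_{\alpha_e})\dn(u_{\beta_e})]^{1/2}|\cn(u_{\alpha_e})\cn(u_{\beta_e})|}.
\]
Setting $\sigma(\alpha):=|\sn(u-\alpha)|^{1/2}/\dn(u_\alpha)$, a direct calculation from the definition of $\eta$ in~\eqref{equ:def_eta_prime} yields
\[
\frac{\eta_{wv}}{|\sc(u_{\alpha_e})\sc(u_{\beta_e})|^{1/2}}=\frac{\sigma(\alpha_e)}{\sigma(\beta_e)},\qquad \frac{\eta_{wf}}{(k')^{1/2}|\sc(u_{\alpha_e})\sc(u_{\beta_e})|^{1/2}}=\frac{\sigma(\beta_e)}{\sigma(\alpha_e)},
\]
so the two per-edge ratios are mutual reciprocals. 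By the KPW--Temperley bijection the matching $\Ms_1$ splits into primal edges (those of $\Ts$) and $|\Vs^*|$ dual edges (those of $\Ts^*$); each dual edge contributes one factor $(k')^{1/2}$, giving the global $(k')^{|\Vs^*|/2}$ on the right-hand side. The lemma reduces to the identity
\[
\prod_{wx\in\Ms_1,\,x\in\Vs^\rs}\frac{\sigma(\alpha_e)}{\sigma(\beta_e)}\;=\;\prod_{wx\in\Ms_1,\,x\in\Vs^*}\frac{\sigma(\alpha_e)}{\sigma(\beta_e)}.
\]

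\textbf{Combinatorial step.} For a white vertex $w$ centred on a $\GR$-rhombus with vectors $e^{i\bar\alpha},e^{i\bar\beta}$, the four candidate matching edges carry the rhombus data $(\alpha_e,\beta_e)=(\alpha,\beta),(\alpha+2K,\beta+2K),(\alpha,\beta-2K),(\alpha+2K,\beta)$ for $wv_2,wv_1,wf_1,wf_2$ respectively, as read off from the explicit coefficients of $\KD(u)$ given in Section~\ref{sec:def_Dirac}. The shift identities $\sn(u\pm 2K)=-\sn(u)$ and $\dn(u\pm K)=k'\nd(u)$ imply
\[
\sigma(\alpha+2K)=\sigma(\alpha)\,\dn^2(u_\alpha)/k',\qquad \sigma(\beta-2K)=\sigma(\beta+2K)=\sigma(\beta)\,\dn^2(u_\beta)/k',
\]
whence the product of the four signed candidate factors $[\sigma(\alpha_e)/\sigma(\beta_e)]^{\epsilon(wx)}$ at each $w$ (with $\epsilon=+1$ for primal, $-1$ for dual) telescopes to one. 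This local balance, combined with the explicit construction of $\Ms_1$ at the start of the proof of Theorem~\ref{thm:part_function} --- where $\Ts^*$ is extended from a spanning tree of $\Gs^*$ by the root edge $(f^{c,\rs},\outer)$ and $\Ts$ is its planar dual --- allows the global product to telescope along the train-tracks of $\GR$ to one.

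The main obstacle lies in the last, global cancellation: although the four local factors at each white vertex multiply to one, the matching selects only a single candidate per $w$, so the cancellation must be established by pairing $\sigma$-contributions across distinct $\GR$-rhombi. The planar duality of $(\Ts,\Ts^*)$ --- each $\GR$-rhombus being crossed by exactly one tree edge under Temperley's bijection, primal or dual --- together with a direction-by-direction train-track bookkeeping of the $2K$-shifted $\sigma$ factors, is what makes the total exponent of each $\sigma(\gamma)$ vanish and yields the desired equality.
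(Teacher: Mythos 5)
Your overall strategy --- factor $|\sc(u_{\alpha_e})\sc(u_{\beta_e})|^{\frac{1}{2}}$ and one $(k')^{\frac{1}{2}}$ per dual edge out of $\eta_{wx}$, reduce each remaining factor to a ratio of a single function of rhombus-vector directions, and then cancel combinatorially --- is the right one and is in spirit the paper's. But the elliptic reduction is wrong as written. The formula $2\sn(u)\cn(u)=\sn(2u)/\dn(u)$ holds only at $k=0$; the correct duplication formula carries the extra denominator $1-k^{2}\sn^{4}(u)$. As a result your $\sigma(\gamma)=|\sn(u-\gamma)|^{\frac{1}{2}}/\dn(u_{\gamma})$ differs from the function that actually occurs, namely
\[
\tau(\gamma)=\bigl|\sn(u_{\gamma})\cn(u_{\gamma})\nd(u_{\gamma})\bigr|^{\frac{1}{2}}=|\sn(u_{\gamma})\sn(u_{\gamma+2K})|^{\frac{1}{2}},
\]
by the $\gamma$-dependent factor $\bigl[2/|1-k^{2}\sn^{4}(u_{\gamma})|\bigr]^{\frac{1}{2}}$, so the claimed identities $\eta_{wv}/|\sc(u_{\alpha_e})\sc(u_{\beta_e})|^{\frac{1}{2}}=\sigma(\alpha_e)/\sigma(\beta_e)$ are false for $k\neq 0$. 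A direct computation from~\eqref{equ:def_eta_prime} gives $\tau(\alpha_e)/\tau(\beta_e)$ for primal edges and $\tau(\beta_e)/\tau(\alpha_e)$ for dual edges. The decisive property is that $\tau(\gamma\pm 2K)=\tau(\gamma)$, whereas your $\sigma$ picks up $\dn^{2}(u_{\gamma})/k'$ under such shifts --- which is precisely why your bookkeeping becomes unmanageable. (Your table of $(\alpha_e,\beta_e)$ for the dual edges is also swapped: for $wf_1$ one has $(\alpha_e,\beta_e)=(\beta-2K,\alpha)$, not $(\alpha,\beta-2K)$.)

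More seriously, the global cancellation --- the actual content of the lemma --- is not established. The local statement you prove, that the product over the \emph{four} candidate edges at a white vertex of the signed factors equals one, is not the relevant one, since $\Ms_1$ selects a single edge per white vertex; you acknowledge this and defer to an unspecified train-track bookkeeping. The statement that closes the argument is different and stronger: with the correct $\tau$ and the correct vector assignments, the factor attached to $wx$ is \emph{the same} for all four choices of $x$, equal to $|\sn(u_{\alpha})\sn(u_{\alpha+2K})/(\sn(u_{\beta})\sn(u_{\beta+2K}))|^{\frac{1}{2}}$ where $2e^{i\bar\alpha},2e^{i\bar\beta}$ are the $\GR$-rhombus vectors at $w$. (This is incompatible with your ``product of the four equals one'', which would force each factor to be $1$; that balance is an artifact of the swapped assignments.) Equality of the four candidates makes the product over $\Ms_1$ independent of the matching and turns it into a product over all $w\in W$, i.e.\ over rhombus vectors of $\GR$: each inner vector lies in exactly two rhombi, once as an $\alpha$-type and once as a $\beta$-type, hence once in a numerator and once in a denominator, and boundary vectors cancel in parallel train-track pairs. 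Without this step, or a worked-out substitute, the lemma is not proved.
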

\begin{proof}
We have the following identities:
\begin{align*}
|\sn(u_{\alpha})|\nd(u_{\alpha})^{\frac{1}{2}}&=|\sn(u_{\alpha})\sn(u_{\alpha+2K})|^\frac{1}{2} |\sc(u_{\alpha})|^\frac{1}{2}\\
|\cn(u_{\alpha})|\nd(u_{\alpha})^{\frac{1}{2}}&=|\sn(u_{\alpha})\sn(u_{\alpha+2K})|^\frac{1}{2}|\cs(u_{\alpha})|^\frac{1}{2}.
\end{align*}
As a consequence, for every edge $wx$ of $\GDro$, the weight function $\eta_{w x}$ can be rewritten as:
\begin{equation*}
\eta_{wx}=(k')^{\frac{1}{2}\II_{\{x\in \Vs^*\}}}|\sc(u_{\alpha_e})\sc(u_{\beta_e})|\eta'_{w x}, 
\text{ where } \eta'_{w x}=
\begin{cases}
\left|\frac{\sn(u_{\alpha_e})\sn(u_{\alpha_e+2K})}{\sn(u_{\beta_e})\sn(u_{\beta_e+2K})}\right|^{\frac{1}{2}}&\text{ if $x=v$}\\
\left|\frac{\sn(u_{\beta_e})\sn(u_{\beta_e+2K})}{\sn(u_{\alpha_e})\sn(u_{\alpha_e+2K})}\right|^{\frac{1}{2}}&\text{ if $x=f$}.
\end{cases}
\end{equation*}

Now consider a vertex $w$ of $\GDro$ and, using the notation of Figure~\ref{fig:notations}, the corresponding rhombus $v_1,f_1,v_2,f_2$
of the diamond graph $\GR$. Introduce the following notation for the rhombus vectors:
\begin{equation}\label{equ:star}
(v_1,f_1)=e^{i\bar{\alpha}_1(w)},\ (v_2,f_2)=e^{i\bar{\alpha}_2(w)},\ (v_1,f_2)=e^{i\bar{\beta}_1(w)},\ (v_2,f_1)=e^{i\bar{\beta}_2(w)},
\end{equation}
that is, the notation $\alpha$, resp. $\beta$, is for vectors on the right, resp. left, of the primal edge of the rhombus.
With this notation, the weight $\eta'_{wx}$ can be written as
\[
\eta'_{wx}=\left|\frac{\sn(u_{\alpha_1(w)})\sn(u_{\alpha_2(w)})}{\sn(u_{\beta_1(w)})\sn(u_{\beta_2(w)})}\right|^{\frac{1}{2}},
\]
and this, independently of whether $x=v$ or $f$. Let us prove that
\[
\prod_{wx\in\Ms_1:w\in W}\eta'_{wx}=
\prod_{w\in W}\left|\frac{\sn(u_{\alpha_1(w)})\sn(u_{\alpha_2(w)})}{\sn(u_{\beta_1(w)})\sn(u_{\beta_2(w)})}\right|^{\frac{1}{2}}=1.
\]
Because of~\eqref{equ:star}, the product over white vertices $W$ can be seen as a product over rhombus vectors of $\GR$. 
Then, every inner 
rhombus vector of $\GR$ occurs twice exactly and contributes once to the numerator and once to the denominator, so that the contributions cancel.
Boundary rhombus vectors of $\GR$ occur once but, referring to Section~\ref{sec:train_tracks} on train-tracks, we know that they come in parallel pairs and contribute
once to the numerator and once to the denominator; the contributions thus also compensate ending the proof of this lemma.
\end{proof}
Putting together Corollary~\ref{cor:det_KQ_caca}, Lemmas~\ref{lem:prod_I_II} and~\ref{lem:simplified} ends the proof of Theorem~\ref{thm:part_function}.

\subsection{Dimer model on the graph $\GQ$ and inverse $Z^u$-Dirac operator}\label{sec:inv_KQ_KD}

Using Theorem~\ref{thm:main}, in Corollaries~\ref{cor:KD_KQ} and~\ref{cor:KD_KQ_finite}, we prove linear relations satisfied by 
the inverse Kasteleyn operator $(\KQ)^{-1}$ and the inverse of the $Z^u$-Dirac operators $\KD(u)$ and $\KD^\spartial(u)$. 
Section~\ref{sec:dimer_KQ} is about applications
of these results to the dimer model on $\GQ$. In particular, when the graph $\GQ$ is infinite, we prove an alternative way 
of obtaining a local formula for the inverse~\cite{BdtR2} which is seen as directly related to the $Z$-massive Green functions.

\subsubsection{Inverse Kasteleyn operator $(\KQ)^{-1}$ and inverse $Z^u$-Dirac operator}

\emph{Infinite case}.
In the paper~\cite{BdtR2} we prove an explicit 
\emph{local} expression for an inverse $(\KQ)^{-1}$ of the operator $\KQ$, which decreases to 0 exponentially fast in the distance 
when $k\neq 0$, and as the inverse distance when $k=0$. When $k=0$, the local expression is actually computed in~\cite{Kenyon3}.
When the graph $\GQ$ is $\ZZ^2$-periodic, the operator $(\KQ)^{-1}$ is the unique inverse decreasing to 0 at infinity.
In Corollary~\ref{cor:KD_KQ} below, we use the existence and uniqueness of this inverse operator but not the explicit expression; 
we also need the following notation.

\paragraph{Notation for coefficients of Corollary~\ref{cor:KD_KQ}.} 
Let $\ubar{\ws}$ be a white vertex of $\GQ$ and $\ubar{v},\ubar{f}$ be its adjacent vertices in the diamond graph $\GRR$, 
such that $\ubar{v}\in\Vs$ and $\ubar{f}\in\Vs^*$.
Denote by $e^{i\betaib}$ the rhombus vector corresponding to the edge $(\ubar{\ws},\ubar{v})$. 
Let $w$ be a white vertex of $\GD$ and $\bs,\bs'$ be the black vertices of $\GQ$ of the corresponding quadrangle. 
To the vertex $\bs$, we assign the rhombus vectors $e^{i\alphafb}$, $e^{i\betafb}$ of $\GRR$ of the edge $(\bs,\ws)$, where 
the vertex $\ws$ is such that the edge $\bs\ws$ is parallel to an edge of $\Gs$. 
Then, the rhombus vectors assigned to the vertex $\bs'$ are $e^{i\alphafb+\pi}$, $e^{i\betafb+\pi}$, see Figure~\ref{fig:cor};
the subscripts ``$\mathrm{i}$'' and ``$\mathrm{f}$'' stand for ``initial'' and ``final''.

\begin{figure}[ht]
\begin{minipage}[b]{0.5\linewidth}
\begin{center}
\begin{overpic}[height=3.2cm]{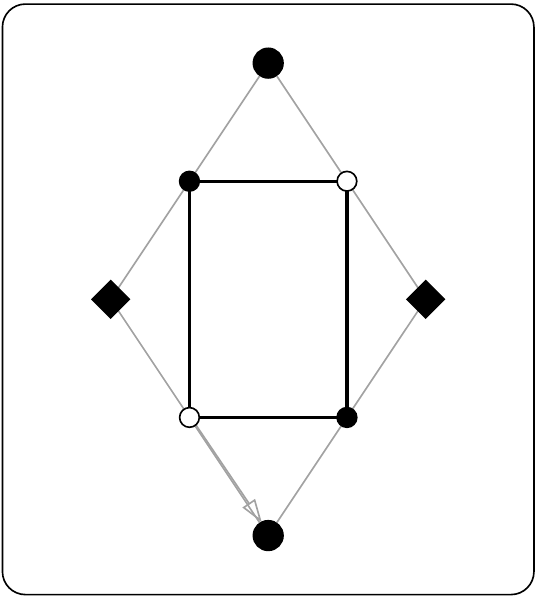}
  \put(21,27){\scriptsize $\ubar{\ws}$}
  \put(23,15){\scriptsize $e^{i\betaib}$}
  \put(50,8){\scriptsize $\ubar{v}$}
  \put(15,38){\scriptsize $\ubar{f}$}
\end{overpic}
\end{center}
\end{minipage}
\begin{minipage}[b]{0.5\linewidth}
\begin{center}
\begin{overpic}[height=3.2cm]{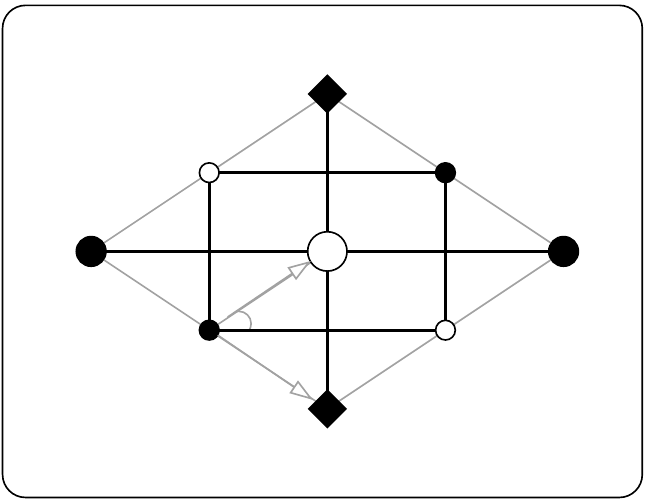}
  \put(29,19){\scriptsize $\bs$}
  \put(69,54){\scriptsize $\bs'$}
  \put(69,19){\scriptsize $\ws$}
  \put(31,14){\scriptsize $e^{i\alphafb}$}
  \put(33,33){\scriptsize $e^{i\betafb}$}
  \put(53,42){\scriptsize $w$}
  \put(42,27.5){\scriptsize $\thetafb$}
\end{overpic}
\end{center}
\end{minipage}

\caption{Notation for coefficients of Corollary~\ref{cor:KD_KQ} and~\ref{cor:KD_KQ_finite}.}
\label{fig:cor}
\end{figure}

As a consequence of Theorem~\ref{thm:main} (infinite case), we obtain the following.

\begin{cor}\label{cor:KD_KQ}
For every $u\in\Re(\TT(k))$, as long as they are unique (which happens for sure
in the $\ZZ^2$-periodic case), the inverse operators $(\KQ)^{-1}$, $\KD(u)^{-1}$, and the matrices $S(u),T(u)$ satisfy the following 
identity.

$\bullet$ \emph{Matrix form.}
\begin{equation*}
(\KQ)^{-1}S(u)=T(u)\KD(u)^{-1}.
\end{equation*}
$\bullet$ \emph{Coefficients.} For every $\ubar{\ws},\ubar{v},\ubar{f}$, and every $w,\bs,\bs'$ as in the notation above, we have:
\begin{equation*}
(\KQ)^{-1}_{\ubar{\ws},\bs}s(u)_{\bs,w}+ (\KQ)^{-1}_{\ubar{\ws},\bs'}s(u)_{\bs',w}=
t(u)_{\ubar{\ws},\ubar{v}}\KD(u)^{-1}_{\ubar{v},w}+t(u)_{\ubar{\ws},\ubar{f}}\KD(u)^{-1}_{\ubar{f},w}.
\end{equation*}
Or equivalently,
\begin{equation}\label{equ:KQKD_gen}
\textstyle
(\KQ)^{-1}_{\ubar{\ws},\bs}\cn(u_{\betaf})-i(\KQ)^{-1}_{\ubar{\ws},\bs'}\sn(u_{\betaf})\dn(u_{\alphaf})
=\frac{e^{i\frac{\betafb-\betaib}{2}}}{\Lambda(u_{\alphaf},u_{\betaf})}
\left[\cn(u_{\betai})\KD(u)^{-1}_{\ubar{v},w}-i\sn({u_{\betai}})\KD(u)^{-1}_{\ubar{f},w}\right],
\end{equation}
where $\Lambda(u_\alpha,u_\beta)=[\sn\theta\cn\theta\nd(u_{\alpha})\nd(u_{\beta})]^{\frac{1}{2}}$, and 
$\theta=u_\alpha-u_\beta$.
\end{cor}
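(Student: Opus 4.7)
\medskip

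\noindent\textbf{Proof proposal.} The plan is to deduce Corollary~\ref{cor:KD_KQ} directly from Theorem~\ref{thm:main} (infinite case), which gives the matrix identity $\KQ\,T(u)=S(u)\,\KD(u)$, by left-multiplying by $(\KQ)^{-1}$ and right-multiplying by $\KD(u)^{-1}$. Formally, this produces
\[
T(u)\,\KD(u)^{-1} \;=\; (\KQ)^{-1}\,\KQ\,T(u)\,\KD(u)^{-1} \;=\; (\KQ)^{-1}\,S(u)\,\KD(u)\,\KD(u)^{-1} \;=\; (\KQ)^{-1}\,S(u),
\]
which is exactly the matrix form of the corollary.

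The one subtle point is that $(\KQ)^{-1}$ and $\KD(u)^{-1}$ are operators on infinite-dimensional spaces, so the bracketings above are not a priori associative. Here the key observation is that the matrices $S(u)$ and $T(u)$ are \emph{local}: by their definitions in Equations~\eqref{def:sbw} and~\eqref{def:twvf}, every row and every column of $S(u)$ and $T(u)$ has at most two non-zero coefficients. Hence $(\KQ)^{-1}S(u)$ and $T(u)\KD(u)^{-1}$ are defined entrywise by finite sums, and composing with $\KQ$ or $\KD(u)$ (themselves sparse) involves only finite sums per matrix entry, so the Fubini-type rearrangement above is justified term by term. In the $\ZZ^2$-periodic case uniqueness of the inverses (Section~\ref{sec:dimer_infinite} and Corollary~\ref{cor:KD_G}) guarantees that the identity above pins down $(\KQ)^{-1}S(u)$ and $T(u)\KD(u)^{-1}$, while in the general (not necessarily periodic) situation we assume uniqueness of $(\KQ)^{-1}$, which is precisely the hypothesis of the corollary.

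Next I would read off the coefficient form by evaluating both sides of $(\KQ)^{-1}S(u)=T(u)\KD(u)^{-1}$ at the entry indexed by a white vertex $\ubar{\ws}$ of $\GQ$ and a white vertex $w$ of $\GD$. Since $S(u)_{\cdot,w}$ is supported only on the two black vertices $\bs,\bs'$ of the quadrangle corresponding to $w$, and $T(u)_{\ubar{\ws},\cdot}$ is supported only on $\ubar{v}$ and $\ubar{f}$, this gives
\[
(\KQ)^{-1}_{\ubar{\ws},\bs}s(u)_{\bs,w}+(\KQ)^{-1}_{\ubar{\ws},\bs'}s(u)_{\bs',w}=t(u)_{\ubar{\ws},\ubar{v}}\KD(u)^{-1}_{\ubar{v},w}+t(u)_{\ubar{\ws},\ubar{f}}\KD(u)^{-1}_{\ubar{f},w}.
\]
Finally, I would substitute the explicit values~\eqref{def:sbw} for $s(u)_{\bs,w}$ and $s(u)_{\bs',w}$ and \eqref{def:twvf} for $t(u)_{\ubar{\ws},\ubar{v}}$, $t(u)_{\ubar{\ws},\ubar{f}}$, using the identity $\cd(u_{\beta+2K})=-\sn(u_\beta)\dn(u_\alpha)/\cn(u_\beta)\cdot\cn(u_\beta)$ rewriting only via elementary elliptic identities, and collect the common prefactor $[\sn\theta\cn\theta\nd(u_{\alphaf})\nd(u_{\betaf})]^{1/2}e^{-i\betafb/2}$ on the left-hand side. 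Dividing through by this prefactor and multiplying by $e^{-i\betaib/2}$ on the right yields exactly~\eqref{equ:KQKD_gen}. The only substantive step is the associativity/uniqueness discussion above; the rest is bookkeeping with the explicit coefficient definitions.
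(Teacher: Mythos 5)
Your proposal follows essentially the same route as the paper: multiply the identity $\KQ\,T(u)=S(u)\,\KD(u)$ of Theorem~\ref{thm:main} on the left by $(\KQ)^{-1}$ and on the right by $\KD(u)^{-1}$, justify the rearrangement of the infinite matrix products via row/column-finiteness of $\KQ$, $\KD(u)$, $S(u)$, $T(u)$ together with uniqueness of the inverses (so that a right inverse is also a left inverse), and then read off the coefficient identity from the sparsity pattern of $S(u)$ and $T(u)$. The only blemish is the garbled elliptic identity in your last paragraph; the intended simplifications are $s(u)_{\bs',w}=-ie^{-i\betafb/2}\sn(u_{\betaf})\dn(u_{\alphaf})\Lambda(u_{\alphaf},u_{\betaf})$ via $\nd(v-K)=(k')^{-1}\dn(v)$ and $t(u)_{\ubar{\ws},\ubar{f}}=-ie^{-i\betaib/2}\sn(u_{\betai})$ via $\cd(v-K)=\sn(v)$, which is routine bookkeeping as you say.
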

\begin{proof}
The matrix form is obtained by left multiplying by $(\KQ)^{-1}$ and right multiplying by $\KD(u)^{-1}$ Equation~\eqref{equ:thm_main_3}
of Theorem~\ref{thm:main}. We are allowed to do so because, coefficients of the inverses decrease to 0 at infinity and the 
other matrices involved only have finitely many non-zero terms per row and column, implying associativity of the infinite 
matrix products. We also use uniqueness of the right (or left) inverse. Indeed, together with 
the fact that the products $\KQ (\KQ)^{-1}\KQ$ and $\KD(u)\KD(u)^{-1}\KD(u)$ are associative, this implies that they each are 
inverses on both sides~\cite{Cooke}.

For coefficients, we return to the definition of the matrix $S(u)$, see \eqref{def:sbw}, and obtain
\begin{align*}
s(u)_{\bs,w}&=e^{-i\frac{\betafb}{2}} 
[\sn(\thetaf)\cn(\thetaf)\nd(u_{\alphaf})\nd(u_{\betaf})]^{\frac{1}{2}}\cn(u_{\betaf})
=e^{-i\frac{\betafb}{2}}\cn(u_{\betaf})\Lambda(u_{\alphaf},u_{\betaf}) \\
s(u)_{\bs',w}&=
e^{-i\frac{\betafb+\pi}{2}}[\sn(\thetaf)\cn(\thetaf)\nd(u_{\alphaf+2K})\nd(u_{\betaf+2K})]^{\frac{1}{2}} \cn(u_{\betaf+2K})\\
&=-i e^{-i\frac{\betafb}{2}} [\sn(\thetaf)\cn(\thetaf)\nd(u_{\alphaf})\nd(u_{\betaf})]^{\frac{1}{2}} 
\sn(u_{\betaf})\dn(u_{\alphaf})\\
&=-i e^{-i\frac{\betafb}{2}} \sn(u_{\betaf})\dn(u_{\alphaf}) \Lambda(u_{\alphaf},u_{\betaf}),
\end{align*}
using that $\cn(u-K)=k'\sd(u-K)$, $\nd(u-K)=(k')^{-1}\dn(u)$ in the penultimate line.
Returning to the definition of coefficients of the matrix $T$, see~\eqref{def:twvf}, we have
\begin{align*}
t(u)_{\ubar{\ws},\ubar{v}}&= e^{-i\frac{\betaib}{2}}\cn(u_{\beta_i})\\
t(u)_{\ubar{\ws},\ubar{f}}&=-i e^{-i\frac{\betaib}{2}}\cd(u_{\betai+2K})=
-i e^{-i\frac{\betaib}{2}}\sn(u_{\betai}),
\end{align*}
using that $\cd(u-K)=\sn(u)$ in the last equality. This ends the proof of the formula for coefficients.
\end{proof}

\emph{Finite case.} We restrict to $u\in\Re(\TT(k))''$ so that the $Z^u$-Dirac operator $\KD^\spartial(u)$ is invertible.
\paragraph{Notation for coefficients of Corollary~\ref{cor:KD_KQ_finite}.}
As in the notation for coefficients of Corollary~\ref{cor:KD_KQ}, we add a subscript $\mathrm{i}$/$\mathrm{f}$ for rhombus vectors and 
half-angles of initial/final vertices. If $\ubar{\ws}\in\{\ws^c\in\R^\spartial\}$, or $w\in\{w^\ell,w^r\in\R^\spartial\}$, we thus have the 
notation of Figure~\ref{fig:cor_finite}.
\begin{figure}[ht]
\begin{minipage}[b]{0.5\linewidth}
\begin{center}
\begin{overpic}[height=3.6cm]{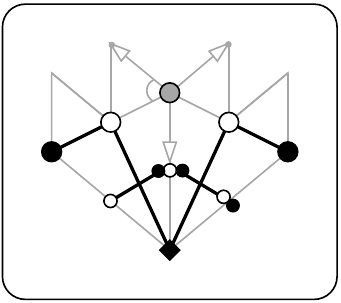}
\put(47,67){\scriptsize $\ubar{v}^c$}
\put(47,30){\scriptsize $\ubar{\ws}^c$}
\put(47,5){\scriptsize $\ubar{f}^c$}
\put(36,60){\scriptsize $\bar{\theta}^\spartial_{\mathrm{i}}$}
\end{overpic}
\end{center}
\end{minipage}
\begin{minipage}[b]{0.5\linewidth}
\begin{center}
\begin{overpic}[height=3.6cm]{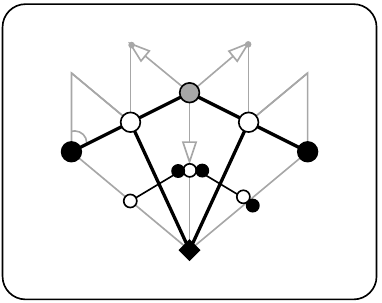}
\put(27,52.5){\scriptsize $w^\ell$}
\put(65.5,52.5){\scriptsize $w^r$}
\put(44,27){\scriptsize $\bs^\ell$}
\put(50,27){\scriptsize $\bs^r$}
\put(35.5,54){\scriptsize $e^{i\bar{\alpha}^\ell_{\mathrm{f}}}$}
\put(53,54){\scriptsize $e^{i\bar{\beta}^r_{\mathrm{f}}}$}
\put(37,42){\scriptsize $e^{i\bar{\beta}^\ell_{\mathrm{f}}}$}
\put(50.5,42){\scriptsize $e^{i\bar{\alpha}^r_{\mathrm{f}}}$}
\put(21,46){\scriptsize $\bar{\theta}^\spartial_{\mathrm{f}}$}
\end{overpic}
\end{center}
\end{minipage}
\caption{Notation for boundary coefficients of Corollary~\ref{cor:KD_KQ_finite}.}
\label{fig:cor_finite}
\end{figure}

As a consequence of Theorem~\ref{thm:main} (finite case), we obtain the following.
\begin{cor}\label{cor:KD_KQ_finite}
For every $u\in\Re(\TT(k))''$, the inverse operators $(\KQu)^{-1},(\KQ)^{-1},\KD^\spartial(u)^{-1}$ and the matrices $S(u),T(u)$ satisfy the following identity.

$\bullet$ \emph{Matrix form.}
\begin{equation*}
(\KQu)^{-1}S(u)=T(u)\KD^\spartial(u)^{-1}  \quad \Leftrightarrow \quad (\DQW)^{-1}(\KQ)^{-1}(\DQB)^{-1}S(u)=T(u)\KD^\spartial(u)^{-1}.
\end{equation*}
$\bullet$ \emph{Coefficients.} We have two cases to consider. 

1. For every $\ubar{\ws},\ubar{v},\ubar{f}$; for every $w,\bs,\bs'$ such that $w\notin\{w^\ell,w^r\in\R^\spartial\}$,
using the notation of Figure~\ref{fig:cor} and~\ref{fig:cor_finite} (left), we have
\begin{align*}
(\KQ)^{-1}_{\ubar{\ws},\bs}\cn(u_{\betaf})-i&(\KQ)^{-1}_{\ubar{\ws},\bs'}\sn(u_{\betaf})\dn(u_{\alphaf})=\\
&=\frac{e^{i\frac{\betafb}{2}}\sn(\theta^{\spartial}_\irm)^{\II_{\{\ubar{\ws}\in\{\ws^c\in\R^\spartial\}\}}}}{\Lambda(u_{\alphaf},u_{\betaf})}
\left(\II_{\{\ubar{\ws}\neq \ws^{c,\rs}\}}
t(u)_{\ubar{\ws},\ubar{v}}\KD^\spartial(u)^{-1}_{\ubar{v},w}+t(u)_{\ubar{\ws},\ubar{f}}\KD^\spartial(u)^{-1}_{\ubar{f},w}
\right).
\end{align*}
2. For every $\ubar{\ws},\ubar{v},\ubar{f}$, for every $w$ such that $w\in\{w^\ell,w^r\}$ for one of the rhombus pairs of 
$\R^\spartial$, then using the notation of Figures~\ref{fig:cor} and~\ref{fig:cor_finite}, we have
\begin{align*}
(\KQ)^{-1}_{\ubar{\ws},\bs^\ell}\cn(u_{\alphaf^\ell})&=
\frac{e^{i\frac{\bar{\alpha}^\ell_\frm}{2}}\sn(\theta^\spartial_\frm)^{-1}\sn(\theta^{\spartial}_\irm)^{\II_{\{\ubar{\ws}\in\{\ws^c\in\R^\spartial\}\}}}
}{\Lambda(u_{\alpha^\ell_\frm},u_{\beta^\ell_\frm})}
\left(\II_{\{\ubar{\ws}\neq \ws^{c,\rs}\}}
t(u)_{\ubar{\ws},\ubar{v}}\KD^\spartial(u)^{-1}_{\ubar{v},w^\ell}+t(u)_{\ubar{\ws},\ubar{f}}\KD^\spartial(u)^{-1}_{\ubar{f},w^\ell}\right)\\
(\KQ)^{-1}_{\ubar{\ws},\bs^r}\cn(u_{\betaf^r})&=
\frac{e^{i\frac{\bar{\beta}^r_\frm}{2}}\sn(\theta^\spartial_\frm)^{-1}\sn(\theta^{\spartial}_\irm)^{\II_{\{\ubar{\ws}\in\{\ws^c\in\R^\spartial\}\}}}
}{\Lambda(u_{\alpha^r_\frm},u_{\beta^r_\frm})}
\left(\II_{\{\ubar{\ws}\neq \ws^{c,\rs}\}}
t(u)_{\ubar{\ws},\ubar{v}}\KD^\spartial(u)^{-1}_{\ubar{v},w^r}+t(u)_{\ubar{\ws},\ubar{f}}\KD^\spartial(u)^{-1}_{\ubar{f},w^r}\right),
\end{align*}
where coefficients of $t(u)$ are given by Equation~\eqref{def:twvf} or~\eqref{def:twv_boundary} when $\ubar{\ws}=\ws^c$ for some boundary 
rhombus pair.
\end{cor}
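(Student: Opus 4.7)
The plan is to derive both matrix identities and coefficient formulas as direct algebraic consequences of Theorem~\ref{thm:main} (finite case) together with the gauge factorisation $\KQu = \DQB \KQ \DQW$. First I would verify invertibility of the operators: on $\Re(\TT(k))''$, Remark~\ref{rem:part_function} gives invertibility of $\KD^\spartial(u)$, while $\KQ$, and hence $\KQu$ (a conjugation by the diagonal matrices $\DQB$, $\DQW$ whose entries are non-vanishing on $\Re(\TT(k))''$), is invertible because it is the Kasteleyn matrix of the finite bipartite planar graph $\GQ$ whose dimer partition function does not vanish. Since all matrices are finite-dimensional, no associativity or uniqueness subtlety arises, in contrast with the infinite case of Corollary~\ref{cor:KD_KQ}.

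Left-multiplying~\eqref{equ:thm_main_4} by $(\KQu)^{-1}$ and right-multiplying by $\KD^\spartial(u)^{-1}$ produces $(\KQu)^{-1} S(u) = T(u) \KD^\spartial(u)^{-1}$. Substituting $(\KQu)^{-1} = (\DQW)^{-1}(\KQ)^{-1}(\DQB)^{-1}$ then yields the second matrix form in the statement.

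To obtain the coefficient identities I would read entry $(\ubar{\ws},w)$ of this matrix equality as
\[
(\DQW)^{-1}_{\ubar{\ws},\ubar{\ws}}\sum_{\bs}(\KQ)^{-1}_{\ubar{\ws},\bs}(\DQB)^{-1}_{\bs,\bs}\,s(u)_{\bs,w}
= t(u)_{\ubar{\ws},\ubar{v}}\KD^\spartial(u)^{-1}_{\ubar{v},w} + t(u)_{\ubar{\ws},\ubar{f}}\KD^\spartial(u)^{-1}_{\ubar{f},w},
\]
and evaluate the diagonal factors using the definitions of $\DQB$ and $\DQW$. In case~1, $w$ is an interior white vertex of $\GD$, so the two associated black vertices $\bs,\bs'$ of $\GQ$ are interior and $(\DQB)^{-1}_{\bs,\bs} = (\DQB)^{-1}_{\bs',\bs'} = 1$; only $(\DQW)^{-1}_{\ubar{\ws},\ubar{\ws}}$ is non-trivial at $\ubar{\ws}=\ws^c$, producing the factor $\sn(\theta^{\spartial}_\irm)^{\II_{\{\ubar{\ws}\in\{\ws^c\in\R^\spartial\}\}}}$ once moved to the right-hand side. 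The left-hand side is simplified exactly as in the proof of Corollary~\ref{cor:KD_KQ}: plugging~\eqref{def:sbw} gives $s(u)_{\bs,w} = e^{-i\betafb/2}\cn(u_{\betaf})\Lambda(u_{\alphaf},u_{\betaf})$, while for $\bs'$ the rhombus vectors are shifted by $\pi$, and the half-period identities $\cn(u-K)=k'\sd(u)$, $\nd(u-K)=(k')^{-1}\dn(u)$ reduce $s(u)_{\bs',w}$ to $-ie^{-i\betafb/2}\sn(u_{\betaf})\dn(u_{\alphaf})\Lambda(u_{\alphaf},u_{\betaf})$. Dividing through by $e^{-i\betafb/2}\Lambda(u_{\alphaf},u_{\betaf})$ produces the announced expression. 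The indicator $\II_{\{\ubar{\ws}\neq \ws^{c,\rs}\}}$ records that at the root rhombus pair $t(u)_{\ws^{c,\rs},v^{c,\rs}}$ is not defined since $v^{c,\rs}=\rs \notin B^\rs$, in agreement with~\eqref{def:twv_boundary}.

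Case~2 follows the same recipe, now with $w\in\{w^\ell,w^r\in\R^\spartial\}$: the corresponding boundary quadrangle of $\GQ$ is degenerate and reduced to an edge, so only one black vertex $\bs\in\{\bs^\ell,\bs^r\}$ contributes to the sum over $\bs$. The factor $(\DQB)^{-1}_{\bs,\bs} = \sn(\theta^\spartial_\frm)^{-1}$ becomes non-trivial and is the source of the explicit $\sn(\theta^\spartial_\frm)^{-1}$ in the statement. The boundary formulas~\eqref{def:sbl_boundary} give $s(u)_{\bs^r,w^r} = e^{-i\bar{\beta}^r_\frm/2}\cn(u_{\beta^r_\frm})\Lambda(u_{\alpha^r_\frm},u_{\beta^r_\frm})$ and $s(u)_{\bs^\ell,w^\ell} = e^{-i\bar{\alpha}^\ell_\frm/2}\cn(u_{\alpha^\ell_\frm})\Lambda(u_{\alpha^\ell_\frm},u_{\beta^\ell_\frm})$; dividing through yields the two claimed identities, with the coefficients of $t(u)$ on the right-hand side given by~\eqref{def:twvf} or, when $\ubar{\ws}\in\{\ws^c\in\R^\spartial\}$, by~\eqref{def:twv_boundary}. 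I anticipate no conceptual obstacle: the proof is essentially mechanical once the matrix identity is established, and the only delicate point is a consistent book-keeping of which of $\ubar{\ws},w,\bs$ are interior versus boundary, as this determines which of the definitions~\eqref{def:sbw}--\eqref{def:twv_boundary} to insert and which diagonal factors become non-trivial.
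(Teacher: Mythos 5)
Your strategy is exactly the one the paper intends: Corollary~\ref{cor:KD_KQ_finite} is stated there as an immediate consequence of Theorem~\ref{thm:main} (finite case), obtained by multiplying $\KQu\,T(u)=S(u)\,\KD^\spartial(u)$ on the left by $(\KQu)^{-1}$ and on the right by $\KD^\spartial(u)^{-1}$, substituting $(\KQu)^{-1}=(\DQW)^{-1}(\KQ)^{-1}(\DQB)^{-1}$, and reading off entries. Your justification of invertibility on $\Re(\TT(k))''$, your case split according to which of $\ubar{\ws}$, $w$ are boundary objects, the collapse of the sum to a single black vertex in Case~2, the treatment of the root pair via the indicator $\II_{\{\ubar{\ws}\neq \ws^{c,\rs}\}}$, and the reduction of $s(u)_{\bs,w}$, $s(u)_{\bs',w}$ to multiples of $\Lambda(u_{\alphaf},u_{\betaf})$ all match the computation carried out for Corollary~\ref{cor:KD_KQ}.

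The one place where your write-up does not follow from what you yourself set up is the bookkeeping of the diagonal gauge factors. You correctly record that $(\DQW)^{-1}_{\ubar{\ws},\ubar{\ws}}=\sn(\theta^{\spartial}_\irm)$ when $\ubar{\ws}\in\{\ws^c\in\R^\spartial\}$ and that $(\DQB)^{-1}_{\bs,\bs}=\sn(\theta^\spartial_\frm)^{-1}$ when $\bs\in\{\bs^\ell,\bs^r\in\R^\spartial\}$; but these factors multiply the \emph{left}-hand side of the entrywise identity, so transposing them to the right-hand side inverts them. A literal computation from the definitions in~\eqref{equ:KQ_KQu} therefore puts $\sn(\theta^{\spartial}_\irm)$ in the denominator (not the numerator) in Case~1, and produces $\sn(\theta^\spartial_\frm)$ rather than $\sn(\theta^\spartial_\frm)^{-1}$ in Case~2 --- the reciprocals of what you assert. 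As written, the two claims ``$(\DQW)^{-1}_{\ws^c,\ws^c}=\sn(\theta^{\spartial}_\irm)$ is the non-trivial factor'' and ``it produces the factor $\sn(\theta^{\spartial}_\irm)^{\II_{\{\ubar{\ws}\in\{\ws^c\in\R^\spartial\}\}}}$ once moved to the right-hand side'' are mutually inconsistent. You should either rederive these exponents carefully, or state explicitly that the target formula corresponds to the reciprocal normalization of the diagonal matrices $\DQB$, $\DQW$; the remainder of the argument is unaffected by how this is resolved.
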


\begin{exm}\label{ex:KQ_KD_u_v_special}
Let us compute the explicit values of Corollary~\ref{cor:KD_KQ} and of Point 1. of Corollary~\ref{cor:KD_KQ_finite} in the case where
$u=\ubf:=\frac{\alphaf+\betaf}{2}+K$. This will be used again in Sections~\ref{sec:dimer_KQ} and~\ref{sec:GQ_GF_Zinv}.
In the finite case, we also need to evaluate it at $u=\vbf:=\frac{\alphaf+\betaf}{2}-K$. Returning to Relations~\eqref{equ:relation_u_v} of
Example~\ref{ex:KD_u_v_special}, we have,
\begin{align*}
\sn(\ubf_{\betaf})\dn(\ubf_{\alphaf})&=\sn(K-\ubf_{\alphaf})\dn(\ubf_{\alphaf})=\cn(\ubf_{\alphaf})\\
\sn(\vbf_{\betaf})\dn(\vbf_{\alphaf})&=\sn(-K-\vbf_{\alphaf})\dn(\vbf_{\alphaf})=-\cn(\vbf_{\alphaf})\\
\Lambda(\ubf_{\alphaf},\ubf_{\betaf})&=[(k')^{-1}\sn(\thetaf)\cn(\thetaf)]^{\frac{1}{2}}=
\Lambda(\vbf_{\alphaf},\vbf_{\betaf})\\
\ubf_{\betaf}=\frac{K-\thetaf}{2},&\ \ubf_{\alphaf}=\frac{K+\thetaf}{2},\ \vbf_{\betaf}=-\ubf_{\alphaf},\ \vbf_{\alphaf}=-\ubf_{\betaf}.
\end{align*}
Then, 

$\bullet$ \emph{Infinite case.} For every $\ubar{\ws},v,f$; for every $w,\bs,\bs'$, with the notation of Figure~\ref{fig:cor}, we have
\begin{equation*}
\textstyle
(\KQ)^{-1}_{\ubar{\ws},\bs}\cn\bigl(\frac{K-\thetaf}{2}\bigr)-i(\KQ)^{-1}_{\ubar{\ws},\bs'}\sn\bigl(\frac{K+\thetaf}{2}\bigr)
=\frac{e^{i\frac{\betafb-\betaib}{2}}(k')^{\frac{1}{2}}}{[\cn(\thetaf)\sn(\thetaf)]^{\frac{1}{2}}}
\left(\cn(\ubf_{\betai})\KD(\ubf)^{-1}_{\ubar{v},w}-i\sn({\ubf_{\betai}})\KD(\ubf)^{-1}_{\ubar{f},w}\right).
\end{equation*}
$\bullet$ \emph{Finite case. Point 1.}

For every $\ubar{\ws},\ubar{v},\ubar{f}$; for every $w,\bs,\bs'$ such that $w\notin\{w^\ell,w^r\in\R^\spartial\}$,
with the notation of Figures~\ref{fig:cor} and~\ref{fig:cor_finite} (left), we have
\begin{align*}
\textstyle
(\KQ)^{-1}_{\ubar{\ws},\bs}\cn\bigl(\frac{K-\thetaf}{2}\bigr)&-i\textstyle (\KQ)^{-1}_{\ubar{\ws},\bs'}\sn\bigl(\frac{K+\thetaf}{2}\bigr)=\\
&\textstyle=\frac{e^{i\frac{\betafb}{2}}(k')^{\frac{1}{2}}\sn(\theta^{\spartial}_\irm)^{\II_{\{\ubar{\ws}\in\{\ws^c\in\R^\spartial\}\}}}}{
[\cn(\thetaf)\sn(\thetaf)]^{\frac{1}{2}}}
\left(\II_{\{\ubar{\ws}\neq \ws^{c,\rs}\}}
t(\ubf)_{\ubar{\ws},\ubar{v}}\KD^\spartial(\ubf)^{-1}_{\ubar{v},w}+t(\ubf)_{\ubar{\ws},\ubar{f}}\KD^\spartial(\ubf)^{-1}_{\ubar{f},w}
\right)\\
\textstyle (\KQ)^{-1}_{\ubar{\ws},\bs}\cn\bigl(\frac{K+\thetaf}{2}\bigr)&+i\textstyle (\KQ)^{-1}_{\ubar{\ws},\bs'}\sn\bigl(\frac{K-\thetaf}{2}\bigr)=\\
&\textstyle=\frac{e^{i\frac{\betafb}{2}}(k')^{\frac{1}{2}}\sn(\theta^{\spartial}_\irm)^{\II_{\{\ubar{\ws}\in\{\ws^c\in\R^\spartial\}\}}}}{
[\cn(\thetaf)\sn(\thetaf)]^{\frac{1}{2}}}
\left(\II_{\{\ubar{\ws}\neq \ws^{c,\rs}\}}
t(\vbf)_{\ubar{\ws},\ubar{v}}\KD^\spartial(\vbf)^{-1}_{\ubar{v},w}+t(\vbf)_{\ubar{\ws},\ubar{f}}\KD^\spartial(\vbf)^{-1}_{\ubar{f},w}
\right).
\end{align*}
\end{exm}

\subsubsection{The dimer model on an isoradial graph $\GQ$ and the $Z^u$-Dirac operator}\label{sec:dimer_KQ}

\emph{Infinite case.}
In the paper~\cite{BdtR2}, we prove an explicit \emph{local} expression for an inverse $(\KQ)^{-1}$; as a byproduct we obtain a \emph{local}
formula for a Gibbs measure $\PPdimerQ$ on $(\M(\GQ),\F)$, involving the operators $\KQ$ and $(\KQ)^{-1}$; we refer to the paper~\cite{BdtR2}
for the explicit formula for $(\KQ)^{-1}$ and to Section~\ref{sec:dimer_infinite} for the explicit formula of the Gibbs measure $\PPdimerQ$.

Using Corollary~\ref{cor:KD_KQ} and Corollary~\ref{cor:KD_G}, we provide an alternative \emph{direct} way of finding the \emph{local} formula 
for the inverse
operator $(\KQ)^{-1}$ of~\cite{BdtR2}, where the locality property is directly seen as inherited from that of the $Z$-massive and dual massive Green 
functions: 
we first express coefficients of $(\KQ)^{-1}$ using the inverse $Z^{u}$-Dirac operator
for appropriate values of $u$, and then express the latter using the $Z$-massive and dual massive Green function of~\cite{BdTR1}.
Note that it is not immediate to see equality between the formulas of~\cite{BdtR2} and Corollay~\ref{cor:gibbsGQ_GD}; it
probably requires to use elliptic trigonometric identities. The approach we propose here also extends to the finite case. 

\paragraph{Notation for Corollary~\ref{cor:gibbsGQ_GD}.}

Let $\ubar{\ws}$ be a white vertex of $\GQ$ and $\bs$ be a black one. Consider 
$\ubar{v},\ubar{f},e^{i\betaib}$, and $w,\bs,\ws,e^{i\alphafb},e^{i\betafb}$ as in Figure~\ref{fig:cor}. 
The quadrangle of the vertex $\bs$ corresponds to a rhombus $v_1,f_1,v_2,f_2$ of the 
diamond graph $\GR$, where vertices are labeled so that the edge $(v_1,v_2)$ of $\Gs$ is parallel to the edge $(\bs,\ws)$ of $\GQ$, and $f_1$ is on the right of 
$(v_1,v_2)$, see Figure~\ref{fig:cor_dimerKQ}. 

\begin{figure}[ht]
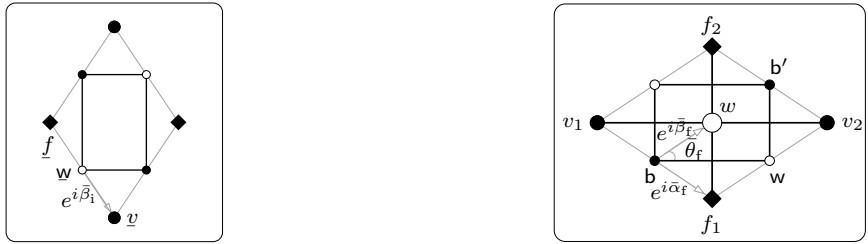

\begin{minipage}[b]{0.5\linewidth}
\begin{center}
\begin{overpic}[height=3.2cm]{fig_corKQKD_0.pdf}
  \put(21,27){\scriptsize $\ubar{\ws}$}
  \put(23,15){\scriptsize $e^{i\betaib}$}
  \put(50,8){\scriptsize $\ubar{v}$}
  \put(15,38){\scriptsize $\ubar{f}$}
\end{overpic}
\end{center}
\end{minipage}
\begin{minipage}[b]{0.5\linewidth}
\begin{center}
\begin{overpic}[height=3.2cm]{fig_corKQKD.pdf}
  \put(29,19){\scriptsize $\bs$}
  \put(69,54){\scriptsize $\bs'$}
  \put(69,19){\scriptsize $\ws$}
  \put(31,14){\scriptsize $e^{i\alphafb}$}
  \put(33,33){\scriptsize $e^{i\betafb}$}
  \put(53,42){\scriptsize $w$}
  \put(42,27.5){\scriptsize $\thetafb$}
  \put(3,37){\scriptsize $v_1$}
  \put(92,37){\scriptsize $v_2$}
  \put(47,4){\scriptsize $f_1$}
  \put(47,68){\scriptsize $f_2$}
\end{overpic}
\end{center}
\end{minipage}
\caption{Notation for Corollary~\ref{cor:gibbsGQ_GD}.}
\label{fig:cor_dimerKQ}
\end{figure}

\begin{cor}\label{cor:gibbsGQ_GD}
For every white vertex $\ubar{\ws}$ and every black vertex $\bs$ of $\GQ$, 
using the notation of Figure~\ref{fig:cor_dimerKQ}, we have
\begin{align*}
(\KQ)^{-1}_{\ubar{\ws},\bs}&=\textstyle\frac{
e^{i\frac{\betafb-\betaib}{2}}}
{[\cn(\thetaf)\sn(\thetaf)\nd(\thetaf)]^{\frac{1}{2}}}
\left(\cn\bigl(\frac{\betaf-\betai}{2}\bigr)\KD(\betaf)^{-1}_{\ubar{v},w}
-i\sn\bigl(\frac{\betaf-\betai}{2}\bigr)\KD(\betaf)^{-1}_{\ubar{f},w}
\right)\\
(\KQ)^{-1}_{\ubar{\ws},\bs}&=\textstyle e^{-i\frac{\betaib+\alphafb}{2}}(k')^{-1}
\Bigl(\frac{\cn\bigl(\frac{\betaf-\betai}{2}\bigr)}{\cn(\thetaf)}
\bigl[\dn(\thetaf) G^m_{\ubar{v},v_2}-k'G^m_{\ubar{v},v_1}\bigr]
-\frac{\sn\bigl(\frac{\betaf-\betai}{2}\bigr)}{\sn(\thetaf)}
\bigl[\dn(\thetaf)G^{m,*}_{\ubar{f},f_2}-G^{m,*}_{\ubar{f},f_1}\bigr]\Bigr).
\end{align*}
\end{cor}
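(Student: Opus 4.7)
The plan is to specialize the value of the free parameter $u$ in Corollary~\ref{cor:KD_KQ} so that one of the two unknowns on the left-hand side disappears, leaving a direct formula for $(\KQ)^{-1}_{\ubar{\ws},\bs}$. The natural choice is $u=\betaf$: with that choice, $u_{\betaf}=0$, hence $\sn(u_{\betaf})=0$ and $\cn(u_{\betaf})=1$, which immediately kills the $(\KQ)^{-1}_{\ubar{\ws},\bs'}$ term on the LHS of~\eqref{equ:KQKD_gen}. Moreover $u_{\alphaf}=-\thetaf$, so $\nd(u_{\alphaf})=\nd(\thetaf)$ and $\nd(u_{\betaf})=1$, giving
$\Lambda(u_{\alphaf},u_{\betaf})=[\sn\thetaf\cn\thetaf\nd(\thetaf)]^{\frac{1}{2}}$. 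Finally $u_{\betai}=\frac{\betaf-\betai}{2}$, so Corollary~\ref{cor:KD_KQ} directly delivers the first displayed formula of the statement.

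For the second formula, the strategy is to substitute the explicit expressions of Corollary~\ref{cor:KD_G} into the first formula, evaluated at $u=\betaf$. The key values I will need are
\begin{align*}
[\dn(u_{\alphaf})\dn(u_{\betaf})]^{\frac{1}{2}}&=\dn(\thetaf)^{\frac{1}{2}},\quad [\dn(u_{\alphaf+2K})\dn(u_{\betaf+2K})]^{\frac{1}{2}}=k'\nd(\thetaf)^{\frac{1}{2}},\\
[\dn((u_{\betaf})^*)\dn((u_{\alphaf+2K})^*)]^{\frac{1}{2}}&=[k'\dn(\thetaf)]^{\frac{1}{2}},\quad [\dn((u_{\betaf-2K})^*)\dn((u_{\alphaf})^*)]^{\frac{1}{2}}=[k'\nd(\thetaf)]^{\frac{1}{2}},
\end{align*}
all obtained from $\dn(K)=k'$, $\dn(u\pm K)=k'\nd(u)$, $\dn(u\pm 2K)=\dn(u)$, $\dn(-u)=\dn(u)$, together with the identity $\sc(\thetaf^*)=(k')^{-1}\cs(\thetaf)$.

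Plugging these into the formulas for $\KD(\betaf)^{-1}_{\ubar{v},w}$ and $\KD(\betaf)^{-1}_{\ubar{f},w}$ and combining the phases via $e^{i\frac{\betafb-\betaib}{2}}\cdot e^{-i\frac{\alphafb+\betafb}{2}}=e^{-i\frac{\alphafb+\betaib}{2}}$ yields the $(k')^{-1}$ prefactor of the statement. The ratios $\sc(\thetaf)^{\frac{1}{2}}/[\sn\thetaf\cn\thetaf\nd(\thetaf)]^{\frac{1}{2}}=[\cn(\thetaf)\nd(\thetaf)^{\frac{1}{2}}]^{-1}$ and the analogous one with $\cs$ in place of $\sc$ (with $\sn$ replacing $\cn$) produce the $\cn(\thetaf)^{-1}$ and $\sn(\thetaf)^{-1}$ factors in front of the two Green function brackets. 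Using $\dn(\thetaf)^{\frac{1}{2}}/\nd(\thetaf)^{\frac{1}{2}}=\dn(\thetaf)$ collapses the surds and gives exactly the expression for $(\KQ)^{-1}_{\ubar{\ws},\bs}$ in terms of $G^m$ and $G^{m,*}$.

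There is no real conceptual obstacle here; the substance of the work has already been done in Theorem~\ref{thm:main} and in Theorem~\ref{prop:KDtKD} (via Corollary~\ref{cor:KD_G}). The only delicate point is bookkeeping: verifying that the chosen $u=\betaf$ lies in $\Re(\TT(k))$ and that all half-angle reductions using $\dn(u\pm K)$, $\sc(K-u)$, $\cd(u\pm K)$ are applied consistently so that the final exponents of $k'$, $\dn(\thetaf)$ and $\nd(\thetaf)$ match. This is purely elliptic-trigonometric bookkeeping of the same flavor as Example~\ref{ex:KD_u_v_special} and Example~\ref{ex:KQ_KD_u_v_special}, but carried out at the specific point $u=\betaf$ rather than $u=\ubf$.
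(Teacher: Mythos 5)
Your proposal is correct and follows essentially the same route as the paper: set $u=\betaf$ in Corollary~\ref{cor:KD_KQ} to kill the $(\KQ)^{-1}_{\ubar{\ws},\bs'}$ term, then substitute Corollary~\ref{cor:KD_G} at $u=\betaf$ and simplify with $\dn(K)=k'$, $\dn(u\pm K)=k'\nd(u)$ and $\sc(\thetaf^*)=(k')^{-1}\cs(\thetaf)$. One immaterial slip: with the convention $u_\alpha=\frac{u-\alpha}{2}$ one gets $u_{\alphaf}=+\thetaf$ rather than $-\thetaf$, but since $\dn$ and $\nd$ are even this changes nothing downstream.
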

\begin{proof}
We set $u=\betaf$ in Corollary~\ref{cor:KD_KQ}. Then, $u_{\alphaf}=\thetaf$, $u_{\betaf}=0$, and we have $\cn(u_{\betaf})=1$, 
$\sn(u_{\betaf})=0$, $\dn(u_{\betaf})=1$. This gives
$\Lambda(u_{\alphaf},u_{\betaf})\vert_{\betaf}=[\sn\thetaf\cn\thetaf\nd(\thetaf)]^{\frac{1}{2}}$, thus explaining the first equality of the Corollary.

We now set $u=\betaf$ in Corollary~\ref{cor:KD_G}. Using that
$\dn(-u)=\dn(u)$, $\dn(K\pm u)=k'\nd(u)$, $\dn(K)=k'$, $\sc(\thetaf^*)=(k')^{-1}\cs(\thetaf)$, we obtain:
\begin{align*}
\KD(\betaf)^{-1}_{\ubar{v},w}&=e^{-i\frac{\alphafb+\betafb}{2}}(k')^{-1}\sc(\thetaf)^{\frac{1}{2}}
\Bigl(\dn(\thetaf)^{\frac{1}{2}} G^m_{\ubar{v},v_2}-[(k')^{2}\nd(\thetaf)]^{\frac{1}{2}}G^m_{\ubar{v},v_1}\Bigr)\\
\KD(\betaf)^{-1}_{\ubar{f},w}&=-ie^{-i\frac{\alphafb+\betafb}{2}}(k')^{-1}\sc(\thetaf^*)^{\frac{1}{2}}
\Bigl([\dn(K)\dn(\thetaf)]^{\frac{1}{2}}G^{m,*}_{\ubar{f},f_2}-[\dn(0)\dn(\thetaf-K)]^{\frac{1}{2}}G^{m,*}_{\ubar{f},f_1} \Bigr)\\
&=-ie^{-i\frac{\alphafb+\betafb}{2}}(k')^{-1}\cs(\thetaf)^{\frac{1}{2}}
\Bigl(\dn(\thetaf)^{\frac{1}{2}}G^{m,*}_{\ubar{f},f_2}-\nd(\thetaf)^{\frac{1}{2}}G^{m,*}_{\ubar{f},f_1}
\Bigr).
\end{align*}
Plugging this into the first equality of the corollary yields the second and concludes the proof.
\end{proof}

\begin{exm}\label{ex:proba_GQ}

As an example of application we express the probability of single edges occurring in dimer configurations of $\GQ$ chosen 
with respect to the measure $\PPdimerQ$,
as a function of single edge probabilities of the dimer model on $\GD$ with Gibbs measure
$\PPdimerDbeta$. We then use Example~\ref{ex:Prob_GD} evaluated at $u=\beta$ to obtain explicit expressions using the function $H$;
details of computations are given in Appendix~\ref{app:dimers_GQ}. Using the notation of Figure~\ref{fig:notations}, we have
\begin{align*}
\PPdimerQ(\bs\ws_1)&=\PPdimerDbeta(wv_2)=H(2\theta)\\ 
\PPdimerQ(\bs\ws_2)&=\PPdimerDbeta(wf_2)=\frac{1}{2}-H(2\theta)\\ 
\PPdimerQ(\bs\ws_3)&=\frac{1}{2}.
\end{align*}
\end{exm}
We recover the results of the computations of Theorem 37 of~\cite{BdtR2} after using addition formulas.

\emph{Finite case.} An explicit expression for the Boltzmann measure $\PPdimerQ$
as a function of the matrix $\KQ$ and its inverse $(\KQ)^{-1}$ is given in Section~\ref{sec:dimer_model}. We now express the inverse Kasteleyn matrix
$(\KQ)^{-1}$ as a function of the $Z^u$-Dirac operator for some choices of $u$.
Note that we cannot proceed as in the infinite case using $u=\betaf$ since then the matrix $\KD^\spartial(\betaf)$ is not invertible.

\begin{cor}\label{cor:BoltzmannGQ_GD}$\,$

\emph{1.} For every white vertex $\ubar{\ws}$ of $\GQ$ and every black vertex $\bs\notin\{\bs^\ell,\bs^r\in\R^{\spartial}\}$,
using the notation of Figures~\ref{fig:cor} and~\ref{fig:cor_finite} (left), let
$\ubf=\frac{\alphaf+\betaf}{2}+K$, $\vbf=\frac{\alphaf+\betaf}{2}-K$. Then,
\begin{equation*}
\textstyle
(\KQ)^{-1}_{\ubar{\ws},\bs}=e^{i\frac{\betafb}{2}}(k')^{\frac{1}{2}}\sn(\theta^{\spartial}_\irm)^{\II_{\{\ubar{\ws}\in\{\ws^c\in\R^\spartial\}\}}}
\frac{1+(k')^{-1}\dn(\thetaf)}{2[\cn(\thetaf)\sn(\thetaf)]^{\frac{1}{2}}}
\Bigl[\cn\bigl(\frac{K-\thetaf}{2}\bigr)\Gamma(\ubf)+ \cn\bigl(\frac{K+\thetaf}{2}\bigr)\Gamma(\vbf)\Bigr],
\end{equation*}
where, $\Gamma(u)=\II_{\{\ubar{\ws}\neq \ws^{c,\rs}\}}
t(u)_{\ubar{\ws},\ubar{v}}\KD^\spartial(u)^{-1}_{\ubar{v},w}+t(u)_{\ubar{\ws},\ubar{f}}\KD^\spartial(u)^{-1}_{\ubar{f},w}$, and coefficients 
of $t(u)$ are given by~\eqref{def:twvf} and~\eqref{def:twv_boundary}.

\emph{2.} For every white vertex $\ubar{\ws}$ and every black vertex $\bs\in\{\bs^\ell,\bs^r\}$ for one of the rhombus pairs of $\R^{\spartial}$,
using the notation of Figures~\ref{fig:cor} and~\ref{fig:cor_finite}, let $\ubfl=\frac{\alphaf^\ell+\betaf^\ell}{2}+K$ and 
$\ubfr=\frac{\alphaf^r+\betaf^r}{2}+K$. Then, 
\begin{align*}
\textstyle
(\KQ)^{-1}_{\ubar{\ws},\bs^\ell}
=&\textstyle
\frac{e^{i\frac{\bar{\alpha}^\ell_\frm}{2}}(k')^{\frac{1}{2}}\sn(\theta^\spartial_\frm)^{-1}\sn(\theta^{\spartial}_\irm)^{\II_{\{\ubar{\ws}\in\{\ws^c\in\R^\spartial\}\}}}
}{\cn\bigl(\frac{K+\thetaf^\spartial}{2}\bigr)[\cn(\thetaf^\spartial)\sn(\thetaf^\spartial)]^{\frac{1}{2}}}\times\\
&\hspace{2.5cm}\times\left(\II_{\{\ubar{\ws}\neq \ws^{c,\rs}\}}
t(\ubfl)_{\ubar{\ws},\ubar{v}}\KD^\spartial(\ubfl)^{-1}_{\ubar{v},w^\ell}+t(\ubfl)_{\ubar{\ws},\ubar{f}}\KD^\spartial(\ubfl)^{-1}_{\ubar{f},w^\ell}\right)\\
\textstyle
(\KQ)^{-1}_{\ubar{\ws},\bs^r}\textstyle
=&\textstyle \frac{e^{i\frac{\bar{\beta}^r_\frm}{2}}(k')^{\frac{1}{2}}\sn(\theta^\spartial_\frm)^{-1}\sn(\theta^{\spartial}_\irm)^{\II_{\{\ubar{\ws}\in\{\ws^c\in\R^\spartial\}\}}}
}{\cn\bigl(\frac{K-\thetaf^\spartial}{2}\bigr)[\cn(\thetaf^\spartial)\sn(\thetaf^\spartial)]^{\frac{1}{2}}}\times\\
&\hspace{2.5cm}\times \left(\II_{\{\ubar{\ws}\neq \ws^{c,\rs}\}}
t(\ubfr)_{\ubar{\ws},\ubar{v}}\KD^\spartial(\ubfr)^{-1}_{\ubar{v},w^r}+t(\ubfr)_{\ubar{\ws},\ubar{f}}\KD^\spartial(\ubfr)^{-1}_{\ubar{f},w^r}\right).
\end{align*}
\end{cor}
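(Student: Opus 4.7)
The plan is to deduce both parts from Corollary~\ref{cor:KD_KQ_finite} by specializing the parameter $u$ and inverting a small linear system.

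For Part~1 we write down the two specializations of Point~1 of Corollary~\ref{cor:KD_KQ_finite} at $u=\ubf$ and $u=\vbf$, which are made explicit in Example~\ref{ex:KQ_KD_u_v_special}. Setting $X=(\KQ)^{-1}_{\ubar{\ws},\bs}$, $Y=(\KQ)^{-1}_{\ubar{\ws},\bs'}$, $a=\tfrac{K-\thetaf}{2}$, $b=\tfrac{K+\thetaf}{2}$, and denoting by $D(u)$ the right-hand side of the equation at $u$, the two identities read
\begin{align*}
\cn(a)\,X - i\cn(b)\,Y &= D(\ubf),\\
\cn(b)\,X + i\cn(a)\,Y &= D(\vbf).
\end{align*}
Multiplying the first line by $\cn(a)$, the second by $\cn(b)$, and adding eliminates $Y$ and yields
\[
\bigl(\cn^2(a)+\cn^2(b)\bigr)\,X \;=\; \cn(a)D(\ubf)+\cn(b)D(\vbf).
\]
Substituting the expression for $D(u)$ from Example~\ref{ex:KQ_KD_u_v_special} gives the right combination of $\Gamma(\ubf),\Gamma(\vbf)$ with the prefactor $e^{i\bar{\beta}_\frm/2}(k')^{1/2}\sn(\theta_\irm^\spartial)^{\II_{\{\ubar{\ws}\in\{\ws^c\in\R^\spartial\}\}}}/[\cn(\thetaf)\sn(\thetaf)]^{1/2}$; what remains is to identify $(\cn^2 a+\cn^2 b)^{-1}$ with $(1+(k')^{-1}\dn\thetaf)/2$.

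The key lemma to finish Part~1 is therefore the Jacobi trigonometric identity
\[
\cn^2\!\bigl(\tfrac{K-\theta}{2}\bigr)+\cn^2\!\bigl(\tfrac{K+\theta}{2}\bigr) \;=\; \frac{2k'}{k'+\dn(\theta)},
\]
which I expect to be the only nontrivial analytic step. It can be verified by using $\cn(K-a)=k'\sd(a)$, $\dn(K-a)=k'\nd(a)$ to rewrite $\cn^2(b)=(k')^2\sn^2(a)/\dn^2(a)$, combining this with $\cn^2(a)=1-\sn^2(a)$ and $\dn^2(a)=1-k^2\sn^2(a)$, and then comparing with the addition formula $\dn(b-a)=(\dn a\dn b+k^2\sn a\sn b\cn a\cn b)/(1-k^2\sn^2 a\sn^2 b)$ at $b=K-a$, which simplifies to $\dn(\theta)=k'(1-k^2\sn^4 a)/(1-2k^2\sn^2 a+k^2\sn^4 a)$. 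A one-line rearrangement then gives the claimed identity.

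For Part~2, the situation is simpler: when $\bs\in\{\bs^\ell,\bs^r\}$, the vertex $w$ in Corollary~\ref{cor:KD_KQ_finite} is a boundary vertex $w^\ell$ or $w^r$ and Point~2 of that corollary already provides a single equation for $(\KQ)^{-1}_{\ubar{\ws},\bs^\ell}$ or $(\KQ)^{-1}_{\ubar{\ws},\bs^r}$. Specializing at $u=\ubfl$, respectively $u=\ubfr$, one has $\ubfl_{\alpha^\ell}=\tfrac{K+\theta^\spartial_\frm}{2}$, so $\cn(\ubfl_{\alpha^\ell})=\cn\bigl(\tfrac{K+\theta^\spartial_\frm}{2}\bigr)$, and by the computations in Example~\ref{ex:KQ_KD_u_v_special} one also has $\Lambda(\ubfl_{\alpha^\ell},\ubfl_{\beta^\ell})=[(k')^{-1}\sn(\theta^\spartial_\frm)\cn(\theta^\spartial_\frm)]^{1/2}$. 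Dividing through by $\cn(\ubfl_{\alpha^\ell})$ and absorbing the $(k')^{-1/2}$ into the numerator gives exactly the stated expression for $\bs^\ell$; the argument for $\bs^r$ is symmetric using $\ubfr_{\beta^r}=\tfrac{K-\theta^\spartial_\frm}{2}$. No further elliptic identities are needed for Part~2.
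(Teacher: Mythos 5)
Your proposal is correct and follows essentially the same route as the paper: specialize Corollary~\ref{cor:KD_KQ_finite} at $u=\ubf$ and $u=\vbf$, eliminate $(\KQ)^{-1}_{\ubar{\ws},\bs'}$ from the resulting $2\times 2$ system (you correctly read the $Y$-coefficients as $\cn\bigl(\frac{K\pm\thetaf}{2}\bigr)$, consistent with the computation $\sn(\ubf_{\betaf})\dn(\ubf_{\alphaf})=\cn(\ubf_{\alphaf})$ in Example~\ref{ex:KQ_KD_u_v_special}), and conclude with the identity $\cn^2\bigl(\frac{K-\theta}{2}\bigr)+\cn^2\bigl(\frac{K+\theta}{2}\bigr)=\frac{2}{1+(k')^{-1}\dn\theta}$, with Part~2 dropping out of Point~2 of Corollary~\ref{cor:KD_KQ_finite} after dividing by $\cn(\ubfl_{\alpha^\ell})$ resp.\ $\cn(\ubfr_{\beta^r})$. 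The only cosmetic difference is that the paper obtains the key elliptic identity from the half-argument formula $\cn^2(u)=\frac{\cn(2u)+\dn(2u)}{1+\dn(2u)}$ of Lawden, whereas you rederive it from the addition formula for $\dn$; both verifications are valid.
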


\begin{rem}\label{rem:final_0}
Example~\ref{ex:KD_u_v_special} expresses coefficients of $\KD(\ubf)^{-1}, \KD(\vbf)^{-1}$ using the 
$Z$-massive and dual massive Green function. Plugging this into Corollary~\ref{cor:BoltzmannGQ_GD} allows to 
write coefficients of $(\KQ)^{-1}$ using the Green functions $G^{m,\spartial}(\ubf),G^{m,\spartial}(\vbf),G^{m,*}$.
\end{rem}

\begin{proof}
Consider $\ubar{\ws},\ubar{v},\ubar{f}$ and $w,\bs,\bs'$ such that $w\notin\{w^\ell,w^r\in\R^\spartial\}$ as in the statement of Corollary~\ref{cor:KD_KQ_finite}. Then,
from Example~\ref{ex:KQ_KD_u_v_special}, we have the following linear system of equations: 
\begin{align*}
&\textstyle
(\KQ)^{-1}_{\ubar{\ws},\bs}\cn\bigl(\frac{K-\thetaf}{2}\bigr)-i\textstyle (\KQ)^{-1}_{\ubar{\ws},\bs'}\sn\bigl(\frac{K+\thetaf}{2}\bigr)
=C\times 
\Gamma(\ubf)\\
&\textstyle
(\KQ)^{-1}_{\ubar{\ws},\bs}\cn\bigl(\frac{K+\thetaf}{2}\bigr)+i\textstyle (\KQ)^{-1}_{\ubar{\ws},\bs'}\sn\bigl(\frac{K-\thetaf}{2}\bigr)
=C\times 
\Gamma(\vbf),
\end{align*}
where $C=\frac{e^{i\frac{\betafb}{2}}(k')^{\frac{1}{2}}\sn(\theta^{\spartial}_\irm)^{\II_{\{\ubar{\ws}\in\{\ws^c\in\R^\spartial\}\}}}}{
[\cn(\thetaf)\sn(\thetaf)]^{\frac{1}{2}}}$,
$\Gamma(u)=\II_{\{\ubar{\ws}\neq \ws^{c,\rs}\}}
t(u)_{\ubar{\ws},\ubar{v}}\KD^\spartial(u)^{-1}_{\ubar{v},w}+t(u)_{\ubar{\ws},\ubar{f}}\KD^\spartial(u)^{-1}_{\ubar{f},w}$.

Solving for $(\KQ)^{-1}_{\ubar{\ws},\bs}$ gives,
\begin{equation*}
\textstyle
(\KQ)^{-1}_{\ubar{\ws},\bs}=\frac{C}{\cn^2\bigl(\frac{K-\thetaf}{2}\bigr)+\cn^2\bigl(\frac{K+\thetaf}{2}\bigr)}
[\cn\bigl(\frac{K-\thetaf}{2}\bigr)\Gamma(\ubf)+ \cn\bigl(\frac{K+\thetaf}{2}\bigr)\Gamma(\vbf)].
\end{equation*}
Now, using~\cite[2.4.8]{Lawden} and the fact that $\dn(2K-u)=\dn(u),\, \cn(2K-u)=-\cn(u)$, we have for every $u\in\TT(k)$,
\begin{align}
\textstyle 
&\cn^2(u)+\cn^2(K-u)=\textstyle \frac{\cn(2u)+\dn(2u)}{1+\dn(2u)}+\frac{\cn(2K-2u)+\dn(2K-2u)}{1+\dn(2K-2u)}=\frac{2\dn(2u)}{1+\dn(2u)}
=\frac{2}{1+\nd(2u)}.\nonumber\\
&\text{This implies that, } \quad 
\textstyle
\cn^2\bigl(\frac{K-\thetaf}{2}\bigr)+\cn^2\bigl(\frac{K+\thetaf}{2}\bigr)=\frac{2}{1+\nd\bigl(2\frac{K-\thetaf}{2}\bigr)}
=\frac{2}{1+(k')^{-1}\dn(\thetaf)}.\label{equ:id_somme_cn_carre}
\end{align}
Putting everything together ends the proof of Point 1. Point 2. directly follows from Point 2. of Corollary~\ref{cor:KD_KQ_finite}.
\end{proof}

\section{Dimer model on the Fisher graph $\GF$ and the Kasteleyn matrix $\KQ$}\label{sec:KFKQ}

In the whole of this section, $\Gs$ is a planar, simply connected graph defined as in Section~\ref{sec:defIsing}; it is infinite or finite in which case
it has an additional vertex on every boundary edge. Unless specified, 
definitions and results hold for the finite and infinite cases; until Section~\ref{sec:GQ_GF_Zinv}, we do not suppose that $\Gs$ is isoradial.

Consider the dimer model on the Fisher graph $\GF$ with weight function $\muJ$ arising from the LTE of the Ising model on $\Gs$
with coupling constants $\Js$ and in the finite case, + boundary conditions, see Section~\ref{sec:def_Fisher_graph}. 
Consider also the dimer model on the bipartite graph $\GQ$ with weight function
$\nuJ$ arising for the XOR-Ising model, see Section~\ref{sec:def_graph_GQ}. 

Suppose that edges of $\GF$ are oriented according to a Kasteleyn orientation, and denote by $\KF$ the corresponding Kasteleyn matrix,
as defined in Section~\ref{sec:dimer_model}. Following Dub\'edat~\cite{Dubedat}, we partition vertices of $\GF$ as $\VF=A\cup B$, where $A$ consists of 
vertices incident to four internal edges of $\GF$, and $B$ consists of those incident to either two internal and one external edges or
to two internal edges (this possibility only occurs in the finite case). Vertices of type $A$, resp. $B$, will be denoted by $a$, resp. 
$b$, with or without sub/super-scripts. Up to a reordering of the rows and columns, the matrix $\KF$ can be written
in block form as

\begin{equation*}
\KF=
\begin{pmatrix}
\KF_{\sB,\sB}&\KF_{\sB,\sA}\\
\KF_{\sA,\sB}&\KF_{\sA,\sA}
\end{pmatrix}.
\end{equation*}

Recall that in the finite case, boundary quadrangles of $\GQ$ are degenerate and consist of edges in bijection with boundary edges of $\Gs$. 
\emph{Boundary $B$-vertices} of $\GF$ are defined to be $B$-vertices incident to two internal edges only; they are in natural bijection with black, resp. 
white, vertices of boundary quadrangles of $\GQ$, and also with boundary white vertices of the double graph $\GD$.

In~\cite{Dubedat} Dub\'edat shows how, in the case where $\Gs$ is $\ZZ^2$, a Kasteleyn orientation on $\GF$ induces a Kasteleyn 
orientation on $\GQ$; this generalizes to the case where $\Gs$ is planar: using the notation of Figure~\ref{fig:matrixKQKF} 
below, define
\begin{align}\label{equ:def_orientation}
\eps_{\bs,\ws}=\eps_{b,a},\quad
\eps_{\bs,\ws'}=\eps_{b,a'},\quad
\eps_{\bs,\ws''}=\eps_{b,b'}\eps_{b',a''},
\end{align}
then it is straightforward to check that the orientation so defined on $\GQ$ is Kasteleyn. Note that when $\Gs$ is finite, the case
$\eps_{\bs,\ws''}$ is not present when $\bs$ belongs to a boundary quadrangle.

\begin{figure}[H]
\begin{minipage}[b]{0.5\linewidth}
\begin{center}
\begin{overpic}[width=5cm]{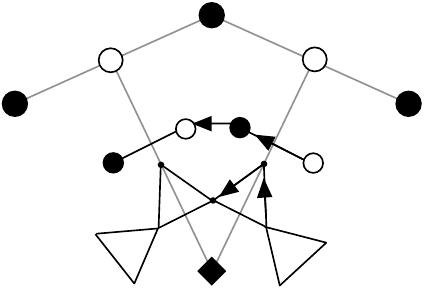}
 \put(58,29){\scriptsize $b$}
 \put(40,29){\scriptsize $\tilde{b}$}
 \put(49,23){\scriptsize $a$}
 \put(64,15){\scriptsize $a'$}
 \put(53,42){\scriptsize $\bs$}
 \put(42,42){\scriptsize $\ws$}
 \put(72,33){\scriptsize $\ws'$}
\end{overpic}
\end{center}
\end{minipage}
\begin{minipage}[b]{0.5\linewidth}
\begin{center}
\begin{overpic}[width=5cm]{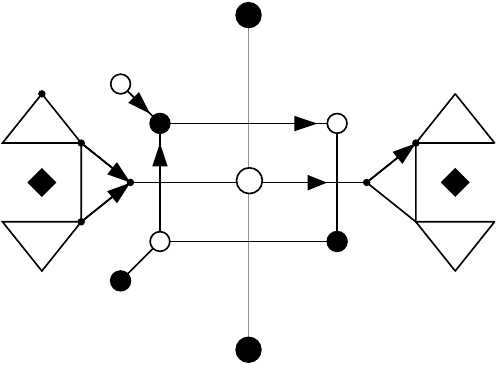}
 \put(26,31){\scriptsize $b$}
 \put(70,31){\scriptsize $b'$}
 \put(18,45){\scriptsize $a$}
 \put(18,24){\scriptsize $a'$}
 \put(76,45){\scriptsize $a''$}
 \put(26,46){\scriptsize $\bs$}
 \put(22,60){\scriptsize $\ws$}
 \put(8,55){\scriptsize $\tilde{b}$}
 \put(30,18){\scriptsize $\ws'$}
 \put(67,52){\scriptsize $\ws''$}
 \put(67,18){\scriptsize $\bs'$}
 \put(51,10){\scriptsize $e\rightsquigarrow\Js_e$}
 \put(56,31){\scriptsize $e^*$}
\end{overpic}
\end{center}
\end{minipage}
\caption{Notation used to relate the Kasteleyn orientations on $\GF$ and $\GQ$. }\label{fig:matrixKQKF}
\end{figure}

Denote by $\KQt$ the bipartite Kasteleyn matrix
corresponding to the weight function $\nuJ$ and to the Kasteleyn orientation constructed in~\eqref{equ:def_orientation}
with rows indexed by black vertices.

Wrapping up and using the notation of Figure~\ref{fig:matrixKQKF}, we have: for the Kasteleyn matrix~$\KF$,
\begin{equation}
\label{table:KF}
\begin{tabular}{|ll|}
\hline
&$\KF_{u,v}=\eps_{u,v},\quad\quad\quad\, \text{ if $u\sim v$, and $u$ or $v$ is of type $A$ (non exlusive ``or'')}$\\
&$\KF_{b,b'}=\eps_{b,b'}e^{-2\Js_e},\quad \text{ if the external edge $bb'$ corresponds to a dual edge $e^*$ of $\Gs^*$}$\\
\hline
\end{tabular}
\end{equation}
and for the bipartite Kasteleyn matrix $\KQt$,
\begin{equation}
 \label{table:KQ}
\begin{tabular}{|ll|ll|}
\hline
&Infinite case \& finite case when $\bs$&\quad & Finite case when $\bs$ belongs to a\\
&does not belong to a boundary quadrangle&&boundary quadrangle\\
\hline
&$\KQt_{\bs,\ws}=\eps_{\bs,\ws}=\eps_{b,a}$&\quad&$\KQt_{\bs,\ws}=\eps_{\bs,\ws}=\eps_{b,a}$\\
&$\KQt_{\bs,\ws'}=\eps_{\bs,\ws'}\tanh(2\Js_e)= \eps_{b,a'}\tanh(2\Js_e)$&\quad&$\KQt_{\bs,\ws'}=\eps_{\bs,\ws'}=\eps_{b,a'}$\\
&$\KQt_{\bs,\ws''}=\eps_{\bs,\ws''} \cosh^{-1}(2\Js_e)= \eps_{b,b'}\eps_{b',a''}\cosh^{-1}(2\Js_e)$&\quad& n.a.\\
\hline
\end{tabular}
\end{equation}

The first contribution of this section is Theorem~\ref{thm:KFKQinv} of Section~\ref{sec:relation_KF_KQ_inv} expressing the inverse Kasteleyn operator $(\KF)^{-1}$ using the 
inverse \emph{bipartite} Kasteleyn operator $(\KQt)^{-1}$; proving that the contour Ising Boltzmann/Gibbs measures can be computed from the bipartite
dimer model on $\GQ$; note that this result is not restricted to the $Z$-invariant case. The proof of Theorem~\ref{thm:KFKQinv} builds on matrix 
relations of~\cite{Dubedat}; this is the subject of Section~\ref{sec:relation_KF_KQ}. In Section~\ref{sec:GQ_GF_Zinv} we restrict to the $Z$-invariant case
and obtain Corollary~\ref{thm:KFKQinv_Zinv}, one of the main results of this paper, expressing the inverse Kasteleyn operator $(\KF)^{-1}$ using the 
inverse $Z^u$-Dirac operator and also using the $Z$-massive and dual Green functions. This shows that the contour Ising Boltzmann/Gibbs measures can be 
computed using information from random walks only (with specific boundary conditions in the finite case). 
As written in the introduction to this paper, this has 
implications for other observables of the Ising model,
as the spin-Ising observable of~\cite{ChelkakSmirnov:ising} or the fermionic spinor observable of~\cite{KadanoffCeva}.
Note that in the infinite case, this also gives an alternative direct way of finding the local formula for $(\KF)^{-1}$ of~\cite{BdtR2}, 
explicitly relating it to the Green functions.

\subsection{Relating Kasteleyn matrices of the Fisher graph $\GF$ and the bipartite graph $\GQ$}\label{sec:relation_KF_KQ}

Dub\'edat~\cite{Dubedat} establishes a matrix relation between the matrix $\KF$ and a block \emph{triangular}
matrix containing $\KQt$ as one of the diagonal blocks.
Using this matrix relation, he proves that the 
squared dimer partition function of $\GF$ is equal, up to a constant, to the dimer partition function of $\GQ$ in the finite case, and that 
the characteristic polynomials of the two models are equal in the case of infinite $\ZZ^2$-periodic graphs. By adding defects to Ising coupling constants,
this allows him to prove bozonisation identities. 

We attribute the forthcoming Proposition~\ref{prop:Dub}, consisting of two matrix relations, to Dub\'edat. The first is the actual identity of~\cite{Dubedat}; 
it is appropriate for comparing determinants of the matrices $\KF$ and $\KQt$ (related matrix relations can also be found in~\cite{CCK}). 
The second proves an identity between $\KF$ and a 
block \emph{diagonal} matrix containing $\KQt$ in both diagonal blocks; it is not present in the paper~\cite{Dubedat} but does not require much more 
work; it is useful for comparing matrix inverses. For convenience of the reader we provide a proof because: we write weights in a different way,
directly write the proof for all planar graphs (and not $\ZZ^2$), handle the boundary conditions very carefully, and the second identity needs an additional
argument. 

In order to state Proposition~\ref{prop:Dub}, we need to introduce the following matrices, all of which are ``square''. 

The \emph{matrix $I_{\sWQ,\sA}$} has rows indexed by white vertices of $\GQ$ and columns by $A$-vertices of $\GF$. It is the identity matrix 
associated to the following bijection between $\WQ$ and $A$. Using the notation of Figure~\ref{fig:matrixKQKF},
a vertex $\ws$ of $\GQ$ is incident to a unique external edge $\ws\bs$ of $\GQ$; the edge $\ws\bs$
is naturally ``parallel'' to a path from $\tilde{b}$ to $b$ in the external cycle of the closest decoration of $\GF$; then the vertex 
$a$ in bijection with $\ws$ is the unique vertex of type $A$ in the path from $\tilde{b}$ to $b$.

The \emph{matrix $X$} has rows indexed by $B$-vertices of $\GF$ and columns by black vertices of $\GQ$.
It is block diagonal, with blocks of size $2\times 2$ corresponding to edges of $\Gs^*$. For each such edge, the 
rows are indexed by $b$ and $b'$, the two corresponding adjacent $B$-vertices, and the
columns are indexed by the two black vertices $\bs,\bs'$ of the quadrangle of $\GQ$ traversed by the edge $bb'$, with $\bs$ closest 
to $b$, see Figure~\ref{fig:matrixKQKF}. 
The non-zero coefficients of the row corresponding to the vertex $b$ are,
\begin{equation*}
\begin{pmatrix}
x_{b,\bs}&x_{b,\bs'} 
\end{pmatrix}
=
\begin{pmatrix}
1&\KF_{b',b}
\end{pmatrix}.
\end{equation*}

In the \emph{finite} case, the matrix $X$ also has size 1 blocks corresponding to boundary $B$-vertices of $\GF$. For such a vertex 
$b$, let $\bs$ be the closest black vertex of $\GQ$. Then, the only non-zero coefficient of the row corresponding to $b$ is:
\begin{align*}
x_{b,\bs}=1.
\end{align*}

The \emph{matrix $M$} has rows indexed by $B$-vertices and columns by $A$-vertices of $\GF$. It is block diagonal, with blocks 
corresponding to decorations, each block having per size the number of $B$-vertices times the number of $A$-vertices of the decoration.
The matrix $M$ is the matrix $\KF_{\sB,\sA}$ with some signs reversed. That is,
for a $B$-vertex $b$, denote by $a,a'$ its two neighbors of type $A$ so that in cclw order around the triangle we have $a,a',b$, see 
Figure~\ref{fig:matrixKQKF}. Then the non-zero coefficients of the row corresponding to the vertex $b$ are,
\begin{equation*}
\begin{pmatrix}
m_{b,a}&m_{b,a'} 
\end{pmatrix}
=
\begin{pmatrix}
-\eps_{b,a}&\eps_{b,a'}
\end{pmatrix}
\end{equation*}

The \emph{matrix $M'$} has rows indexed by $A$-vertices and columns by $B$-vertices of $\GF$. It is defined as,
\[
M'=-(\KF_{\sB,\sA})^{-1}\KF_{\sB,\sB}.
\]

\begin{rem}\label{rem:KABinv}
The matrix $\KF_{\sB,\sA}=-\KF_{\sA,\sB}$ is invertible. Indeed, it is block diagonal, with blocks corresponding 
to decorations; for each decoration, the block is a directed adjacency matrix of the bipartite graph consisting of the outer cycle of 
the decoration. The orientation on this cycle is Kasteleyn because it is on the whole graph and this cycle contains no vertex 
in its interior~\cite{Kasteleyn2}. As a consequence, the determinant of this block is equal to $\pm$ the number of dimer configurations
of this cycle, that is $\pm 2$, and the block is thus invertible.

In the sequel, it would have been tempting to sometimes use the inverse of the matrix 
$\KF_{\sA,\sA}$, but this matrix is not always invertible. Indeed, it is block diagonal with blocks corresponding to decorations, and 
when the decoration is associated to a dual vertex of odd degree, then the corresponding block of $\KF_{\sA,\sA}$ is not invertible.
\end{rem}

\begin{prop}\emph{\cite{Dubedat}}\label{prop:Dub}
The Kasteleyn matrix $\KF$ of the Fisher graph $\GF$ and the bipartite Kasteleyn matrix $\KQt$ of the graph $\GQ$ are related by 
the following identities:
\begin{equation*}
\KF
\begin{pmatrix}
I&M\\
0&I
\end{pmatrix}
=
\begin{pmatrix}
\KF_{\sB,\sB}& (X \KQt I_{\sWQ,\sA})\\
\KF_{\sA,\sB}&0
\end{pmatrix},\quad \quad 
\KF
\begin{pmatrix}
I&M\\
M'&I
\end{pmatrix}
=
\begin{pmatrix}
0& (X \KQt I_{\sWQ,\sA})\\
-(X \KQt I_{\sWQ,\sA})^t&0
\end{pmatrix}. 
\end{equation*}

\end{prop}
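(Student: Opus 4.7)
The plan is to expand both matrix identities block by block and reduce the claim to a short list of elementary block equations. For the first identity,
\[
\KF\begin{pmatrix}I&M\\0&I\end{pmatrix}=\begin{pmatrix}\KF_{\sB,\sB}& \KF_{\sB,\sB}M+\KF_{\sB,\sA}\\ \KF_{\sA,\sB}& \KF_{\sA,\sB}M+\KF_{\sA,\sA}\end{pmatrix},
\]
so the assertion reduces to proving the two identities
\[
\text{(a)}\ \KF_{\sA,\sB}M+\KF_{\sA,\sA}=0,\qquad \text{(b)}\ \KF_{\sB,\sB}M+\KF_{\sB,\sA}=X\KQt I_{\sWQ,\sA}.
\]
For the second identity, the $(1,1)$ block $\KF_{\sB,\sB}+\KF_{\sB,\sA}M'=0$ is automatic from the definition of $M'$, the $(1,2)$ and $(2,2)$ blocks are again (b) and (a), and the $(2,1)$ block will follow for free from (a) and (b): from (a) and the skew-symmetry of $\KF$ one has $M=((\KF_{\sB,\sA})^{-1})^t\KF_{\sA,\sA}$, hence $M^t\KF_{\sB,\sB}=-\KF_{\sA,\sA}(\KF_{\sB,\sA})^{-1}\KF_{\sB,\sB}=\KF_{\sA,\sA}M'$ by the skew-symmetry of $\KF_{\sA,\sA}$, and transposing (b) using again the skew-symmetry of $\KF$ produces exactly the required identity $\KF_{\sA,\sB}+\KF_{\sA,\sA}M'=-(X\KQt I_{\sWQ,\sA})^t$. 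Only (a) and (b) remain.

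Identity (a) should be handled locally to each decoration, since $\KF_{\sA,\sB}$, $\KF_{\sA,\sA}$ and $M$ are all block-diagonal with respect to the decorations of $\GF$. For a triangle of a decoration with $A$-vertices $a,a'$ and $B$-vertex $b$ labelled so that $a,a',b$ is counterclockwise, I would apply the Kasteleyn condition on this triangular face to obtain the sign relation $\eps_{a,b}\eps_{b,a'}=-\eps_{a,a'}$, which is exactly what is needed to match the off-diagonal contribution $(\KF_{\sA,\sB}M)_{a,a'}=\eps_{a,b}\eps_{b,a'}$ with $-(\KF_{\sA,\sA})_{a,a'}=-\eps_{a,a'}$. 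The diagonal entries $(a,a)$ vanish because $a$ lies in exactly two triangles and the opposite sign conventions in the two corresponding entries of $M$ produce cancellation; non-adjacent $A$-pairs within a decoration, and all cross-decoration pairs, contribute zero trivially.

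Identity (b) is the main computation, and I would verify it coefficient by coefficient. For fixed $b\in B$ and $a\in A$, only two configurations yield a non-zero contribution: either $a$ lies in $b$'s decoration, or $a$ lies in the decoration across an external edge $bb'$. In each case I would expand both sides using~\eqref{table:KF},~\eqref{table:KQ}, the definition of $M$, and the translation~\eqref{equ:def_orientation} between the Kasteleyn orientations on $\GF$ and $\GQ$, and then check that the entries match after using the elementary relations between the weights $e^{-2\Js_e}$, $\tanh(2\Js_e)$ and $\cosh^{-1}(2\Js_e)$. In the finite case, when $b$ is a boundary $B$-vertex the external edge is absent, the corresponding row of $X$ has only a single entry equal to~$1$, and the weight~$1$ assigned by $\nuJ$ to the boundary quadrangle edge of $\GQ$ makes the identity reduce to $\KF_{\sB,\sA}$ verbatim. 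The main technical obstacle will be the sign bookkeeping for the cross-decoration term: one has to check that the $\eps_{b,b'}$ factor built into~\eqref{equ:def_orientation} is exactly what is needed so that the external-edge contribution in $\KF_{\sB,\sB}M$ reconstitutes the $\tanh(2\Js_e)$ part of $X\KQt I_{\sWQ,\sA}$ with the correct sign.
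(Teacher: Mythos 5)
Your proposal is correct and follows essentially the same route as the paper: reduce both identities to the four block equations, observe that the $(1,1)$ block of the second identity is immediate from the definition of $M'$, prove $\KF_{\sA,\sB}M+\KF_{\sA,\sA}=0$ locally on each decoration via the Kasteleyn condition on triangles (with the opposite signs in $M$ giving the diagonal cancellation), verify $\KF_{\sB,\sB}M+\KF_{\sB,\sA}=X\KQt I_{\sWQ,\sA}$ entry by entry using the weight identities and the boundary degeneration of $X$, and deduce the remaining block from skew-symmetry of $\KF$ together with the invertibility of $\KF_{\sA,\sB}$. The only difference is cosmetic: you derive the $(2,1)$ block by explicitly transposing the $(1,2)$ identity, whereas the paper phrases the same deduction as $\mathrm{LHS}(3)=-[\mathrm{LHS}(2)]^t$.
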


\begin{proof}
The first identity is an easy consequence of the second, so let us prove the second;
unless specified, the arguments hold in the infinite and finite cases.
We need to show the four identities below. 
\begin{align}
&\KF_{\sB,\sB}+\KF_{\sB,\sA}M'=0\label{equ:proofKF1}\\
&\KF_{\sA,\sB}M+\KF_{\sA,\sA}=0\label{equ:proofKF4}\\
&\KF_{\sB,\sB}M+\KF_{\sB,\sA}=X \KQt I_{\sWQ,\sA}\label{equ:proofKF2}\\
&\KF_{\sA,\sB}+\KF_{\sA,\sA}M'=-(X \KQt I_{\sWQ,\sA})^t.\label{equ:proofKF3}
\end{align}
Note that even in the infinite case, they all make sense since matrices involved have 
finitely many non-zero coefficients per rows and columns.

Identity~\eqref{equ:proofKF1} is immediate by definition of $M'$. We now prove~\eqref{equ:proofKF4} and 
\eqref{equ:proofKF2} and then show that~\eqref{equ:proofKF3} follows. 

\emph{Proof of \eqref{equ:proofKF4}.} Let us show that $\KF_{\sA,\sB}M=-\KF_{\sA,\sA}$.
Consider an $A$-vertex $a$ of $\GF$, and let $a_1,a_2,b_1,b_2$ be its four neighbors in $\GF$ 
with the notation of Figure~\ref{fig:proofM}. 
\begin{figure}[H]
\begin{center}
\begin{overpic}[width=2.1cm]{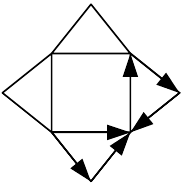}
\put(102,48){\scriptsize $b_1$}
\put(47,-10){\scriptsize $b_2$}
\put(73,73){\scriptsize $a_1$}
\put(73,20){\scriptsize $a$}
\put(17,20){\scriptsize $a_2$}
\end{overpic}
\caption{Notation for the proof of~\eqref{equ:proofKF4}.}\label{fig:proofM}
\end{center}
\end{figure}
Then the coefficient $(\KF_{\sA,\sB}M)_{a,a'}$ is a priori non zero
when $a'\in\{a,a_1,a_2\}$. Returning to the definition of $\KF$ and $M$, we have 
\[
(\KF_{\sA,\sB}M)_{a,a}=\KF_{a,b_1}m_{b_1,a}+\KF_{a,b_2}m_{b_2,a}=\eps_{a,b_1}\eps_{b_1,a}-\eps_{a,b_2}\eps_{b_2,a}=0
=-\KF_{a,a}.
\]
When $a'\in\{a_1,a_2\}$, using moreover that the orientation around the triangles $a,b_1,a_1$ and $a,a_2,b$ is Kasteleyn, we have
\begin{align*}
(\KF_{\sA,\sB}M)_{a,a_1}&=\KF_{a,b_1}m_{b_1,a_1}=-\eps_{a,b_1}\eps_{b_1,a_1}=-\eps_{a,a_1}=-\KF_{a,a_1}\\
(\KF_{\sA,\sB}M)_{a,a_2}&=\KF_{a,b_2}m_{b_2,a_2}=\eps_{a,b_2}\eps_{b_2,a_2}=-\eps_{a,a_2}=-\KF_{a,a_2},
\end{align*}
thus ending the proof of~\eqref{equ:proofKF4}.

\emph{Proof of \eqref{equ:proofKF2}.}
\emph{Infinite case.}
Figure~\ref{fig:MX2} (left) below sets the notation and labeling:
$b,b'$ are adjacent $B$-vertices of $\GF$, and $a_1,a_2,a_3,a_4$ are their neighbors of type $A$ in $\GF$; 
$\bs,\bs'$ are the two black vertices of the quadrangle of $\GQ$ traversed by the edge $bb'$, and $\ws_1,\dots,\ws_4$
are their neighboring white vertices in $\GQ$.

\begin{figure}[H]

\begin{minipage}[b]{0.5\linewidth}
\begin{center}
\begin{overpic}[width=5cm]{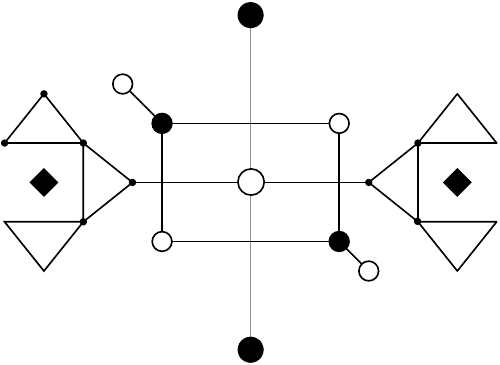}
 \put(26,31){\scriptsize $b$}
 \put(69,30){\scriptsize $b'$}
 \put(18,45){\scriptsize $a_1$}
 \put(18,25){\scriptsize $a_2$}
 \put(76,25){\scriptsize $a_3$}
 \put(76,45){\scriptsize $a_4$}
 \put(26,46){\scriptsize $\bs$}
 \put(64,18){\scriptsize $\bs'$}
 \put(22,60){\scriptsize $\ws_1$}
 \put(30,18){\scriptsize $\ws_2$}
 \put(70,13){\scriptsize $\ws_3$}
 \put(67,52){\scriptsize $\ws_4$}
\end{overpic}
\end{center}
\end{minipage}
\begin{minipage}[b]{0.5\linewidth}
\begin{center}
 \begin{overpic}[width=5cm]{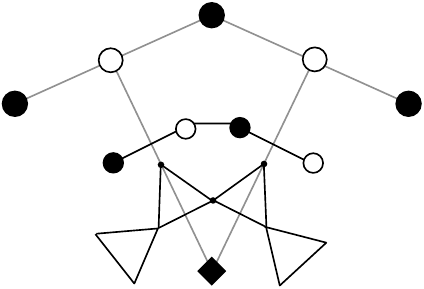}
 \put(58,29){\scriptsize $b$}
 \put(47,23){\scriptsize $a_1$}
 \put(64,15){\scriptsize $a_2$}
 \put(53,42){\scriptsize $\bs$}
 \put(42,42){\scriptsize $\ws_1$}
 \put(72,33){\scriptsize $\ws_2$}
\end{overpic}
\end{center}
\end{minipage}

\caption{Notation for the proof of~\eqref{equ:proofKF2}.}\label{fig:MX2}
\end{figure}

Consider a $B$-vertex $b$ of $\GF$.
Then the coefficient $(\KF_{\sB,\sB}M+\KF_{\sB,\sA})_{b,a}$ of the LHS of~\eqref{equ:proofKF2} is non-zero 
when $a\in\{a_1,\dots,a_4\}$, and
\begin{align*}
(\KF_{\sB,\sB}M+\KF_{\sB,\sA})_{b,a}&=
\begin{cases}
\KF_{b,a}=\eps_{b,a}&\text{ if $a\in\{a_1,a_2\}$}\\
\KF_{b,b'}m_{b',a}&\text{ if $a\in\{a_3,a_4\}$},
\end{cases}
\end{align*}
where recall $m_{b',a_3}=-\eps_{b',a_3}$ and $m_{b',a_4}=\eps_{b',a_4}$.

The coefficient $(X \KQt I_{\sWQ,\sA})_{b,a}$ of the RHS of~\eqref{equ:proofKF2} is non-zero for the same choices of $a$. Returning to the definition
of $X$ and $\KQt$ we have,
\begin{align}
(X \KQt I_{\sWQ,\sA})_{b,a_1}&=x_{b,\bs}\KQt_{\bs,\ws_1}=1\cdot \eps_{b,a_1},\nonumber\\
(X \KQt I_{\sWQ,\sA})_{b,a_3}&=x_{b,\bs'}\KQt_{\bs',\ws_3}=\KF_{b',b}\eps_{b',a_3}=-\KF_{b,b'}\eps_{b',a_3},\nonumber\\
(X \KQt I_{\sWQ,\sA})_{b,a_2}&=x_{b,\bs}\KQt_{\bs,\ws_2}+x_{b,\bs'}\KQt_{\bs',\ws_2}\nonumber\\
&=1\cdot \eps_{b,a_2}\tanh(2\Js) + \KF_{b',b}\eps_{b',b}\eps_{b,a_2}\cosh^{-1}(2\Js)\nonumber\\
&=\eps_{b,a_2}\left(\tanh(2\Js)+|\KF_{b',b}|\cosh^{-1}(2\Js)\right)\nonumber\\
&=\eps_{b,a_2},\text{ since }|\KF_{b',b}|=e^{-2\Js},\label{equ:reltanh}\\
(X \KQt I_{\sWQ,\sA})_{b,a_4}&=x_{b,\bs}\KQt_{\bs,\ws_4}+x_{b,\bs'}\KQt_{\bs',\ws_4}\nonumber\\
&=1\cdot \eps_{b,b'}\eps_{b',a_4}\cosh^{-1}(2\Js)+\KF_{b',b}\eps_{b',a_4}\tanh(2\Js)\nonumber\\
&=\KF_{b,b'}\eps_{b',a_4}
\left(\frac{\eps_{b,b'}}{\KF_{b,b'}}\cosh^{-1}(2\Js)-\tanh(2\Js)\right)\nonumber\\
&=\KF_{b,b'}\eps_{b',a_4}
\left(e^{2\Js}\cosh^{-1}(2\Js)-\tanh(2\Js)\right)=\KF_{b,b'}\eps_{b',a_4}.\label{equ:reltanh2}
\end{align}
\emph{Finite case.} The proof is as in the infinite case as long as $b$ is not a boundary $B$-vertex of $\GF$, so let
$b$ be a boundary $B$-vertex and refer to Figure~\ref{fig:MX2} (right) for notation. The coefficient $(\KF_{\sB,\sB}M+\KF_{\sB,\sA})_{b,a}$ of 
the LHS is non-zero when $a\in\{a_1,a_2\}$. Similarly to the infinite case computation, we have:
\begin{align*}
(\KF_{\sB,\sB}M+\KF_{\sB,\sA})_{b,a}=\KF_{b,a}=\eps_{b,a},\quad \text{ if $a\in\{a_1,a_2\}$}
\end{align*}
The coefficient $(X \KQt I_{\sWQ,\sA})_{b,a}$ of the RHS is non-zero for the same choices of $a$. Returning to the definition
of $X$ and $\KQt$ (boundary case) we have:
\begin{align*}
(X \KQt I_{\sWQ,\sA})_{b,a_1}&=x_{b,\bs}\KQt_{\bs,\ws_1}=1\cdot \eps_{b,a_1},\\
(X \KQt I_{\sWQ,\sA})_{b,a_2}&=x_{b,\bs}\KQt_{\bs,\ws_2}=1\cdot \eps_{b,a_2}.
\end{align*}
This ends the proof of~\eqref{equ:proofKF2}.

\emph{Proof of \eqref{equ:proofKF3}.} From Remark~\ref{rem:KABinv} the matrix $K_{\sA,\sB}$ is invertible, thus from~\eqref{equ:proofKF4} we have 
$M=-(\KF_{\sA,\sB})^{-1}\KF_{\sA,\sA}$. Plugging this into the LHS of~\eqref{equ:proofKF2} gives that it is equal to:
\[
\mathrm{LHS}~\eqref{equ:proofKF2}=-\KF_{\sB,\sB}(\KF_{\sA,\sB})^{-1}\KF_{\sA,\sA}+\KF_{\sB,\sA}.
\]
Returning to the definition of $M'$ (or to~\eqref{equ:proofKF1}), we have that the LHS of\eqref{equ:proofKF3} is 
\[
\mathrm{LHS}\eqref{equ:proofKF3}=\KF_{\sA,\sB}-\KF_{\sA,\sA}(\KF_{\sB,\sA})^{-1}\KF_{\sB,\sB}.
\]
Using that the matrix $\KF$ is skew-symmetric, we deduce that 
$\mathrm{LHS}~\eqref{equ:proofKF3}=-[\mathrm{LHS}~\eqref{equ:proofKF2}]^t$.
The same clearly holds for the RHS of the two equations; they are thus equivalent and we have proved~\eqref{equ:proofKF2}. 
\end{proof}

\begin{cor}[\cite{Dubedat}]\label{cor:det_KF_KQ}
Suppose that the graph $\Gs$ is finite. Then,
\begin{equation*}
[\Zdimer(\GF,\muJ)]^2=2^{|\Vs^*|}\Bigl(\prod_{e^*\in\Es^*}(1+e^{-4\Js_e})\Bigr)\Zdimer(\GQ,\nuJ),
\end{equation*}
where $\Zdimer(\GF,\muJ)=|\det \KF|$, $\Zdimer(\GQ,\nuJ)=|\det \KQt|$.
\end{cor}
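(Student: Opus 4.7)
The plan is to take determinants of the first matrix identity of Proposition~\ref{prop:Dub}, namely
\[
\KF\begin{pmatrix} I & M \\ 0 & I \end{pmatrix}=\begin{pmatrix} \KF_{\sB,\sB} & X\KQt I_{\sWQ,\sA} \\ \KF_{\sA,\sB} & 0 \end{pmatrix}.
\]
The left-hand side collapses to $\det\KF$ since the block upper-triangular matrix $\begin{pmatrix} I & M \\ 0 & I \end{pmatrix}$ has determinant $1$. For the right-hand side, I first observe that each decoration of $\GF$ contributes exactly $d$ outer ($B$-type) and $d$ shared ($A$-type) vertices, so that $|A|=|B|$; in particular every block in the identity is square of common size $n:=|B|$. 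Expanding via the Leibniz formula, the zero block in the bottom-right corner forces every contributing permutation to match the bottom $n$ rows bijectively to the left $n$ columns, so the $\KF_{\sB,\sB}$ entries drop out and one gets the antidiagonal block formula
\[
\det\begin{pmatrix} \KF_{\sB,\sB} & X\KQt I_{\sWQ,\sA} \\ \KF_{\sA,\sB} & 0 \end{pmatrix}=(-1)^n\det X\cdot\det \KQt\cdot\det I_{\sWQ,\sA}\cdot\det \KF_{\sA,\sB}.
\]

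The three auxiliary determinants can then be evaluated in closed form. Since $I_{\sWQ,\sA}$ is a permutation matrix, $|\det I_{\sWQ,\sA}|=1$. The matrix $X$ is block diagonal with one $2\times 2$ block $\begin{pmatrix} 1 & \KF_{b',b} \\ \KF_{b,b'} & 1 \end{pmatrix}$ per dual edge $e^*$ and, in the finite case, with $1\times 1$ boundary blocks $(1)$; using the skew-symmetry $\KF_{b',b}=-\KF_{b,b'}$ together with $|\KF_{b,b'}|=e^{-2\Js_e}$, each $2\times 2$ block has determinant $1+e^{-4\Js_e}$, whence $|\det X|=\prod_{e^*\in\Es^*}(1+e^{-4\Js_e})$. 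Likewise, $\KF_{\sA,\sB}$ is block diagonal with one block per decoration; by Remark~\ref{rem:KABinv} each such block is the Kasteleyn adjacency matrix of the even cycle formed by the alternating outer--shared vertices of the decoration, which admits exactly two perfect matchings, so its determinant has modulus $2$ and $|\det \KF_{\sA,\sB}|=2^{|\Vs^*|}$.

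Assembling these ingredients yields $|\det\KF|=2^{|\Vs^*|}\prod_{e^*\in\Es^*}(1+e^{-4\Js_e})\,|\det\KQt|$. Since $\KF$ is skew-symmetric of even order, Kasteleyn's theorem in its non-bipartite form gives $\Zdimer(\GF,\muJ)=|\Pf \KF|$, hence $\Zdimer(\GF,\muJ)^2=|\det\KF|$; the corollary then follows upon recalling $\Zdimer(\GQ,\nuJ)=|\det\KQt|$. The main technical obstacle is the careful bookkeeping required to verify $|A|=|B|$ in both the infinite and finite cases (the latter demanding that the degree-two boundary $B$-vertices be absorbed coherently into the block structure of $X$) and to identify the blocks of $X$ and of $\KF_{\sA,\sB}$ uniformly enough for the two product formulas above to hold; once this counting is in place the remaining algebra is essentially mechanical.
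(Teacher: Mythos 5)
Your proposal is correct and follows essentially the same route as the paper: take determinants in the first identity of Proposition~\ref{prop:Dub}, use the zero block to reduce to $|\det X\,\det\KQt\,\det\KF_{\sA,\sB}|$, evaluate $\det X$ block by block and $|\det\KF_{\sA,\sB}|=2^{|\Vs^*|}$ via Remark~\ref{rem:KABinv}, and conclude with $\Zdimer(\GF,\muJ)^2=|\Pf\KF|^2=|\det\KF|$. The paper's proof is simply terser, leaving the antidiagonal block-determinant step and the permutation-matrix factor $I_{\sWQ,\sA}$ implicit.
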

\begin{proof}
By the first identity of Proposition~\ref{prop:Dub}, we have $|\det\KF|=|\det\KF_{\sA,\sB} \det X \det \KQ|$. By Remark~\ref{rem:KABinv}, 
we know that $|\det \KF_{\sA,\sB}|=2^{|\Vs^*|}$. The determinant of $X$ is computed by calculating that of its blocks. 
Let $bb'$ be an edge of $\GF$ corresponding to an edge $e^*$ of $\Gs^*$ and let $\bs,\bs'$ be the black vertices of the quadrangle of 
$\GQ$ traversed by the edge $bb'$. Then by definition, the corresponding block of $X$ is:
\begin{equation*}
\begin{pmatrix}
1&\KF_{b',b}\\ 
\KF_{b,b'}&1
\end{pmatrix}_{(b,b')}^{(\bs,\bs')}.
\end{equation*}
Its determinant is equal to $1+|\KF_{b,b'}|^2=1+e^{-4\Js_e}$, thus ending the proof of the corollary.
\end{proof}


\subsection{Relating inverse Kasteleyn matrices of $\GF$ and $\GQ$}\label{sec:relation_KF_KQ_inv}

Consider the inverse Kasteleyn operators $(\KF)^{-1}$ and $(\KQ)^{-1}$.
When the graph $\Gs$ is \emph{infinite} and $\ZZ^2$-periodic, these inverses denote the unique ones decreasing to 0 at infinity~\cite{KOS,BoutillierdeTiliere:iso_perio}. 
When the graph $\Gs$ is infinite and isoradial (not necessarily $\ZZ^2$-periodic) and the corresponding dimer weights on $\GQ$ and $\GF$ are $Z$-invariant, 
then $(\KQt)^{-1}$ and $(\KF)^{-1}$ are the operators decreasing to 0 at infinity with local expression given 
in~\cite{Kenyon3,BoutillierdeTiliere:iso_gen,BdtR2}.

From Proposition~\ref{prop:Dub} and proving additional relations (not present in the paper~\cite{Dubedat}) we show, in 
Theorem~\ref{thm:KFKQinv} below, identities relating the inverse operator $(\KF)^{-1}$ to the inverse operator $(\KQt)^{-1}$. 
Using Section~\ref{sec:dimer_infinite}, Theorem~\ref{thm:KFKQinv} allows to express the dimer Boltzmann measure (finite case)
and the Gibbs measure (infinite case) of the Fisher graph $\GF$, denoted $\PPdimerF$, using coefficients of the matrix 
$\KF$ and of the inverse~\emph{bipartite} Kasteleyn operator $(\KQt)^{-1}$, see also Example~\ref{ex:prob_GF}.
To state Theorem~\ref{thm:KFKQinv}, we need to define 
two additional matrices $D_{\sBQ,\sA}$ and $\kappa$. 

The \emph{matrix $D_{\sBQ,\sA}$} has rows indexed by 
black vertices of $\GQ$ and columns by $A$-vertices of $\GF$. It is a diagonal matrix associated to the following bijection
between $\Bs$ and $A$. Using the notation of Figure~\ref{fig:matrixKQKF}, a vertex $\bs$ of $\GQ$ belongs to a unique external
edge $\bs\ws$ of $\GQ$; then the vertex $a$ in bijection with $\bs$ is the vertex in bijection with $\ws$ in the construction of 
the matrix $I_{\sWs,\sA}$. The corresponding diagonal coefficient $d_{\bs,a}$ is,
\begin{equation*}
d_{\bs,a}=\eps_{b,a}.
\end{equation*}

The \emph{matrix $\kappa$} has rows and columns indexed by $A$-vertices of $\GF$. It is block diagonal with blocks corresponding
to decorations, each block having per size the number of $A$-vertices of the decoration.
Given two vertices $a,a'$ of a decoration of $\GF$, we have,
\begin{equation*}
\kappa_{a,a'}=-\frac{1}{4}
\begin{cases}
-1&\text{ if } a=a'\\
(-1)^{n(a,a')}&\text{ if }a\neq a',
\end{cases}
\end{equation*}
where $n(a,a')$ is the number of edges oriented cw in the cclw path going from $a$ to $a'$ in the $A$-cycle of the decoration. 

\textbf{Notation for coefficients $(\KF)^{-1}_{\ubar{u},v}$ of Theorem~\ref{thm:KFKQinv}.} 
If $\ubar{u}=\ubar{a}$ is an $A$-vertex, then $\ubar{\ws}$ is the white vertex of $\GQ$ corresponding
to $\ubar{a}$ in the bijection defining $I_{\sWs,\sA}$. If $\ubar{u}=\ubar{b}$
is a $B$-vertex, then $\ubar{a}_1$ and $\ubar{a}_2$ are the two $A$-vertices belonging to the same triangle, with $\ubar{b},\ubar{a}_1,\ubar{a}_2$
in cclw order around the triangle. Note that this definition also holds if $\ubar{b}$ is a boundary vertex.

If $v=b$ is a $B$-vertex, we let $\bs$ be the closest black vertex of $\GQ$. When moreover $b$ is not a boundary vertex, 
we let $b'$ be the $B$-vertex such that $bb'$ defines an edge $e^*$ of $\Gs^*$; we let $\bs'$ be the black vertex of $\GQ$ closest to $b'$ ($\bs$ and 
$\bs'$ are the black vertices of the quadrangle of $\GQ$ traversed by the edge $bb'$); the coupling constant of the edge $e$, dual of $e^*$, is
denoted $\Jf_e$, where ``$\mathrm{f}$'' stands for ``final''. 
If $v=a$ is an $A$-vertex, 
we let $\bs$ and $b$ be as defined in the matrix $D_{\sBQ,\sA}$, see Figure~\ref{fig:thmKFKQinv}.

\begin{figure}[H]
\centering
\begin{overpic}[width=\linewidth]{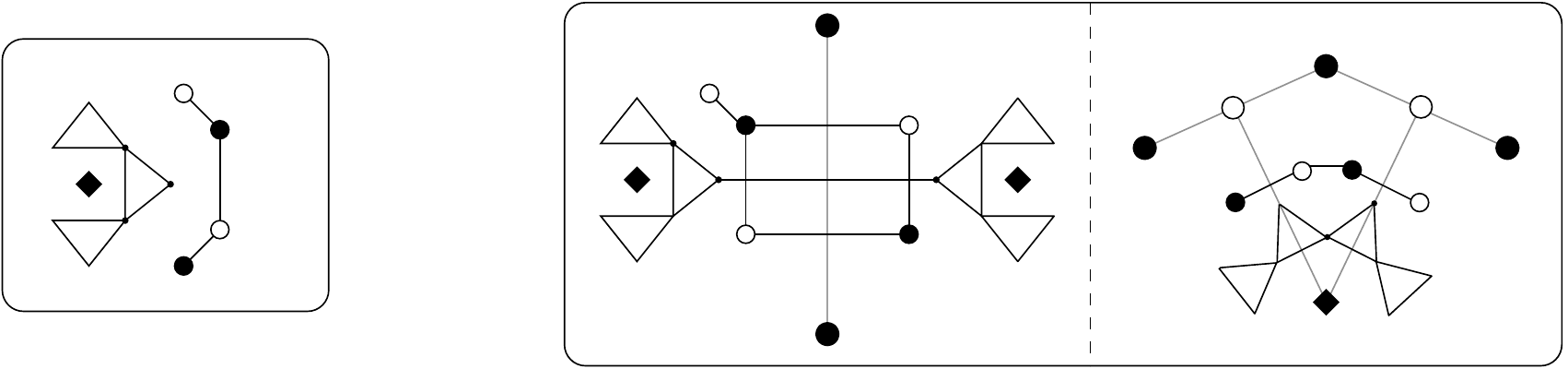}
\put(8.5,14){\scriptsize $\ubar{a}_1\!=\!\ubar{a}$}
\put(8.5,8.5){\scriptsize $\ubar{a}_2$}
\put(11.5,11.5){\scriptsize $\ubar{b}$}
\put(10.5,18.5){\scriptsize $\ubar{\ws}$}

\put(47,16.5){\scriptsize $\bs$}
\put(86,13.5){\scriptsize $\bs$}
\put(57.5,6.2){\scriptsize $\bs'$}

\put(46,12.5){\scriptsize $b$}
\put(88,9.3){\scriptsize $b$}
\put(58.5,12.5){\scriptsize $b'$}

\put(43.4,14.6){\scriptsize $a$}
\put(84,9){\scriptsize $a$}

\put(53.1,3.6){\scriptsize $e\rightsquigarrow \Jf_e$}
\put(55,10.5){\scriptsize $e^*$}
\end{overpic}
\caption{Left: notation for initial vertices. Right: notation for final vertices
when $b$ is not a boundary vertex and when it is.}
\label{fig:thmKFKQinv}
\end{figure}

\begin{thm}\label{thm:KFKQinv}
As long as they are unique (which happens for sure in the finite and $\ZZ^2$-periodic cases), 
the inverse Kasteleyn operator $(\KF)^{-1}$ can be expressed using the inverse bipartite Kasteleyn operator $(\KQt)^{-1}$ as follows.

$\bullet$ \emph{Matrix form.}
\begin{equation}
\label{equ:thmKFKQinv}
(\KF)^{-1}=
\begin{pmatrix}
(\KF)^{-1}_{\sB,\sB}&(\KF)^{-1}_{\sB,\sA}\\ 
(\KF)^{-1}_{\sA,\sB}&(\KF)^{-1}_{\sA,\sA}
\end{pmatrix}=
\begin{pmatrix}
M(X\KQt I_{\sWQ,\sA})^{-1}&-[(X\KQt I_{\sWQ,\sA})^t]^{-1}\\
(X\KQt I_{\sWQ,\sA})^{-1}&-\frac{1}{2}I_{\sA,\Ws}(\KQt)^{-1}D_{\sBQ,\sA}+\kappa_{\sA,\sA}
\end{pmatrix}.
\end{equation}
$\bullet$ \emph{Coefficients}. We have four cases to consider and use the notation of~Figure~\ref{fig:thmKFKQinv}.

\emph{1.} For every $\ubar{a}\in A$ and every $b\in B$ such that, when the graph $\GF$ is moreover finite, $b$ is not a boundary vertex:
\begin{align}\label{form:KABm1}
(\KF)^{-1}_{\ubar{a},b}&=\frac{1}{1+e^{-4\Jf_e}}
\bigl[(\KQt)^{-1}_{\ubar{\ws},\bs} + (\KQt)^{-1}_{\ubar{\ws},\bs'}\eps_{b',b}e^{-2\Jf_e}\bigr].
\end{align}
\emph{2.} When the graph $\GF$ is finite, for every $\ubar{a}\in A$ and every boundary vertex $b$ of $B$, we have
\begin{equation}\label{form:KABm1_finite}
(\KF)^{-1}_{\ubar{a},b}=(\KQt)^{-1}_{\ubar{\ws},\bs}.
\end{equation}
\emph{3.} For every $\ubar{a},\,a\in A$,
\begin{align}\label{form:KAAm1}
(\KF)^{-1}_{\ubar{a},a}&=-\frac{1}{2}(\KQt)^{-1}_{\ubar{\ws},\bs}\eps_{b,a}+\kappa_{\ubar{a},a}.
\end{align}
\emph{4.} For every $\ubar{b},\,b\in B$,
\begin{align}\label{form:KBBm1}
(\KF)^{-1}_{\ubar{b},b}&=-\eps_{\ubar{b},\ubar{a}_1}(\KF)^{-1}_{\ubar{a}_1,b}+\eps_{\ubar{b},\ubar{a}_2}(\KF)^{-1}_{\ubar{a}_2,b},
\end{align}
where $(\KF)^{-1}_{\ubar{a}_1,b},(\KF)^{-1}_{\ubar{a}_2,b}$ are given by~\eqref{form:KABm1}.
\end{thm}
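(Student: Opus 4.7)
The plan is to first extract the matrix form~\eqref{equ:thmKFKQinv} for three of the four blocks of $(\KF)^{-1}$ as a direct consequence of the second identity of Proposition~\ref{prop:Dub}. Writing that identity as $\KF\,P = Q$ with
$P = \bigl(\begin{smallmatrix} I & M \\ M' & I \end{smallmatrix}\bigr)$, $Q = \bigl(\begin{smallmatrix} 0 & A \\ -A^t & 0 \end{smallmatrix}\bigr)$, and $A := X\KQt I_{\sWQ,\sA}$, one gets $(\KF)^{-1} = P Q^{-1}$. Since $Q$ is anti-block-diagonal, its inverse $Q^{-1} = \bigl(\begin{smallmatrix} 0 & -(A^t)^{-1} \\ A^{-1} & 0 \end{smallmatrix}\bigr)$ is explicit, and the block product $PQ^{-1}$ reads off $(\KF)^{-1}_{\sB,\sB} = MA^{-1}$, $(\KF)^{-1}_{\sB,\sA} = -(A^t)^{-1}$, and $(\KF)^{-1}_{\sA,\sB} = A^{-1}$.

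The coefficient formulas~\eqref{form:KABm1},~\eqref{form:KABm1_finite} and~\eqref{form:KBBm1} then follow by unpacking these three blocks. Using $A^{-1} = I_{\sA,\sWs}\KQt^{-1}X^{-1}$, the entry $(\KF)^{-1}_{\ubar{a},b}$ becomes a sum of products $\KQt^{-1}_{\ubar{\ws},\bs}X^{-1}_{\bs,b}$; the block-diagonal matrix $X$ is either a $2\times 2$ block associated with the dual edge $e^* = bb'$ (whose determinant is $1+e^{-4\Jf_e}$ and whose off-diagonal entry contributes $\eps_{b',b}e^{-2\Jf_e}$ once one writes $-\KF_{b,b'} = \eps_{b',b}e^{-2\Jf_e}$) or, in the finite case, a trivial $1\times 1$ block equal to $1$ at a boundary $B$-vertex; these two cases yield~\eqref{form:KABm1} and~\eqref{form:KABm1_finite}, respectively. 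Formula~\eqref{form:KBBm1} follows immediately from $(\KF)^{-1}_{\sB,\sB} = MA^{-1}$ and the two nonzero entries $m_{\ubar{b},\ubar{a}_1} = -\eps_{\ubar{b},\ubar{a}_1}$, $m_{\ubar{b},\ubar{a}_2} = \eps_{\ubar{b},\ubar{a}_2}$ of the row of $M$ indexed by $\ubar{b}$.

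The main obstacle is the $AA$-block identity~\eqref{form:KAAm1}: the block inversion only yields the opaque expression $(\KF)^{-1}_{\sA,\sA} = -M'(A^t)^{-1}$, which does not readily simplify to the claimed combination of a $\KQt^{-1}$-term and the combinatorial correction $\kappa$. My plan is therefore to \emph{verify} the proposed formula: letting $L$ denote the matrix whose $BB$, $BA$, $AB$ blocks are the three expressions just obtained and whose $AA$ block is $L_{\sA,\sA} := -\tfrac{1}{2}I_{\sA,\sWs}\KQt^{-1}D_{\sBQ,\sA} + \kappa$, it suffices to show $\KF\,L = I$. The $AB$, $BA$, $BB$ columns of this identity are automatic from Proposition~\ref{prop:Dub}, so everything reduces to the $AA$ column, i.e.\ to the two identities
\[
\KF_{\sB,\sB}(\KF)^{-1}_{\sB,\sA} + \KF_{\sB,\sA}L_{\sA,\sA} = 0, \qquad \KF_{\sA,\sB}(\KF)^{-1}_{\sB,\sA} + \KF_{\sA,\sA}L_{\sA,\sA} = I_{\sA,\sA}.
\]
Substituting $(\KF)^{-1}_{\sB,\sA} = -(A^t)^{-1}$ and the definition of $L_{\sA,\sA}$, both identities become local in each decoration of $\GF$, since $\KF_{\sA,\sA}$, $\KF_{\sA,\sB}$, $\kappa$, $D_{\sBQ,\sA}$ and $I_{\sA,\sWs}$ are all decoration-block-diagonal. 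The hard step will be the decoration-by-decoration check that $\kappa$ absorbs exactly the signed $\eps$-sums produced by the Kasteleyn orientation around the $A$-cycle of that decoration: the $\tfrac{1}{2}$ prefactor reflects that each dual edge contributes to two $2\times 2$ blocks of $X^{-1}$, while the $-\tfrac{1}{4}(-1)^{n(\ubar{a},a)}$ in $\kappa$ tracks the signs accumulated along the $A$-cycle from $\ubar{a}$ to $a$, using that the induced orientation on the $A$-cycle is itself Kasteleyn (each decoration has no interior vertex of $\GF$). Once this combinatorial identity is checked on a single decoration, the full matrix identity follows by block-diagonality, completing the proof.
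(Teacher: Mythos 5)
Your setup coincides with the paper's: from the second identity of Proposition~\ref{prop:Dub} you write $\KF P=Q$ and get $(\KF)^{-1}=PQ^{-1}$, read off the $\sB\sB$, $\sB\sA$, $\sA\sB$ blocks as $M(X\KQt I_{\sWQ,\sA})^{-1}$, $-[(X\KQt I_{\sWQ,\sA})^t]^{-1}$, $(X\KQt I_{\sWQ,\sA})^{-1}$, and obtain~\eqref{form:KABm1}, \eqref{form:KABm1_finite} and~\eqref{form:KBBm1} by inverting the $2\times 2$ (resp.\ boundary $1\times 1$) blocks of $X$ and using the two nonzero entries of each row of $M$. All of that is correct and is exactly the paper's route. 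You also correctly identify that the entire content of the theorem is the $\sA\sA$ block, i.e.\ that $-M'[(X\KQt I_{\sWQ,\sA})^t]^{-1}$ equals $-\tfrac12 I_{\sA,\Ws}(\KQt)^{-1}D_{\sBQ,\sA}+\kappa_{\sA,\sA}$.

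The gap is in how you propose to verify this. The two column identities you write down still contain the global inverses $(\KQt)^{-1}$ and $[(X\KQt I_{\sWQ,\sA})^t]^{-1}$, which couple all decorations of $\GF$ through the external edges of $\GQ$; so the assertion that ``both identities become local in each decoration'' is false as stated, and a decoration-by-decoration check cannot even be set up at that stage. The missing step is the algebraic reduction that eliminates every inverse: using skew-symmetry of $\KF$ to write $-M'[(X\KQt I_{\sWQ,\sA})^t]^{-1}=(X\KQt I_{\sWQ,\sA})^{-1}(M')^t=-I_{\sA,\Ws}(\KQt)^{-1}X^{-1}\KF_{\sB,\sB}(\KF_{\sA,\sB})^{-1}$, then stripping the common invertible left factor $I_{\sA,\Ws}(\KQt)^{-1}$ and right-multiplying by the invertible $\KF_{\sA,\sB}$, which turns the target into the inverse-free identity $2X^{-1}\KF_{\sB,\sB}=D_{\sBQ,\sA}\KF_{\sA,\sB}-2\KQt I_{\sWQ,\sA}\kappa_{\sA,\sA}\KF_{\sA,\sB}$, namely~\eqref{equ:proofKF5}. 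Even after this reduction the verification is not confined to a single decoration: one first proves the genuinely decoration-local lemma $\kappa_{\sA,\sA}\KF_{\sA,\sB}=-D_{\sA,\sB}$ of~\eqref{equ:proof_ident_3} (this is where the Kasteleyn property of the triangles enters, as you anticipate), and the remaining identity~\eqref{proof_ident_6} is then an entry-by-entry computation around each quadrangle of $\GQ$ that mixes the two decorations adjacent to a dual edge, uses the weight identities $\tanh(2\Js)+e^{-2\Js}\cosh^{-1}(2\Js)=1$ and $e^{2\Js}\cosh^{-1}(2\Js)-\tanh(2\Js)=1$, and needs a separate check at boundary quadrangles in the finite case, which your plan does not mention. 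None of this invalidates your overall strategy, but the reduction and the local computation are the proof; as written, your plan stops just short of them and mis-describes the structure of what remains.
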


\begin{rem}\label{rem:positivity_dimer}$\,$
\begin{itemize}
\item When proving the local formula for $(\KF)^{-1}_{\ubar{a},a}$~\cite{BoutillierdeTiliere:iso_gen,BdtR2} in the $Z$-invariant case,
we obtained a formula of the form~\eqref{form:KAAm1} - with the constant $\kappa_{\ubar{a},a}$ - without explicitly relating it to a coefficient of $(\KQ)^{-1}$. It is quite 
remarkable that this formula holds in the \emph{full} planar case (without assuming $Z$-invariance), in the finite and infinite cases.
\item In the finite case, we do not need positivity of the coupling constants $\Js$.
In particular if the coupling constants are all negative, Theorem~\ref{thm:KFKQinv} expresses probabilities of the dimer model on the 
\emph{non-bipartite} graph $\GF$ with positive weights 
$e^{-2\Js_e}>1$ on external edges, as a function of the inverse Kasteleyn operator of a ``dimer model'' on the \emph{bipartite} graph $\GQ$ with 
\emph{negative weights} $(\tanh(\Js_e)<0)$ on quadrangle edges parallel to edges of $\Gs$. Having a negative weight for an edge amounts to reversing 
its orientation. From a physics point of view, this amounts to adding defects or creating vortices
at each face where this new orientation is 
not Kasteleyn. The physics paper~\cite{Nash_Oconnor} considers bipartite models with negative weights and somehow describes the non-Harnacity of the 
associated spectral curves. 
\item Coefficients of the inverse Kasteleyn operator $(\KF)^{-1}$ not only allow to express the dimer Boltzmann/Gibbs measure $\PPdimerF$,
but are also related to important observables of the Ising model.
By~\cite{CCK}, the coefficient $(\KF)^{-1}_{\ubar{b},b}$ is essentially 
the \emph{spin-observable} of~\cite{ChelkakSmirnov:ising} when fixing one vertex to be on the boundary of the domain; 
by~\cite{Dubedat} the coefficient $(\KF)^{-1}_{\ubar{a},a}$ is the 
\emph{fermionic spinor correlator} of~\cite{KadanoffCeva} and by~\cite{NK}, it is related to the 
\emph{FK-Ising observable}~\cite{Smirnov3,Smirnov2,ChelkakSmirnov:ising} when fixing one vertex on the boundary of the domain,
taking appropriate boundary conditions, up to normalization.
\item Consider a fixed $A$-vertex $a$ and $\ubar{a}_1,\ubar{a}_2,\ubar{a}_3$ as in Figure~\ref{fig:Dotsenko} below, such that 
the decorations of 
$\ubar{a}_1,\ubar{a}_2,\ubar{a}_3$ are distinct from that of $a$. Then using the identity $[\KQt (\KQt)^{-1}]_{\ubar{\bs},\bs}=0$,
from~\eqref{form:KAAm1} we immediately obtain the Dotsenko~\emph{three-terms relation}~\cite{Dotsenko,Mercat:ising,CCK}, see also Definition 2.1. 
of~\cite{Chelkak:embedding}.
\begin{equation*}
\eps_{\ubar{b},\ubar{a}_1}(\KF)^{-1}_{\ubar{a}_1,a}+ 
\eps_{\ubar{b},\ubar{a}_2}\tanh(2\Js_e)(\KF)^{-1}_{\ubar{a}_2,a}+
\eps_{\ubar{b},\ubar{b}'}\eps_{\ubar{b'},\ubar{a}_3}\cosh^{-1}(2\Js_e)(\KF)^{-1}_{\ubar{a}_3,a}=0.
\end{equation*}

\begin{figure}[H]
\begin{minipage}[b]{0.5\linewidth}
\begin{center}
\begin{overpic}[width=4cm]{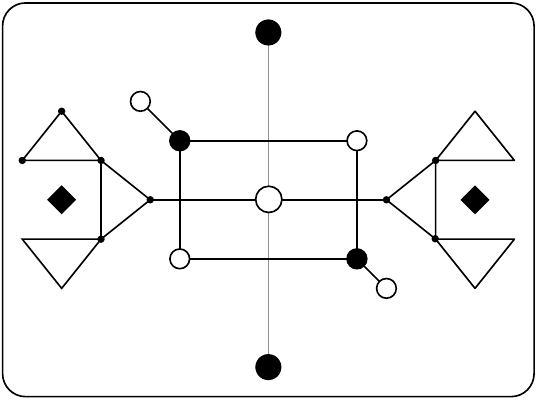}
 \put(26.5,31){\scriptsize $\ubar{b}$}
 \put(69,30){\scriptsize $\ubar{b}'$}
 \put(19,45){\scriptsize $\ubar{a}_1$}
 \put(18,25){\scriptsize $\ubar{a}_2$}
 \put(76,45){\scriptsize $\ubar{a}_3$}
 \put(26,46){\scriptsize $\ubar{\bs}$}
 \put(22,60){\scriptsize $\ubar{\ws}_1$}
 \put(30,18){\scriptsize $\ubar{\ws}_2$}
 \put(67,52){\scriptsize $\ubar{\ws}_3$}
 \put(52,9){\scriptsize $e\rightsquigarrow\Js_e$}
\end{overpic}
\end{center}
\end{minipage}
\begin{minipage}[b]{0.5\linewidth}
\begin{center}
\begin{overpic}[width=2.5cm]{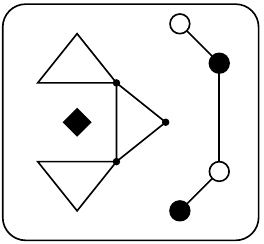}
 \put(45,62){\scriptsize $a$}
 \put(71,66){\scriptsize $\bs$}
\end{overpic}
\end{center}
\end{minipage}
\caption{Notation for the Dotsenko three-terms relation.}\label{fig:Dotsenko}
\end{figure}
\end{itemize}
\end{rem}

\begin{proof}
The expressions for coefficients are obtained from Formula~\eqref{equ:thmKFKQinv} and 
by returning to the definition of the matrices $I_{\sWs,\sA}, D_{\sBs,\sA}$,
$M$ and $X$. The inverse of $X$ is computed by blocks. 
For an edge $bb'$ of $\GF$ corresponding to an edge $e^*$ of $\Gs^*$, let $\bs,\bs'$ be the black vertices of the quadrangle of 
$\GQ$ traversed by the edge $bb'$. Then the inverse of the corresponding block is,
\begin{equation}\label{equ:Xinv}
\begin{pmatrix}
\begin{pmatrix}
1&\KF_{b',b}\\ 
\KF_{b,b'}&1
\end{pmatrix}^{-1}
\end{pmatrix}_{(\bs,\bs')}^{(b,b')}=
\frac{1}{1+|\KF_{b,b'}|^2}
\begin{pmatrix}
1&\KF_{b,b'}\\
\KF_{b',b}&1
\end{pmatrix}_{(\bs,\bs')}^{(b,b')}.
\end{equation}
In the finite case, the matrix $X$ also has a size 1, identity block for each boundary $B$-vertex $b$ of $\GF$ and its closest black vertices $\bs$
of $\GQ$. This ends the proof of formulas~\eqref{form:KABm1},~\eqref{form:KAAm1},
\eqref{form:KBBm1} for coefficients and we now turn to the proof of~\eqref{equ:thmKFKQinv}.

The expressions for $(\KF)^{-1}_{\sB,\sB},(\KF)^{-1}_{\sB,\sA},(\KF)^{-1}_{\sA,\sB}$ are a direct consequence of Proposition~\ref{prop:Dub}.
This is not the case of $(\KF)^{-1}_{\sA,\sA}$ which requires proving additional identities. From Proposition~\ref{prop:Dub}
we know that,
\[
(\KF)^{-1}_{\sA,\sA}=-M'[(X\KQt I_{\sWQ,\sA})^t]^{-1}=(X\KQt I_{\sWQ,\sA})^{-1}(M')^t=
-I_{\sA,\Ws}(\KQt)^{-1}X^{-1}\KF_{\sB,\sB}(\KF_{\sA,\sB})^{-1},
\]
where in the second and third equalities we used skew-symmetry of $\KF$, and in the third the definition of $M'$.
We thus need to prove that 
\begin{align}\label{equ:proofKF5}
&-I_{\sA,\Ws}(\KQt)^{-1}X^{-1}\KF_{\sB,\sB}(\KF_{\sA,\sB})^{-1}=-\frac{1}{2}I_{\sA,\Ws}(\KQt)^{-1}D_{\sBQ,\sA}+\kappa_{\sA,\sA}\nonumber\\
\Leftrightarrow \quad & 
-I_{\sA,\Ws}(\KQt)^{-1}X^{-1}\KF_{\sB,\sB}(\KF_{\sA,\sB})^{-1}=
-I_{\sA,\Ws}(\KQt)^{-1}\left[\frac{1}{2}D_{\sBQ,\sA}-\KQt I_{\sWQ,\sA}\kappa_{\sA,\sA}\right]\nonumber\\
\Leftrightarrow \quad& X^{-1}\KF_{\sB,\sB}(\KF_{\sA,\sB})^{-1}=\frac{1}{2}D_{\sBQ,\sA}-\KQt I_{\sWQ,\sA}\kappa_{\sA,\sA}\nonumber\\
\Leftrightarrow \quad& 2X^{-1}\KF_{\sB,\sB}=D_{\sBQ,\sA}\KF_{\sA,\sB}-2 \KQt I_{\sWQ,\sA} \kappa_{\sA,\sA}\KF_{\sA,\sB},
\end{align}
so let us prove~\eqref{equ:proofKF5}. We will be using the notation of Figure~\ref{fig:MX2_1} below.

\begin{figure}[H]
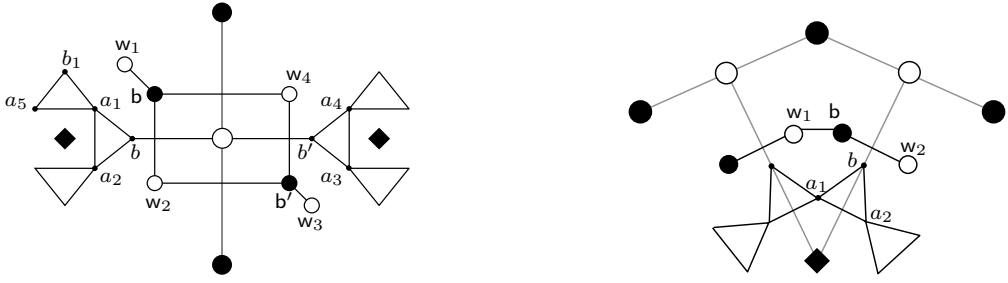

\begin{minipage}[b]{0.5\linewidth}
\begin{center}
\begin{overpic}[width=5cm]{fig_MX2.pdf}
 \put(26,31){\scriptsize $b$}
 \put(70,31){\scriptsize $b'$}
 \put(8,56){\scriptsize $b_1$}
 \put(18,45){\scriptsize $a_1$}
 \put(18,25){\scriptsize $a_2$}
 \put(76,25){\scriptsize $a_3$}
 \put(76,45){\scriptsize $a_4$}
 \put(-7,45){\scriptsize $a_5$}
 \put(26,46){\scriptsize $\bs$}
 \put(64,18){\scriptsize $\bs'$}
 \put(22,60){\scriptsize $\ws_1$}
 \put(30,18){\scriptsize $\ws_2$}
 \put(70,13){\scriptsize $\ws_3$}
 \put(67,52){\scriptsize $\ws_4$}
\end{overpic}
\end{center}
\end{minipage}
\begin{minipage}[b]{0.5\linewidth}
\begin{center}
 \begin{overpic}[width=5cm]{fig_matrixKQKFbry_1.pdf}
 \put(58,29){\scriptsize $b$}
 \put(47,23){\scriptsize $a_1$}
 \put(64,15){\scriptsize $a_2$}
 \put(53,42){\scriptsize $\bs$}
 \put(42,42){\scriptsize $\ws_1$}
 \put(72,33){\scriptsize $\ws_2$}
\end{overpic}
\end{center}
\end{minipage}
\caption{Notation for the proof of Equation~\eqref{equ:proofKF5}.}\label{fig:MX2_1}
\end{figure}

We need to introduce an additional matrix, the \emph{matrix $D_{\sA,\sB}$}, which has rows indexed by $A$-vertices and columns by $B$-vertices of $\GF$. It is diagonal:
to an $A$-vertex $a$ corresponds the unique $B$-vertex $b$ such that $b$ comes before $a$ in the cw ordering of the triangle
containing $a$ and $b$. The diagonal coefficient $d_{a,b}$ is:
\[
d_{a,b}=\frac{1}{2}\eps_{b,a}.
\]
For example to $a_2$ of Figure~\ref{fig:MX2_1} corresponds the vertex $b$, and the coefficient $d_{a_2,b}=\frac{1}{2}\eps_{b,a_2}$.
Let us first show that
\begin{equation} \label{equ:proof_ident_3}
\kappa_{\sA,\sA}\KF_{\sA,\sB}=-D_{\sA,\sB}.
\end{equation}
Consider a $B$-vertex $b$ of $\GF$. Then, the coefficient $(\kappa_{\sA,\sA}\KF_{\sA,\sB})_{a,b}$ of the LHS 
of~\eqref{equ:proof_ident_3} is a priori non-zero
for all $A$-vertices $a$ belonging to the same decoration as $b$. We have,
\[
(\kappa_{\sA,\sA}\KF_{\sA,\sB})_{a,b}=\kappa_{a,a_1}\KF_{a_1,b}+\kappa_{a,a_2}\KF_{a_2,b}=
\kappa_{a,a_1}\eps_{a_1,b}+\kappa_{a,a_2}\eps_{a_2,b}.
\]
Returning to the definition of the matrix $\Lambda$, as long as $a\neq a_2$, we have that
$\kappa_{a,a_1}=\kappa_{a,a_2}\eps_{a_2,a_1}$ implying,
\[
(\kappa_{\sA,\sA}\KF_{\sA,\sB})_{a,b}=
\kappa_{a,a_2}[\eps_{a_2,a_1}\eps_{a_1,b}+\eps_{a_2,b}]=0=-d_{a,b},
\]
since the orientation around the triangle $a_1,a_2,b$ is Kasteleyn and using the definition of $D_{\sA,\sB}$. 
When $a=a_2$, then $\kappa_{a_2,a_1}=-\frac{1}{4}\eps_{a_2,a_1},
\kappa_{a_2,a_2}=\frac{1}{4}$; thus
\[
(\kappa_{\sA,\sA}\KF_{\sA,\sB})_{a_2,b}=-\frac{1}{4}\eps_{a_2,a_1}\eps_{a_1,b}+\frac{1}{4}\eps_{a_2,b}=
-\frac{1}{2}\eps_{b,a_2}=-d_{b,a_2},
\]
using again the Kasteleyn orientation around the triangle and the definition of $D_{\sA,\sB}$, thus ending the proof of~\eqref{equ:proof_ident_3}.
Plugging~\eqref{equ:proof_ident_3} into~\eqref{equ:proofKF5} leaves us with showing the equivalent
\begin{equation}\label{proof_ident_6}
2X^{-1}\KF_{\sB,\sB}=D_{\sBQ,\sA}\KF_{\sA,\sB}+2 \KQt I_{\sWQ,\sA}D_{\sA,\sB}.
\end{equation}
\emph{Infinite case.} 
Let $\bs$ be a vertex of $\GQ$, then 
the coefficient $(2X^{-1}\KF_{\sB,\sB})_{\bs,\tilde{b}}$ of the LHS of~\eqref{proof_ident_6} is non-zero when 
$\tilde{b}\in\{b,b'\}$. 
Recalling the computation of $X^{-1}$ given in~\eqref{equ:Xinv}, we have
\begin{align*}
2(X^{-1}\KF_{\sB,\sB})_{\bs,b}&=\frac{2}{1+|\KF_{b,b'}|^2}\times(\KF_{b,b'})\KF_{b',b}=e^{2\Js}\cosh^{-1}(2\Js)\eps_{b,b'}\eps_{b',b}e^{-4\Js}\\
&=-e^{-2\Js}\cosh^{-1}(2\Js)=-1+\tanh(2\Js),\text{ using~\eqref{equ:reltanh}}.\\
2(X^{-1}\KF_{\sB,\sB})_{\bs,b'}&=\frac{2}{1+|\KF_{b,b'}|^2}\times 1\times \KF_{b,b'}=e^{2\Js}\cosh^{-1}(2\Js)\eps_{b,b'}e^{-2\Js}\\
&=\eps_{b,b'}\cosh^{-1}(2\Js).
\end{align*}
The RHS $(D_{\sBQ,\sA}\KF_{\sA,\sB}+2 \KQt I_{\sWQ,\sA}D_{\sA,\sB})_{\bs,\tilde{b}}$ a priori has non-zero coefficients when 
$\tilde{b}\in\{b,b',b_1\}$, and
\begin{align*}
(D_{\sBQ,\sA}\KF_{\sA,\sB}+2 \KQt I_{\sWQ,\sA}D_{\sA,\sB})_{\bs,b}&=
\eps_{b,a_1}\eps_{a_1,b}+2\KQ_{\bs,\ws_2}I_{\ws,a_2}\eps_{b,a_2}\\
&=-1+\eps_{b,a_2}\tanh(2\Js_e)\eps_{b,a_2}
=-1+\tanh(2\Js_e).\\
(D_{\sBQ,\sA}\KF_{\sA,\sB}+2 \KQt I_{\sWQ,\sA}D_{\sA,\sB})_{\bs,b'}&=0+2\KQ_{\bs,\ws_4}I_{\ws_4,a_4}\eps_{b',a_4}\\
&=0+\eps_{b,b'}\eps_{b',a_4}\cosh^{-1}(2\Js_e) \eps_{b',a_4}=\eps_{b,b'}\cosh^{-1}(2\Js_e).\\
(D_{\sBQ,\sA}\KF_{\sA,\sB}+2 \KQt I_{\sWQ,\sA}D_{\sA,\sB})_{\bs,b_1}&=
\eps_{b,a_1}\eps_{a_1,b_1}+\KQt_{\bs,\ws_1}I_{\ws_1,a_1}\eps_{b_1,a_1}\\
&=-\eps_{b,a_1}\eps_{a_1,b_1}+\eps_{b,a_1}\eps_{b_1,a_1}=0,
\end{align*}
and hence Equation~\eqref{proof_ident_6} is proved in the infinite case.

\emph{Finite, boundary case}. Consider a black vertex $\bs$ of $\GQ$. 
As long as $\bs$ is not a boundary vertex of $\GQ$, the argument is as in the infinite case, so
we suppose that $\bs$ is a boundary vertex. The coefficient $(2X^{-1}\KF_{\sB,\sB})_{\bs,\tilde{b}}$ of the LHS of~\eqref{proof_ident_6}
is zero for all $\tilde{b}\in B$. The coefficient 
$(D_{\sBQ,\sA}\KF_{\sA,\sB}+2 \KQt I_{\sWQ,\sA}D_{\sA,\sB})_{\bs,\tilde{b}}$ of the RHS of~\eqref{proof_ident_6}
is a priori non-zero when $\tilde{b}\in\{b,b_1\}$, and we have
\begin{align*}
(D_{\sBQ,\sA}\KF_{\sA,\sB}+2 \KQt I_{\sWQ,\sA}D_{\sA,\sB})_{\bs,b}&=
\eps_{b,a_1}\eps_{a_1,b}+2\KQt_{\bs,\ws_2}I_{\ws,a_2}\eps_{b,a_2}\\
&=-1+\eps_{b,a_2}\eps_{b,a_2}=0,\\
(D_{\sBQ,\sA}\KF_{\sA,\sB}+2 \KQt I_{\sWQ,\sA}D_{\sA,\sB})_{\bs,b_1}&=0,
\end{align*}
where in the computation for $\tilde{b}=b$ we have used that the boundary coefficient $\KQt_{\bs,\ws_2}=\eps_{b,a_2}$. 
The computation for $\tilde{b}=b_1$
is as in the infinite case. This ends the proof of the finite, boundary case of~\eqref{proof_ident_6} and the proof of 
Theorem~\ref{thm:KFKQinv}.
\end{proof}

\begin{exm}\label{ex:prob_GF}
As an example we give the probability of single edges occurring in dimer configurations of $\GF$ chosen with respect to the 
Boltzmann measure $\PPdimerF$ in the finite case, or the Gibbs measure $\PPdimerF$ in the infinite case,
as a function of edge probabilities of the dimer model on $\GQ$. Details of computations are given in Appendix~\ref{app:dimers_GF}. Using the notation of 
Figure~\ref{fig:matrixKQKF}, we have
\begin{align*}
\PPdimerF(aa')&=\frac{1}{4}-\frac{\PPdimerQ(\bs\ws')}{2\tanh(2\Js)}\\
\PPdimerF(ab)&=\PPdimerF(a'b)=\frac{1}{4}+\frac{\PPdimerQ(\bs\ws')}{2\tanh(2\Js)}\\
\PPdimerF(bb')&=\frac{1}{2}-\frac{\PPdimerQ(\bs\ws')}{\tanh(2\Js)}.
\end{align*}
\end{exm}

\subsection{In the $Z$-invariant case}\label{sec:GQ_GF_Zinv}

We restrict to the case where the graph $\Gs$ is isoradial, finite or infinite, with Ising coupling constants $\Js$ given 
by~\eqref{equ:def_J_Zinv}, dimer weights on $\GF$ by~\eqref{equ:def_mu_Zinv} and dimer weights on $\GQ$ by~\eqref{eq:def_nu_Zinv}. 
The main result of this section is Corollary~\ref{thm:KFKQinv_Zinv} relating the inverse Kasteleyn operator $(\KF)^{-1}$ to the inverse $Z^u$-Dirac operator
using Theorem~\ref{thm:KFKQinv}, Example~\ref{ex:KQ_KD_u_v_special}, Corollaries~\ref{cor:BoltzmannGQ_GD} and~\ref{cor:gibbsGQ_GD}, and 
to the $Z$-massive Green functions. This proves one of the main results of this paper, namely that the contour Ising Boltzmann/Gibbs measures can be computed 
from the inverse $Z^u$-Dirac operator, and also from the $Z$-massive Green functions. As a byproduct, in the infinite this also gives 
a \emph{direct} alternative way of proving the local formula of~\cite{BdtR2} for $(\KF)^{-1}$, where the locality is seen as directly inherited from the 
Green function. Note that as for the local expression of $(\KQ)^{-1}$, it is not immediate to see equality with the expression of~\cite{BdtR2}.

We first relate the real and complex bipartite Kasteleyn matrices $\KQt$ and $\KQ$ of the graph $\GQ$; we use Appendix~\ref{sec2:app}.
Define the following function $q$ on pairs of vertices of $\GQ$, inductively on edges.
For every edge $\bs\ws$ of $\GQ$, let
\[
q_{\bs,\ws}=\frac{\KQ_{\bs,\ws}}{\KQt_{\bs,\ws}}=\frac{e^{i\frac{\bar{\alpha}+\bar{\beta}}{2}}}{\eps_{\bs,\ws}}, \quad q_{\ws,\bs}=q_{\bs,\ws}^{-1}.
\]
For every pair of vertices $\xs,\ys$ of $\GQ$, let 
$
q_{\xs,\ys}=\prod_{(\xs',\ys')\in\gamma_{\xs,\ys}}q_{\xs',\ys'},
$
where $\gamma_{\xs,\ys}$ is an edge-path of $\GQ$ from $\xs$ to $\ys$. Then, since the matrices 
$\KQt$ and $\KQ$ satisfy the alternating product condition around every
face/inner face of $\GQ$ if the graph is infinite/finite~\cite{Kuperberg,Kenyon3}, the function $q$ is well defined.

Consider a fixed vertex $\xs_0$ of $\GQ$, and define the diagonal matrices $D^{\xs_0,\sBs}$, $D^{\xs_0,\sWs}$
on black, resp. white vertices, of $\GQ$ by:
\[
\forall\,\bs\in\Bs,\quad D^{\xs_0,\sBs}_{\bs,\bs}=q_{\xs_0,\bs},\ \forall\,\ws\in\Ws,\quad \D^{\xs_0,\sWs}_{\ws,\ws}=q_{\xs_0,\ws}^{-1}.
\]
By~\cite{Kuperberg,Kenyon3}, see also Appendix~\ref{sec2:app}, we have
\[
\KQt=D^{\xs_0,\sBs} \KQ D^{\xs_0,\sWs},
\]
thus implying the following lemma:
\begin{lem}[\cite{Kuperberg,Kenyon3}]\label{lem:Kup}
Consider the function $q$ as defined above. Then, coefficients of the inverse of the matrices $\KQt$ and $\KQ$ are related by the following, for every
white vertex $\ubar{\ws}$ and every black vertex $\bs$ of $\GQ$,
\[
(\KQt)^{-1}_{\ubar{\ws},\bs}=q_{\bs,\ubar{\ws}}(\KQ)^{-1}_{\ubar{\ws},\bs}.
\]
\end{lem}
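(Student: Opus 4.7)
The plan is to derive the claimed identity as a direct consequence of the gauge equivalence
\[
\KQt=D^{\xs_0,\sBs}\,\KQ\,D^{\xs_0,\sWs}
\]
recalled just above the statement, combined with the multiplicative cocycle property of the function $q$ along edge-paths of $\GQ$. The main conceptual step (its well-definedness) has already been carried out in the excerpt using the alternating cycle condition of Kuperberg--Kenyon; what remains is essentially bookkeeping of indices under taking inverses of diagonal conjugations.

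First, since $D^{\xs_0,\sBs}$ and $D^{\xs_0,\sWs}$ are diagonal with nonzero coefficients (the quantities $q_{\xs_0,\bs}$ and $q_{\xs_0,\ws}$ are nonzero on $\Re(\TT(k))$), they are invertible and their inverses are the diagonal matrices with coefficients $q_{\xs_0,\bs}^{-1}=q_{\bs,\xs_0}$ and $q_{\xs_0,\ws}$, respectively. Inverting the gauge identity yields
\[
(\KQt)^{-1}=(D^{\xs_0,\sWs})^{-1}\,(\KQ)^{-1}\,(D^{\xs_0,\sBs})^{-1},
\]
and reading off the $(\ubar{\ws},\bs)$ coefficient, where rows of $(\KQt)^{-1}$ are indexed by white vertices and columns by black ones, gives
\[
(\KQt)^{-1}_{\ubar{\ws},\bs}=q_{\xs_0,\ubar{\ws}}\,q_{\bs,\xs_0}\,(\KQ)^{-1}_{\ubar{\ws},\bs}.
\]
The final step is to invoke the multiplicative definition of $q$ along an edge-path: concatenating a path from $\bs$ to $\xs_0$ with one from $\xs_0$ to $\ubar{\ws}$ gives $q_{\bs,\xs_0}\,q_{\xs_0,\ubar{\ws}}=q_{\bs,\ubar{\ws}}$, which is independent of $\xs_0$ by well-definedness of $q$. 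This yields the claimed formula.

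The one point requiring a brief word is the legitimacy of ``inverting'' the gauge identity in the infinite setting, where matrices may have infinitely many rows and columns. This is however harmless: the matrices $D^{\xs_0,\sBs}$ and $D^{\xs_0,\sWs}$ are diagonal, so all products involved are pointwise finite sums; and whenever $(\KQ)^{-1}$ is an inverse (right, left, or two-sided) of $\KQ$ in the sense of Section~\ref{sec:dimer_infinite}, the conjugated matrix is an inverse of $\KQt$ of the same type and coincides with $(\KQt)^{-1}$ in all regimes considered (finite graphs, $\ZZ^2$-periodic graphs, and the $Z$-invariant local inverses of~\cite{Kenyon3,BoutillierdeTiliere:iso_gen,BdtR2}). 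I do not foresee any genuine obstacle; the proof is essentially a one-line conjugation argument plus the cocycle identity for $q$.
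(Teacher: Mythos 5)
Your proof is correct and follows exactly the route the paper takes: the paper states the gauge identity $\KQt=D^{\xs_0,\sBs}\KQ D^{\xs_0,\sWs}$ and simply declares that it implies the lemma, which is precisely your conjugation-plus-cocycle computation, and your remark on the infinite case matches the paper's own note that the diagonal coefficients are uniformly bounded and bounded away from zero (here they have modulus one).
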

Note that the above matrix relation holds in the finite and infinite cases because coefficients of the diagonal matrices are finite and
uniformly bounded away from 0.

Cases 2. and 3. of Theorem~\ref{thm:KFKQinv} directly relate coefficients of $(\KF)^{-1}$ to $(\KQt)^{-1}_{\ubar{\ws},\bs}$. Using
Lemma~\ref{lem:Kup}, Corollary~\ref{cor:gibbsGQ_GD} (infinite case), Corollary~\ref{cor:BoltzmannGQ_GD} and Remark~\ref{rem:final_0} (finite case), these coefficients of
$(\KF)^{-1}$ are easily expressed using the 
inverse $Z^u$-Dirac operator and the $Z$-massive Green functions.

Case 4. of Theorem~\ref{thm:KFKQinv} uses Case 1. so we are left with considering Case 1. The notation used are summarized in 
Figure~\ref{fig:defalphabeta} below. 

\begin{figure}[H]
\centering
\begin{overpic}[width=\linewidth]{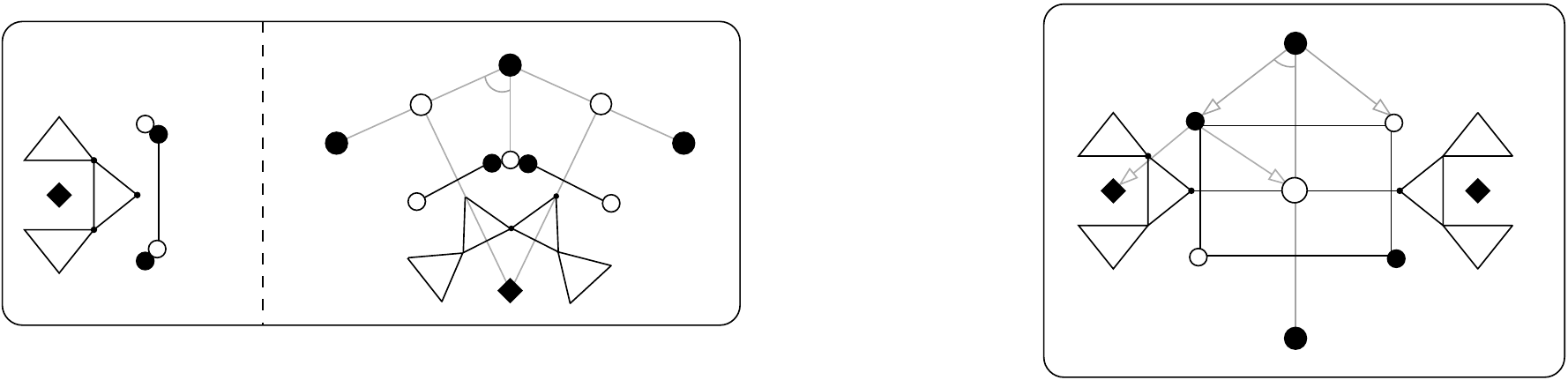}
\put(6.5,14){\scriptsize $\ubar{a}$}
\put(10,17){\scriptsize $\ubar{\ws}$}

\put(32,15.2){\scriptsize $\ubar{\ws}^c$}
\put(32,10.5){\scriptsize $\ubar{a}$}
\put(30,16.4){\scriptsize $\bar{\theta}_\irm^\spartial$}

\put(74.5,16.6){\scriptsize $\bs$}
\put(88,5.6){\scriptsize $\bs'$}
\put(74.5,6.2){\scriptsize $\ws$}

\put(75.3,10){\scriptsize $b$}
\put(89,10){\scriptsize $b'$}

\put(73,15){\scriptsize $a$}
\put(73,8.5){\scriptsize $a'$}

\put(83,3.6){c}
\put(85,10.5){\scriptsize $e^*$}
\put(76.1,18.5){\scriptsize $e^{i\bar{\alpha}_\frm}$}
\put(87,18.5){\scriptsize $e^{i\bar{\beta}_\frm}$}
\put(80.4,18){\scriptsize $\bar{\theta}_\frm$}

\put(68,12){\scriptsize $f_1$}
\put(96,12){\scriptsize $f_2$}
\put(81.7,22.7){\scriptsize $v_1$}
\put(81.7,0.6){\scriptsize $v_2$}
\end{overpic}
\caption{Left: notation for initial vertices. Right: notation for final vertices.}
\label{fig:defalphabeta}
\end{figure}

\begin{cor}[Case 1]\label{thm:KFKQinv_Zinv}$\,$

$\bullet$ For every $\ubar{a}\in A$ and every $b\in B$ such that, when the graph $\GF$ is moreover finite, $b$ is not a boundary vertex:
\begin{align*}
(\KF)^{-1}_{\ubar{a},b}&=\textstyle
q_{\bs,\ubar{\ws}}\cn\bigl(\frac{K-\thetaf}{2}\bigr)\frac{1+(k')^{-1}\dn(\thetaf)}{2}
\left[(\KQ)^{-1}_{\ubar{\ws},\bs}\cn\bigl(\frac{K-\thetaf}{2}\bigr) -i(\KQ)^{-1}_{\ubar{\ws},\bs'}\cn\bigl(\frac{K-\thetaf}{2}\bigr)\right].
\end{align*}
$\bullet$ Moreover, as a function of the inverse $Z^{\ubf}$-Dirac operator, where $\ubf=\frac{\alphaf+\betaf}{2}+K$, we have:

\emph{Infinite case.}
\begin{align*}
\textstyle
(\KF)^{-1}_{\ubar{a},b}
=&\textstyle q_{\bs,\ubar{\ws}} e^{i\frac{\betafb-\betaib}{2}}(k')^{\frac{1}{2}}  \frac{\cn\bigl(\frac{K-\thetaf}{2}\bigr)(1+(k')^{-1}\dn(\thetaf))}{
2[\cn(\thetaf)\sn(\thetaf)]^{\frac{1}{2}}}
\left(\cn(\ubf_{\betai})\KD(\ubf)^{-1}_{\ubar{v},w}-i\sn({\ubf_{\betai}})\KD(\ubf)^{-1}_{\ubar{f},w}\right)\\
=&\textstyle q_{\bs,\ubar{\ws}}e^{-i\frac{\alphafb+\betaib}{2}} \frac{\cn\bigl(\frac{K-\thetaf}{2}\bigr)(1+(k')^{-1}\dn(\thetaf))}{
2}\times\\
&\textstyle\times\left(\frac{\cn(\ubf_{\betai})}{\cn(\thetaf)}(G^m_{\ubar{v},v_2}-G^m_{\ubar{v},v_1})
-\frac{\sn({\ubf_{\betai}})}{\sn(\thetaf)}
\Bigl(\nd\bigl(\frac{K-\thetaf}{2}\bigr)G^{m,*}_{\ubar{f},f_2}-\nd\bigl(\frac{K+\thetaf}{2}\bigr)G^{m,*}_{\ubar{f},f_1}\Bigr)\right).
\end{align*}
\emph{Finite case.} If $\ubar{a}$ corresponds to a vertex $\ubar{\ws}^c$ for some rhombus pair of $\R^{\spartial}$, we then use
the notation of Figure~\ref{fig:defalphabeta} (2nd quadrant on the left):
\begin{align*}
\textstyle
(\KF)^{-1}_{\ubar{a},b}
\textstyle=q_{\bs,\ubar{\ws}}
e^{i\frac{\betafb}{2}}(k')^{\frac{1}{2}}\sn(\theta^{\spartial}_\irm)^{\II_{\{\ubar{\ws}\in\{\ws^c\in\R^\spartial\}\}}}&
\frac{\cn\bigl(\frac{K-\thetaf}{2}\bigr)(1+(k')^{-1}\dn(\thetaf))}{
2[\cn(\thetaf)\sn(\thetaf)]^{\frac{1}{2}}}\times\\
&\textstyle \times \left(\II_{\{\ubar{\ws}\neq \ws^{c,\rs}\}}
t(\ubf)_{\ubar{\ws},\ubar{v}}\KD^\spartial(\ubf)^{-1}_{\ubar{v},w}+t(\ubf)_{\ubar{\ws},\ubar{f}}\KD^\spartial(\ubf)^{-1}_{\ubar{f},w}
\right),
\end{align*}
where the coefficients $\KD^\spartial(\ubf)^{-1}_{\ubar{v},w},\KD^\spartial(\ubf)^{-1}_{\ubar{f},w}$ are expressed using the Green functions $G^{m,\spartial}(u)$
and $G^{m,*}$ in Example~\ref{ex:KD_u_v_special}.
\end{cor}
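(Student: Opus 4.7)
The starting point is Case~1 of Theorem~\ref{thm:KFKQinv}, which gives
\[
(\KF)^{-1}_{\ubar{a},b}=\frac{1}{1+e^{-4\Jf_e}}\bigl[(\KQt)^{-1}_{\ubar{\ws},\bs}+(\KQt)^{-1}_{\ubar{\ws},\bs'}\eps_{b',b}e^{-2\Jf_e}\bigr].
\]
The idea is to proceed in three successive layers: first replace $(\KQt)^{-1}$ by the complex Kasteleyn inverse $(\KQ)^{-1}$; next feed the result through the matrix identity of Theorem~\ref{thm:main} (via Example~\ref{ex:KQ_KD_u_v_special}) to bring the $Z^{\ubf}$-Dirac inverse into the picture; finally invoke Corollary~\ref{cor:KD_G} and Example~\ref{ex:KD_u_v_special} to land on the $Z$-massive Green functions.

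\textbf{First equation.} I would first substitute the $Z$-invariant value $e^{-2\Jf_e}=\frac{\cn(\thetaf)}{1+\sn(\thetaf)}$, which after an elementary computation using $\sn^2+\cn^2=1$ gives $1+e^{-4\Jf_e}=\frac{2}{1+\sn(\thetaf)}$ and hence
\[
\frac{1}{1+e^{-4\Jf_e}}=\frac{1+\sn(\thetaf)}{2},\qquad \frac{e^{-2\Jf_e}}{1+e^{-4\Jf_e}}=\frac{\cn(\thetaf)}{2}.
\]
Using the duplication identity $\cn^{2}\!\bigl(\tfrac{K-\thetaf}{2}\bigr)=\frac{k'(1+\sn(\thetaf))}{\dn(\thetaf)+k'}$ (obtained from $\cn^{2}u=\frac{\cn(2u)+\dn(2u)}{1+\dn(2u)}$ together with $\cn(K-\thetaf)=k'\sd(\thetaf)$ and $\dn(K-\thetaf)=k'\nd(\thetaf)$), the prefactor becomes
\[
\frac{1+\sn(\thetaf)}{2}=\frac{\cn^{2}\!\bigl(\tfrac{K-\thetaf}{2}\bigr)\bigl(1+(k')^{-1}\dn(\thetaf)\bigr)}{2}.
\]
The factor $\frac{\cn(\thetaf)}{2}$ in front of the second term is then rewritten analogously so that one $\cn\!\bigl(\tfrac{K-\thetaf}{2}\bigr)$ can be pulled out in common. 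Finally I would pass from $(\KQt)^{-1}$ to $(\KQ)^{-1}$ via Lemma~\ref{lem:Kup}, tracking the gauge $q$: the two coefficients receive phases $q_{\bs,\ubar{\ws}}$ and $q_{\bs',\ubar{\ws}}$, whose ratio (computed along the two edges of the quadrangle traversed by $bb^{*}$) combines with $\eps_{b',b}$ to produce exactly the factor $-i$ in front of $(\KQ)^{-1}_{\ubar{\ws},\bs'}$. This yields the first displayed formula.

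\textbf{Second equation (infinite case) and Green function form.} Here I choose $u=\ubf=\frac{\alphaf+\betaf}{2}+K$ in the general relation of Corollary~\ref{cor:KD_KQ} / Example~\ref{ex:KQ_KD_u_v_special}. Since $\ubf_{\alphaf}=\frac{K+\thetaf}{2}$ and $\ubf_{\betaf}=\frac{K-\thetaf}{2}$, the left-hand side of that relation is exactly the bracket obtained above, up to the factor $\sn(\ubf_{\betaf})\dn(\ubf_{\alphaf})=\cn(\ubf_{\alphaf})=\cn\!\bigl(\tfrac{K+\thetaf}{2}\bigr)$. Substituting the right-hand side produces the expression in terms of $\KD(\ubf)^{-1}_{\ubar{v},w}$ and $\KD(\ubf)^{-1}_{\ubar{f},w}$, with the scalar prefactor $(k')^{1/2}/[\cn(\thetaf)\sn(\thetaf)]^{1/2}$ that appears in the statement. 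For the final equality, I plug in the explicit values of $\KD(\ubf)^{-1}$ computed in Example~\ref{ex:KD_u_v_special} (which in turn come from Corollary~\ref{cor:KD_G}); the phases $e^{-i(\alphafb+\betafb)/2}$ recombine with $e^{i(\betafb-\betaib)/2}$ to give the $e^{-i(\alphafb+\betaib)/2}$ appearing on the right.

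\textbf{Finite case.} The argument is structurally identical, but Corollary~\ref{cor:KD_KQ} is replaced by Point~1 of Corollary~\ref{cor:KD_KQ_finite} evaluated at $u=\ubf$ (Example~\ref{ex:KQ_KD_u_v_special}, finite version), which introduces the boundary factor $\sn(\theta^{\spartial}_{\mathrm{i}})^{\II_{\{\ubar{\ws}\in\{\ws^{c}\in\R^{\spartial}\}\}}}$ when the initial white vertex lies on a boundary rhombus pair, and the indicator $\II_{\{\ubar{\ws}\neq\ws^{c,\rs}\}}$ suppressing the $\ubar{v}$-term at the root pair. The Green function form is then obtained by substituting the finite version of Example~\ref{ex:KD_u_v_special} expressing $\KD^\spartial(\ubf)^{-1}$ via $G^{m,\spartial}(\ubf)$ and $G^{m,*}$.

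\textbf{Main obstacle.} The bookkeeping is entirely algebraic, so nothing is conceptually hard; the main nuisance is the correct handling of the two phase factors $q_{\bs,\ubar{\ws}}$ and $q_{\bs',\ubar{\ws}}$ when going from $(\KQt)^{-1}$ to $(\KQ)^{-1}$, which must combine with the sign $\eps_{b',b}$ to produce the clean factor $-i$ and a single overall $q_{\bs,\ubar{\ws}}$, together with verifying the elliptic half-angle identity that turns $\frac{1+\sn(\thetaf)}{2}$ into $\frac{\cn^{2}(\tfrac{K-\thetaf}{2})(1+(k')^{-1}\dn(\thetaf))}{2}$ so that a common $\cn(\tfrac{K-\thetaf}{2})$ can be factored out of both terms.
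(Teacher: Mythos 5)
Your proposal follows essentially the same route as the paper: gauge-transport of $(\KQt)^{-1}$ to $(\KQ)^{-1}$ with the phase $q_{\bs',\ubar{\ws}}\eps_{b',b}=-i\,q_{\bs,\ubar{\ws}}$, substitution of the $Z$-invariant value of $e^{-2\Jf_e}$, then the choice $u=\ubf$ in Corollary~\ref{cor:KD_KQ} (resp.\ Corollary~\ref{cor:KD_KQ_finite}) and Example~\ref{ex:KD_u_v_special} to reach the Green functions. The only (immaterial) difference is that you compute $\tfrac{1}{1+e^{-4\Jf_e}}=\tfrac{1+\sn(\thetaf)}{2}$ directly and then apply the half-angle identity, while the paper writes $e^{-2\Jf_e}=\cn\bigl(\tfrac{K+\thetaf}{2}\bigr)/\cn\bigl(\tfrac{K-\thetaf}{2}\bigr)$ and uses $\cn^2(u)+\cn^2(K-u)=\tfrac{2}{1+\nd(2u)}$; the two manipulations are equivalent.
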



\begin{proof}
Let us prove the first point. We first compare $(\KQt)^{-1}_{\ubar{\ws},\bs}$ and $(\KQt)^{-1}_{\ubar{\ws},\bs'}$ of Theorem~\ref{thm:KFKQinv}.
Using the notation of Figure~\ref{fig:defalphabeta} (right), we have
\[
(\KQt)^{-1}_{\ubar{\ws},\bs'}\eps_{b',b}=-iq_{\bs,\ubar{\ws}}(\KQ)^{-1}_{\ubar{\ws},\bs},
\]
that is because, omitting the subscript ``$\mathrm{f}$'', 
\begin{align*}
q_{\bs',\ubar{\ws}}\eps_{b',b}&=q_{\bs',\ws}q_{\ws,\bs} q_{\bs,\ubar{\ws}}\eps_{b',b}
=\frac{e^{i\frac{\bar{\beta}-\pi+\bar{\alpha}}{2}}}{\eps_{\bs',\ws}}
\frac{\eps_{\bs,\ws}}{e^{i\frac{\bar{\beta}+\bar{\alpha}}{2}}}\eps_{b',b}q_{\bs,\ubar{\ws}}
=-i\frac{\eps_{b,a'}}{\eps_{b',b'}\eps_{b,a'}}\eps_{b',b}q_{\bs,\ubar{\ws}}
=-iq_{\bs,\ubar{\ws}}.
\end{align*}
We thus have,
\begin{align}\label{equ:fin_0}
(\KF)^{-1}_{\ubar{a},b}&=q_{\bs,\ubar{\ws}}\frac{1}{1+e^{-4\Jf_e}}
\bigl[(\KQ)^{-1}_{\ubar{\ws},\bs} -i (\KQ)^{-1}_{\ubar{\ws},\bs'}e^{-2\Jf_e}\bigr].
\end{align}
We are left with computing the terms involving the coupling constants $\Jf_e$ in the $Z$-invariant case.
By definition, $\Jf_e=\frac{1}{2}\ln\frac{1+\sn\thetaf}{\cn\thetaf}$.
Set $u=\frac{K-\thetaf}{2}$, $v=\frac{K+\thetaf}{2}$, then $u+v=K,u-v=-\thetaf$, so that
\begin{equation}\label{equ:fin_1}
\textstyle 
e^{-2\Jf_e}=\frac{\cn\thetaf}{1+\sn\thetaf}=\frac{\cn(u-v)-\cn(u+v)}{\sn(u+v)-\sn(u-v)}=\sc(u)\dn(v)=
\frac{\sn(K-v)\dn(v)}{\cn(u)}=\frac{\cn\bigl(\frac{K+\thetaf}{2}\bigr)}{\cn\bigl(\frac{K-\thetaf}{2}\bigr)},
\end{equation}
where in the third equality we used~\cite[chap.2, ex.14 (iii)]{Lawden} and in the fourth that $\sn(v+K)=\cd(v)$.
From this and Identity~\eqref{equ:id_somme_cn_carre}, we obtain,
\begin{align}\label{equ:fin_2}
&\frac{1}{1+e^{-4\Jf_e}}=\textstyle
\frac{\cn^2\bigl(\frac{K-\thetaf}{2}\bigr)}{\cn^2\bigl(\frac{K-\thetaf}{2}\bigr)+\cn^2\bigl(\frac{K+\thetaf}{2}\bigr)}
=\cn^2\bigl(\frac{K-\thetaf}{2}\bigr)\frac{1+(k')^{-1}\dn(\thetaf)}{2}.
\end{align}
Putting together Equation~\eqref{equ:fin_0}, ~\eqref{equ:fin_1} and ~\eqref{equ:fin_2} ends the proof of the first point; let us now prove the second.

In the infinite case, by Example~\ref{ex:KQ_KD_u_v_special}, we have
\begin{equation*}
\textstyle
(\KQ)^{-1}_{\ubar{\ws},\bs}\cn\bigl(\frac{K-\thetaf}{2}\bigr)-i(\KQ)^{-1}_{\ubar{\ws},\bs'}\sn\bigl(\frac{K+\thetaf}{2}\bigr)
=\frac{e^{i\frac{\betafb-\betaib}{2}}(k')^{\frac{1}{2}}}{[\cn(\thetaf)\sn(\thetaf)]^{\frac{1}{2}}}
\left(\cn(\ubf_{\betai})\KD(\ubf)^{-1}_{\ubar{v},w}-i\sn({\ubf_{\betai}})\KD(\ubf)^{-1}_{\ubar{f},w}\right),
\end{equation*}
thus proving the first line. The second line is obtained by using Example~\ref{ex:KD_u_v_special} to express $\KD(\ubf)^{-1}_{\ubar{v},w}$
and $\KD(\ubf)^{-1}_{\ubar{f},w}$ using the $Z$-massive Green functions $G^m$ and $G^{m,*}$.

In the finite case, by Example~\ref{ex:KQ_KD_u_v_special}, we have
\begin{align*}
\textstyle
(\KQ)^{-1}_{\ubar{\ws},\bs}\cn\bigl(\frac{K-\thetaf}{2}\bigr)&-i\textstyle (\KQ)^{-1}_{\ubar{\ws},\bs'}\sn\bigl(\frac{K+\thetaf}{2}\bigr)=\\
&\textstyle=\frac{e^{i\frac{\betafb}{2}}(k')^{\frac{1}{2}}\sn(\theta^{\spartial}_\irm)^{\II_{\{\ubar{\ws}\in\{\ws^c\in\R^\spartial\}\}}}}{
[\cn(\thetaf)\sn(\thetaf)]^{\frac{1}{2}}}
\left(\II_{\{\ubar{\ws}\neq \ws^{c,\rs}\}}
t(\ubf)_{\ubar{\ws},\ubar{v}}\KD^\spartial(\ubf)^{-1}_{\ubar{v},w}+t(\ubf)_{\ubar{\ws},\ubar{f}}\KD^\spartial(\ubf)^{-1}_{\ubar{f},w}
\right),
\end{align*}
thus concluding the proof.
\end{proof}

\section{Examples}\label{sec:Examples}

In this section we specify some of our results to two cases of interest: the \emph{critical} $Z$-invariant Ising model, and
the \emph{full} $Z$-invariant Ising model when the underlying isoradial graph $\Gs$ is $\ZZ^2$ with the regular embedding.

\subsection{$Z$-invariant critical case}\label{sec:ex_Z_inv_critical}

The $Z$-invariant Ising model is critical when $k=0$ ($k'=1$)~\cite{Li:critical,CimasoniDuminil,Lis}. In this case, the elliptic functions 
$\sn,\cn$ are the trigonometric functions $\sin,\cos$ and $\dn\equiv 1$. 

Returning to Section~\ref{sec:def_Dirac}, the finite
$Z^u$-Dirac operator $\KD^\spartial(u)$ with boundary conditions arising from the Ising model is:
\begin{equation*}
\forall\,\text{edge $wx$ of $\GD$},\quad 
\KD^\spartial(u)_{w,x}=
e^{i\frac{\bar{\alpha}_e+\bar{\beta}_e}{2}}
\begin{cases}
\tan(\theta)^{\frac{1}{2}}& \text{ if $x\in\Vs$ and $(w,x)\notin(w^\ell,v^c)\in\R^{\spartial,\rs}$}\\
\cot(\theta)^{\frac{1}{2}}& \text{ if $x\in\Vs^*$}\\
\tan(\theta^\spartial)\frac{\cos(u_{\beta^r})}{\cos(u_{\alpha^\ell})}& \text{ if $(w,x)\in(w^\ell,v^c)\in\R^{\spartial,\rs}$}.
\end{cases}
\end{equation*}
Away from the boundary, we recover the Dirac operator of~\cite{Kenyon3}.
\paragraph{Dimer model on the bipartite graph $\GQ$, finite case.}
Corollary~\ref{cor:BoltzmannGQ_GD} expresses coefficients of $(\KQ)^{-1}$ as a function of the inverse $Z^u$-Dirac operator. 
When $\bs$ does not belong to a boundary quadrangle (Case 1.), we have
\begin{equation*}\textstyle
(\KQ)^{-1}_{\ubar{\ws},\bs}=e^{i\frac{\betafb}{2}}(k')^{\frac{1}{2}}\sin(\theta^{\spartial}_\irm)^{\II_{\{\ubar{\ws}\in\{\ws^c\in\R^\spartial\}\}}}
\frac{1}{[\cos(\thetaf)\sin(\thetaf)]^{\frac{1}{2}}}
\Bigl[\cos\bigl(\frac{K-\thetaf}{2}\bigr)\Gamma(\ubf)+ \cos\bigl(\frac{K+\thetaf}{2}\bigr)\Gamma(\vbf)\Bigr],
\end{equation*}
where, $\Gamma(u)=\II_{\{\ubar{\ws}\neq \ws^{c,\rs}\}}
t(u)_{\ubar{\ws},\ubar{v}}\KD^\spartial(u)^{-1}_{\ubar{v},w}+t(u)_{\ubar{\ws},\ubar{f}}\KD^\spartial(u)^{-1}_{\ubar{f},w}$, and coefficients 
of $t(u)$ are given by~\eqref{def:twvf} and~\eqref{def:twv_boundary}. 
The mass of the $Z$-massive Laplacian is equal to 0 away from the boundary~\cite{BdTR1}; then
Example~\ref{ex:KD_u_v_special} expresses coefficients of
$\KD^\spartial(\ubf)^{-1}$ using the $Z$-Green function $G^{\, 0,\spartial}(\ubf)$ and dual $Z$-Green
function $G^{0,*}$; we have
\begin{align*}
\KD^\spartial(\ubf)^{-1}_{\ubar{v},w}&\textstyle =e^{-i\frac{\alphafb+\betafb}{2}}\tan(\thetaf)^{\frac{1}{2}}
\Bigl(G^{\, 0,\spartial}(\ubf)_{v_2,\ubar{v}}-G^{\, 0,\spartial}(\ubf)_{v_1,\ubar{v}}\Bigr)\\
\KD^\spartial(\ubf)^{-1}_{\ubar{f},w}&\textstyle =-ie^{-i\frac{\alphafb+\betafb}{2}}\cot(\thetaf)^{\frac{1}{2}}
\Bigl(\II_{\{w\notin\{w^\ell,w^r\in \R^\spartial\}\}}G^{\, 0,*}_{f_2,\ubar{f}}-
G^{\, 0,*}_{f_1,\ubar{f}}\Bigl)+\\
&\textstyle\quad \quad \quad \quad 
\ \ \ \ \ \ +
i \sum\limits_{(v^c,f^c)\in\R^{\spartial,\rs}}\frac{\cos(\ubf_{\beta^r})-\cos(\ubf_{\alpha^\ell})}{\cos(\ubf_{\alpha^\ell})}
\KD^{\spartial}(\ubf)^{-1}_{v^c,w}\cdot G^{\, 0,*}_{f^c,\ubar{f}}.
\end{align*}
A similar expression holds for $\KD^\spartial(\vbf)^{-1}$.
\paragraph{Dimer model on the Fisher graph $\GF$, finite case.} Corollary~\ref{thm:KFKQinv_Zinv} (finite case) simply becomes:
\begin{align*}
\textstyle
(\KF)^{-1}_{\ubar{a},b}
\textstyle=
e^{i\frac{\betafb}{2}}\sin(\theta^{\spartial}_\irm)^{\II_{\{\ubar{\ws}\in\{\ws^c\in\R^\spartial\}\}}}&
\frac{\cos\bigl(\frac{K+\thetaf}{2}\bigr)}{
[\cos(\thetaf)\sin(\thetaf)]^{\frac{1}{2}}}\times\\
&\textstyle \times \left(\II_{\{\ubar{\ws}\neq \ws^{c,\rs}\}}
t(\ubf)_{\ubar{\ws},\ubar{v}}\KD^\spartial(\ubf)^{-1}_{\ubar{v},w}+t(\ubf)_{\ubar{\ws},\ubar{f}}\KD^\spartial(\ubf)^{-1}_{\ubar{f},w}
\right).
\end{align*}
\paragraph{Partition function of the Ising model with + boundary conditions.} Corollary~\ref{cor:partition_function}, 
expressing the squared $Z$-invariant
Ising partition function, holds for every $u\in\Re(\TT(k))''$. 
When $k=0$, a nice expression is obtained by setting $u=iu'$ and taking the limit $u'\rightarrow-\infty$. We have 
\[
\lim_{u'\rightarrow-\infty} \frac{\cos((iu')_\alpha)}{\cos((iv)_\beta)}=e^{i\frac{\beta-\alpha}{2}}, \quad 
\lim_{u'\rightarrow-\infty} \tan((iu')_{\alpha})=1.
\]
As a consequence, 
\begin{equation*}
\Delta^{m,\spartial}(-i\infty)_{v,v'}=
\begin{cases}
-\tan(\theta)& \text{  if $(v,v')\notin\{(v^c,v^\ell)\in \R^{\spartial,\rs}\}$}\\
-e^{i\frac{\alpha^\ell-\beta^r}{2}}\tan(\theta^\spartial)&\text{ if $(v,v')\in\{(v^c,v^\ell)\in \R^{\spartial,\rs}\}$}\\
\sum_{j=1}^d \tan(\theta_j)&\text{ if $v'=v=v^\spartial\in\{v^r,v^\ell\in\R^\spartial\}$}\\
\tan(\theta^\spartial)(e^{i\frac{\alpha^\ell-\beta^r}{2}}-1)&\text{ if $v'=v=v^c\in\{v^c\in \R^{\spartial,\rs}\}$},
\end{cases}
\end{equation*}
where we recover the critical Laplacian of~\cite{Kenyon3} away from the boundary. Corollary~\ref{cor:partition_function} becomes:
\begin{equation*}
[\Zising^+(\Gs,\Js)]^2=2^{|\Vs|-1}
\Bigl(\prod_{w\in W^\spartial}\frac{1+\sin(\theta_w)}{2\sin(\theta_w)}\Bigr)
\sn(\theta^{\spartial,\rs})|\det \Delta^{m,\spartial}(-i\infty)|.\\
\end{equation*}
We essentially recover the main result of~\cite{deTiliere:partition} proving that the squared critical $Z$-invariant Ising model partition function 
is equal, up to an explicit constant, to the partition function of spanning trees with specific boundary conditions. The difference is that we here consider + boundary conditions instead of free ones, and 
more importantly, we use the boundary trick of Chelkak and Smirnov~\cite{ChelkakSmirnov:ising} allowing us to remove all contributions from 
dual spanning trees, which we could not do in~\cite{deTiliere:partition}.

\subsection{Full $Z$-invariant case when the isoradial graph $\Gs$ is $\ZZ^2$}\label{sec:ex_Z_inv_ZZ2}

The goal of this example is to relate our results to the papers~\cite{Messikh,BeffaraDuminil}.
When the isoradial graph $\Gs=\ZZ^2$ in its regular embedding, all rhombus half-angles $\bar{\theta}_e$ are equal to $\frac{\pi}{4}$. Using that 
$\sn\bigl(\frac{K}{2}\bigr)=\frac{1}{(1+k')^{\frac{1}{2}}}$, $\cn\bigl(\frac{K}{2}\bigr)=\frac{(k')^{\frac{1}{2}}}{(1+k')^{\frac{1}{2}}}$,
$\dn\bigl(\frac{K}{2}\bigr)=(k')^{\frac{1}{2}}$~\cite[16.5.2]{AS}, we obtain the following parametrization of the Ising coupling constants:
\[
\forall\,e\in\Es,\quad \Js_e=\Js_e(k')=\frac{1}{2}\ln\left(\frac{1+(1+k')^{\frac{1}{2}}}{(k')^{\frac{1}{2}}}\right),
\]
which is a $\mathcal{C}^\infty$-bijection from $(0,\infty)$ to $(\infty,0)$. 

The masses of the $Z$-massive Laplacian $\Delta^m$ are equal to~\cite{BdTR1},
$
\forall\,v\in \Vs,\ m_v=\frac{2(k')^{\frac{1}{2}}}{1+k'},
$
and the survival probabilities $\ss$ of the random walk associated to the $Z$-invariant conductances $\rho$ and masses $m$ are
\[
\forall\, v\in\Vs,\quad \ss_v=\frac{4\sc\bigl(\frac{K}{2}\bigr)}{4\sc\bigl(\frac{K}{2}\bigr)+m_v}=\frac{2(k')^{\frac{1}{2}}}{1+k'}.
\]
As in the papers~\cite{Messikh,BeffaraDuminil} the survival probabilities satisfy
\[
\forall\, v\in\Vs, \quad \ss_v=\frac{2}{\sinh(2\Js_e)+\sinh^{-1}(2\Js_e)}.
\]
In the infinite case, by Theorem~\ref{thm:KFKQinv} and Corollary~\ref{cor:gibbsGQ_GD}, the coefficient $(\KF)^{-1}_{\ubar{a},a}$ of the inverse Kasteleyn operator $(\KF)^{-1}$ is equal to, as long as $\ubar{a}$ and $a$
do not belong to the same decoration, 
\[\textstyle 
|(\KF)^{-1}_{\ubar{a},a}|=(k')^{-1}(1+(k'))^{\frac{1}{2}}
\Bigl(\cn\bigl(\frac{\betaf-\betai}{2}\bigr)\bigl[
G^m_{\ubar{v},v_2}-(k')^{\frac{1}{2}}G^m_{\ubar{v},v_1}\bigr]
-\sn\bigl(\frac{\betaf-\betai}{2}\bigr)
\bigl[(k')^{\frac{1}{2}}G^{m,*}_{\ubar{f},f_2}-G^{m,*}_{\ubar{f},f_1}\bigr]\Bigr).
\]
A similar expression holds in the finite case, see Corollary~\ref{cor:BoltzmannGQ_GD} and Example~\ref{ex:KD_u_v_special}. According 
to~\cite{Dubedat,NK}, this coefficient is the fermionic spinor observable of~\cite{KadanoffCeva} and, up to normalization, the FK-spin observable 
of~\cite{Smirnov2,ChelkakSmirnov:ising}. Now, the proof of~\cite{BeffaraDuminil} for showing the occurrence of 
large deviation estimates of the massive random walk in the correlation length of the spin correlations~\cite{Messikh} consists in proving that, in the super-critical 
regime, spin correlations can be approximated by the FK-spin observable, and then using massive harmonicity of the latter to relate it to the massive
random walk. Our explicit expression for $|(\KF)^{-1}_{\ubar{a},a}|$ in the finite case gives
a direct explanation of the occurrence of these large deviation estimates, and up to handling boundary terms, should give a rather direct proof valid in the whole 
super-critical $Z$-invariant case.

\appendix

\section{Gauge equivalence revisited}\label{app:gauge}

We consider gauge equivalence of weighted adjacency matrices of digraphs and rephrase gauge equivalence of bipartite 
weighted adjacency matrices as defined in \cite{Kuperberg,KOS} in this context. 

\subsection{Definitions}\label{sec:app_gauge_1}

In the whole of this section, we consider square matrices of size $n\times n$;
let $M$ be such a matrix. We associate two graphs to $M$, a non-directed one $\Gs(M)=(\Vs(M),\Es(M))$
and a directed one $\Ds(M)=(\Vs(M),\As(M))$, both having the same vertex set of cardinality $n$ in bijection 
with rows/columns of $M$. 
An edge
$x_j x_\ell$ is in $\Es(M)$ iff $M_{x_j,x_\ell}\neq 0$ or $M_{x_\ell,x_j}\neq 0$. A directed edge (or simply an edge) $(x_j,x_\ell)$ is in
$\As(M)$ iff the coefficient $M_{x_j,x_\ell}\neq 0$. The matrix $M$ is a \emph{weighted adjacency matrix} of the digraph $\Ds(M)$.
Whenever no confusion occurs, we remove the argument $M$ of the graphs. 

Let us recall a few definitions.  A \emph{di-path} of a digraph
$\Ds=(\Vs,\As)$ is a sequence $(x_0,\dots,x_n)$ of vertices such that for every $j\in\{0,\dots,n-1\}$,
$(x_j,x_{j+1})$ is an edge of $\As$. A \emph{simple di-path} is a path with pairwise disjoint vertices.
A \emph{di-cycle} is a di-path such that the first and last vertices are the same. A \emph{simple di-cycle}
is a cycle whose only common vertices are the first and the last. Note that loops and length-two di-cycles are simple.
A digraph is \emph{strongly connected} if any two pairs of vertices are joined by a di-path.
The above definitions are easily adapted in the case of non-directed graphs.

\begin{defi}\label{def:jauge}
Consider two matrices $M$ and $N$ having the same associated digraph $\Ds$.
The matrices $M$ and $N$ are said to be \emph{gauge equivalent} if,
\[\forall\,
\text{simple di-cycle $c$ of $\Ds$},\quad
\prod_{e=(x,y)\in c} M_{x,y}=\prod_{e=(x,y)\in c} N_{x,y}.
\]
\end{defi}
\begin{rem}\label{rem:gauge1}$\,$
\begin{itemize}
\item  Having the same associated digraph is equivalent to asking that $M_{x,y}\neq 0\,\Leftrightarrow\,N_{x,y}\neq 0$.
\item Since loops are simple di-cycles, if $M$ and $N$ are gauge equivalent, they have equal diagonal coefficients.
\item Definition~\ref{def:jauge} holds if and only if the product condition holds for every di-cycle of $\Ds$. 
\end{itemize}
\end{rem}

\begin{lem}\label{lem:gauge_det_1}
Let $M,N$ be two gauge equivalent matrices, then
\[
\det M=\det N.
\]
\end{lem}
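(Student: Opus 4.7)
My plan is to use the Leibniz expansion of the determinant and reduce the equality to the cycle condition of gauge equivalence.

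First, I would recall that
\[
\det M = \sum_{\sigma\in S_n}\mathrm{sgn}(\sigma)\prod_{i=1}^n M_{i,\sigma(i)},
\]
and the analogous formula for $N$. Since $M$ and $N$ have the same associated digraph $\mathcal{D}$ by hypothesis, we have $M_{x,y}\neq 0\Leftrightarrow N_{x,y}\neq 0$ (see Remark~\ref{rem:gauge1}), so exactly the same permutations $\sigma$ give non-zero contributions to $\det M$ and to $\det N$. It therefore suffices to prove that for every such $\sigma$,
\[
\prod_{i=1}^n M_{i,\sigma(i)}=\prod_{i=1}^n N_{i,\sigma(i)}.
\]

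Next, I would use the standard decomposition of a permutation $\sigma$ as a disjoint union of cycles $\sigma=c_1\cdots c_r$, including fixed points as length-one cycles. Writing each cycle as $c=(x_1,x_2,\dots,x_k)$ with $\sigma(x_j)=x_{j+1}$ (indices mod $k$), the weight factorizes as
\[
\prod_{i=1}^n M_{i,\sigma(i)}=\prod_{j=1}^r\Bigl(\prod_{(x,y)\in c_j}M_{x,y}\Bigr).
\]
Since the $c_j$ are disjoint, each $c_j$ is automatically a \emph{simple} di-cycle of the digraph $\mathcal{D}$ (fixed points being loops, which are simple by convention in the text preceding the statement). Gauge equivalence then applies to each $c_j$ separately, giving $\prod_{(x,y)\in c_j}M_{x,y}=\prod_{(x,y)\in c_j}N_{x,y}$, and multiplying these identities over $j$ yields the required per-permutation equality.

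Combining these observations, term by term in the Leibniz expansion, and then summing over $\sigma$ with signs, yields $\det M=\det N$. There is no real obstacle here: the only subtle point is matching the cycles of a permutation with \emph{simple} di-cycles of $\mathcal{D}$, which is immediate from the disjointness of cycles in a permutation and the convention that loops and length-two di-cycles count as simple.
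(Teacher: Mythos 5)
Your proof is correct and is exactly the paper's argument: the paper's proof is the one-line remark ``This is proved by writing the determinant as a sum over permutations, and doing the cyclic decomposition of permutations,'' and your write-up simply fills in the details (same support of non-vanishing permutations via the common digraph, disjoint permutation cycles being simple di-cycles, and the gauge condition applied cycle by cycle). Nothing further is needed.
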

\begin{proof}
This is proved by writing the determinant as a sum over permutations, and doing the cyclic decomposition of permutations. 
\end{proof}

Let us suppose that $M$ and $N$ are gauge equivalent and that the associated digraph $\Ds$ is \emph{strongly connected}.
Define the function $q\in \CC^{\Vs\times \Vs}$ on pairs of vertices of $\Vs$ taking values in $\CC$ as follows. 
For vertices $x,y$ such that $(x,y)$ is an edge of $\Ds$, set 
\[
q_{x,y}=\frac{N_{x,y}}{M_{x,y}}.
\]
For vertices $x,y$ of $\Ds$, since the digraph is strongly connected, there exists a di-path
$\gamma$ from $x$ to $y$; set
\[
q_{x,y}=\prod_{e=(x',y')\in\gamma}q_{x',y'}.
\]
Note that if $y=x$, then $\gamma$ is a di-cycle and we have $q_{x,x}=1$.
\begin{rem}
The function $q$ is well defined, \emph{i.e.}, independent of the choice of path from $x$ to $y$. If $y=x$, then $q_{x,x}=1$ 
independently of the choice of di-cycle from $x$ to $x$. If $y\neq x$,
consider two di-paths $\gamma_1,\gamma_2$ from $x$ to $y$. Since the digraph $\Ds$ is strongly connected, there exists a simple 
di-path $\tilde{\gamma}$ from $y$ to $x$. Then, $\gamma_1$  (resp. $\gamma_2$) followed by $\tilde{\gamma}$ is a di-cycle and by 
definition of gauge equivalence we have,
\[
\Bigl(\prod_{e=(x',y')\in\gamma_1}q_{x',y'}\Bigr)\Bigl(\prod_{e=(x',y')\in\tilde{\gamma}}q_{x',y'}\Bigr)=1=
\Bigl(\prod_{e=(x',y')\in\gamma_2}q_{x',y'}\Bigr)\Bigl(\prod_{e=(x',y')\in\tilde{\gamma}}q_{x',y'}\Bigr),
\]
implying that
\[
\prod_{e=(x',y')\in\gamma_1}q_{x',y'}=\prod_{e=(x',y')\in\gamma_2}q_{x',y'},
\]
and the function $q$ is thus well defined.
\end{rem}

The following lemma proves that if the associated digraph is strongly connected, gauge equivalence amounts to having the two 
matrices related through a diagonal matrix.

\begin{lem}\label{lem:crit_gauge}
Let $M,N$ be two matrices having the same associated digraph $\Ds$ and suppose that $\Ds$ is strongly connected. Then 
$M$ and $N$ are gauge equivalent if and only if there exists an invertible diagonal matrix $D$ such that,
\[
M=D N D^{-1}.
\]
\end{lem}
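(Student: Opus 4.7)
The proof should establish the two implications; I would treat the easy direction first and then construct the gauge $D$ for the harder direction.

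For the implication ($\Leftarrow$), I would simply write out what $M = D N D^{-1}$ means coefficient-wise: for every edge $(x,y)$ of $\Ds$, $M_{x,y} = D_{x,x} N_{x,y} D_{y,y}^{-1}$, so $M$ and $N$ have the same associated digraph (since $D$ is invertible diagonal, its entries are non-zero). For any simple di-cycle $c = (x_0, x_1, \dots, x_{n-1}, x_n = x_0)$ of $\Ds$, the product
\[
\prod_{j=0}^{n-1} M_{x_j, x_{j+1}} = \prod_{j=0}^{n-1} \frac{D_{x_j,x_j}}{D_{x_{j+1},x_{j+1}}} N_{x_j, x_{j+1}}
\]
telescopes in the $D$-factors (the cycle condition $x_n = x_0$ closes the product), leaving $\prod_j N_{x_j,x_{j+1}}$. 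By Definition~\ref{def:jauge}, $M$ and $N$ are gauge equivalent.

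For the converse ($\Rightarrow$), the key point is that the function $q$ defined just before the lemma, combined with strong connectedness, essentially provides the gauge. First I would fix a reference vertex $x_0 \in \Vs$ and set
\[
D_{x,x} := q_{x_0, x}
\]
for every $x \in \Vs$ (with $q_{x_0,x_0} = 1$). Since each $q_{x_0, x}$ is a product of non-zero ratios $N_{x',y'}/M_{x',y'}$ along a di-path (non-zero because $M_{x',y'}, N_{x',y'} \neq 0$ on edges), $D$ is invertible and diagonal. The well-definedness of $q$ (already noted in the paper) is exactly where the hypothesis of gauge equivalence combined with strong connectedness is used, so the construction of $D$ hinges on this — this is the step I would view as the main conceptual ingredient, although the argument has already been supplied.

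It then remains to verify $M = D N D^{-1}$, equivalently $M_{x,y}/N_{x,y} = D_{x,x}/D_{y,y}$ on every edge $(x,y) \in \As$. By concatenating a di-path from $x_0$ to $x$ with the edge $(x,y)$, one obtains a di-path from $x_0$ to $y$; by well-definedness of $q$,
\[
q_{x_0, y} = q_{x_0, x}\, q_{x,y},
\]
so $D_{x,x}/D_{y,y} = q_{x_0,x}/q_{x_0,y} = q_{x,y}^{-1} = M_{x,y}/N_{x,y}$, as required. This directly yields $M_{x,y} = D_{x,x} N_{x,y} D_{y,y}^{-1}$ for every edge, and both sides vanish on non-edges, so the matrix identity $M = D N D^{-1}$ holds. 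The only subtlety I would double-check is that strong connectedness is genuinely needed to define $D$ globally (it is, to ensure every vertex is reachable from $x_0$); without it one would only obtain gauge equivalence on each strongly connected component separately.
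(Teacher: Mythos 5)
Your proof is correct and follows essentially the same route as the paper's: the telescoping product for the ($\Leftarrow$) direction, and for ($\Rightarrow$) the diagonal matrix $D_{x,x}=q_{x_0,x}$ built from the function $q$, with the edge identity verified by concatenating a di-path from $x_0$ to $x$ with the edge $(x,y)$. No gaps.
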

\begin{proof}
Suppose that $M$ and $N$ are gauge equivalent. Fix a vertex $x_0$ of $\Ds$, and define $D^{x_0}$ to be the diagonal matrix whose
diagonal coefficient $D^{x_0}_{x,x}$ corresponding to the vertex $x$ is $q_{x_0,x}$. Since the digraph $\Ds$ is strongly 
connected,
the matrix $D^{x_0}$ is invertible. Let us prove that $M=D^{x_0}N(D^{x_0})^{-1}$. Non zero coefficients of $M$ and $N$ correspond to edges
of $\Ds$; let $(x,y)$ be such an edge. Consider a di-path $\gamma$ from $x_0$ to $x$, then $\gamma'=\gamma\cup\{(x,y)\}$
is a di-path from $x_0$ to $y$. Using $\gamma$ to compute $q_{x_0,x}$ and $\gamma'$ to compute $q_{x_0,y}$, we deduce that,
\begin{equation*}
(D^{x_0} N (D^{x_0})^{-1})_{x,y}=\frac{q_{x_0,x}}{q_{x_0,y}}N_{x,y}=\frac{1}{q_{x,y}}N_{x,y}=\frac{M_{x,y}}{N_{x,y}}N_{x,y}=M_{x,y}.
\end{equation*}
Suppose that $M=D N D^{-1}$, with $D$ an invertible diagonal matrix, and
let us prove that $M$ and $N$ are gauge equivalent.
Consider a simple di-cycle $c$ of $\Ds$, then
\begin{equation*}
\prod_{e=(x,y)\in c} M_{x,y}=\prod_{e=(x,y)\in c} D_{x,x} N_{x,y} D_{y,y}^{-1}=\prod_{e=(x,y)\in c}N_{x,y} \text{ (telescopic product)},
\end{equation*}
thus concluding the proof.
\end{proof}

\begin{rem}
Consider two matrices $M,N$ having the same associated graphs $\Gs$ and $\Ds$, such that $\Gs$ is connected and 
such that for every undirected edge $e$ of $\Gs$, the two possible oriented
edges are present in $\Ds$; then $\Ds$ is strongly connected. If moreover $\Gs$ is planar and embedded in a planar way, then
every simple di-cycle $c$ of length $\geq 3$ is the union of face di-cycles, where edges not in $c$ are traversed in both directions. 
As a consequence, in this case
proving gauge equivalence for $M$ and $N$ is equivalent to proving that,
\begin{align*}
&\forall\,x\in\Vs,\quad M_{x,x}=N_{x,x},\\
&\forall\text{ edge $e$ of $\Gs$},\quad M_{x,y}M_{y,x}=N_{x,y}N_{y,x},\\
&\forall\text{ face di-cycle } c,\quad \prod_{e=(x,y)\in c} M_{x,y}=\prod_{e=(x,y)\in c} N_{x,y}.
\end{align*}
\end{rem}

\subsection{The bipartite case}\label{sec2:app}

Consider a non-directed, finite, bipartite graph $\Gs=(\Vs=\Ws\cup\Bs,\Es)$, such that $|\Ws|=|\Bs|=n$, having at least one perfect matching.
Note that $\Gs$ being bipartite, it cannot have loops; we furthermore suppose that it has no multiple edges, \emph{i.e.},
that it is \emph{simple}.

Fix a perfect matching $\Ms_0=\{\bs_1\ws_1,\dots,\bs_n\ws_n\}$ of $\Gs$. From $\Gs$ and $\Ms_0$, construct a digraph 
$\Ds^0$ in the following way: vertices $\bs_j$ and $\ws_j$ are 
merged into a single vertex $x_j$, and the corresponding edge becomes a loop. The vertex set of $\Ds^0$ is $\Vs^0=\{x_1,\dots,x_n\}$.
Edges not in the perfect matching $\Ms_0$ remain in $\Ds^0$ and are directed from their black vertex to the white one, defining the 
directed edges of $\Ds^0$. 

A \emph{bipartite, weighted adjacency matrix} $\Ks$ associated to the graph $\Gs$ has rows indexed by vertices of $\Bs$, 
columns by those of $\Ws$, and non-zero
coefficients correspond to edges of~$\Gs$. Up to a reordering of the rows and columns, we can suppose that rows of 
$\Ks$ are indexed by $\bs_1,\dots,\bs_n$ and 
columns by $\ws_1,\dots,\ws_n$. 

Instead of seeing $\Ks$ as a bipartite adjacency matrix, we can interpret it as an adjacency matrix of the digraph $\Ds^0$.
In this interpretation, rows and columns are indexed by vertices of $\Vs^0$ and diagonal coefficients correspond to edges 
of the perfect matching, they now represent loops.

Consider the diagonal matrix $D_\Ks^{0}$ whose $j$-th diagonal coefficients is the coefficient $\Ks_{b_j,w_j}$ corresponding to 
the $j$-th edge of $\Ms_0$. Define the matrix 
$\Ks^0$ to be,
\begin{equation*}
\Ks^0=(D_\Ks^0)^{-1}\Ks. 
\end{equation*}
Note that the matrix $\Ks^0$ has ones on the diagonal.

\begin{defi}\label{def:gauge_bip}
Let $\Gs$ be a finite, bipartite graph, and let $\Ks,\Ls$ be two associated bipartite, weighted adjacency matrices.
Fix a perfect matching $\Ms_0$ of $\Gs$, and let $\Ds^0$ be the directed 
graph constructed from $\Gs$ and $\Ms_0$. 
Then, $\Ks$ and $\Ls$ are said to be \emph{gauge equivalent} if the matrices $\Ks^0$ and $\Ls^0$, 
seen as weighted adjacency matrices of the digraph $\Ds^0$, are gauge equivalent.
\end{defi}

Rephrasing Lemma~\ref{lem:gauge_det_1} in the context of bipartite graphs, we obtain

\begin{cor}\label{cor:gauge_bip_det}
Let $\Ks,\Ls$ be two gauge equivalent bipartite, weighted adjacency matrices, then
\begin{align*}
\det\Ks =\Bigl(\prod_{e=\bs\ws\in\Ms_0}\frac{\Ks_{\bs,\ws}}{\Ls_{\bs,\ws}}\Bigr) \det \Ls.
\end{align*}
\end{cor}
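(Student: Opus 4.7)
The plan is to unravel Definition~\ref{def:gauge_bip} and then feed it into Lemma~\ref{lem:gauge_det_1}. Concretely, I would first observe that the diagonal matrices $D_\Ks^0$ and $D_\Ls^0$ are invertible: their diagonal entries are the weights along the perfect matching $\Ms_0$, and because $\Ks$ and $\Ls$ are bipartite weighted adjacency matrices whose non-zero coefficients correspond to edges of $\Gs$, these matching entries are all non-zero. Thus the normalized matrices
\[
\Ks^0 = (D_\Ks^0)^{-1}\Ks, \qquad \Ls^0 = (D_\Ls^0)^{-1}\Ls
\]
are well defined, both have $1$'s on the diagonal, and, viewed as weighted adjacency matrices of the digraph $\Ds^0$, they are gauge equivalent by the very definition of bipartite gauge equivalence.

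Then I would apply the directed-graph statement of Lemma~\ref{lem:gauge_det_1} to conclude $\det \Ks^0 = \det \Ls^0$. Using multiplicativity of the determinant together with the factorization $\Ks = D_\Ks^0\, \Ks^0$ (and similarly for $\Ls$), this equality translates into
\[
\frac{\det \Ks}{\det D_\Ks^0} \;=\; \frac{\det \Ls}{\det D_\Ls^0},
\]
and rearranging yields
\[
\det \Ks \;=\; \frac{\det D_\Ks^0}{\det D_\Ls^0}\,\det \Ls
\;=\; \Bigl(\prod_{e=\bs\ws\in \Ms_0} \frac{\Ks_{\bs,\ws}}{\Ls_{\bs,\ws}}\Bigr)\det \Ls,
\]
since the determinants of the diagonal matrices $D_\Ks^0, D_\Ls^0$ are exactly the products of matching weights.

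There is no real obstacle: the whole content of the corollary is that the ratio $\det \Ks/\det \Ls$ is concentrated on the matching $\Ms_0$ once the bipartite-to-digraph reduction has been carried out, and this is precisely the purpose of normalizing by $(D_\Ks^0)^{-1}$ and $(D_\Ls^0)^{-1}$ in Definition~\ref{def:gauge_bip}.
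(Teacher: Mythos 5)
Your proof is correct and is exactly the argument the paper intends: the paper gives no explicit proof beyond saying the corollary is "Lemma~\ref{lem:gauge_det_1} rephrased in the bipartite context," and the normalization by $(D_\Ks^0)^{-1}$, $(D_\Ls^0)^{-1}$ followed by $\det\Ks^0=\det\Ls^0$ and multiplicativity of the determinant is precisely that rephrasing. The one point worth making explicit (which you do) is that the diagonal entries of $D_\Ks^0$ and $D_\Ls^0$ are the matching weights and hence non-zero, so the normalized matrices are well defined.
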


We now rephrase Definition~\ref{def:gauge_bip} in the more usual form~\cite{Kuperberg,KOS}. 
Consider the bipartite graph $\Gs$ together with the reference perfect matching $\Ms_0$. 
An \emph{alternating cycle of $\Gs$ and $\Ms_0$} is a \emph{simple} cycle of $\Gs$ whose edges alternate
between edges in $\Ms_0$ and edges in $\Es\setminus\Ms_0$. 

\begin{lem}
The matrices $\Ks$ and $\Ls$ are gauge equivalent iff for every alternating cycle $c$ of $\Gs$ and $\Ms_0$ of length $>2$, 
we have
\begin{equation*}
\frac{\prod_{e=bw\in c\setminus \Ms_0}\Ks_{b,w}}{\prod_{e=bw\in c\cap \Ms_0}\Ks_{b,w}}=
\frac{\prod_{e=bw\in c\setminus \Ms_0}\Ls_{b,w}}{\prod_{e=bw\in c\cap \Ms_0}\Ls_{b,w}}.
\end{equation*}
\end{lem}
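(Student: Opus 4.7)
The plan is to unwind Definition~\ref{def:gauge_bip}, which reduces the statement to showing that gauge equivalence of the square matrices $\Ks^0=(D_\Ks^0)^{-1}\Ks$ and $\Ls^0=(D_\Ls^0)^{-1}\Ls$, viewed as weighted adjacency matrices of the digraph $\Ds^0$, is equivalent to the stated condition on alternating cycles of $\Gs,\Ms_0$ of length $>2$. I would first explicitly compute the coefficients of $\Ks^0$: the diagonal entries are all $1$ (corresponding to loops of $\Ds^0$), and for $j\neq\ell$ one has $\Ks^0_{x_j,x_\ell}=\Ks_{\bs_j,\ws_\ell}/\Ks_{\bs_j,\ws_j}$, which is non-zero precisely when $\bs_j\ws_\ell$ is an edge of $\Gs$ outside $\Ms_0$. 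Thus $\Ks^0$ and $\Ls^0$ have the same associated digraph $\Ds^0$, as required for Definition~\ref{def:jauge} to apply.

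Next I would set up the bijection between non-loop simple di-cycles of $\Ds^0$ and alternating cycles of $\Gs,\Ms_0$ of length~$>2$. A simple di-cycle $(x_{j_1},x_{j_2},\dots,x_{j_p},x_{j_1})$ with $p\geq 2$ corresponds to the sequence of edges $\bs_{j_1}\ws_{j_2},\bs_{j_2}\ws_{j_3},\dots,\bs_{j_p}\ws_{j_1}$ in $\Es\setminus\Ms_0$; interleaving these with the matching edges $\bs_{j_k}\ws_{j_k}$ yields the cycle
\[
\ws_{j_1}-\bs_{j_1}-\ws_{j_2}-\bs_{j_2}-\cdots-\ws_{j_p}-\bs_{j_p}-\ws_{j_1}
\]
of length $2p\geq 4$ in $\Gs$, which is alternating. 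Simplicity of the di-cycle (distinct $x_{j_k}$'s) is equivalent to distinctness of the $\bs_{j_k}$'s and of the $\ws_{j_k}$'s, i.e.\ simplicity of the alternating cycle; conversely every such alternating cycle of $\Gs$ and $\Ms_0$ arises this way, up to the choice of starting vertex and orientation.

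Then I would compute both sides of the cycle condition. Along the di-cycle of $\Ds^0$,
\[
\prod_{k=1}^{p}\Ks^0_{x_{j_k},x_{j_{k+1}}}=\prod_{k=1}^{p}\frac{\Ks_{\bs_{j_k},\ws_{j_{k+1}}}}{\Ks_{\bs_{j_k},\ws_{j_k}}}
=\frac{\prod_{e=\bs\ws\in c\setminus\Ms_0}\Ks_{\bs,\ws}}{\prod_{e=\bs\ws\in c\cap\Ms_0}\Ks_{\bs,\ws}},
\]
with indices mod~$p$, and the analogous identity for $\Ls$. By Definition~\ref{def:jauge} and Remark~\ref{rem:gauge1}, $\Ks^0$ and $\Ls^0$ are gauge equivalent iff these products agree for every simple di-cycle; loops contribute $1=1$ trivially, so the condition reduces exactly to the statement of the lemma.

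I expect no serious obstacle: the only care needed is that the correspondence between simple di-cycles in $\Ds^0$ and alternating cycles in $\Gs$ is many-to-one only because of the freedom in choosing the base point and direction, but this does not affect the cycle product identity; and the length restriction $>2$ in $\Gs$ matches precisely the exclusion of loops in $\Ds^0$.
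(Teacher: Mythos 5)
Your proof is correct and follows essentially the same route as the paper's: reduce via Definition~\ref{def:gauge_bip} to gauge equivalence of $\Ks^0$ and $\Ls^0$ on the digraph $\Ds^0$, note that loops contribute trivially since the diagonal entries are $1$, and match simple di-cycles of length $\geq 2$ with alternating cycles of length $>2$, for which the di-cycle product is exactly the stated ratio of non-matching to matching edge weights. Your added checks (that $\Ks^0$ and $\Ls^0$ share the same associated digraph, and that simplicity is preserved under the correspondence) are details the paper leaves implicit.
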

\begin{proof}
By definition, $\Ks$ and $\Ls$ are gauge equivalent if $\Ks^0$ and $\Ls^0$ seen as adjacency matrices of the digraph $\Gs^0$ are 
gauge equivalent, see Definition~\ref{def:jauge}. Length one di-cycles of $\Ds^0$ 
are loops corresponding to diagonal coefficients of $\Ks^0,
\Ls^0$. The latter are all equal to 1 by definition, and thus equal. 
Consider a simple di-cycle $c$ of $\Ds^0$ of length $m\geq 2$. Up to a relabeling of the vertices, it can be denoted as
$c=\{x_1,\dots,x_l=x_1\}$. Then, $\forall\,j$, $x_j$ corresponds to an edge $b_j w_j$ of the perfect matching $\Ms_0$. 
By construction of $\Ds^0$, the cycle $c$ is in correspondence with an alternating cycle 
$\{w_1,b_1,w_2,\dots,w_m,b_m,w_1\}$ of length $2m$ of $\Gs$. By definition of $\Ks^0$ we have,
\[
\prod_{j=1}^m \Ks^0_{x_j,x_{j+1}}=\frac{\prod_{j=1}^m \Ks_{b_j,w_{j+1}}}{\prod_{j=1}^m \Ks_{b_j,w_{j}}},
\]
with cyclic notation for indices. A similar equality holds for the matrices $\Ls^0$ and $\Ls$, thus ending the proof.
\end{proof}

\begin{rem}
Definition~\ref{def:gauge_bip} is independent of the choice of $\Ms_0$. To prove this, use the fact that if $\Ms_1$ is another
reference perfect matching, then the superimposition of $\Ms_0$ and $\Ms_1$ consists of alternating cycles of length $>2$ and 
doubled edges. 
\end{rem}

From now on we suppose that the bipartite graph $\Gs$ is finite, \emph{planar and simply connected}, and we let $\Ks$, $\Ls$ be two bipartite,
weighted adjacency matrices of $\Gs$. 

\begin{lem}[\cite{Kuperberg}]\label{lem:gauge_bip_equiv}
If the alternating products of the matrices $\Ks$ and $\Ls$ are equal around 
every inner face-cycle of $\Gs$, then $\Ks$ and $\Ls$ are gauge equivalent.
\end{lem}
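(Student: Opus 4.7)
My plan is to prove the lemma by extending the notion of alternating product to any even cycle of $\Gs$ (not just alternating cycles with respect to $\Ms_0$) and then using a standard planarity induction to reduce arbitrary alternating cycles to face-cycles.

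First I would define, for any simple cycle $c$ of $\Gs$ written with a fixed orientation as $(v_1,\ldots,v_{2m},v_1)$ (its length is even because $\Gs$ is bipartite), the ratio
\[
\operatorname{AP}_{\Ks}(c)=\frac{\prod_{j\text{ odd}} \Ks_{v_j,v_{j+1}}^{\pm 1}}{\prod_{j\text{ even}} \Ks_{v_j,v_{j+1}}^{\pm 1}},
\]
where the exponent is $+1$ if $v_j\in \Bs$ and $-1$ if $v_j\in\Ws$ (so that the weight $\Ks_{\bs,\ws}$ always appears, well defined up to a global inversion). When $c$ is an alternating cycle of $\Gs$ and $\Ms_0$ of length $>2$ (with the orientation starting on an $\Ms_0$-edge), one of the products is exactly $\prod_{bw\in c\cap\Ms_0}\Ks_{b,w}$ and the other is $\prod_{bw\in c\setminus \Ms_0}\Ks_{b,w}$, so that equality of alternating products for $\Ks$ and $\Ls$ on every such $c$ is exactly the gauge equivalence criterion stated at the end of Section~\ref{sec2:app}.

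Next I would check the key multiplicativity property: if $c$ is a simple even cycle which is the symmetric difference of two simple even cycles $c_1$ and $c_2$ sharing a common simple path $\pi$, oriented compatibly, then
\[
\operatorname{AP}_{\Ks}(c)=\operatorname{AP}_{\Ks}(c_1)\cdot\operatorname{AP}_{\Ks}(c_2),
\]
because the edges of $\pi$ appear once in the numerator of one of $\operatorname{AP}_{\Ks}(c_i)$ and once in the denominator of the other and therefore cancel in the product. This identity holds verbatim for $\Ls$, so it suffices to prove equality of alternating products face by face.

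The core step is then an induction on the number $N(c)$ of inner faces of $\Gs$ enclosed by a simple even cycle $c$. If $N(c)=1$, then $c$ is an inner face-cycle and equality of alternating products for $\Ks$ and $\Ls$ is exactly the hypothesis of the lemma. If $N(c)\geq 2$, then since $\Gs$ is planar and simply connected there exists a simple path $\pi$ of $\Gs$ in the closed interior of $c$, joining two vertices of $c$, that splits $c$ into two simple cycles $c_1,c_2$ with $N(c_1),N(c_2)<N(c)$; bipartiteness of $\Gs$ forces $c_1$ and $c_2$ to also be even. By the inductive hypothesis $\operatorname{AP}_{\Ks}(c_i)=\operatorname{AP}_{\Ls}(c_i)$ for $i=1,2$, and the multiplicativity above yields $\operatorname{AP}_{\Ks}(c)=\operatorname{AP}_{\Ls}(c)$. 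Applying this to every alternating cycle of $\Gs$ and $\Ms_0$ of length $>2$ concludes the proof via the characterization recalled just before the lemma.

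The main obstacle will be the inductive step: one must be careful that a genuine interior chord or interior subpath of $\Gs$ exists whenever $N(c)\geq 2$, and that both pieces $c_1,c_2$ are simple (not merely closed walks). In practice this is handled by choosing $\pi$ to be a shortest path of $\Gs$ between two distinct vertices of $c$ among those whose interior lies strictly in the bounded component of $\mathbb{R}^2\setminus c$; simple connectedness of $\Gs$ ensures such a path exists as soon as an interior face is present, and minimality of $\pi$ prevents self-intersections in $c_1$ and $c_2$.
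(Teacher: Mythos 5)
Your proof is correct and follows the same strategy as the paper, which disposes of this lemma in one line ("induction on the number of faces contained in an alternating cycle of length $>2$"); you have simply made explicit the generalized alternating product, its multiplicativity under splitting a cycle along an interior path, and the planarity argument guaranteeing such a splitting exists. No gap to report.
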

\begin{proof}
This is proved by induction on the number of faces contained in an alternating cycle of length $>2$.
\end{proof}

Suppose that $\Ks$ and $\Ls$ satisfy the alternating product condition around every inner face-cycle of $\Gs$. Similarly
to the directed case, define the function $q\in\CC^{\Vs\times\Vs}$ as follows. For every edge $bw$ of $\Gs$, 
\[
q_{b,w}=\frac{\Ls_{b,w}}{\Ks_{b,w}},\quad q_{w,b}=q_{b,w}^{-1}.
\]
For every pair of vertices $x,y$ of $\Gs$, let $\gamma$ be a path from $x$ to $y$ and set $q_{x,y}=\prod_{(x',y')\in\gamma}q_{x',y'}$. 
The function $q$ is well defined~\cite{Kuperberg} and

\begin{lem}[\cite{Kuperberg}]
The matrices $\Ks$ and $\Ls$ satisfy the alternating product condition around every inner face-cycle if and only if there exist
diagonal matrices $D^{\sBs},D^{\sWs}$ such that 
\[
\Ks=D^{\sBs}\,\Ls\,D^{\sWs}.
\]
\end{lem}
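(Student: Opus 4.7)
The plan is to prove the two implications separately, the forward direction being a routine telescoping computation and the converse being the substantive step. Both directions exploit only the face-cycle hypothesis, together with the planarity and simple-connectedness of $\Gs$.

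For the easy direction $(\Leftarrow)$, suppose $\Ks = D^{\sBs}\,\Ls\,D^{\sWs}$ with diagonal invertible matrices. Let $c$ be any inner face-cycle of $\Gs$; since $\Gs$ is bipartite, $c$ has even length and its vertices alternate between $\sBs$ and $\sWs$. Splitting the edges of $c$ into the two halves of the alternating cycle and taking the ratio, every diagonal factor $D^{\sBs}_{b,b}$ and $D^{\sWs}_{w,w}$ appears once in the numerator and once in the denominator. The diagonal factors telescope and the alternating product of $\Ks$ around $c$ coincides with that of $\Ls$.

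For the substantive direction $(\Rightarrow)$, I would use the function $q$ defined immediately before the statement. The face-cycle condition together with Lemma~\ref{lem:gauge_bip_equiv} ensures that $q$ is well defined, \emph{i.e.}, $q_{x,y}=\prod_{(x',y')\in\gamma}q_{x',y'}$ does not depend on the path $\gamma$ from $x$ to $y$; this is the genuine content, and is precisely the assertion I would treat as the main obstacle, as it requires passing from a local (face-by-face) condition to a global (path-independence) one via the fact that the inner faces of a planar simply connected graph generate its cycle space. Once $q$ is well defined, fix an arbitrary base vertex $x_0$ and set
\[
D^{\sBs}_{b,b} \;=\; q_{x_0,b}, \qquad D^{\sWs}_{w,w} \;=\; q_{x_0,w}^{-1}.
\]
These are invertible diagonal matrices since $q$ is everywhere nonzero. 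For any edge $bw$ of $\Gs$, choose a path $\gamma$ from $x_0$ to $b$ and extend it by $bw$ to a path from $x_0$ to $w$, which yields $q_{x_0,w}=q_{x_0,b}\,q_{b,w}$. Using $q_{b,w}=\Ls_{b,w}/\Ks_{b,w}$ we then compute
\[
D^{\sBs}_{b,b}\,\Ls_{b,w}\,D^{\sWs}_{w,w} \;=\; q_{x_0,b}\,\Ls_{b,w}\,q_{x_0,w}^{-1} \;=\; q_{b,w}^{-1}\,\Ls_{b,w} \;=\; \Ks_{b,w},
\]
as required. When $bw$ is not an edge of $\Gs$, both $\Ks_{b,w}$ and $\Ls_{b,w}$ vanish and the identity is trivial, so $\Ks=D^{\sBs}\,\Ls\,D^{\sWs}$ holds as matrices.

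In spirit this is the bipartite analogue of Lemma~\ref{lem:crit_gauge}, obtained by identifying $\Gs$ and its reference matching with the directed graph $\Ds^0$ of Definition~\ref{def:gauge_bip}: the diagonal factorization on the bipartite side corresponds to a conjugation by a single diagonal matrix on the directed side, and the face-cycle condition corresponds to gauge equivalence of the associated directed adjacency matrices. The only place where real work is hidden is the well-definedness of $q$, which I would either invoke directly from the paragraph preceding the lemma or prove by induction on the number of inner faces enclosed by an arbitrary cycle, exactly as in the proof of Lemma~\ref{lem:gauge_bip_equiv}.
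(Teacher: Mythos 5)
Your proposal is correct and follows essentially the same route as the paper, which defers the full argument to Kuperberg but records exactly the same explicit construction $D^{x_0,\sBs}_{b,b}=q_{x_0,b}$, $D^{x_0,\sWs}_{w,w}=q_{x_0,w}^{-1}$ from a fixed base vertex; you correctly isolate the well-definedness of $q$ (path-independence, reduced to face-cycles via planarity and simple connectedness) as the only substantive step, and the edge-by-edge verification $q_{x_0,b}\,q_{x_0,w}^{-1}=q_{b,w}^{-1}$ is right.
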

\begin{proof}
The proof can be found in~\cite{Kuperberg}. For the purpose of Section~\ref{sec:GQ_GF_Zinv}, we make the diagonal matrices explicit assuming the alternating
product condition is satisfied. Fix a vertex $x_0$ of $\Gs$ and set,
\begin{equation*}
\forall\,b\in \Bs,\quad D^{x_0,\sBs}_{b,b}=q_{x_0,b}, \quad \forall\,w\in \Ws,\quad D^{x_0,\sWs}_{b,b}=q_{x_0,w}^{-1},
\end{equation*}
where $q$ is given above. Then, $\Ks=D^{x_0,\sBs}\,\Ls\,D^{x_0,\sWs}$.
\end{proof}


\section{Computations of probabilities of single edges.}

\subsection{Dimers model on an infinite isoradial double graph $\GD$}\label{app:dimers_double}

We compute the probability of single edges occurring in dimer configurations of $\GD$ chosen with respect to 
the measure $\PPdimerD$. Notation are recalled in Figure~\ref{fig:corGDK_1} below; since no confusion occurs, we omit the subscripts
$\mathrm{f}$ from $\alpha,\beta,\theta$.

\begin{figure}[H]
\centering
\begin{overpic}[height=2.8cm]{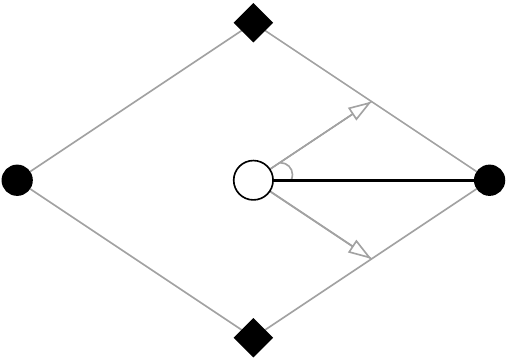}
  \put(101,34){\scriptsize $v_2$}
  \put(47,73){\scriptsize $f_2$}
  \put(51,20){\scriptsize $e^{i\bar{\alpha}}$}
  \put(51,42.5){\scriptsize $e^{i\bar{\beta}}$}
  \put(39,33){\scriptsize $w$}
  \put(60,35){\scriptsize $\bar{\theta}$}
\end{overpic}
\caption{Notation for computing $\PPdimerD(wx)$ when $x=v_2$ or $f_2$.}
\label{fig:corGDK_1}
\end{figure}

For every edge $wx$ of $\GD$, we have $\PPdimerD(wx)=\KD(u)_{w,x}\KD(u)^{-1}_{x,w}$,
where $x$ is a vertex of $\Gs$ or $\Gs^*$. Setting $\ubar{v}=v_2$ in $\KD(u)^{-1}_{v,w}$ and $\ubar{f}=f_2$ in $\KD(u)^{-1}_{f,w}$ of
Corollary~\ref{cor:KD_G}, and using that $\dn(u-K)=k'\nd(u)$, gives
\begin{align*}
\KD(u)^{-1}_{v_2,w}&=
e^{-i\frac{\bar{\alpha}+\bar{\beta}}{2}}(k')^{-1}[\sc(\theta)\nd(u_{\alpha})\nd(u_{\beta})]^{\frac{1}{2}}
\left[
\dn(u_{\alpha})\dn(u_{\beta})G^m_{v_2,v_2}-k'G^m_{v_2,v_1}\right]\\
\KD(u)^{-1}_{f_2,w}&\textstyle=
-ie^{-i\frac{\bar{\alpha}+\bar{\beta}}{2}}(k')^{-1}[\sc(\theta^*)\nd((u_{\beta})^*)\nd((u_{\alpha+2K})^*)]^{\frac{1}{2}}\times\\
&\hspace{3cm}\times \bigl[\dn((u_{\beta})^*)\dn((u_{\alpha+2K})^*)G^{m,*}_{f_2,f_2}-k'G^{m,*}_{f_2,f_1}\bigr].
\end{align*}
Multiplying by $\KD(u)_{w,v_2}$ and $\KD(u)_{w,f_2}$ respectively yields the first equalities of Example~\ref{ex:Prob_GD}:
\begin{align*}
\KD(u)_{w,v_2}\KD(u)^{-1}_{v_2,w}&=
(k')^{-1}\sc(\theta)
\left[
\dn(u_{\alpha})\dn(u_{\beta})G^m_{v_2,v_2}-k'G^m_{v_2,v_1}\right]\\
\KD(u)_{w,f_2}\KD(u)^{-1}_{f_2,w}&=
(k')^{-1}\sc(\theta^*)
\left[\dn((u_{\beta})^*)\dn((u_{\alpha+2K})^*)G^{m,*}_{f_2,f_2}-k'G^{m,*}_{f_2,f_1}
\right].
\end{align*}
Now by \cite[Lemma 46]{BdTR1}, for every vertex $v$ of $\Gs$, $G^m_{v,v}=\frac{k'K'}{\pi}$ and, for every $u\in\Re(\TT(k))$, 
\begin{align*}
G^m_{v_2,v_1}&=G^m_{v_1,v_2}=
\frac{H(2u_{\beta})-H(2u_{\alpha})}{\sc(\theta)}+\frac{K'}{\pi}\dn(u_\alpha)\dn(u_\beta)\\
G^{m,*}_{f_2,f_1}&=G^{m,*}_{f_1,f_2}=
\frac{H(2(u_{\beta})^*)-H(2(u_{\alpha+2K})^*)}{\sc(\theta^*)}+\frac{K'}{\pi}\dn((u_\beta)^*)\dn((u_{\alpha+2K})^*).
\end{align*}
Combining this with the expressions of $\KD(u)^{-1}_{x,w}$ concludes the proof of the second equalities of Example~\ref{ex:Prob_GD}.

In the critical case, by \cite[Lemma 45]{BdTR1} we have $\lim_{k\rightarrow 0} H(u)=\frac{u}{2\pi}$, 
and we recover that $\PPdimerD(e)=\frac{\theta}{\pi}$ independently of $u$~\cite{Kenyon3}.

\subsection{Dimers on $\GQ$}\label{app:dimers_GQ}

We compute the probability of single edges occurring in dimer configurations of $\GQ$ chosen with respect to the measure $\PPdimerQ$.
Notation used are those of Figure~\ref{fig:notations_1} below.

\begin{figure}[H]
\centering
\begin{overpic}[width=5.3cm]{fig_notations.pdf}
  \put(28,23){\scriptsize $\bs$}
  \put(74,15){\scriptsize $\ws_1$}
  \put(22,13){\scriptsize $\ws_3$}
  \put(22,54){\scriptsize $\ws_2$}
  \put(32,8){\scriptsize $e^{i\bar{\alpha}}$}
  \put(32,27){\scriptsize $e^{i\bar{\beta}}$}
  \put(-6,34){\scriptsize $v_1$}
  \put(101,34){\scriptsize $v_2$}
  \put(48,-4){\scriptsize $f_1$}
  \put(48,72){\scriptsize $f_2$}
  \put(51,28){\scriptsize $w$}
  \put(35,20){\scriptsize $\bar{\theta}$}
\end{overpic}
\caption{Notation around a vertex $\bs$ of $\GQ$.}
\label{fig:notations_1}
\end{figure}
For every edge $\ws\bs$ of $\GQ$ we have, $\PPdimerQ(\bs\ws)=\KQ_{\bs,\ws}(\KQ)^{-1}_{\ws,\bs}$. We compute 
$(\KQ)^{-1}_{\ws_i,\bs}$ using Corollary~\ref{cor:gibbsGQ_GD}, $i\in\{1,2\}$. 
In both cases we have $\betafb=\bar{\beta}$; when
$\ws=\ws_1$ then $\bar{\betai}=\bar{\beta}$ and when $\ws=\ws_2$ then $\bar{\betai}=\bar{\beta}-\pi$. Thus,
\begin{align*}
(\KQ)^{-1}_{\ws_1,\bs}&=
\frac{1}{[\cn(\theta)\sn(\theta)\nd(\theta)]^{\frac{1}{2}}}
\KD(\beta)^{-1}_{v_2,w}=
\frac{e^{-i\frac{\bar{\alpha}+\bar{\beta}}{2}}}{\sn(\theta)}\KD(\beta)_{w,v_2}\KD(\beta)^{-1}_{v_2,w}=
\frac{e^{-i\frac{\bar{\alpha}+\bar{\beta}}{2}}}{\sn(\theta)}\PPdimerDbeta(wv_2)
\\
(\KQ)^{-1}_{\ws_2,\bs}&=
\frac{i}{[\cn(\theta)\sn(\theta)\nd(\theta)]^{\frac{1}{2}}}
(-i)\KD(\beta)^{-1}_{f_2,w}=
\frac{-i e^{-i\frac{\bar{\alpha}+\bar{\beta}}{2}}}{\cn(\theta)}\KD(\beta)_{w,f_2}\KD(\beta)^{-1}_{f_2,w}
=\frac{-i e^{-i\frac{\bar{\alpha}+\bar{\beta}}{2}}}{\cn(\theta)}\PPdimerDbeta(wf_2),
\end{align*}
where in the second equalities of each line we used the definition of $\KD(\beta)$, see Section~\ref{sec:def_Dirac} and 
in the third equalities we used the formulas for edge-probabilities on $\GD$. 

Returning to the definition of $\KQ$, this immediately gives,
\[
\PPdimerQ(\bs\ws_1)=\PPdimerDbeta(wv_2),\quad \PPdimerQ(\bs\ws_2)=\PPdimerDbeta(wf_2).
\]
We now use the computations of Example~\ref{ex:Prob_GD} evaluated at $u=\beta$. In the second line we use that 
$(u_{\alpha+2K})^*\vert_{\beta}=(2K-u_{\alpha})\vert_{\beta}=2K-\theta$, $(u_{\beta})^*\vert_{\beta}=(K-u_\beta)\vert_{\beta}=K$,
\begin{align*}
\PPdimerQ(\ws_1\bs)&=\PPdimerDbeta(wv_2)=H(2\theta)-H(0)=H(2\theta),\text{ since $H(0)=0$}\\
\PPdimerQ(\ws_2\bs)&=\PPdimerDbeta(wf_2)=H(2(2K-\theta))-H(2K)=\frac{1}{2}-H(2\theta),
\end{align*}
where in the last equality we used \cite[Lemma 45]{BdTR1} to obtain:
\begin{align*}
H(4K-2\theta)&=H(-2\theta)+1=-H(2\theta)+1\\
H(2K)=H(4K-2K)&=H(-2K)+1=-H(2K)+1\quad \Rightarrow \quad H(2K)=\frac{1}{2}.
\end{align*}
Since probabilities around the vertex $\bs$ sum to 1, we have $\PPdimerQ(\ws_3\bs)=\frac{1}{2}$.

\subsection{Dimers on $\GF$}\label{app:dimers_GF}

We now compute single edge probabilities for dimer configurations of $\GF$ chosen with respect to the Boltzmann measure $\PPdimerF$.

\begin{figure}[H]
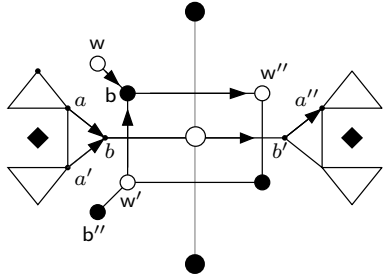

\begin{center}
\begin{overpic}[width=5cm]{fig_matrixKQKF.pdf}
 \put(26,31){\scriptsize $b$}
 \put(70,31){\scriptsize $b'$}
 \put(18,45){\scriptsize $a$}
 \put(18,24){\scriptsize $a'$}
 \put(76,45){\scriptsize $a''$}
 \put(26,46){\scriptsize $\bs$}
 \put(20,9){\scriptsize $\bs''$}
 \put(22,60){\scriptsize $\ws$}
 \put(30,18){\scriptsize $\ws'$}
 \put(67,52){\scriptsize $\ws''$}
\end{overpic}
\caption{Notation}\label{fig:matrixKQKF_1}
\end{center}
\end{figure}

Setting $\ubar{a}=a'$ in Formula~\eqref{form:KAAm1} gives,
\begin{align*}
(\KF)^{-1}_{a',a}&=
-\frac{1}{2}(\KQt)^{-1}_{\ws',\bs}\eps_{b,a}+\kappa_{a',a}=
-\frac{1}{2\tanh(2\Js)}\PPdimerQ(\bs\ws')\eps_{b,a'}\eps_{b,a}-\frac{1}{4}\eps_{a',a},
\quad \intertext{where we used that 
$\PPdimerQ(\bs\ws')=\KQt_{\bs,\ws'}(\KQt)^{-1}_{\ws',\bs}=\eps_{b,a'}\tanh(2\Js)(\KQt)^{-1}_{\ws',\bs}$,}
&=\eps_{a',a}\left[-\frac{1}{4}+ \frac{\PPdimerQ(\bs\ws')}{2\tanh(2\Js)} \right], \text{ since the orientation of the triangle is Kasteleyn.}
\end{align*}
We deduce that $\PPdimerF(aa')=\KF_{a,a'}(\KF)^{-1}_{a',a}=\frac{1}{4}-\frac{\PPdimerQ(\bs\ws')}{2\tanh(2\Js)}$.

Setting $\ubar{a}=a'$ and in Formula~\eqref{form:KABm1} gives,
\begin{align*}
(\KF)^{-1}_{a',b}&=\frac{1}{2}\left[e^{2\Js}\cosh^{-1}(2\Js)(\KQ)^{-1}_{\ws',\bs}+\eps_{b',b}\cosh^{-1}(2\Js)(\KQ)^{-1}_{\ws',\bs'}\right]\\
&=\frac{1}{2}\left[(1+\tanh(2\Js))(\KQ)^{-1}_{\ws',\bs}+\eps_{b',b}\cosh^{-1}(2\Js)(\KQ)^{-1}_{\ws',\bs'}\right],\text{ using \eqref{equ:reltanh2}}\\
&=\frac{1}{2}\left[(1+\tanh(2\Js))\eps_{b,a'}\frac{\PPdimerQ(\bs\ws')}{\tanh(2\Js)}+\eps_{b,a'}\PPdimerQ(\bs'\ws')\right],
\quad \intertext{where, as before, we returned to the definition of $\KQ$ and of single edge probabilities of $\GQ$}
&=\frac{\eps_{b,a'}}{2}\left[\PPdimerQ(\bs\ws')+\PPdimerQ(\bs'\ws') +\frac{\PPdimerQ(\bs\ws')}{\tanh(2\Js)}\right]\\
&=\frac{\eps_{b,a'}}{2}\left[1-\PPdimerQ(\bs''\ws') +\frac{\PPdimerQ(\bs\ws')}{\tanh(2\Js)}\right],
\end{align*}
using that the sum of probabilities is 1 around $\ws'$ for $\PPdimerQ$.
We deduce that $\PPdimerF(ba')=\KF_{b,a'}(\KF)^{-1}_{a',b}=\frac{1}{2}-\frac{\PPdimerQ(\bs''\ws')}{2} +\frac{\PPdimerQ(\bs\ws')}{2\tanh(2\Js)}$.

As a consequence also, we have: $\PPdimerF(ba')+\PPdimerF(aa')=\frac{3}{4}-\frac{\PPdimerQ(\bs''\ws')}{2}$. Using that the sum of probabilities around $a'$ is 1, we deduce
that $\PPdimerF(ab)=\frac{\PPdimerQ(\bs\ws)}{2}+\frac{\PPdimerQ(\bs\ws')}{2\tanh(2\Js)}$. Using that the sum of probabilities around the vertex $b$ is 1, we
obtain
\[
\PPdimerF(bb')=\frac{1}{2}+\frac{\PPdimerQ(\bs''\ws')}{2}-\frac{\PPdimerQ(\bs\ws)}{2}-\frac{\PPdimerQ(\bs\ws')}{2}.
\]
By symmetry $\PPdimerQ(\bs\ws)=\frac{1}{2}=\PPdimerQ(\bs''\ws')$, implying the expressions of Example~\ref{ex:prob_GF}.

\bibliographystyle{alpha}
\bibliography{survey}

\end{document}